\tikzstyle{blackboxState}=[state,fill=black,text=white]
\newcommand{\IFSAPartialExtended}{Incomplete FSAs}
\newcommand{\IBASPartialExtended}{Incomplete BAs}
\newcommand{\IBAPartialExtended}{Incomplete BA}
\newcommand{\IBASExtended}{Incomplete B\"uchi Automata}
\newcommand{\IBAExtended}{Incomplete B\"uchi Automaton}
\newcommand{\IFSAS}{IFSAs}
\newcommand{\IFSA}{IFSA}
\newcommand{\IFSASExtended}{Incomplete Finite State Automata}
\newcommand{\IFSAExtended}{Incomplete Finite State Automaton}
\newcommand{\IBAS}{IBAs}
\newcommand{\IBA}{IBA}
\declaretheorem[numberwithin=section,name=Proposition]{proposition}
\declaretheorem[numberwithin=section,name=Definition]{mydef}
\declaretheorem[numberwithin=section,name=Lemma]{lemma}
\declaretheorem[numberwithin=section,name=Theorem]{theorem}
\newcommand{\subproperty}{$\mathcal{S}=\langle  \mathcal{P},$ $\Delta^{in\mathcal{S}},$ $ \Delta^{out\mathcal{S}},$ $G,$ $R,$ $K,$ $\Gamma_\mathcal{M},$ $\Gamma_{\mathcal{A}_{\neg \phi}} \rangle$ }
\newcommand{\replacement}{$\mathcal{R}=\langle \mathcal{T},$ $\Delta^{{inR}},$ $ \Delta^{{outR}} \rangle$}
\newcommand{\claimautomata}{$\mathcal{A}_{\neg \phi}$}
\newcommand{\model}{\mathcal{M}}
\newcommand{\replacementtext}{$\mathcal{R}$}
\newcommand{\subpropertypossiblytext}{$S_p$}
\newcommand{\subpropertytext}{$S$}
\newcommand{\constraintf}{$\mathcal{C}=\langle S, S_p\rangle$}
\newcommand{\constrainttext}{$\mathcal{C}$}
\newcommand{\flag}{\mathcal{Y}}
\newcommand{\underapproximation}{\mathcal{U}}
\newcommand{\overapproximation}{$\mathcal{O}$}
\newcommand{\refinement}{\mathcal{N}}
\newcommand{\propertytext}{\phi}
\title{Modeling, refining and analyzing\\ \IBASExtended} % Title
\author[1]{Claudio Menghi}
\author[2]{Paola Spoletini}
\author[1]{Carlo Ghezzi}
\affil[1]{DEIB, Politecnico di Milano, Italy}
\affil[ ]{\textit {\{claudio.menghi, carlo.ghezzi\}@polimi.it}}
\affil[2]{Kennesaw State University, USA}
\affil[ ]{\textit {pspoleti@kennesaw.edu}}
\date{} 
\def\@maketitle{%
  \newpage
  \null
  \vskip 2em%
  \begin{center}%
  \let \footnote \thanks
    {\Large\bfseries \@title \par}%
    \vskip 1.5em%
    {\normalsize
      \lineskip .5em%
      \begin{tabular}[t]{c}%
        \@author
      \end{tabular}\par}%
    \vskip 1em%
    {\normalsize \@date}%
  \end{center}%
  \par
  \vskip 1.5em}
\begin{document}
\maketitle
Software development is an iterative process which includes a set of development steps that transform the initial high level specification of the system into its final, fully specified, implementation~\cite{wirth1971program}.
The modeling formalisms used in this refinement process depend on the properties of the system that are of interest. 
This report discusses the theoretical foundations that allow  \IBASExtended\ (IBAs) to be used in the iterative development of a sequential system. 
Section~\ref{Ch:ModelingIncompleteness} describes the IBA modeling formalism and its properties. Section~\ref{sec:modelingclaim} specifies the semantic of LTL formulae over \IBAS. Section~\ref{sec:checkingIBAs} describes the model checking algorithm for \IBA\ and proofs its correctness. Section~\ref{sec:ComputingConstraints} describes the constraint computation algorithm. Finally, section~\ref{sec:replacementChecking} describes the replacement checking procedure and its properties.

\section{Modeling and refinining \IBASExtended}
\label{Ch:ModelingIncompleteness}
\thispagestyle{empty}

Section~\ref{sec:ModelingIncompleteSystems} describes \IFSASExtended\ (\IFSA) and \IBASExtended\ (\IBAS) which extend classical Finite State Automata and B\"uchi Automata with black box states. 
Section~\ref{sec:refinement} describes how these two modeling formalisms can be used in the refinement process, i.e., how the initial, incomplete, high level specification can be iteratively refined. 

\subsection{Modeling incomplete systems}
\label{sec:ModelingIncompleteSystems}

\subsubsection{Incomplete Finite State Automata}
\label{sec:IncompleteFSA}
\IFSAPartialExtended\ (\IFSAS) are a state based modeling formalism that extends FSAs by partitioning the set of the states $Q$ in two sets: the set of \emph{regular}  states $R$ and the set of \emph{black box states} $B$\footnote{Black box states have been also identified in other works as transparent states, such as in~\cite{sharifloo2013lover}.}.  
Regular states correspond to classical automata states, while black box states are placeholders for configurations in which the behavior of the system is currently unspecified.
Black box states will be later replaced by other automata, other \IFSAS. 
In the rest of this report black box states are often abbreviated as black boxes or boxes.

\begin{mydef}[\IFSAExtended] Given a finite set of atomic propositions $AP$, a non-deterministic \IFSAPartialExtended\ (\IFSA) $\model$ is a tuple $\langle \Sigma, R, B, Q, \Delta, Q^0,$ $F \rangle$, where:
\begin{inparaenum}[\itshape a\upshape)]
\item $\Sigma=2^{AP}$ is the finite alphabet;
\item $R$ is the finite set of \emph{regular} states;
\item $B$ is the finite set of \emph{box} states;
\item $Q$ is the finite set of states such that $Q= B \cup R$ and $B \cap R= \emptyset$;
\item $\Delta \subseteq Q\times \Sigma \times Q$ is the transition relation;
\item $Q^0 \subseteq Q$ is the set of initial states; 
\item $F \subseteq Q$ is the set of final states. 
\end{inparaenum}
\end{mydef}

Graphically, boxes are filled with black, initial states are marked by an incoming arrow, and final states are double circled.
Note that the transition relation allows the definition of transitions that connect states of $Q$ irrespective of their type. 
An example of \IFSA\ defined over the set of propositions $AP=\left \{start, \right.$  $fail,$ $ok,$ $success,$ $\left. abort \right \}$ is shown in Figure~\ref{Fig:IFSAExample}. 
This automaton is a well known example of incompleteness in the context of software development and has been presented in~\cite{Alur:2001:MCH:503502.503503}.
$Q=\left \{ q_1,\right.$ $send_1,$ $send_2,$ $q_2,$ $\left. q_3 \right \}$, $Q^0=\left \{ q_1 \right \}$, $F=\left \{ q_2, q_3\right \}$ and $B=\left \{ send_1, send_2\right \}$ are the set of the states, of the initial states, of the final states and of the boxes, respectively.

\begin{figure}[ht]
\centering
\includegraphics[scale=0.7]{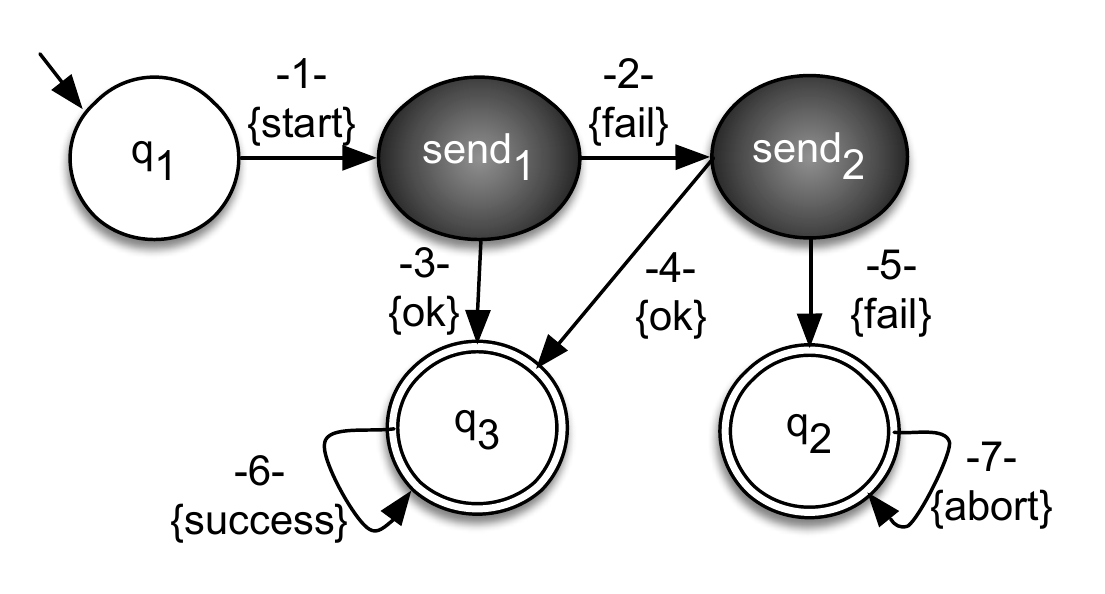}
\caption{An example of \IFSA.}
\label{Fig:IFSAExample}
\end{figure}

Given a word $v\in \Sigma^\ast$ of length $|v|$ a run defines the sequences of states traversed by the automaton to recognize $v$.
\begin{mydef}[\IFSA\ run]
Given a set of atomic propositions $AP$, an \IFSA\ $\model=\langle \Sigma, R, B, Q, \Delta,$ $Q^0,$ $F \rangle$, such that  $\Sigma=2^{AP}$,  a set of atomic propositions $AP^\prime$, such that $AP \subseteq AP^\prime$ and $\Sigma^\prime=2^{AP^\prime}$, and a \emph{word} $v=v_0 v_1 v_2\ldots v_{|v-1|}$ of length $|v|$ in $\Sigma^{\prime^\ast}$,  a \emph{run}  over the word $v$ is a mapping $\rho^\ast:  \left \{ 0,1,2 \ldots |v|  \right \} \rightarrow Q$ such that: 
\begin{inparaenum}[\itshape a\upshape)]
\item $\rho^\ast(0) \in Q^0$;
\item for all $0 \leq i < |v| $, $(\rho^\ast(i)$, $v_{i}$, $\rho^\ast(i+1))  \in \Delta$ or $\rho^\ast(i) \in B$ and $\rho^\ast(i)=\rho^\ast(i+1)$.
\end{inparaenum}
\end{mydef}

A run $\rho^\ast$ corresponds to a path in the \IFSA\ $\model$, such that the first state $\rho^\ast(0)$ of the path is an initial state of $\model$, i.e., it is in the set $Q^0$, and either the system moves form a state $\rho^\ast(i)$ to the next state $\rho^\ast(i+1)$ by reading the character $v_i$, or the state $\rho^\ast(i)$ is a box ($\rho^\ast(i) \in B$) and the character $v_i$ is recognized ``inside" the box $\rho^\ast(i)=\rho^\ast(i+1)$. 
For example, the finite word \{$start$\}.\{$send$\}.\{$fail$\} can be associated with the run $\rho^\ast(0)=q_1$, $\rho^\ast(1)=send_1$, $\rho^\ast(2)=send_1$  and $\rho^\ast(3)=send_2$ or with the run $\rho^\ast(0)=q_1$, $\rho^\ast(1)=send_1$, $\rho^\ast(2)=send_1$ and $\rho^\ast(3)=send_1$ .

\begin{mydef}[\IFSA\ definitely accepting and possibly accepting run]
A run $\rho^\ast$ is \emph{definitely accepting} if and only if $\rho^\ast(|v|) \in F$ and for all $0 \leq i \leq |v|,\ \rho^\ast(i) \in R$. 
A run $\rho^\ast$ is \emph{possibly accepting} if and only if $\rho^\ast(|v|) \in F$ and there exists $0 \leq i \leq |v|$ such that $\rho^\ast(i) \in B$. 
A run $\rho^\ast$ is not accepting otherwise.
\end{mydef}

Informally, a run $\rho^\ast$ is \emph{definitely accepting} if and only if ends in a final state of $\model$ and all the states of the run are regular, it is \emph{possibly accepting} if and only if it ends in a final state of $\model$ and there exists at least a state of the run which is a box, it is not accepting otherwise.

\begin{mydef} [\IFSA\ definitely accepted and possibly accepted word]
An \IFSA\ $\model$ \emph{definitely accepts} a word $v$ if and only if there exists a definitely accepting run of $\model$ on $v$. 
$\model$ \emph{possibly accepts} a word $v$ if and only if it does not definitely accept $v$ and there exists at least a possibly accepting run of $\model$ on $v$. 
Finally, $\model$ \emph{does not accept} $v$ iff it does not contain any definitely accepting or possibly accepting run for $v$.  
\end{mydef}

Note that possibly accepted words describe \emph{possible behaviors}. 
For example, the word \{$start$\}. \{$send$\}.\{$ok$\} is possibly accepted by the automaton presented in Figure~\ref{Fig:IFSAExample} since no definitely accepting run exists, while there exists a possibly accepting run described by the function $\rho^\ast$, such that $\rho^\ast(0)=q_1$, $\rho^\ast(1)=send_1$, $\rho^\ast(2)=send_1$ and $\rho^\ast(3)=q_3$.

\begin{mydef} [\IFSA\ definitely accepted and possibly accepted language]
Given a finite set of atomic propositions $AP^\prime$, such that $AP \subseteq AP^\prime$, and the alphabet $\Sigma^\prime=2^{AP^\prime}$, the language $\mathcal{L}^\ast(\model) \subseteq \Sigma^{\prime^\ast}$ definitely accepted by an \IFSA\ $\model$ contains all the words $v_1, v_2 \ldots v_n \in \Sigma^{\prime^\ast}$ definitely accepted by $\model$. 
The possibly accepted language $\mathcal{L}^\ast_p(\model) \subseteq \Sigma^{\ast}$ of $\model$ contains all the words $v_1, v_2 \ldots v_n \in \Sigma^{\prime^\ast}$ possibly accepted  by $\model$. 
\end{mydef}

Given an \IFSA\ $\model$ it is possible to define its completion  $\model_c$ as the FSA obtained by removing its boxes and their incoming and outgoing transitions.

\begin{mydef} [Completion of an \IFSA] Given an \IFSA\ $\model=\langle \Sigma, R, B,$ $ Q, \Delta,  Q^0, F \rangle$ the completion of $\model$ is the FSA $\model_c=\langle \Sigma, R, \Delta_c, Q^0 \cap R, F \cap R \rangle$, such as $\Delta_c=\left \{ (s, a, s^\prime) \mid (s, a, s^\prime) \in \Delta \right.$ and  $s \in R$ and $\left. s^\prime \in R\right \}$.
\end{mydef}

Lemma~\ref{lem:languageIFSA}  proves that the completion of an \IFSA\ recognizes its definitely accepted language.
\begin{lemma}[Language of the completion of an \IFSA]
\label{lem:languageIFSA} Given an \IFSA\ $\model=\langle \Sigma, R, B, Q, \Delta, $ $ Q^0, F \rangle$ the completion $\model_c$ of $\model$ recognizes the definitely accepted language $\mathcal{L}^\ast(\model)$.
\end{lemma}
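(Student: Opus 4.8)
The plan is to prove the set equality $\mathcal{L}(\model_c)=\mathcal{L}^\ast(\model)$, where $\mathcal{L}(\model_c)$ denotes the ordinary language accepted by the FSA $\model_c$. The natural approach is to exhibit, for every word $v$, a correspondence between the accepting runs of $\model_c$ on $v$ and the definitely accepting runs of $\model$ on $v$; once the runs are matched the two inclusions follow immediately. Before doing so I would record one observation about the run definition: in a definitely accepting run of $\model$ every step $(\rho^\ast(i),v_i,\rho^\ast(i+1))$ must be a genuine element of $\Delta$, because the alternative box self-loop clause applies only when $\rho^\ast(i)\in B$, which is excluded when every state of the run is regular. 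Consequently each $v_i$ lies in $\Sigma$ and every definitely accepted word already belongs to $\Sigma^\ast$, so comparing it against the $\Sigma$-language of $\model_c$ is meaningful.

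For the inclusion $\mathcal{L}^\ast(\model)\subseteq\mathcal{L}(\model_c)$, I would take a definitely accepting run $\rho^\ast$ of $\model$ on $v$ and check that the very same state sequence is an accepting run of $\model_c$. By the definition of definitely accepting, $\rho^\ast(0)\in Q^0$, $\rho^\ast(|v|)\in F$, and every $\rho^\ast(i)\in R$; hence $\rho^\ast(0)\in Q^0\cap R$ and $\rho^\ast(|v|)\in F\cap R$, matching the initial and final sets of $\model_c$. Moreover, by the observation above each step is a transition $(\rho^\ast(i),v_i,\rho^\ast(i+1))\in\Delta$ whose endpoints are both regular, so by the definition of $\Delta_c$ it belongs to $\Delta_c$. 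Thus $\rho^\ast$ is a legal accepting run of $\model_c$ and $v\in\mathcal{L}(\model_c)$.

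The reverse inclusion $\mathcal{L}(\model_c)\subseteq\mathcal{L}^\ast(\model)$ is the mirror image. Given an accepting run $r$ of $\model_c$ on $v$, I would verify that $r$, viewed as a mapping into $Q$, is a definitely accepting run of $\model$. Since $\Delta_c\subseteq\Delta$ every step of $r$ is also a step of $\model$ under the first clause of the run definition; the start state lies in $Q^0\cap R\subseteq Q^0$ and the last state in $F\cap R\subseteq F$; and because every state visited by $\model_c$ is regular the run contains no box, so the definitely accepting condition holds verbatim. Hence $v$ is definitely accepted by $\model$.

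Overall the argument is a routine double inclusion, and I do not expect a genuine obstacle. The only point demanding care is the two-case structure of the \IFSA\ run definition: I must justify that the box self-loop clause is inert for runs all of whose states are regular, which is exactly what guarantees that each step of a definitely accepting run is a $\Delta$-transition between regular states and therefore survives in $\Delta_c$. Handling this cleanly is what makes the endpoint and alphabet bookkeeping go through.
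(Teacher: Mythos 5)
Your proof is correct and takes essentially the same route as the paper's: both directions are run-by-run simulations between $\model$ and $\model_c$, with the paper merely phrasing the reverse inclusion as a cosmetic contradiction where you argue it directly. Your explicit observation that the box self-loop clause of the run definition is inert when all states are regular (which also settles the alphabet bookkeeping, since definitely accepted words must lie in $\Sigma^\ast$) is a detail the paper's proof glosses over, and it tightens rather than changes the argument.
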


\begin{proof}
\label{proof:languageIFSA}
To prove Lemma~\ref{lem:languageIFSA} it is necessary to demonstrate that a word is recognized by the completion if and only if it belongs to the definitely accepted  language of $\model$, i.e., $v \in \mathcal{L}^\ast(\model) \Leftrightarrow v \in \mathcal{L}^\ast(\model_c)$.

($\Rightarrow$) Each word $v$ accepted by $\model$ is associated with an accepting run $\rho^\ast$ which contains only regular states. Since $\model_c$ contains all the regular states of $\model$ and the same transitions between these states, it is possible to simulate the run $\rho^\ast$ of  $\model $ on the automaton $\model_c$. 
Furthermore, the regular and final states of $\model$ are also final for the automaton $\model_c$.
This implies that $v$ is definitely accepted by $\model_c$.

($\Leftarrow$) is proved by contradiction. Imagine that there exists a word $v$ in $\mathcal{L}^\ast(\model_c)$ which is not in $\mathcal{L}^\ast(\model)$. This implies that there exists a run $\rho^\ast$ in $\model_c$ which does not correspond to a run $\rho^{\ast^\prime}$ in $\model$. Given one of the states $\rho^\ast(i)$ it can be associated to the corresponding state of  $\model$.
Given two states $\rho^\ast(i)$ and $\rho^\ast(i+1)$ of the run and the transition $(\rho^\ast(i), a, \rho^\ast(i+1)) \in \Delta_c$, it is possible to ``simulate" the transition by performing the corresponding transition of $\model$ since $\Delta_c \subseteq \Delta$. 
Furthermore, every final state of $\model_c$ is also final for $\model$.
This implies that $v$ is also accepted by $\model$, and therefore $v$ is in the language $\mathcal{L}^\ast(\model)$ contradicting the hypothesis.
\end{proof}

The size $|\model|$ of an \IFSA\ $\model$ is the sum of the cardinality of the set of its states and the cardinality of the set of its transitions.
\begin{mydef}[Size of an \IFSA]  The size $|\model|$ of an IFSA $\model=\langle \Sigma, R, B, Q, \Delta, $ $ Q^0, F \rangle$ is equal to $|Q|+|\Delta|$.
\end{mydef}

\subsubsection{Incomplete B\"uchi Automata}
\label{sec:IncompleteBA}
Software systems are usually not designed to stop during their execution, thus infinite models of computation are usually considered. 
B\"uchi Automata (BAs) are one of the most used infinite models of computation. 
This section introduces \IBASPartialExtended\ (\IBAS) an extended version of BAs that support incompleteness.

\begin{mydef} [\IBASExtended] 
A non-deterministic \IBAExtended\ (\IBA) is an \IFSA\ $\langle \Sigma, R, B, Q, \Delta, Q^0, F \rangle$, where the set of final states $F$ of the \IFSA\ is used to define the acceptance condition for infinite words (also called $\omega$-\emph{words}). 
The set $F$ identifies the \emph{accepting states} of the IBA.  
\end{mydef}

Given an $\omega$-\emph{word} $v=v_0 v_1 v_2\ldots$  a run defines an execution of the \IBA\ (sequence of states).
\begin{mydef} [\IBA\ run]
\label{incompleterun}
Given a set of atomic propositions $AP$, an \IFSA\ $\model=\langle \Sigma, R, B, Q, \Delta, Q^0,$ $F \rangle$, such that  $\Sigma=2^{AP}$,  a set of atomic propositions $AP^\prime$, such that $AP \subseteq AP^\prime$ and $\Sigma^\prime=2^{AP^\prime}$, and a word $v \in \Sigma^{\prime^\omega}$, a \emph{run} $\rho^\omega: \left \{ 0,1,2, \ldots  \right \}\rightarrow Q$ over $v$ is defined for an \IBA\ as follows:
\begin{inparaenum}[\itshape a\upshape)]
\item $\rho^\omega(0) \in Q^0$;
\item for all $i \geq 0$, $(\rho^\omega(i)$, $v_{i}$, $\rho^\omega(i+1)) \in \Delta $ or $\rho^\omega(i) \in B$ and $\rho^\omega(i)= \rho^\omega(i+1)$.
\end{inparaenum}
\end{mydef}

Informally, a character $v_{i}$ of the word $v$ can be recognized by a transition of the \IBA, changing the state of the automaton from $\rho^\omega(i)$ to $\rho^\omega(i+1)$, or it can be recognized by a transition of the \IBA\ that will replace the box $\rho^\omega(i) \in B$. 
In the latter, the state $\rho^\omega(i+1)$ of the automaton after the recognition of $v_{i}$ corresponds to $\rho^\omega(i)$, since the control remains to the automaton which will replace the box $\rho^\omega(i)$. 
For example, the infinite word \{$start$\}.\{$send$\}.\{$ok$\}.\{$success$\}$^\omega$ can be associated with the run $\rho^\omega(0)=q_1$, $\rho^\omega(1)=send_1$ and $\rho^\omega(2)=send_1$ and $\forall i \geq 3, \rho^\omega(i)=q_3$ of the automaton described in Figure~\ref{Fig:IFSAExample} when it is interpreted as an \IBA. 
The character $send$ is recognized by the box $send_1$.

Let $inf(\rho^\omega)$ be the set of states that appear infinitely often in the run $\rho^\omega$.
\begin{mydef} [\IBA\ definitely accepted and possibly accepted run]
A run $\rho^\omega$ of an \IBA\ $\model$ is:
\begin{inparaenum}[\itshape a\upshape)]
\item  \emph{definitely accepting} if and only if $inf(\rho^\omega) \cap F \neq \emptyset$ and for all $i \geq 0, \rho^\omega(i) \in R$;
\item  \emph{possibly accepting} if and only if $(inf(\rho^\omega) \cap F \neq \emptyset)$ and there exists $i \geq 0$ such that $\rho^\omega(i) \in B$;
\item \emph{not accepting} otherwise.
\end{inparaenum}
\end{mydef}

Informally, a run is definitely accepting if some accepting states appear in $\rho^\omega$ infinitely often and all states of the run are regular states, it is possibly accepting if some accepting states appear in $\rho^\omega$ infinitely often and there is at least one state in the run that is a box, not accepting otherwise.

\begin{mydef} [\IBA\ definitely accepted and possibly accepted word]
An automaton $\model$ \emph{definitely accepts} a word $v$ if and only if there exists a definitely accepting run of $\model$ on $v$. 
$\model$ \emph{possibly accepts} a word $v$ if and only if it does not definitely accept $v$ and there exists at least a possibly accepting run of $\model$ on $v$. 
Finally, $\model$ \emph{does not accept} $v$ if and only if it does not contain any accepting or possibly accepting run for $v$.  
\end{mydef}

As for \IFSA, possibly accepted words describe \emph{possible behaviors}. 
For example, the automaton described in Figure~\ref{Fig:IFSAExample} (when it is interpreted as an IBA) possibly accepts the infinite word \{$start$\}.\{$send$\}.\{$ok$\}.\{$success$\}$^\omega$ since a definitely accepting run does not exist but there exists a run which is possibly accepting.

\begin{mydef} [\IBA\ definitely accepted and possibly accepted language]
Given a finite set of atomic propositions $AP^\prime$, such that $AP \subseteq AP^\prime$, and the alphabet $\Sigma^\prime=2^{AP^\prime}$, the language $\mathcal{L}^\omega(\model) \in \Sigma^{\prime^\omega}$ definitely accepted by an \IBA\ $\model$ contains all the words definitely accepted by $\model$. 
The possibly accepted language $\mathcal{L}^{\omega}_p(\model) \in \Sigma^{\prime^\omega}$ of $\model$ contains all the words possibly accepted  by $\model$. 
\end{mydef}

The language $\mathcal{L}^\omega(\model)$ can be defined by considering the BA $\model_c$ obtained from $\model$ by removing its boxes and their incoming and outgoing transitions.

\begin{mydef}[Completion of an \IBA] 
\label{def:IBACompletion}
Given an \IBA\ $\model=\langle \Sigma, R, B, Q,$ $ \Delta,  Q^0, F \rangle$ the completion of $\model$ is the BA $\model_c=\langle \Sigma, R, \Delta_c, Q^0 \cap R, F \cap R \rangle$, such as $\Delta_c=\left \{ (s, a, s^\prime) \mid (s, a, s^\prime) \in \Delta \right.$ and $s \in R$ and $\left. s^\prime \in R \right \}$.
\end{mydef}

As for \IFSA, it is possible to prove that the completion of an \IBA\ recognizes its definitely accepted language.

\begin{lemma}[Language of the completion of an \IBA] 
\label{th:languageMc} 
Given an \IBA\ $\model=\langle \Sigma, R, B, Q, \Delta, $ $ Q^0, F \rangle$ the completion $\model_c$ of $\model$ recognizes the definitely accepted language $\mathcal{L}^\omega(\model)$.
\end{lemma}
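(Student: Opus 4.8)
The plan is to mirror the structure of the proof of Lemma~\ref{lem:languageIFSA}, establishing the double inclusion $v \in \mathcal{L}^\omega(\model) \Leftrightarrow v \in \mathcal{L}^\omega(\model_c)$, but replacing the finite acceptance condition (reaching a final state after $|v|$ steps) with the B\"uchi condition $inf(\rho^\omega) \cap F \neq \emptyset$. The only genuinely new point relative to the finite case is that the accepting behaviour is now a property of the states visited infinitely often, so each direction must be checked to preserve exactly this condition rather than a reachability one.

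For the ($\Rightarrow$) direction, I would start from a definitely accepting run $\rho^\omega$ of $\model$ on $v$. By the definition of a definitely accepting run, $\rho^\omega(i) \in R$ for all $i \geq 0$, so $\rho^\omega$ visits no box and therefore never uses the ``self-loop inside a box'' clause of Definition~\ref{incompleterun}; every consecutive pair is justified by a transition $(\rho^\omega(i), v_i, \rho^\omega(i+1)) \in \Delta$ whose endpoints are both regular, hence lies in $\Delta_c$. Since $\rho^\omega(0) \in Q^0 \cap R$ and some state of $F$ occurs in $inf(\rho^\omega)$ --- a state that, being part of the run, is regular and thus lies in $F \cap R$ --- the very same mapping $\rho^\omega$ is an accepting run of the B\"uchi automaton $\model_c$, giving $v \in \mathcal{L}^\omega(\model_c)$.

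For the ($\Leftarrow$) direction, I would take an accepting run $\rho^\omega$ of $\model_c$ on $v$. Because $\model_c$ is built over the regular states $R$ of $\model$ with $\Delta_c \subseteq \Delta$ (Definition~\ref{def:IBACompletion}), this run is immediately a run of $\model$ all of whose states are regular, so the box clause is again irrelevant. The state of $F \cap R$ witnessing $inf(\rho^\omega) \cap (F \cap R) \neq \emptyset$ belongs to $F$, hence $inf(\rho^\omega) \cap F \neq \emptyset$, and $\rho^\omega$ is therefore a definitely accepting run of $\model$, yielding $v \in \mathcal{L}^\omega(\model)$.

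I expect no serious obstacle: the argument is a direct $\omega$-analogue of Lemma~\ref{lem:languageIFSA}. The one place demanding care is making the preservation of the B\"uchi condition explicit in both directions --- namely, that an accepting state occurring infinitely often in a definitely accepting run of $\model$ is necessarily regular, so that it survives in $F \cap R$, and conversely that membership in $F \cap R$ entails membership in $F$ --- since the containments $\Delta_c \subseteq \Delta$, $Q^0 \cap R \subseteq Q^0$ and $F \cap R \subseteq F$ then do the rest essentially for free.
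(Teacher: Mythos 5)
Your proof is correct and takes essentially the same approach as the paper: a double inclusion established by simulating a definitely accepting run of $\model$ in $\model_c$ and vice versa, using $\Delta_c \subseteq \Delta$ and the fact that accepting states occurring in all-regular runs lie in $F \cap R$. The only difference is cosmetic --- the paper phrases the ($\Leftarrow$) direction as a proof by contradiction, whereas you argue it directly (and in fact more explicitly track the preservation of the B\"uchi condition than the paper does).
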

\begin{proof}
The proof of Lemma~\ref{th:languageMc} is similar to the proof of Lemma~\ref{lem:languageIFSA} and requires to demonstrate that $v \in \mathcal{L}^\omega(\model) \Leftrightarrow v \in \mathcal{L}^\omega(\model_c)$. 

($\Rightarrow$) Each word $v$ definitely accepted by $\model$ is associated to a definitely accepting run $\rho^\omega$ which only contains regular states. 
Since $\model_c$ contains all the regular states of $\model$ and the same transitions between these states, it is possible to simulate the run $\rho^\omega$ of  $\model $ on the automaton $\model_c$. Furthermore, the regular and accepting states of $\model$ are also accepting for the automaton $\model_c$. This implies that $v$ is also accepted by $\model_c$.

($\Leftarrow$) is proved by contradiction. 
Imagine that there exists a word $v$ in $\mathcal{L}^\omega(\model_c)$ which is not in $\mathcal{L}^\omega(\model)$. 
This implies that there exists a run $\rho^\omega$ in $\model_c$ which does not correspond to a run $\rho^{\omega^\prime}$ in $\model$. 
Consider the run $\rho^\omega$, each state $\rho^\omega(i)$  can be associated to the corresponding state of  $\model$.
Given two states $\rho^\omega(i)$ and $\rho^\omega(i+1)$ of the run and the transition $(\rho^\omega(i), a, \rho^\omega(i+1)) \in \Delta_c$ it is possible to ``simulate" the transition by performing the corresponding transition of $\model$ since $\Delta_c \subseteq \Delta$. 
Furthermore, every accepting state of $\model_c$ is also accepting for $\model$.
This implies that $v$ is also accepted by $\model$, and therefore $v$ is in the language $\mathcal{L}^\omega(\model)$, which violates the hypothesis.
\end{proof}

The size $|\model|$ of an \IBA\ $\model$ is the sum of the cardinality of the set of its states and the set of its transitions. 
\begin{mydef}[Size of an \IBA]  
The size $|\model|$ of an \IBA\ $\model=\langle \Sigma, R, B, Q, \Delta, $ $ Q^0, F \rangle$ is $|Q|+|\Delta|$.
\end{mydef}

\subsection{Refining incomplete models}
\label{sec:refinement}
The development activity is an iterative and incremental process through which the initial high level specification $\model$ is iteratively refined. 
After having designed the initial high level specification $\model$, the modeling activity proceeds through a set of refinement rounds $\mathcal{R}\mathcal{R}$. 
At each refinement round $r \in \mathcal{R}\mathcal{R}$, a box $b$ of $\model$ is refined.
We use the term \emph{refinement} to capture the notion of model elaboration, i.e., the model $\refinement$ is a refinement  $\model$  if it is  obtained from $\model$ by adding knowledge about the behavior of the system inside one of its boxes. 
We call \emph{replacement} the sub-automaton which specifies the behavior of  the system inside a specific box.

\subsubsection{Refining \IBASExtended}
The refinement relation $\preceq$ allows the iterative concretization of the model of the system by replacing boxes with other \IBAS. These \IBAS\ are called \emph{replacements}. The definition of the refinement relation $\preceq$ has been inspired from~\cite{shoham2004monotonic}.

\begin{mydef}[Refinement]
\label{def:refinement}  Let $\wp_\model$ be the set of all possible \IBAS. An \IBA\ $\refinement \in \wp_\model$ is a refinement of an \IBA\ $\model \in \wp_\model$, i.e., $\model \preceq \refinement$, iff $\Sigma_\model \subseteq\Sigma_\refinement $ and there exists some refinement relation $\Re \in Q_\model \times Q_\refinement$, such that:
\begin{enumerate}
\item \label{def:refinementRegularMRegularN} for all $q_\model \in R_\model$ there exists  exactly one $q_\refinement \in  R_\refinement$ such that  $(q_\model, q_\refinement) \in \Re$;
\item  \label{def:stateNstateM} for all $q_\refinement \in Q_\refinement$ there exists exactly one $q_\model \in Q_\model$ such that  $(q_\model, q_\refinement)$ $\in \Re$;
\item \label{def:initianlNinitialM} for all $(q_\model, q_\refinement) \in \Re$, if $q_\refinement \in Q^0_\refinement$ then $q_\model \in Q^0_\model$;
\item \label{def:boxNboxM} for all $(q_\model, q_\refinement) \in \Re$, if $q_\refinement \in B_\refinement$ then $q_\model \in B_\model$;
\item \label{def:acceptingNacceptingM} for all $(q_\model, q_\refinement) \in \Re$, if $q_\refinement \in F_\refinement$ then $\ q_\model \in F_\model$;
\item \label{def:initialRegularMinitialRegularN} for all $(q_\model, q_\refinement) \in \Re$, if $q_\model \in Q^0_\model \cap  R_\model$ then $q_\refinement \in Q^0_\refinement \cap  R_\refinement$;
\item \label{def:acceptingRegularMacceptingRegularN} for all $(q_\model, q_\refinement) \in \Re$, if  $q_\model \in F_\model \cap R_\model$ then $q_\refinement \in F_\refinement \cap R_\refinement$;
\item \label{def:refinement6} for all $ (q_\model, q_\refinement) \in \Re$ and $\forall a \in \Sigma_\refinement$, if $(q_\model, a, q_\model^\prime) \in \Delta_\model$ then there exists $q_\refinement^\prime \in  Q_\refinement$ such that one of the following is satisfied:
\begin{itemize}
\item $(q_\refinement, a, q_\refinement^\prime) \in \Delta_\refinement$ and $(q_\model^\prime, q_\refinement^\prime)\in \Re$;
\item $q_\model \in B_{\model}$ and there exists $q_\refinement^{\prime\prime} \in Q_{\refinement}$ such that $(q_\model, q_\refinement^{\prime\prime}) \in \Re$ and $(q_\refinement^{\prime\prime}, a, q_\refinement^{\prime}) \in  \Delta_\refinement$ and $(q_\model, q_\refinement^{\prime}) \in \Re$;
\end{itemize}  
\item \label{def:refinement7} for all $(q_\model, q_\refinement) \in \Re$ and $\forall a \in \Sigma_\refinement$, if $(q_\refinement, a, q_\refinement^\prime) \in \Delta_\refinement $ one of the following holds:
\begin{itemize}
\item there exists $q_\model^\prime \in Q_\model$ such that $(q_\model^\prime, q_\refinement^\prime) \in \Re$ and $(q_\model, a, q_\model^\prime) \in \Delta_\model$;
\item  $q_\model \in B_\model$ and $(q_\model, q^\prime_\refinement)\in \Re$.
\end{itemize}
\end{enumerate}
\end{mydef}
The idea behind the refinement relation is that every definite behavior of $\model$ \emph{must be preserved} in its refinement $\refinement$, and every behavior of $\refinement$ must correspond to a behavior of $\model$. 

Condition~\ref{def:refinementRegularMRegularN} imposes that each \emph{regular state} of $\model$ is associated with exactly one regular state of the refinement $\refinement$. When $q_\model$ is a box several states (or none) of $\refinement$ can be associated with $q_\model$.
Condition~\ref{def:stateNstateM} imposes that each state (regular or black box) of the refinement $\refinement$ is associated with exactly one state of the model $\model$. 
Condition~\ref{def:initianlNinitialM} specifies that any initial state of the refinement  $\refinement$ is  associated with an initial state of the model $\model$.
Condition~\ref{def:boxNboxM} guarantees that any \emph{box} in the refinement $\refinement$ is  associated with a box of the model $\model$, i.e., it is not possible to refine a regular state into a box.
Condition~\ref{def:acceptingNacceptingM} specifies that each \emph{accepting state} of $\refinement$ corresponds with an accepting state of $\model$.
Condition~\ref{def:initialRegularMinitialRegularN} forces each initial and regular state of the model $\model$ to be associated with an initial and regular state $\refinement$.
Condition~\ref{def:acceptingRegularMacceptingRegularN} specifies that each accepting and regular state of $\model$ is associated with an accepting and regular state of  $\refinement$.
Finally,  conditions~\ref{def:refinement6} and~\ref{def:refinement7} constrain the \emph{transition relation}. 
Given a state $q_\model$ in $\model$ and a corresponding state $q_\refinement$ of the refined automaton $\refinement$, condition~\ref{def:refinement6} specifies that for each transition $(q_\model, a, q_\model^\prime)$ either there exists a state $q_\refinement^\prime$ that follows $q_\refinement$ through a transition labeled with $a$, or the state $q_\model$ is a box and another transition $(q_\refinement^{\prime\prime}, a, q_\refinement^{\prime})$ that exits the state  $q_\refinement^{\prime\prime}$ of the replacement of the box $q_\model$ is associated with the transition $(q_\model, a, q_\model^\prime)$\footnote{Note that the state $q_\refinement^{\prime\prime}$ must not be necessarily reachable in the replacement of the state $q_\refinement$.}. 
Condition~\ref{def:refinement7}  guarantees that each transition $(q_\refinement, a, q_\refinement^\prime)$ in the refinement $\refinement$ must be associated with a transition $(q_\model, a, q_\model^\prime)$ of $\model$ or it is a transition performed inside box $q_\model$, i.e., $q_\model \in B_\model$.

Consider for example the automaton $\model$ presented in Figure~\ref{Fig:IFSAExample} and the automaton $\refinement$ of Figure~\ref{Fig:FSAExample}, $\model \preceq \refinement$, through the relation $\Re=\left \{ (q_1, q_1),\right.$ $(send_1, q_4),$ $(send_1, q_5),$ $(send_1, q_6),$ $(send_1, q_7),$ $(send_1, q_8),$ $(send_2, q_9),$ $(send_2, q_{10}),$ $(send_2, q_{11}),$ $(send_2, q_{12}),$ $(send_2, q_{13}),$ $(q_2, q_2),$ $\left. (q_3,q_3) \right \}$.

\begin{figure}[ht]
\centering
\includegraphics[scale=0.4]{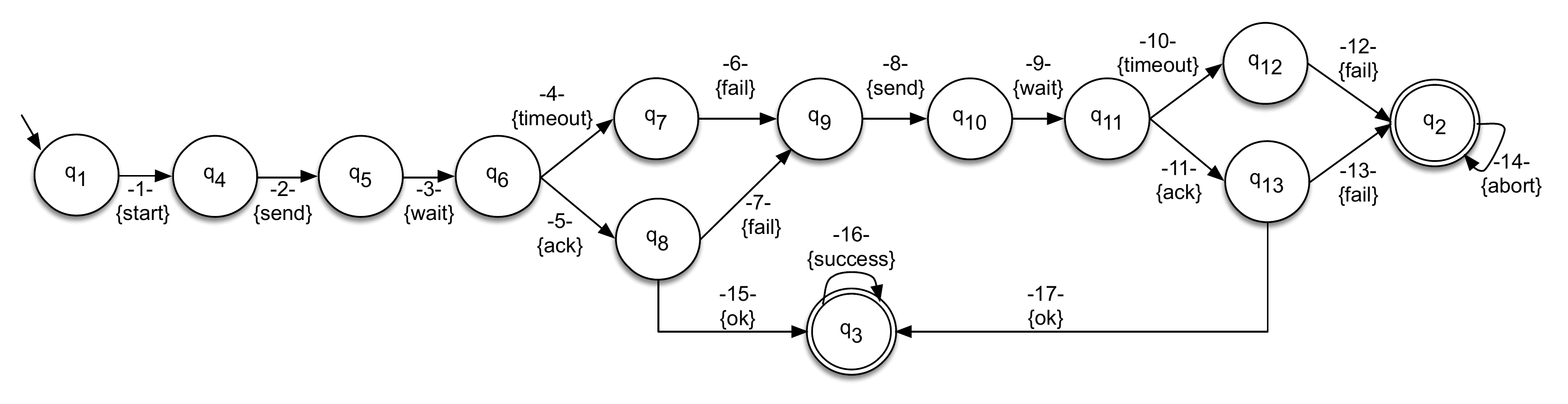}
\caption{An example of BA.}
\label{Fig:FSAExample}
\end{figure}

\begin{mydef} [Implementation] 
A BA $\refinement$ is an implementation of an \IBA\ $\model$ if and only if $\model \preceq \refinement$.
\end{mydef} 

The automaton $\refinement$ presented in Figure~\ref{Fig:FSAExample} is also an implementation of the automaton $\model$  described in Figure~\ref{Fig:IFSAExample}.

It is important to notice that the refinement relation preserves the language  containment relation, i.e., a possibly accepted word of $\model$ can be definitely accepted, possibly accepted or not accepted in the refinement, but every definitely accepted and not accepted word remains accepted or not accepted in $\refinement$.

\begin{lemma}[The refinement relation is reflexive] 
\label{lem:reflexive} 
Given an \IBA\ $\model$, $\model \preceq \model$.
\end{lemma}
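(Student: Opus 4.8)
The plan is to exhibit an explicit refinement relation witnessing $\model \preceq \model$ and to check that it satisfies every clause of Definition~\ref{def:refinement} with $\refinement = \model$. The natural candidate is the identity relation $\Re = \{(q,q) \mid q \in Q_\model\} \subseteq Q_\model \times Q_\model$. Since here $\refinement$ and $\model$ coincide, we have $\Sigma_\refinement = \Sigma_\model$, $Q_\refinement = Q_\model$, $R_\refinement = R_\model$, $B_\refinement = B_\model$, $Q^0_\refinement = Q^0_\model$, $F_\refinement = F_\model$ and $\Delta_\refinement = \Delta_\model$, so the required alphabet inclusion $\Sigma_\model \subseteq \Sigma_\refinement$ holds (with equality).

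Next I would dispatch the state-correspondence clauses. Conditions~\ref{def:refinementRegularMRegularN} and~\ref{def:stateNstateM} require \emph{exactly one} partner on each side; because the identity relation is both a total function and injective (each $q$ is related only to itself), the unique partner of any state is the state itself, and since $R_\refinement = R_\model$ and $Q_\refinement = Q_\model$ both the "regular" and the "total" versions go through. For Conditions~\ref{def:initianlNinitialM}, \ref{def:boxNboxM}, \ref{def:acceptingNacceptingM}, \ref{def:initialRegularMinitialRegularN} and~\ref{def:acceptingRegularMacceptingRegularN}, every pair in $\Re$ has the form $(q,q)$, so the hypothesis and the conclusion refer to membership of the \emph{same} state $q$ in the \emph{same} set (namely $Q^0$, $B$, $F$, $Q^0 \cap R$, or $F \cap R$, using the equalities of components above); each implication is therefore an instance of $P \Rightarrow P$ and holds trivially.

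It then remains to verify the two transition clauses, which carry the existential content of the definition. For Condition~\ref{def:refinement6}, given $(q,q) \in \Re$ and a transition $(q, a, q') \in \Delta_\model$, I would simply take $q_\refinement' = q'$; then $(q, a, q') \in \Delta_\refinement$ and $(q', q') \in \Re$, so the first bullet is satisfied and the box case is never invoked. Symmetrically, for Condition~\ref{def:refinement7}, given a transition $(q, a, q') \in \Delta_\refinement = \Delta_\model$, choosing $q_\model' = q'$ yields $(q', q') \in \Re$ together with $(q, a, q') \in \Delta_\model$, again matching the first bullet.

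In this argument there is essentially no obstacle: every clause collapses once the witness is the identity relation, precisely because the two automata literally share all their components. The only point deserving explicit attention is the uniqueness ("exactly one") demanded by Conditions~\ref{def:refinementRegularMRegularN} and~\ref{def:stateNstateM}, which amounts to the observation that $\Re$ is the graph of a bijection — immediate for the identity map.
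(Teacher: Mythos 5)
Your proof is correct and takes essentially the same approach as the paper: the paper's proof also exhibits the identity relation $\Re_{\model \preceq \model} = \{(q,q) \mid q \in Q_\model\}$ and observes that the conditions of Definition~\ref{def:refinement} hold because the states and transitions of $\model$ coincide with those of its refinement. Your version merely spells out the clause-by-clause verification (including the ``exactly one'' uniqueness point and the choice of witnesses in the transition clauses) that the paper leaves implicit.
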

\begin{proof}
We prove that there exists a relation $\Re_{\model \preceq \model}$ between the states of $\model$ and its refinement $\model$ that satisfies the conditions specified in Definition~\ref{def:refinement}.  
More precisely, the relation $\Re_{\model \preceq \model}$, such that for all $q \in Q_{\model}$,  $(q,q) \in \Re_{\model \preceq \model}$, satisfies the conditions of Definition~\ref{def:refinement}, since the states and transitions of $\model$ are the same states and transitions of its refinement $\model$. 
\end{proof}

\begin{lemma}[The refinement relation is transitive] 
\label{lem:transitivity} 
Given three IBAs $\model$, $\refinement$ and $\mathcal{O}$, if $\model \preceq \refinement$ and   $\refinement \preceq \mathcal{O}$ then $\model \preceq \mathcal{O}$.
\end{lemma}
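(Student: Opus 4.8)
The plan is to produce a single refinement relation witnessing $\model \preceq \mathcal{O}$ by composing the two relations given by the hypotheses. By $\model \preceq \refinement$ and $\refinement \preceq \mathcal{O}$ there are relations $\Re_1 \subseteq Q_\model \times Q_\refinement$ and $\Re_2 \subseteq Q_\refinement \times Q_{\mathcal{O}}$ each satisfying the nine conditions of Definition~\ref{def:refinement}. I would take the composition through $\refinement$,
\[
\Re = \{(q_\model, q_{\mathcal{O}}) \mid \exists q_\refinement \in Q_\refinement : (q_\model, q_\refinement) \in \Re_1 \text{ and } (q_\refinement, q_{\mathcal{O}}) \in \Re_2\},
\]
and check that $\Re$ satisfies every condition of Definition~\ref{def:refinement}. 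The alphabet requirement $\Sigma_\model \subseteq \Sigma_{\mathcal{O}}$ is immediate from $\Sigma_\model \subseteq \Sigma_\refinement \subseteq \Sigma_{\mathcal{O}}$.

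The nine conditions then fall into groups that I would dispatch by chasing a witness through the intermediate state $q_\refinement$. Conditions~\ref{def:refinementRegularMRegularN} and~\ref{def:stateNstateM} (uniqueness) hold because $\Re_1$ and $\Re_2$ are functional in the relevant direction; for condition~\ref{def:refinementRegularMRegularN} I would additionally invoke condition~\ref{def:boxNboxM} of $\Re_1$ to exclude a box of $\refinement$ sitting over a regular state of $\model$, forcing the intermediate state to be the unique regular one, which then composes with the unique regular state of $\mathcal{O}$. The ``downward'' conditions~\ref{def:initianlNinitialM},~\ref{def:boxNboxM},~\ref{def:acceptingNacceptingM} propagate a property of $q_{\mathcal{O}}$ first to $q_\refinement$ through $\Re_2$ and then to $q_\model$ through $\Re_1$; the ``upward'' conditions~\ref{def:initialRegularMinitialRegularN},~\ref{def:acceptingRegularMacceptingRegularN} propagate a property of $q_\model$ through $\Re_1$ and then through $\Re_2$. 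These are routine.

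Condition~\ref{def:refinement7} also composes cleanly. Given $(q_\model,q_{\mathcal O})\in\Re$ through $q_\refinement$ and a transition $(q_{\mathcal O},a,q_{\mathcal O}')\in\Delta_{\mathcal O}$, I first apply condition~\ref{def:refinement7} of $\Re_2$ to $(q_\refinement,q_{\mathcal O})$, obtaining either a transition $(q_\refinement,a,q_\refinement')$ with $(q_\refinement',q_{\mathcal O}')\in\Re_2$, or the box case $q_\refinement\in B_\refinement$ with $(q_\refinement,q_{\mathcal O}')\in\Re_2$; in the former I apply condition~\ref{def:refinement7} of $\Re_1$ to $(q_\model,q_\refinement)$ and the new transition, and in the latter condition~\ref{def:boxNboxM} of $\Re_1$ gives $q_\model\in B_\model$. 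Crucially, every intermediate state used here already carries an image, so no existence problem ever arises and the correct disjunct for $\Re$ follows.

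Condition~\ref{def:refinement6} is where I expect the genuine difficulty. From $(q_\model,q_\refinement)\in\Re_1$, $(q_\refinement,q_{\mathcal O})\in\Re_2$ and $(q_\model,a,q_\model')\in\Delta_\model$, condition~\ref{def:refinement6} of $\Re_1$ yields a witness of one of two shapes. If it is the first disjunct, the witnessing transition leaves $q_\refinement$, which has image $q_{\mathcal O}$, so I feed it to condition~\ref{def:refinement6} of $\Re_2$ and compose the resulting pairs. The hard branch is the box-internal disjunct, where the witness $(q_\refinement'',a,q_\refinement')$ emanates from a replacement state $q_\refinement''$ \emph{different} from the already-related $q_\refinement$: if $q_\refinement''$ is regular I recover its unique image by condition~\ref{def:refinementRegularMRegularN} of $\Re_2$ and then apply condition~\ref{def:refinement6} of $\Re_2$, and if it is a box I apply condition~\ref{def:refinement6} of $\Re_2$ to some pair $(q_\refinement'',q_{\mathcal O}'')\in\Re_2$, in each case checking that both endpoints of the produced $\mathcal{O}$-transition relate back to the box $q_\model$. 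The delicate step I would flag is precisely the guarantee that such a witness $q_\refinement''$ \emph{has} an image under $\Re_2$: since a box of $\refinement$ may be refined into no state of $\mathcal{O}$, this relaying is not automatic, and it is exactly here that one must exploit the structure of the replacement (or an auxiliary well-formedness assumption on non-empty replacements) to keep the argument sound.
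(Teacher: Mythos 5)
Your construction is exactly the paper's: compose the two witnessing relations through $\refinement$, taking $\Re = \{(q_\model,q_{\mathcal{O}}) \mid \exists q_\refinement \in Q_\refinement : (q_\model,q_\refinement)\in\Re_1, (q_\refinement,q_{\mathcal{O}})\in\Re_2\}$, and re-verify the conditions of Definition~\ref{def:refinement}. Your discharge of conditions~\ref{def:refinementRegularMRegularN}--\ref{def:acceptingRegularMacceptingRegularN} and of condition~\ref{def:refinement7} is correct and coincides with the paper's argument; in two places you are in fact more careful than the paper, namely the uniqueness part of condition~\ref{def:refinementRegularMRegularN} (where condition~\ref{def:boxNboxM} of $\Re_1$ is indeed needed to rule out box intermediates, a point the paper omits) and the observation that in condition~\ref{def:refinement7} every state one relays through already carries an image, so that chase never gets stuck.

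The step you flag in condition~\ref{def:refinement6} is a genuine gap --- in your proposal, which leaves it unresolved, but equally in the paper's proof, which silently asserts ``it must exists a $q_{\mathcal{O}}^{\prime\prime}$ such that $(q_{\refinement}^{\prime\prime}, q_{\mathcal{O}}^{\prime\prime}) \in \Re_{\refinement \preceq \mathcal{O}}$'' with no justification. Definition~\ref{def:refinement} does not supply this: only regular states of $\refinement$ are guaranteed an $\Re_2$-image (condition~\ref{def:refinementRegularMRegularN}), whereas a box of $\refinement$ may be related to no state of $\mathcal{O}$ at all (the paper itself notes a box may be refined into ``several states (or none)''). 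Worse, the missing existence can actually fail, so the lemma is false as stated. Take $\model$ with a single initial box $b$ and the sole transition $(b,a,b)$; take $\refinement$ with a regular initial state $r_1$, a box $r_2$, the sole transition $(r_2,a,r_2)$, and $\Re_1=\{(b,r_1),(b,r_2)\}$; take $\mathcal{O}$ with a single regular initial state $o_1$, no transitions, and $\Re_2=\{(r_1,o_1)\}$. Both relations satisfy all the conditions of Definition~\ref{def:refinement} (condition~\ref{def:refinement6} for $\Re_2$ is vacuous because $r_2$ occurs in no pair of $\Re_2$), yet $\model\not\preceq\mathcal{O}$: condition~\ref{def:stateNstateM} forces any candidate relation to be $\{(b,o_1)\}$, and condition~\ref{def:refinement6} then demands a transition of $\Delta_{\mathcal{O}}$ witnessing $(b,a,b)$, while $\Delta_{\mathcal{O}}=\emptyset$. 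Hence no amount of relation-chasing can close the branch you isolated; transitivity needs an extra hypothesis, e.g.\ that every box of $\refinement$ is related to at least one state of $\mathcal{O}$, or that $\mathcal{O}$ is obtained from $\refinement$ by the sequential composition of Definition~\ref{def:plugginarefinement}, whose well-formedness conditions on replacements prevent a box's transitions from vanishing. Your instinct to demand such an assumption is precisely what is missing from the paper's own proof.
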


\begin{proof}

Since  $\model \preceq \refinement$ and $\refinement \preceq \mathcal{O}$, there exist a refinement relation $\Re_{\model \preceq \refinement}$  between the states of $\model$ and $\refinement$ and  a refinement relation $\Re_{\refinement \preceq \mathcal{O}}$ between the states of  $\refinement$  and $\mathcal{O}$, respectively. 
To prove lemma~\ref{lem:transitivity} we need to show that exists a refinement relation $\Re_{\model \preceq \mathcal{O}}$ between the states of $\model$ and $\mathcal{O}$ that satisfies the conditions specified in Definition~\ref{def:refinement}.

Let us consider the relation $\Re_{\model \preceq \mathcal{O}}$  such that $( (q, q^{\prime\prime}) \in \Re_{\model \preceq \refinement} \text{ and } (q^{\prime\prime}, q^\prime) \in \Re_{\refinement \preceq \mathcal{O}}) \Leftrightarrow (q, q^\prime) \in \Re_{\model \preceq \mathcal{O}}$. 
We prove that $\Re_{\model \preceq \mathcal{O}}$ satisfies the conditions specified in Definition~\ref{def:refinement}.

Condition~\ref{def:refinementRegularMRegularN}. 
\emph{Every regular state of $\model$ must also be contained in its refinement $\mathcal{O}$}. 
Since $\model \preceq \refinement$, each regular state $q_{\model}$  must also be contained in $\refinement$, i.e., it must exists a regular state  $q_{\refinement}$ such that $(q_{\model}, q_{\refinement}) \in \Re_{\model \preceq \refinement}$. 
Since  $\refinement \preceq \mathcal{O}$  each regular state $q_{\refinement}$  must also be contained in $\mathcal{O}$, i.e., it must exists a regular state $q_{\mathcal{O}}$ such that $(q_{\refinement}, q_{\mathcal{O}}) \in \Re_{\refinement \preceq \mathcal{O}}$.
This implies that $(q_{\model}, q_{\mathcal{O}}) \in \Re_{\model \preceq \mathcal{O}}$, i.e., for every regular state $q_{\model}$ it exists a regular state $q_{\mathcal{O}}$ associated to $q_{\model}$ through the relation $\Re_{\model \preceq \mathcal{O}}$.

Condition~\ref{def:stateNstateM}.
\emph{Each state of the refinement $\mathcal{O}$ must be associated with exactly one state of  $\model$}. 
Since $\refinement \preceq \mathcal{O}$, each state $q_{\mathcal{O}}$ of $\mathcal{O}$ is associated with exactly one state $q_{\refinement}$ of $\refinement$ through the relation $\Re_{\refinement \preceq \mathcal{O}}$.
Since $\model \preceq \refinement$, each state $q_{\refinement}$ of $\refinement$ is associated with exactly one state $q_{\model}$ of $\model$, through the relation $\Re_{\model \preceq \refinement}$.
This implies that $(q_{\model}, q_{\mathcal{O}}) \in \Re_{\model \preceq \mathcal{O}}$ for construction, i.e.,  for each  each $q_{\mathcal{O}}$ there exists exactly one $q_{\model}$ such that $(q_{\model}, q_{\mathcal{O}}) \in \Re_{\model \preceq \mathcal{O}}$.

Condition~\ref{def:initianlNinitialM}. 
\emph{Each initial state of the refinement $\mathcal{O}$ must be associated with an initial state of the model  $\model$}.  
Since $\refinement \preceq \mathcal{O}$, each state $q_{\mathcal{O}}$ of $\mathcal{O}$ which is an initial state is associated with exactly one state $q_{\refinement}$  of $\refinement$ which is also an initial state through the relation $\Re_{\refinement \preceq \mathcal{O}}$.
Since $\model \preceq \refinement$, each state $q_{\refinement}$ of $\refinement$ which is an initial state is associated with exactly one state $q_{\mathcal{M}}$ of $\model$, which is also an initial state through the relation $\Re_{\model \preceq \refinement}$.
This implies that whenever $(q_{\model}, q_{\mathcal{O}}) \in \Re_{\model \preceq \mathcal{O}}$, if  $q_{\mathcal{O}}$ is an initial state, then $q_{\model}$ is also an initial state.

Condition~\ref{def:boxNboxM}. 
\emph{Each box of the refinement $\mathcal{O}$ must be associated with a box of the model  $\model$}.  
Since $\refinement \preceq \mathcal{O}$, each state $q_{\mathcal{O}}$ of $\mathcal{O}$ which is a box is associated with exactly one state $q_{\refinement}$  of $\refinement$ which is also a box through the relation $\Re_{\refinement \preceq \mathcal{O}}$.
Since $\model \preceq \refinement$, each state $q_{\refinement}$ of $\refinement$ which is a box is associated with exactly one state $q_{\model}$ of $\model$, which is also a box through the relation $\Re_{\model \preceq \refinement}$.
This implies that whenever $(q_{\model}, q_{\mathcal{O}}) \in \Re_{\model \preceq \mathcal{O}}$, if  $q_{\mathcal{O}}$ is a box, then $q_{\model}$ is a box.

Condition~\ref{def:acceptingNacceptingM}. 
\emph{Each accepting state of the refinement $\mathcal{O}$ must be associated with a accepting state of the model  $\model$}.
Since $\refinement \preceq \mathcal{O}$, each state $q_{\mathcal{O}}$ of $\mathcal{O}$ which is an accepting state is associated with exactly one state $q_{\refinement}$  of $\refinement$ which is also an accepting state through the relation $\Re_{\refinement \preceq \mathcal{O}}$.
Since $\model \preceq \refinement$, each state $q_{\refinement}$ of $\refinement$ which is an accepting state is associated with exactly one state $q_{\model}$ of $\model$, which is also an accepting state through the relation $\Re_{\model \preceq \refinement}$.
This implies that whenever $(q_{\model}, q_{\mathcal{O}}) \in \Re_{\model \preceq \mathcal{O}}$, if  $q_{\mathcal{O}}$ is an accepting state, then $q_{\model}$ is an accepting state.

Condition~\ref{def:initialRegularMinitialRegularN}. 
\emph{Each initial and regular state of $\model$ must be associated with an initial state of the refinement  $\mathcal{O}$}.
Since $\model \preceq \refinement$, each initial and regular state $q_{\model}$  must also be contained in $\refinement$, i.e., it must exists an initial and regular state  $q_{\refinement}$ such that $(q_{\model}, q_{\refinement}) \in \Re_{\model \preceq \refinement}$. 
Since  $\refinement \preceq \mathcal{O}$  each initial and regular state $q_{\refinement}$  must also be contained in $\mathcal{O}$, i.e., it must exists an initial and regular state $q_{\mathcal{O}}$ such that $(q_{\refinement}, q_{\mathcal{O}}) \in \Re_{\refinement \preceq \mathcal{O}}$.
This implies that $(q_{\model}, q_{\mathcal{O}}) \in \Re_{\model \preceq \mathcal{O}}$, i.e., for every initial and regular state  $q_{\model}$ it exists an initial and regular state $q_{\mathcal{O}}$ associated to $q_{\model}$ through the relation $\Re_{\model \preceq \mathcal{O}}$.

Condition~\ref{def:acceptingRegularMacceptingRegularN}. 
\emph{Each accepting and regular state of $\model$ must be associated with an accepting state of the refinement  $\mathcal{O}$}.
Since $\model \preceq \refinement$, each accepting and regular state $q_{\model}$  must also be contained in $\refinement$, i.e., it must exists an accepting and regular state  $q_{\refinement}$ such that $(q_{\model}, q_{\refinement}) \in \Re_{\model \preceq \refinement}$. 
Since  $\refinement \preceq \mathcal{O}$  each accepting and regular state $q_{\refinement}$  must also be contained in $\mathcal{O}$, i.e., it must exists an accepting and regular state $q_{\mathcal{O}}$ such that $(q_{\refinement}, q_{\mathcal{O}}) \in \Re_{\refinement \preceq \mathcal{O}}$.
This implies that $(q_{\model}, q_{\mathcal{O}}) \in \Re_{\model \preceq \mathcal{O}}$, i.e., for every accepting and regular state  $q_{\model}$ it exists an accepting and regular state $q_{\mathcal{O}}$ associated to $q_{\model}$ through the relation $\Re_{\model \preceq \mathcal{O}}$.

Condition~\ref{def:refinement6}. 
\emph{A transition of the model starting from a state $q_{\model}$ is associated with a transition of the refinement 
which starts from a state that refines $q_{\model}$.}
Let us consider a couple $(q_{\model},  q_{\mathcal{O}}) \in \Re_{\model \preceq \mathcal{O}}$ and a transition $(q_{\model}, a, q^\prime_{\model}) \in \Delta_{\model}$, we have to prove that it exists a $q^\prime_{\mathcal{O}}$ that satisfies the conditions specified by the condition~\ref{def:refinement6}. 
Since $\model \preceq \refinement$ it exists a $q_{\refinement}$ such that $(q_{\model}, q_{\refinement}) \in \Re_{\model \preceq \refinement}$ and a $q^\prime_{\refinement}$, such that $(q^\prime_{\model}, q^\prime_{\refinement}) \in \Re_{\model \preceq \refinement}$, and  $q^\prime_{\refinement}$ satisfies one of the statements specified in Definition~\ref{def:refinement} condition~\ref{def:refinement6}. 
\begin{itemize}
\item Assume that the first statement of condition~\ref{def:refinement6} is satisfied, i.e., $(q_{\refinement}, a, q^\prime_{\refinement}) \in \Delta_\refinement$. Since $\refinement \preceq \mathcal{O}$, it must exists a $q_{\mathcal{O}}$ and a $q_{\mathcal{O}}^{\prime}$ such that  $(q_{\refinement}, q_{\mathcal{O}})  \in \Re_{\refinement \preceq \mathcal{O}}$, $(q^\prime_{\refinement}, q^\prime_{\mathcal{O}})  \in \Re_{\refinement \preceq \mathcal{O}}$ and  one of the statements specified in Definition~\ref{def:refinement} condition~\ref{def:refinement6} is satisfied. \\
If $(q_{\mathcal{O}}, a, q^\prime_{\mathcal{O}}) \in \Delta_\mathcal{O}$, then there exists $q^\prime_{\mathcal{O}}$ such that $(q^\prime_{\model}, q^\prime_{\mathcal{O}}) \in \Re_{\model \preceq \mathcal{O}}$ and $(q_{\mathcal{O}}, a, q^\prime_{\mathcal{O}}) \in \Delta_\mathcal{O}$.\\
If instead $q_{\refinement}$ is a box, then there exist a $q_{\mathcal{O}}^{\prime\prime}$  such that $(q_{\refinement},q_{\mathcal{O}}^{\prime\prime})\in \Re_{\refinement \preceq \mathcal{O}}$  and $(q_{\mathcal{O}}^{\prime\prime}, a, q_{\mathcal{O}}^{\prime}) \in \Delta_\mathcal{O}$. 
Since $q_{\refinement}$ is a box by condition~\ref{def:boxNboxM} also $q_{\model}$ is a box and $(q_{\model},q_{\mathcal{O}})\in \Re_{\model \preceq \mathcal{O}}$.
Furthermore, $(q_{\model},q_{\mathcal{O}}^{\prime\prime})\in \Re_{\model \preceq \mathcal{O}}$ and $(q_{\model}^\prime,q_{\mathcal{O}}^{\prime})\in \Re_{\model \preceq \mathcal{O}}$. Thus, it exists a $q_{\mathcal{O}}^{\prime\prime}$ such that $(q_{\model},q_{\mathcal{O}}^{\prime\prime})\in \Re_{\model \preceq \mathcal{O}}$, $(q_{\mathcal{O}}^{\prime\prime}, a, q_{\mathcal{O}}^{\prime}) \in \Delta_\mathcal{O}$ and $(q_{\model}^\prime,q_{\mathcal{O}}^{\prime})\in \Re_{\model \preceq \mathcal{O}}$.
\item Assume that the second statement of condition~\ref{def:refinement6} is satisfied, i.e., $q_{\model}\in B_{\model}$ and there exists a $q_{\refinement}^{\prime\prime}$ such that $(q_\model, q_\refinement^{\prime\prime}) \in \Re_{\model \preceq \refinement}$ and $(q_{\refinement}^{\prime\prime} , a, q_{\refinement}^{\prime}) \in \Delta_{\refinement}$.  
Since $\refinement \preceq \mathcal{O}$, it must exists a $q_{\mathcal{O}}^{\prime\prime}$, such that $(q_{\refinement}^{\prime\prime}, q_{\mathcal{O}}^{\prime\prime}) \in  \Re_{\refinement \preceq \mathcal{O}}$ and a $q_{\mathcal{O}}^{\prime}$, such that $(q^{\prime}_{\refinement}, q_{\mathcal{O}}^{\prime}) \in \Re_{\refinement \preceq \mathcal{O}}$.
Furthermore, $\refinement \preceq \mathcal{O}$ implies that $(q_{\refinement}^{\prime\prime}, q_{\mathcal{O}}^{\prime\prime})$ satisfies one of the  conditions of~\ref{def:refinement6}. \\
If the first statement of condition~\ref{def:refinement6} is satisfied then $(q^{\prime\prime}_{\mathcal{O}}, a, q^\prime_{\mathcal{O}}) \in \Delta_\mathcal{O}$. This proves that there exists $q^{\prime\prime}_{\mathcal{O}}$ such that $(q_{\model}, q^{\prime\prime}_{\mathcal{O}}) \in \Re_{\model \preceq \mathcal{O}}$ and $(q_{\mathcal{O}}^{\prime\prime}, a, q^\prime_{\mathcal{O}}) \in \Delta_\mathcal{O}$ and $(q_{\model}^\prime, q^{\prime}_{\mathcal{O}}) \in \Re_{\model \preceq \mathcal{O}}$ as required by the first statement of condition~\ref{def:refinement6}.\\
Otherwise, if  $q^{\prime\prime}_{\refinement}$ is a box,  it must exists a $q_{\mathcal{O}}^{\prime\prime\prime}$, such that $(q^{\prime\prime}_{\refinement}, q_{\mathcal{O}}^{\prime\prime\prime}) \in \Re_{\refinement \preceq \mathcal{O}}$, and $(q^{\prime\prime\prime}_{\mathcal{O}}, a, q_{\mathcal{O}}^{\prime}) \in \Delta_{\mathcal{O}}$. 
This proves that there exists $q^{\prime\prime\prime}_{\mathcal{O}}$ such that $(q_{\model}, q^{\prime\prime\prime}_{\mathcal{O}}) \in \Re_{\model \preceq \mathcal{O}}$ and $(q_{\mathcal{O}}^{\prime\prime\prime}, a, q^\prime_{\mathcal{O}}) \in \Delta_\mathcal{O}$ and $(q_{\model}^\prime, q^{\prime}_{\mathcal{O}}) \in \Re_{\model \preceq \mathcal{O}}$  as required by the second statement of condition~\ref{def:refinement6}
\end{itemize}

Condition~\ref{def:refinement7}.  \emph{A transition of the refinement is associated with a transition of the model or to one of its black box states}. It is necessary to prove that for all $(q_\model, q_\mathcal{O}) \in  \Re_{\model \preceq \mathcal{O}}$, if $(q_\mathcal{O}, a, q_\mathcal{O}^\prime) \in \Delta_{\mathcal{O}}$, one of the statements specified in condition~\ref{def:refinement7} is satisfied.  
Since $\refinement \preceq \mathcal{O}$, it must exists a $(q_\refinement, q_\mathcal{O}) \in  \Re_{\refinement \preceq \mathcal{O}}$ such that one of the two statements specified in condition~\ref{def:refinement7} is satisfied.\\
Let first consider the case in which the first statement is satisfied. Then, it must exists a state $q^\prime_\refinement$, such that $(q^\prime_\refinement, q^\prime_\mathcal{O}) \in  \Re_{\refinement \preceq \mathcal{O}}$ and $(q_{\refinement}, a, q^\prime_{\refinement}) \in \Delta_{\refinement}$.  Since  $\model \preceq \refinement$, it must exists a $(q_\model, q_\refinement) \in  \Re_{\model \preceq \refinement}$ that satisfies one of the two statements specified in condition~\ref{def:refinement7}.
\begin{itemize}
\item If it exists a state $q^\prime_\model$ such that $(q^\prime_\model, q^\prime_\refinement) \in  \Re_{\model \preceq \refinement}$ and $(q_{\model}, a, q^\prime_{\model}) \in \Delta_{\model}$, then we can conclude that $(q_\model, q_\mathcal{O}) \in  \Re_{\model \preceq \mathcal{O}}$, $(q^\prime_\model, q^\prime_\mathcal{O}) \in  \Re_{\model \preceq \mathcal{O}}$ and $(q_{\model}, a, q^\prime_{\model}) \in \Delta_{\model}$, which satisfies the first statement of condition~\ref{def:refinement7}.
\item If $q_\model \in B_\model$ and $(q_\model, q^\prime_\refinement) \in  \Re_{\model \preceq \refinement}$, then we can conclude that $(q_\model, q^\prime_\mathcal{O}) \in  \Re_{\model \preceq \mathcal{O}}$.
\end{itemize}
Let us then consider the case in which the second statement is satisfied. Then, it must exist a box $q_\refinement$, such that $(q_\refinement, q^\prime_\mathcal{O}) \in  \Re_{\refinement \preceq \mathcal{O}}$. Similarly it may also exists a $q_\model$, which is a box, such that $(q_\model, q_\refinement) \in  \Re_{\model \preceq \refinement}$. This implies that $(q_\model, q_\mathcal{O}) \in  \Re_{\model \preceq \refinement}$ and $(q_\model, q^\prime_\mathcal{O}) \in  \Re_{\model \preceq \refinement}$
\end{proof}

\begin{theorem}[Language preservation] 
\label{th:languagePreservation} 
Given an \IBA\ $\model$ and one of its refinements $\refinement$, for all $v^\omega \in \Sigma^\omega$:
\begin{enumerate}
\item  \label{satisfiedPreservation} if $v^\omega  \in \mathcal{L}^\omega(\model)$ then  $v^\omega \in  \mathcal{L}^\omega(\refinement)$
\item  \label{notsatisfiedPreservation} if $v^\omega  \not \in (\mathcal{L}^\omega_p(\model) \cup  \mathcal{L}^\omega(\model) )$ then  $v^\omega  \not \in (\mathcal{L}^\omega_p(\refinement) \cup  \mathcal{L}^\omega(\refinement) )$
\end{enumerate}
\end{theorem}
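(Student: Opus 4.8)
The plan is to prove both statements by transferring runs between $\model$ and $\refinement$ along the refinement relation $\Re$ guaranteed by $\model \preceq \refinement$, exploiting the uniqueness clauses (conditions~\ref{def:refinementRegularMRegularN} and~\ref{def:stateNstateM}) to make the state correspondence a genuine function, so that the recurrence of accepting states is carried across index by index. Statement~\ref{satisfiedPreservation} lifts a run upward from $\model$ to $\refinement$, while statement~\ref{notsatisfiedPreservation} projects a run downward from $\refinement$ to $\model$ after passing to the contrapositive.

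For statement~\ref{satisfiedPreservation} I would start from a definitely accepting run $\rho^\omega$ of $\model$ on $v^\omega$. Such a run visits only regular states, hence takes a genuine transition $(\rho^\omega(i), v_i, \rho^\omega(i+1)) \in \Delta_\model$ at every step (a box-staying step is impossible, as it requires a box state). Using condition~\ref{def:refinementRegularMRegularN} I map each $\rho^\omega(i)$ to its unique regular counterpart in $\refinement$, defining a candidate run $\rho^\omega_\refinement$. The key step is to verify that $\rho^\omega_\refinement$ is a genuine run: for each transition, condition~\ref{def:refinement6} applies, and since the source state is regular its second (box) alternative is excluded, forcing the first alternative $(\rho^\omega_\refinement(i), v_i, q') \in \Delta_\refinement$ with $(\rho^\omega(i+1), q') \in \Re$. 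By condition~\ref{def:boxNboxM} the target $q'$ cannot be a box (else $\rho^\omega(i+1)$ would be one), so by condition~\ref{def:refinementRegularMRegularN} it is exactly the unique regular counterpart $\rho^\omega_\refinement(i+1)$, which makes the lifted run well-defined. Condition~\ref{def:initialRegularMinitialRegularN} guarantees $\rho^\omega_\refinement(0)$ is initial, and condition~\ref{def:acceptingRegularMacceptingRegularN} sends the accepting regular state recurring in $\rho^\omega$ to an accepting regular state recurring in $\rho^\omega_\refinement$; thus $\rho^\omega_\refinement$ is definitely accepting and $v^\omega \in \mathcal{L}^\omega(\refinement)$.

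For statement~\ref{notsatisfiedPreservation} I would argue the contrapositive: assuming $v^\omega \in \mathcal{L}^\omega_p(\refinement) \cup \mathcal{L}^\omega(\refinement)$, I show $v^\omega \in \mathcal{L}^\omega_p(\model) \cup \mathcal{L}^\omega(\model)$. First I would observe that, for any \IBA, a word belongs to the union of its definitely and possibly accepted languages precisely when it admits some run $\rho^\omega$ with $inf(\rho^\omega) \cap F \neq \emptyset$, since both acceptance conditions reduce to this recurrence (they differ only in whether a box state is visited). So I take an accepting run $\rho^\omega_\refinement$ of $\refinement$ and project it to $\model$ via condition~\ref{def:stateNstateM}, setting $\rho^\omega_\model(i)$ to the unique state of $\model$ associated with $\rho^\omega_\refinement(i)$. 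I then check $\rho^\omega_\model$ is a valid run by a case analysis on each step: a box-staying step of $\refinement$ projects (via condition~\ref{def:boxNboxM}) to a box-staying step of $\model$, while a genuine transition is handled by condition~\ref{def:refinement7}, whose first alternative yields a matching transition $(\rho^\omega_\model(i), v_i, \rho^\omega_\model(i+1)) \in \Delta_\model$ and whose second alternative forces $\rho^\omega_\model(i)$ to be a box with $\rho^\omega_\model(i) = \rho^\omega_\model(i+1)$, i.e.\ again a box-staying step. Condition~\ref{def:initianlNinitialM} supplies the initial state, and condition~\ref{def:acceptingNacceptingM} sends the accepting state recurring in $\rho^\omega_\refinement$ to an accepting state of $\model$ that, because the projection is functional, recurs in $\rho^\omega_\model$; hence $inf(\rho^\omega_\model) \cap F_\model \neq \emptyset$ and $v^\omega$ is accepted by $\model$.

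The main obstacle I anticipate is not any single implication but the bookkeeping that keeps the recurrence of accepting states intact under both constructions. In each direction this hinges on the correspondence being a genuine function, delivered by the uniqueness parts of conditions~\ref{def:refinementRegularMRegularN} and~\ref{def:stateNstateM}, so that an accepting state occurring at infinitely many indices is carried to a single accepting state occurring at those same indices. The secondary delicate point is the exhaustive case analysis in statement~\ref{notsatisfiedPreservation} separating genuine transitions from box-internal moves, where one must confirm that the ``stay in a box'' clause of the run definition is exactly what the second alternatives of conditions~\ref{def:refinement6} and~\ref{def:refinement7} produce.
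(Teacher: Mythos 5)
Your proof is correct and takes essentially the same route as the paper's: statement~1 is proved by lifting the definitely accepting run of $\model$ into $\refinement$ (conditions 1, 6, 7 and 8 of the refinement definition), and statement~2 by projecting an accepting run of $\refinement$ down to $\model$ (conditions 2, 3, 5 and 9), your contrapositive being just the paper's proof by contradiction in different clothing. The only difference is one of rigor rather than of route: you make explicit, via the uniqueness clauses and condition 4, that both correspondences are genuine functions carrying recurring accepting states to recurring accepting states, a point the paper compresses into ``by construction the corresponding runs are accepting.''
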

\begin{proof}
Let us first prove the statement~\ref{satisfiedPreservation} of Theorem~\ref{th:languagePreservation}. 
Since $v^\omega$ is definitely accepted by the \IBA\ $\mathcal{L}^\omega(\model)$, it must exists a definitely accepting run $\rho^\omega_\model$ of $\model$. 
Note that definitely accepting runs only contains states that are regular. 
Let us consider the initial state $\rho^\omega_\model(0)$. 
By Definition~\ref{def:refinement} conditions~\ref{def:refinementRegularMRegularN} and~\ref{def:initialRegularMinitialRegularN} it must exists a state $q^0_{\refinement} \in Q^0_{\refinement}$ such that $(\rho^\omega_\model(0), q^0_{\refinement}) \in \Re$. 
Let us identify with $\rho^\omega_\refinement$ a run which starts in this state and is iteratively obtained as follows. 
Consider a generic step step $i$.  
Given two states  $\rho^\omega_\model(i)$, $\rho^\omega_\model(i+1)$ of the run $\rho^\omega_\model$ it must exist a transition $(\rho^\omega_\model(i),a, \rho^\omega_\model(i+1)) \in \Delta_\model$. 
By Definition~\ref{def:refinement} condition~\ref{def:refinement6} it must exists a transition ($q^\omega_\refinement(i), a, q^\omega_\refinement(i+1)$) of $\Delta_\refinement$, where $(\rho^\omega_\model(i) ,\rho^\omega_\refinement(i)) \in \Re$ and $(\rho^\omega_\model(i+1) ,\rho^\omega_\refinement(i+1)) \in \Re$.
Condition~\ref{def:acceptingRegularMacceptingRegularN} imposes that a regular accepting state of $q_\model$ is associated with an accepting state of $q_\refinement$. 
Thus, since $\rho^\omega_\model$ and $\rho^\omega_\refinement$ move from $\rho^\omega_\model(i)$ and $\rho^\omega_\refinement(i)$ to  $\rho^\omega_\model(i+1)$ and $\rho^\omega_\refinement(i+1)$ by reading the same characters and for construction the corresponding runs are definitely accepting we conclude that $v^\omega \in \mathcal{L}^\omega(\refinement)$.

Let us now consider the statement~\ref{notsatisfiedPreservation} of Theorem~\ref{th:languagePreservation}. 
The proof is by contradiction. 
Imagine that there exists a word $v^\omega  \not \in (\mathcal{L}^\omega_p(\model) \cup  \mathcal{L}^\omega(\model) )$ and $v^\omega  \in (\mathcal{L}^\omega_p(\refinement) \cup  \mathcal{L}^\omega(\refinement))$. 
Since $v^\omega  \in (\mathcal{L}^\omega_p(\refinement) \cup  \mathcal{L}^\omega(\refinement))$, it must exists a definitely accepting or possibly accepting run $\rho^\omega_\refinement$ associated with this word. 
Let us consider the initial state $\rho^\omega_\refinement(0)$ of this run. 
By Definition~\ref{def:refinement}, condition~\ref{def:stateNstateM}, it must exists an initial state $q_\model \in Q_\model$ such that $(q_\model, \rho^\omega_\refinement(0))\in \Re$. 
Since $\rho_\refinement(0)$ is initial by Definition~\ref{def:refinement}, condition~\ref{def:initianlNinitialM},  we derive that $q_\model$ is also initial. 
Let us identify as $\rho^\omega_\model$ a run in $\model$ which starts from $q_\model$.  Given two states  $\rho^\omega_\refinement(i)$, $\rho^\omega_\refinement(i+1)$ of the run $\rho^\omega_\refinement$ it must exists a transition $(\rho^\omega_\refinement(i),a, \rho^\omega_\refinement(i+1)) \in \Delta_\refinement$. By Definition~\ref{def:refinement}, condition~\ref{def:refinement7}, either it exists a transition $(\rho^\omega_\model(i), a, \rho^\omega_\model(i+1))$ of $\Delta_\model$ or $\rho^\omega_\model(i) \in B_\model$. 
Finally,  condition~\ref{def:acceptingNacceptingM} imposes that an accepting state of $q_\refinement$ is associated with an accepting state of $q_\model$. Thus, since $\rho^\omega_\model$ and $\rho^\omega_\refinement$ moves from $\rho^\omega_\model(i)$ and $\rho^\omega_\refinement(i)$ to  $\rho^\omega_\model(i+1)$ and $\rho^\omega_\refinement(i+1)$, respectively, by reading the same characters, or $\rho^\omega_\model(i)=\rho^\omega_\model(i+1)$ and $\rho^\omega_\model(i) \in B_\model$, and by construction the corresponding runs are accepting, we conclude that $v^\omega \in \mathcal{L}^\omega(\model)$ or $v^\omega \in \mathcal{L}_p^\omega(\model)$ which contradict our hypothesis.
\end{proof}

\subsubsection{Replacements}
\label{Sec:ModelingReplacements}
Consider an \IBA\ $\model$. 
At each refinement round $i \in \mathcal{R}\mathcal{R}$, the developer designs a replacement $\mathcal{R}$\footnote{The term replacement is also used for example in~\cite{nopper2004approximate}.} for one of the boxes $b \in B_{\model_i}$ of  $\model_i$, where  $\model_i$ is the refinement of the automaton $\model$ before the refinement round $i$.

\begin{mydef}[Replacement]
\label{replacement}
 Given an \IBA\ $\model=\langle \Sigma_{\model}, R_{\model}, B_{\model}, Q_{\model},$ $ \Delta_{\model}, Q_{\model}^0, F_{\model} \rangle$, the replacement $\mathcal{R}$ of the box $b \in B_{\model}$ is defined as a triple 
 $\langle \mathcal{T}, \Delta^{inR},$ $ \Delta^{outR} \rangle$. 
 $\mathcal{T}=\langle \Sigma_{\mathcal{T}}, R_{\mathcal{T}}, B_{\mathcal{T}},$ $Q_{\mathcal{T}}, \Delta_{\mathcal{T}},$ $ Q_{\mathcal{T}}^0, F_{\mathcal{T}} \rangle$ is an \IBA, $\Delta^{inR} \subseteq \left \{ (q^\prime, a, q) \right. $ $ \left. \mid (q^\prime, a, b) \in  \Delta_{\model}\text{ and }q \in Q_{\mathcal{T}}\right \}$ and $ \Delta^{outR}$ $\subseteq \left \{ (q, a, q^\prime) \mid \right.$ $\left. (b, a, q^\prime) \in  \Delta_{\model}  \text{ and } \right.$ $\left.q \in Q_{\mathcal{T}}\right \}$ are its incoming and outgoing transitions, respectively.
$\mathcal{R}$ must  satisfy the following conditions: 
\begin{itemize}
\item if $b \not \in Q^0_{\model}$ then $Q^0_{\mathcal{T}}=\emptyset$;
\item if $b \not \in F_{\model}$ then $F_{\mathcal{T}}=\emptyset$;
\item if $(q^\prime, a, b) \in  \Delta_{\model}$ then it exists $(q^\prime, a, q)  \in \Delta^{inR}$, such that $q \in Q_{\mathcal{T}}$;
\item if $(b, a, q^\prime)   \in  \Delta_{\model}$ then it exists $(q, a, q^\prime)   \in \Delta^{outR}$, such that $q \in Q_{\mathcal{T}}$;
\item if $(b, a, b) \in  \Delta_{\model}$ then it exists $(q^\prime, a, q) \in  \Delta_{\mathcal{T}}$.
\end{itemize}
\end{mydef}

Informally, $\mathcal{T}$ is the \IBA\ to be substituted to the box $b$, $\Delta^{inR}$ and $\Delta^{outR}$ specify how the replacement is connected to the states of $\model$.
Consider for example the replacement $\mathcal{R}_{send_1}$ described in Figure~\ref{Fig:send1Replacement} which refers to the box $send_1$ of the model $\model$  described in Figure~\ref{Fig:IFSAExample} (the replacement assumes that $send_1$  is bot initial and accepting). 
The automaton $\model_{send_1}$ is defined over the set of atomic propositions $AP_{send_1}=\left \{start, \right.$ $\left.  booting, ready, send,  wait, timeout, ack, fail, ok \right \}$. The states $q_{14}$, $q_{15}$ and $q_{17}$ are the initial, accepting and a box of the replacement, respectively. Note that the initial/accepting states must be initial/accepting for the whole system, i.e., not only in the scope of the considered replacement. Furthermore, the destination/source of an incoming/outgoing transition is not considered as initial/accepting if they are not initial/accepting for $\model_{send_1}$.

\begin{figure}[ht]
\centering
\includegraphics[scale=0.5]{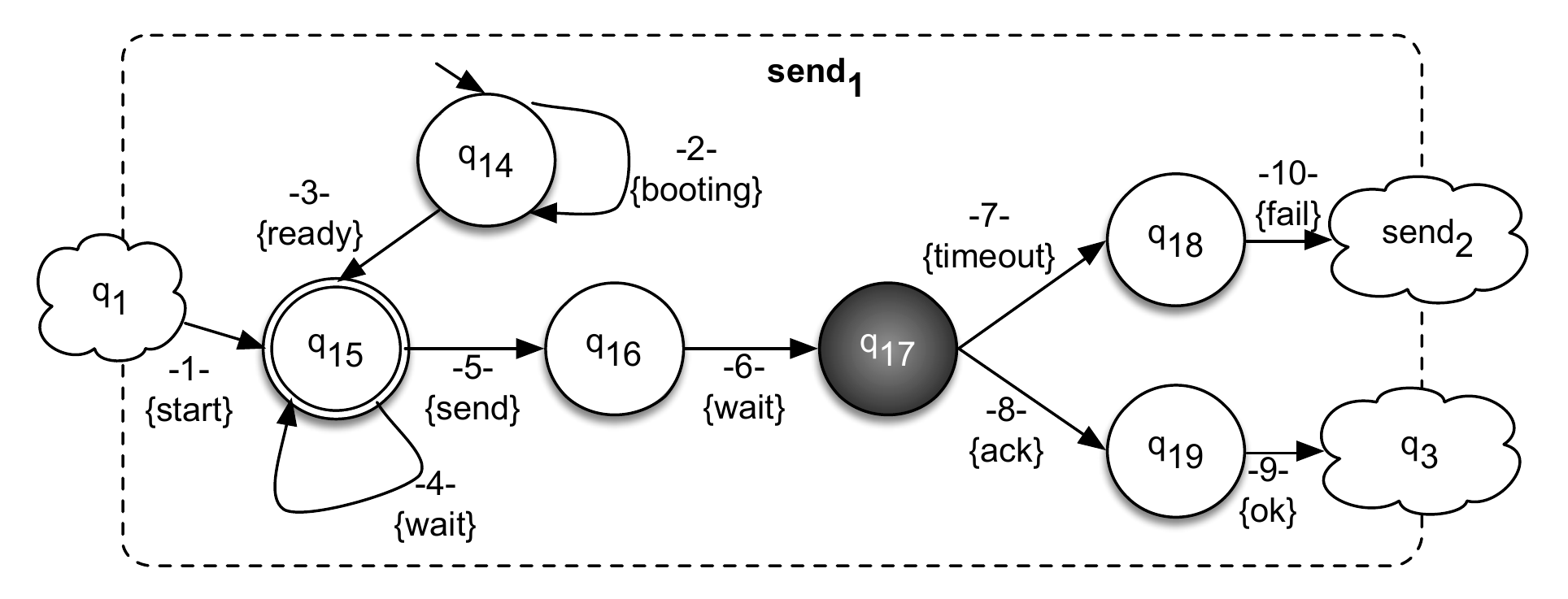}
\caption{The replacement of the box $send_1$.}
\label{Fig:send1Replacement}
\end{figure}

As for \IBAS\ we define the \emph{completion} of a replacement $\mathcal{R}_{c}$ as the replacement where the corresponding automaton is discharged from its boxes and their incoming and outgoing transitions.

 When a replacement is considered four different types of runs can be identified:
\begin{itemize}
\item \emph{finite internal runs}: are the runs which start from an initial state that is internal to the replacement and reach an outgoing transition of the replacement;
\item \emph{infinite internal runs}: are the runs that start from an initial state that is internal to the replacement and infinitely enter an internal accepting state  \emph{without} leaving the replacement;
\item \emph{finite external runs}: are the runs that start  from an incoming transition of the replacement and reach an outgoing transition of the replacement, i.e., they are finite paths that cross the component;
\item \emph{infinite external runs}: are the  runs that start from an incoming transition of the replacement and reach an accepting state which is internal to the replacement it-self \emph{without} leaving the replacement.
\end{itemize} 

\noindent We identify as $Q^{0inR}=\{q \in Q_{\model}$ such that there exist $ q^{\prime} \in Q_{\mathcal{T}}$ and an $a \in \Sigma_{\model}$ and  $(q, a, q^\prime)  \in  \Delta^{inR}  \}$ and  
$Q^{0outR}= \{q \in Q_{\mathcal{T}}$ such that there exist  $q^{\prime} \in  Q_{\mathcal{T}}$ and an $a \in \Sigma_{\model}$ and $(q^{\prime}, a,  q) \in  \Delta^{inR} \}$ the set of the states that are sources and destinations of incoming transitions, respectively.  
We indicate with $F^{inR}=\{q \in Q_{\mathcal{T}}$ such that there exist $q^{\prime} \in  Q_{\model}$ and an $a \in \Sigma_{\model}$ and $(q, a, q^{\prime}) \in  \Delta^{outR} \}$ and with $F^{outR}= \{q  \in Q_{\model} $ such that there exist $q^{\prime} \in  Q_{\mathcal{T}}$ and an $a \in \Sigma_{\model}$ and $(q^{\prime}, a, q) \in \Delta^{outR} \}$ the set of the states that are sources and destinations of outgoing transitions.

Infinite internal runs, finite internal runs, infinite external runs and finite external runs can then formally defined as in the following.

\begin{mydef} [Finite Internal Run] Given a replacement $\mathcal{R}=\langle \mathcal{T},$ $ \Delta^{inR},$ $  \Delta^{outR} \rangle$ defined over the automaton $\mathcal{T}=\langle \Sigma_{\mathcal{T}}, R_{\mathcal{T}}, B_{\mathcal{T}}, Q_{\mathcal{T}}, $ $ \Delta_{\mathcal{T}}, Q_{\mathcal{T}}^0, F_{\mathcal{T}} \rangle$ a finite internal  run $\rho_b^{f\ast}$ over a word $v \in \Sigma^\ast$ is a finite run  of the finite state automaton $\model^\prime=\langle \Sigma_{\mathcal{T}}, R_{\mathcal{T}}, B_{\mathcal{T}}, Q_{\mathcal{T}} \cup F^{outR} , \Delta_{\mathcal{T}} \cup \Delta^{outR},$ $Q^{0}_s, F^{outR} \rangle$.
\end{mydef}

A finite internal run is associated to the IFSA corresponding to the replacement where the initial states include only the internal initial states of the replacement and the final states are the destinations of its outgoing transitions. For example, the run $\rho_{send_1}^{f\ast}($\{$ready$\}.\{$send$\}.\{$wait$\}.\{$timeout$\}. \{$fail$\}), such that $\rho_{send_1}^{f\ast}(0)=q_{14}$, $\rho_{send_1}^{f\ast}(1)=q_{15}$, $\rho_{send_1}^{f\ast}(2)=q_{16}$, $\rho_{send_1}^{f\ast}(3)=q_{17}$, $\rho_{send_1}^{f\ast}(4)=q_{18}$, $\rho_{send_1}^{f\ast}(5)=send_2$, is a finite internal run of the replacement presented in Figure~\ref{Fig:send1Replacement}.

\begin{mydef} [Infinite Internal Run] Given a replacement $\mathcal{R}=\langle \mathcal{T}, $ $\Delta^{inR}, \Delta^{outR} \rangle$ defined over the automaton $\mathcal{T}=\langle \Sigma_{\mathcal{T}}, R_{\mathcal{T}}, B_{\mathcal{T}}, Q_{\mathcal{T}},$ $ \Delta_{\mathcal{T}}, Q_{\mathcal{T}}^0, F_{b} \rangle$ a infinite internal run $\rho^{i\omega}_b$ over a word $v \in \Sigma^\omega$ is an infinite run  of the (Incomplete) B\"uchi automaton $\model^\prime=\langle \Sigma_{\mathcal{T}}, R_{\mathcal{T}}, B_{\mathcal{T}}, Q_{\mathcal{T}}, $ $\Delta_{\mathcal{T}},Q_{\mathcal{T}}^0, F_{\mathcal{T}} \rangle$.
\end{mydef}

An infinite internal run refers to the IBA obtained from the automaton $\mathcal{T}$  where the initial and accepting states include only the  initial and accepting states of the automaton associated with the replacement. For example, the infinite internal run $\rho^{i\omega}_{send_1}($\{$ready$\}.\{$wait$\}$^\omega)$ is a function such that $\rho^{i\omega}_{send_1}(0)=q_{14}$, and $\forall i >1, \rho^{i\omega}_{send_1}(i)=q_{15}$.

\begin{mydef}[Finite External Run] 
\label{def:finiteExternalRun} Given a replacement $\mathcal{R}=\langle \mathcal{T},$ $ \Delta^{inR},$ $  \Delta^{outR} \rangle$ defined over the automaton $\mathcal{T}=\langle \Sigma_{\mathcal{T}}, R_{\mathcal{T}}, B_{\mathcal{T}}, Q_{\mathcal{T}},$ $ \Delta_{\mathcal{T}}, Q_{\mathcal{T}}^0, F_{\mathcal{T}} \rangle$ a finite external  run $\rho_b^{e\ast}$ over a word $v \in \Sigma^\ast$ is a finite run  of the finite state automaton $\model^\prime=\langle \Sigma_{\mathcal{T}}, R_{\mathcal{T}}, B_{\mathcal{T}}, Q_{\mathcal{T}} \cup Q^{0inR} \cup F^{outR}, \Delta_{\mathcal{T}}\cup \Delta^{inR} \cup \Delta^{outR},$ $Q^{0inR}, F^{outR} \rangle$.
\end{mydef}

A finite external run refers to the IFSA obtained from the automaton $\mathcal{T}$ where the initial and accepting states include only the sources and the destinations of the incoming and outgoing transitions, respectively. For example, the finite external run $\rho_{send_1}^{e\ast}($\{$start$\}.\{$send$\}.\{$wait$\}.\{$timeout$\}. \{$fail$\}$)$ is a function such that $\rho_{send_1}^{e\ast}(0)=q_1$, $\rho_{send_1}^{e\ast}(1)=q_{15}$, $\rho_{send_1}^{e\ast}(2)=q_{16}$, $\rho_{send_1}^{e\ast}(3)=q_{17}$, $\rho_{send_1}^{e\ast}(4)=q_{18}$ and $\rho_{send_1}^{e\ast}(5)=send_2$.

\begin{mydef} [Infinite External Run] Given a replacement $\mathcal{R}=\langle \mathcal{T}, $ $\Delta^{inR}, \Delta^{outR} \rangle$ defined over the automaton $\mathcal{T}=\langle \Sigma_{\mathcal{T}}, R_{\mathcal{T}}, B_{ \mathcal{T}}, Q_{\mathcal{T}}, $ $\Delta_{\mathcal{T}}, Q_{\mathcal{T}}^0, F_{\mathcal{T}} \rangle$ a infinite external run $\rho^{e\omega}_b$ over a word $v \in \Sigma^\omega$ is an infinite run  of the (Incomplete) B\"uchi automaton $\model^\prime_{b}=\langle \Sigma_{\mathcal{T}}, R_{\mathcal{T}}, B_{\mathcal{T}},$ $ Q_{\mathcal{T}} \cup Q^{0inR}, \Delta_{\mathcal{T}} \cup \Delta^{inR}, Q^{0inR}, F_{\mathcal{T}} \rangle$.
\end{mydef}

An infinite external run refers to the IBA obtained from the automaton $\mathcal{T}$ where the initial states include the source states of the incoming transitions and the accepting states contains only the accepting states of $\mathcal{T}$.
For example, the infinite external run $\rho^{e\omega}_{send_1}($\{$start$\}.\{$wait$\}$^\omega)$ is a function such that  $\rho^{e\omega}_{send_1}(0)=q_1$ and $\forall i \geq 1, \rho^{e\omega}_{send_1}(i)=q_{15}$.

Given the four types of runs previously described, which are defined over IFSA and IBA, it is possible to distinguish between the three types of finite/infinite runs described in Sections~\ref{sec:IncompleteFSA} and \ref{sec:IncompleteBA}: \emph{definitely accepting, possibly accepting and not accepting}. For example, the replacement presented in Figure~\ref{Fig:send1Replacement} contains two types of \emph{definitely accepting infinite} runs. The infinite \emph{internal} runs involve the states $q_{14}$ and $q_{15}$, i.e., they recognize all the words in the form \{$booting$\}$^\ast$.\{$ready$\}.\{$wait$\}$^\omega$. The infinite \emph{external} runs involve the states $q_1$ and $q_{15}$ and recognize all the words in the form \{$start$\}.\{$wait$\}$^\omega$. Furthermore, the replacement contains two types of \emph{possibly accepting finite} runs. The finite \emph{internal} possibly accepting runs includes all the runs which involve the states $q_{14}$, $q_{15}$, $q_{16}$, $q_{17}$ and $q_{18}$ or $q_{19}$, respectively. The finite \emph{external} possibly accepting runs includes all the runs which involve the states $q_{1}$, $q_{15}$, $q_{16}$, $q_{17}$ and $q_{18}$  or $q_{19}$, respectively.

Let us now discuss the language recognized by a replacement. The replacement $\mathcal{R}$ \emph{internally definitely accepts} the finite word $v \in \Sigma^\ast$ if and only if there exists an internal finite definitely accepting run of $\mathcal{R}$ on $v$. The language of the finite words internally definitely accepted by the replacement $\mathcal{R}$ is indicated as $\mathcal{L}^{i\ast}(\mathcal{R})$. The replacement $\mathcal{R}$ \emph{externally definitely accepts} the finite word $v \in \Sigma^\ast$ if and only if there exists an external finite definitely accepting run of $\mathcal{R}$ on $v$. The language of the finite words externally definitely accepted by the replacement $\mathcal{R}$ is indicated as $\mathcal{L}^{e\ast}(\mathcal{R})$. The replacement $\mathcal{R}$ \emph{internally definitely accepts} the infinite word $v \in \Sigma^\omega$ if and only if there exists an internal infinite definitely accepting run of $\mathcal{R}$ on $v$. The language of the infinite words internally definitely accepted by the replacement $\mathcal{R}$ is indicated as $\mathcal{L}^{i\omega}(\mathcal{R})$. The replacement $\mathcal{R}$ \emph{externally definitely accepts} the infinite word $v \in \Sigma^\omega$ if and only if there exists an external infinite definitely accepting run of $\mathcal{R}$ on $v$. The language of the infinite words externally definitely accepted by the replacement $\mathcal{R}$ is indicated as $\mathcal{L}^{e\omega}(\mathcal{R})$.

Let us now consider \emph{possibly accepting words}. The replacement $\mathcal{R}$ \emph{internally possibly accepts} the finite word $v \in \Sigma^\ast$ if and only if there exists an internal possibly finite accepting run of $\mathcal{R}$ on $v$. The language of the finite words internally possibly accepted by the replacement $\mathcal{R}$ is indicated as $\mathcal{L}_p^{i\ast}(\mathcal{R})$. The replacement $\mathcal{R}$ \emph{externally possibly accepts} the finite word $v \in \Sigma^\ast$ if and only if there exists an external finite possibly accepting run of $\mathcal{R}$ on $v$. The language of the finite words externally possibly accepted by the replacement $\mathcal{R}$ is indicated as $\mathcal{L}_p^{e\ast}(\mathcal{R})$. The replacement $\mathcal{R}$ \emph{internally possibly accepts} the infinite word $v \in \Sigma^\omega$ if and only if there exists an internal possibly infinite accepting run of $\mathcal{R}$ on $v$. The language of the infinite words internally possibly accepted by the replacement $\mathcal{R}$ is indicated as $\mathcal{L}_p^{i\omega}(\mathcal{R})$. The replacement $\mathcal{R}$ \emph{externally possibly accepts} the infinite word $v \in \Sigma^\omega$ if and only if there exists an external infinite possibly accepting run of $\mathcal{R}$ on $v$. The language of the infinite words externally possibly accepted by the replacement $\mathcal{R}$ is indicated as $\mathcal{L}_p^{e\omega}(\mathcal{R})$.

\begin{mydef}[Sequential composition]
\label{def:plugginarefinement}  
Given an \IBA\ $\model=\langle \Sigma_{\model}, R_{\model},$ $ B_{\model},Q_ {\model}, \Delta_{\model},$ $ Q_{\model}^0, F_{\model} \rangle$ and the replacement $\mathcal{R}=\langle \mathcal{T}, \Delta^{inR},$ $ \Delta^{outR} \rangle$ of the box $b \in B_{\model}$, the sequential composition $\model\Join\mathcal{R}$ is an \IBA\ $\langle \Sigma_{\model\Join\mathcal{R}}, R_{\model\Join\mathcal{R}}, B_{\model\Join\mathcal{R}},$ $Q_{\model\Join\mathcal{R}}, \Delta_{\model\Join\mathcal{R}},$ $ Q_{\model\Join\mathcal{R}}^0, F_{\model\Join\mathcal{R}} \rangle$ of $\model$ that satisfies the following conditions:
\begin{enumerate}
\item \label{replacementAlphabet} $\Sigma_{\model\Join\mathcal{R}}=\Sigma_\model \cup \Sigma_{\mathcal{T}}$;
\item  \label{replacementRegular} $R_{\model\Join\mathcal{R}}=R_{\model} \cup R_{\mathcal{T}}$;
\item \label{replacementTransparent} $B_{\model\Join\mathcal{R}}=B_{\model} \setminus \left \{ b\right \} \cup B_{\mathcal{T}}$;
\item \label{replacementStates} $Q_{\model\Join\mathcal{R}}=R_{\model\Join\mathcal{R}} \cup B_{\model\Join\mathcal{R}}$;
\item \label{transitionRelation}$\Delta_{\model\Join\mathcal{R}}=(\Delta_{\model} \setminus 
\left \{ (q_{\model}, a, q_{\model}^\prime)  \in \Delta_\model \mid q_{\model} =b \lor q_{\model}^\prime =b    \right \} )\cup \Delta_{\mathcal{T}}  \cup \Delta^{inR} \cup \Delta^{outR}  $;
\item $Q_{\model\Join\mathcal{R}}^0=(Q_\model^0 \cup Q^0_{\mathcal{T}}) \cap Q_{\model\Join\mathcal{R}}$;
\item $F_{\model\Join\mathcal{R}}=(F_\model \cup F_{\mathcal{T}}) \cap Q_{\model\Join\mathcal{R}}$.
\end{enumerate}
\end{mydef}

Definition~\ref{def:plugginarefinement} condition~\ref{replacementAlphabet}  specifies that the alphabet of the refinement $\model\Join\mathcal{R}$ is the union of the alphabet of the original \IBA\ $\model$ and the alphabet of the automaton $\mathcal{T}$ associated with the replacement $\mathcal{R}$.
Definition~\ref{def:plugginarefinement} condition~\ref{replacementRegular} specifies that the set of regular states of $\model\Join\mathcal{R}$ is the union of the set of the regular states of $\model$ and the set of the regular states of the automaton $\mathcal{T}$ associated with the replacement $\mathcal{R}$.
Definition~\ref{def:plugginarefinement} condition~\ref{replacementTransparent} specifies that the set of boxes of $\model\Join\mathcal{R}$ is the union of the set of the boxes of $\model$, with the exception of the box $b$ which is refined, and the set of the boxes of the automaton $\mathcal{T}$ associated with the replacement $\mathcal{R}$.
Definition~\ref{def:plugginarefinement} condition~\ref{replacementStates} specifies the set of the states of $\model\Join\mathcal{R}$ which corresponds to the union of its regular and box states. Note that the box $b$ is not contained into $Q_{\model\Join\mathcal{R}}$.
Definition~\ref{def:plugginarefinement} condition~\ref{transitionRelation} specifies the set of the transitions of $\model\Join\mathcal{R}$. The transitions include all the transitions of the original model $\Delta_\model$ with the exception of the transitions that reach and leave the box $b$, all the transitions $\Delta_{\mathcal{T}}$ of the automaton the automaton $\mathcal{T}$ associated with the replacement and its incoming and outgoing transitions $\Delta^{inR}$ and $\Delta^{outR}$.
The set $Q_{\model\Join\mathcal{R}}^0$ of the initial states of $\model\Join\mathcal{R}$ includes all the initial states $Q_\model^0$ of the \IBA\ and the initial states $Q^0_{\mathcal{T}}$ of the automaton $\mathcal{T}$ associated with the replacement.
 The intersection with the set $Q_{\model\Join\mathcal{R}}$ is computed to remove the box $b$ (if present).
The set $F_{\model\Join\mathcal{R}}$ of the accepting states of $\model\Join\mathcal{R}$ include all the accepting states $F_\model$ of the \IBA\ and the accepting states $F_{\mathcal{T}}$ of  the automaton $\mathcal{T}$ associated with  the replacement. 
As previously, the intersection with the set $Q_{\model\Join\mathcal{R}}$  removes the box $b$ (if present).

\begin{theorem}[Refinement Preservation]
\label{th:RefinementPreservation}  
Given a model $\model=\langle \Sigma_{\model}, R_{\model},$ $ B_{\model},Q_ {\model}, \Delta_{\model},$ $ Q_{\model}^0, F_{\model} \rangle$ and a replacement $\mathcal{R}=\langle \mathcal{T}, \Delta^{inR},$ $ \Delta^{outR} \rangle$ 
 which refers to one of its boxes $b$, $\model \preceq \model\Join\mathcal{R}$.
\end{theorem}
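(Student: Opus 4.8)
The plan is to exhibit an explicit refinement relation $\Re \subseteq Q_\model \times Q_{\model\Join\mathcal{R}}$ witnessing $\model \preceq \model\Join\mathcal{R}$ and to verify the nine conditions of Definition~\ref{def:refinement} against the construction of the sequential composition (Definition~\ref{def:plugginarefinement}). The natural choice relates every state of $\model$ other than the refined box $b$ to itself, and relates $b$ to every state of the replacement automaton $\mathcal{T}$:
\[
\Re = \{(q,q) \mid q \in Q_\model \setminus \{b\}\} \cup \{(b, q') \mid q' \in Q_\mathcal{T}\}.
\]
This formalizes the intuition that $\model\Join\mathcal{R}$ leaves $\model$ untouched outside $b$, while the states of $\mathcal{T}$ collectively take over the role of $b$.

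The seven conditions on state types (\ref{def:refinementRegularMRegularN}--\ref{def:acceptingRegularMacceptingRegularN}) I expect to be routine. Since $b$ is a box, every regular state of $\model$ differs from $b$, hence is related to itself and survives in $R_{\model\Join\mathcal{R}}=R_\model \cup R_\mathcal{T}$; this yields conditions~\ref{def:refinementRegularMRegularN}, \ref{def:initialRegularMinitialRegularN} and~\ref{def:acceptingRegularMacceptingRegularN}, using that initial and accepting regular states of $\model$ remain initial and accepting in the composition. Condition~\ref{def:stateNstateM} holds because each state of $\model\Join\mathcal{R}$ is either a surviving state of $\model$ (uniquely related to itself) or a state of $\mathcal{T}$ (uniquely related to $b$), and condition~\ref{def:boxNboxM} holds because $B_{\model\Join\mathcal{R}}=(B_\model\setminus\{b\})\cup B_\mathcal{T}$. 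For conditions~\ref{def:initianlNinitialM} and~\ref{def:acceptingNacceptingM} the crucial leverage is the replacement constraint of Definition~\ref{replacement} that $Q^0_\mathcal{T}=\emptyset$ when $b\notin Q^0_\model$ and $F_\mathcal{T}=\emptyset$ when $b\notin F_\model$: a state of $\mathcal{T}$ can be initial (respectively accepting) only when $b$ itself is, which is exactly what relating such states to $b$ requires.

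The substance of the argument lies in the transition conditions~\ref{def:refinement6} and~\ref{def:refinement7}, each of which I would settle by a case analysis on whether the transition touches $b$. For condition~\ref{def:refinement6}, consider $(q_\model,a,q_\model^\prime)\in\Delta_\model$. If neither endpoint is $b$, the transition survives verbatim in $\Delta_{\model\Join\mathcal{R}}$ and the first disjunct holds. If $q_\model^\prime=b$, Definition~\ref{replacement} supplies an incoming transition $(q_\model,a,q')\in\Delta^{inR}$ with $q'\in Q_\mathcal{T}$, and since $q'$ is related to $b=q_\model^\prime$ the first disjunct again holds. If $q_\model=b$ while $q_\model^\prime\neq b$, Definition~\ref{replacement} supplies an outgoing transition $(q',a,q_\model^\prime)\in\Delta^{outR}$ with $q'\in Q_\mathcal{T}$, which verifies the box disjunct with witness $q'$ related to $b$; the self-loop $(b,a,b)$ is analogous, using the internal transition of $\Delta_\mathcal{T}$ promised by the last clause of Definition~\ref{replacement}. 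For condition~\ref{def:refinement7}, I would instead split a transition of $\Delta_{\model\Join\mathcal{R}}$ according to which of its four constituent sets it comes from: a surviving transition of $\Delta_\model$ projects back to itself; an internal transition of $\Delta_\mathcal{T}$ has both endpoints related to $b$, so the box disjunct applies; an incoming transition of $\Delta^{inR}$ is matched by the original $\model$-transition into $b$; and an outgoing transition of $\Delta^{outR}$ is matched by the original $\model$-transition out of $b$.

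I expect the main obstacle to be condition~\ref{def:refinement6} for a box source, i.e.\ a transition $(b,a,q_\model^\prime)$ leaving $b$. Because $\Re$ relates $b$ to \emph{every} state of $\mathcal{T}$, a given internal state need not itself carry a transition realizing $(b,a,q_\model^\prime)$, so the first disjunct cannot be used uniformly; one is forced into the box disjunct, whose witness is the source in $\mathcal{T}$ of the corresponding $\Delta^{outR}$-transition and which---exactly as the footnote to Definition~\ref{def:refinement} anticipates---need not be reachable from the particular state under consideration. Keeping this bookkeeping straight, tracking which of the incoming, outgoing, or internal clauses of Definition~\ref{replacement} supplies each witness, and confirming that the self-loop is serviced by $\Delta_\mathcal{T}$ rather than by $\Delta^{inR}$ or $\Delta^{outR}$, is where the care concentrates.
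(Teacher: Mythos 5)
Your proposal is correct and takes essentially the same route as the paper's own proof: the paper also uses the relation that keeps every surviving state of $\model$ related to itself and associates every state of $\mathcal{T}$ with the box $b$, and it discharges the transition conditions by the same case analysis (transitions untouched by $b$ map to themselves, transitions into $b$ to $\Delta^{inR}$, transitions out of $b$ to $\Delta^{outR}$, and internal transitions of $\Delta_{\mathcal{T}}$ are attributed to the box $b$). Your write-up is in fact more explicit than the paper's, which states the construction and asserts the remaining conditions hold ``by construction''; in particular your observation about the non-uniform witness for the box disjunct is the same point the paper relegates to its footnote on Definition~\ref{def:refinement}.
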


\begin{proof}
To prove that $\model \preceq \model\Join\mathcal{R}$ we must define a refinement relation $\Re$ which satisfies the conditions specified in Definition~\ref{def:refinement}. 

The set of initial states $ Q^0_{\model\Join\mathcal{R}}$ contains the initial states of $\model$ (with the exception of the refined box $b$) and the initial states of the automaton corresponding to replacement $\mathcal{R}$. 
It is possible to associate to each initial state of $\model$ (with the exception of the refined box $b$) the corresponding state of $\model$ and to each initial state of the replacement  $\mathcal{R}$ the box $b$. 
Note that a replacement $\mathcal{R}$ can contain an initial state only if $b$ is initial for $\model$. 
This construction guarantees that the relation $\Re$ satisfies the conditions~\ref{def:initianlNinitialM} and \ref{def:initialRegularMinitialRegularN} of the Definition~\ref{def:refinement}. 
Conditions~\ref{def:refinementRegularMRegularN},~\ref{def:stateNstateM},~\ref{def:boxNboxM},~\ref{def:acceptingRegularMacceptingRegularN},~\ref{def:initialRegularMinitialRegularN} and~\ref{def:acceptingRegularMacceptingRegularN} can be satisfied in a similar way, i.e., by associating the box/regular states of $\model \preceq \model\Join\mathcal{R}$ to the corresponding state of the model or to the states of the box $b$ that is refined. 
 Let us finally analyze conditions~\ref{def:refinement6} and ~\ref{def:refinement7}.
 Each transition $\Delta_\model$ whose destination is not a box can be associated with the corresponding transition of the model, which makes~\ref{def:refinement6} trivially satisfied. The transitions whose destinations are the box $b$ can be associated with the corresponding transitions in $\Delta^{inR}$. 
 Note that Definition~\ref{replacement} forces each incoming/outgoing transition of a box to have at least a corresponding incoming/outgoing transition inside the replacement. 
 Let us finally consider the outgoing transition of the box $b$ of $\model$. 
 Each outgoing transition can be associated with the corresponding outgoing transition in $\Delta^{outR}$. 
 The same procedure can be applied to satisfy the condition \ref{def:refinement7}. Note that, each transition in $\Delta_{\mathcal{T}}$ is associated with the box $b$. 
 By following this procedure the refinement relation $\Re$  satisfies the conditions specified in  Definition~\ref{def:refinement} by construction, therefore $\model \preceq \model\Join\mathcal{R}$ is satisfied. 
\end{proof}

\begin{mydef}[Replacement refinement]
\label{def:replacementrefinement}  Let $\wp_{\model}$ the set of all possible replacements. 
A replacement $\mathcal{R}_\refinement= \langle \mathcal{T}_\refinement, \Delta^{inR}_\refinement,$ $ \Delta^{outR}_\refinement \rangle$ is a refinement of a replacement $\mathcal{R}_\model=\langle \mathcal{T}_\model, \Delta^{inR}_\model,$ $ \Delta^{outR}_\model \rangle$, i.e., $\mathcal{R}_\model \preceq \mathcal{R}_\refinement$, iff:
\begin{enumerate}
\item \label{def:replacementrefinement1} $\mathcal{T}_\model \preceq  \mathcal{T}_\refinement$, through the relation $\Re$;
\item \label{def:replacementrefinement2} for all $(q_\model, a, q^\prime_\model) \in \Delta^{inR}_\model$ there exists $(q_\refinement, a, q^\prime_\refinement) \in \Delta^{inR}_\refinement$,  such that $(q^\prime_\model, q^\prime_\refinement) \in \Re$.
\item \label{def:replacementrefinement3} for all $(q_\model, a, q^\prime_\model) \in \Delta^{outR}_\model$ there exists $(q_\refinement, a, q^\prime_\refinement) \in \Delta^{outR}_\refinement$,  such that $(q_\model, q_\refinement) \in \Re$.
\item \label{def:replacementrefinement4} for all $(q_\refinement, a, q^\prime_\refinement) \in \Delta^{inR}_\refinement$ there exists a unique $(q_\model, a, q^\prime_\model) \in \Delta^{inR}_\model$,  such that $(q^\prime_\model, q^\prime_\refinement) \in \Re$.
\item \label{def:replacementrefinement5} for all $(q_\refinement, a, q^\prime_\refinement) \in \Delta^{outR}_\refinement$ there exists a unique $(q_\model, a, q^\prime_\model) \in \Delta^{outR}_\model$,  such that $(q_\model, q_\refinement) \in \Re$.
\end{enumerate}
\end{mydef}

\begin{theorem}[Plugging principle for refinement]
\label{th:pluggingPrinciple} If $\mathcal{R}\preceq \mathcal{R}^\prime$, then $ \model \Join \mathcal{R} \preceq \model \Join \mathcal{R}^\prime$.
\end{theorem}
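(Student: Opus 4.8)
The plan is to exhibit an explicit refinement relation between the states of $\model\Join\mathcal{R}$ and those of $\model\Join\mathcal{R}^\prime$ and to check that it meets the nine conditions of Definition~\ref{def:refinement}. Write $\mathcal{T}$ and $\mathcal{T}^\prime$ for the automata of $\mathcal{R}$ and $\mathcal{R}^\prime$; by Definition~\ref{def:replacementrefinement}, condition~\ref{def:replacementrefinement1}, the hypothesis $\mathcal{R}\preceq\mathcal{R}^\prime$ supplies a relation $\Re$ witnessing $\mathcal{T}\preceq\mathcal{T}^\prime$. The key observation is that both compositions share exactly the ``model part'' of $\model$ (every state in $Q_\model\setminus\{b\}$), and differ only in the replacement part ($\mathcal{T}$ versus $\mathcal{T}^\prime$). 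Accordingly I would take
\[
\widehat{\Re}=\{(q,q)\mid q\in Q_\model\setminus\{b\}\}\ \cup\ \Re,
\]
the identity on the states inherited from $\model$ together with $\Re$ on the replacement states. The inclusion $\Sigma_{\model\Join\mathcal{R}}\subseteq\Sigma_{\model\Join\mathcal{R}^\prime}$ is immediate, since $\Sigma_{\mathcal{T}}\subseteq\Sigma_{\mathcal{T}^\prime}$ follows from $\mathcal{T}\preceq\mathcal{T}^\prime$ and both compositions adjoin $\Sigma_\model$ by Definition~\ref{def:plugginarefinement}, condition~\ref{replacementAlphabet}.

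First I would dispatch the seven ``state'' conditions (\ref{def:refinementRegularMRegularN}--\ref{def:acceptingRegularMacceptingRegularN}) by splitting each state according to its origin. For states inherited from $\model$, the identity component of $\widehat{\Re}$ makes every condition trivial, because Definition~\ref{def:plugginarefinement} declares a state of $Q_\model\setminus\{b\}$ regular, box, initial or accepting in the composition exactly when it was so in $\model$. For the replacement states I would push each obligation onto the matching condition of $\mathcal{T}\preceq\mathcal{T}^\prime$: the uniqueness demanded by conditions~\ref{def:refinementRegularMRegularN} and~\ref{def:stateNstateM} is inherited from the uniqueness built into $\Re$, and the membership conditions~\ref{def:initianlNinitialM}--\ref{def:acceptingRegularMacceptingRegularN} follow because Definition~\ref{def:plugginarefinement} defines the initial and accepting sets of the composition as the restricted unions $(Q^0_\model\cup Q^0_{\mathcal{T}})\cap Q_{\model\Join\mathcal{R}}$ and $(F_\model\cup F_{\mathcal{T}})\cap Q_{\model\Join\mathcal{R}}$, so initiality or acceptance of a replacement state in the composition coincides with the same property in $\mathcal{T}$.

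The heart of the argument is conditions~\ref{def:refinement6} and~\ref{def:refinement7}, which I would verify by a case analysis on the provenance of each transition, following the decomposition of $\Delta_{\model\Join\mathcal{R}}$ in Definition~\ref{def:plugginarefinement}, condition~\ref{transitionRelation}. A transition surviving from $\Delta_\model$ (not touching $b$) is matched by itself via the identity part of $\widehat{\Re}$; a transition internal to $\mathcal{T}$ is matched using condition~\ref{def:refinement6} (resp.~\ref{def:refinement7}) of $\mathcal{T}\preceq\mathcal{T}^\prime$; an incoming transition of $\Delta^{inR}$ is matched using condition~\ref{def:replacementrefinement2} (resp.~\ref{def:replacementrefinement4}); and an outgoing transition of $\Delta^{outR}$ is matched using condition~\ref{def:replacementrefinement3} (resp.~\ref{def:replacementrefinement5}). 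In the reverse direction the uniqueness clauses~\ref{def:replacementrefinement4} and~\ref{def:replacementrefinement5} are what guarantee that each boundary transition of $\mathcal{R}^\prime$ projects back to a well-defined transition of $\mathcal{R}$.

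The main obstacle I expect is the bookkeeping at the replacement boundary. Two points deserve care. First, the $\model$-side endpoint of any boundary transition (the source of an incoming transition, the target of an outgoing transition) is fixed by a transition of $\Delta_\model$ to or from $b$ and is therefore common to $\mathcal{R}$ and $\mathcal{R}^\prime$ by Definition~\ref{replacement}; I must check that it is precisely the identity component of $\widehat{\Re}$ that lets conditions~\ref{def:replacementrefinement2}--\ref{def:replacementrefinement5} be read as statements about the \emph{given} pair rather than merely about some pair. Second, and more delicate, condition~\ref{def:refinement6} requires the matching transition to leave the \emph{specific} state $q_{\refinement}$ named by the pair: when the replacement-side endpoint is a regular state this is automatic, since conditions~\ref{def:refinementRegularMRegularN} and~\ref{def:boxNboxM} of $\mathcal{T}\preceq\mathcal{T}^\prime$ make the related state unique, so the transition produced by Definition~\ref{def:replacementrefinement} must emanate from $q_{\refinement}$ itself; when that endpoint is a box, several refined states may relate to it, and one must either invoke the box alternative of condition~\ref{def:refinement6}/\ref{def:refinement7} to reabsorb the move into the replacement's internal behaviour or argue that the relevant boundary transition is shared. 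Reconciling this regular/box dichotomy uniformly is the delicate part of the proof; the remaining verifications are routine checks against Definition~\ref{def:refinement}.
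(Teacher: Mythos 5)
Your proposal takes essentially the same route as the paper's own proof: the paper constructs exactly the relation you propose, namely $\Re_{\model \Join \model} \cup \Re_{\mathcal{T} \Join \mathcal{T}^\prime} \setminus \left\{ (b,b) \right\}$ (the identity on the states inherited from $\model$ together with the refinement relation between the replacement automata), dispatches conditions~\ref{def:refinementRegularMRegularN}--\ref{def:acceptingRegularMacceptingRegularN} of Definition~\ref{def:refinement} by construction, and verifies conditions~\ref{def:refinement6} and~\ref{def:refinement7} by the same case analysis on transition provenance, citing conditions~\ref{def:replacementrefinement2}--\ref{def:replacementrefinement5} of Definition~\ref{def:replacementrefinement} for the boundary transitions. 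If anything, your explicit flagging of the regular/box dichotomy at the replacement boundary is more careful than the paper, which treats those verifications as immediate.
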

\begin{proof} 
It is sufficient to construct a relation $\Re_{\model \Join \mathcal{R},  \model \Join \mathcal{R}^\prime}$ between the states of $\model \Join \mathcal{R}$ and the states of $ \model \Join \mathcal{R}^\prime$ that satisfies the conditions specified in Definition~\ref{def:refinement}.
Let's consider the relation $\Re_{ \model \Join \model}$ between the states of $\model$ constructed as specified in Lemma~\ref{lem:reflexive}, and 
the relation $\Re_{ \mathcal{R}  \Join  \mathcal{R}^\prime }$ between the states the automaton $ \mathcal{T}$ of the replacement $\mathcal{R}$ and the automaton $ \mathcal{T}^\prime$ of the automaton $ \mathcal{R}^\prime$ which must exist by Definition~\ref{def:replacementrefinement}.
We prove that the relation $\Re_{\model \Join \mathcal{R},  \model \Join \mathcal{R}^\prime}$, such that  $\Re_{\model \Join \mathcal{R},  \model \Join \mathcal{R}^\prime}=\Re_{ \model \Join \model} \cup \Re_{  \mathcal{T}  \Join  \mathcal{T}^{\prime} } \setminus \left \{ (b,b) \right \}$ satisfies  the conditions specified in Definition~\ref{def:refinement} implying that $ \model \Join \mathcal{R} \preceq \model \Join \mathcal{R}^\prime_b$.
The conditions~\ref{def:refinementRegularMRegularN},~\ref{def:stateNstateM},~\ref{def:initianlNinitialM},~\ref{def:boxNboxM},~\ref{def:acceptingNacceptingM},~\ref{def:initialRegularMinitialRegularN} and~\ref{def:acceptingRegularMacceptingRegularN} are satisfied by construction of the relations $\Re_{ \model \Join \model} $ and $\Re_{ \mathcal{T}  \Join  \mathcal{T}^{\prime} } $. 
We need to prove that also conditions~\ref{def:refinement6} and~\ref{def:refinement7} are satisfied.
Each transition $(q_{ \model}, a,  q^\prime_\model) \in \Delta_{\model}$, whose source or destination is not the box $b$,  satisfies the first statements of the conditions~\ref{def:refinement6} and~\ref{def:refinement7} since the transition is also present in refinement. 
Similarly, each transition $(q_{\mathcal{T}}, a,  q^\prime_{\mathcal{T}}) \in \Delta_{\mathcal{T}}$ satisfies the first statements of the conditions~\ref{def:refinement6} and~\ref{def:refinement7} since  $\mathcal{R} \preceq \mathcal{R}^\prime$ and as a consequence $ \mathcal{T} \preceq  \mathcal{T}^\prime$. 
Let's finally consider each transition $(q_{ \model}, a,  q^\prime_\model) \in \Delta^{inR}$,  conditions~\ref{def:replacementrefinement2} and~\ref{def:replacementrefinement4} of Definition~\ref{def:replacementrefinement}  imply the satisfaction of conditions~\ref{def:refinement6} and~\ref{def:refinement7} of  Definition~\ref{def:refinement}. 
Similarly for each transition $(q_{ \model}, a,  q^\prime_\model) \in \Delta^{outR}$ conditions~\ref{def:replacementrefinement3} and~\ref{def:replacementrefinement5} of Definition~\ref{def:replacementrefinement}  imply the satisfaction of conditions~\ref{def:refinement6} and~\ref{def:refinement7} of  Definition~\ref{def:refinement}. 
Thus, $ \model \Join \mathcal{R} \preceq \model \Join \mathcal{R}^\prime$.
\end{proof}

\begin{theorem}[Sequential composition preserves the refinement relation]
\label{th:seqRefinement}
Given two \IBAS\ $\model$ and $\mathcal{K}$ and two replacements $\mathcal{R}$ and $\mathcal{R}^\prime$ for the black box state $b$ of $\model$,
\begin{itemize}
\item if $\mathcal{K}  \preceq \model\Join\mathcal{R} $ and $\mathcal{R}\preceq \mathcal{R}^\prime$ then $\mathcal{K} \preceq \model\Join\mathcal{R}^\prime$
\end{itemize}
\end{theorem}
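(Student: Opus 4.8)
The plan is to reduce the statement entirely to two facts already established in the excerpt: the plugging principle for refinement (Theorem~\ref{th:pluggingPrinciple}) and the transitivity of the refinement relation (Lemma~\ref{lem:transitivity}). No fresh combinatorial argument on the refinement relation $\Re$ is needed; the work has already been done in proving those two results, and here I would simply chain them. The two hypotheses I have are $\mathcal{K} \preceq \model\Join\mathcal{R}$ and $\mathcal{R}\preceq \mathcal{R}^\prime$, and the goal is $\mathcal{K} \preceq \model\Join\mathcal{R}^\prime$.

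First I would observe that, since $\mathcal{R}$ and $\mathcal{R}^\prime$ are both replacements for the \emph{same} box $b$ of $\model$ and $\mathcal{R}\preceq \mathcal{R}^\prime$ in the sense of Definition~\ref{def:replacementrefinement}, Theorem~\ref{th:pluggingPrinciple} applies directly and yields $\model\Join\mathcal{R} \preceq \model\Join\mathcal{R}^\prime$. Here it is worth noting that $\model\Join\mathcal{R}$ and $\model\Join\mathcal{R}^\prime$ are genuine \IBAS\ by Definition~\ref{def:plugginarefinement}, so they are legitimate arguments for the refinement relation $\preceq$.

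Next I would combine this with the first hypothesis. I now have the chain $\mathcal{K} \preceq \model\Join\mathcal{R}$ and $\model\Join\mathcal{R} \preceq \model\Join\mathcal{R}^\prime$, where all three objects $\mathcal{K}$, $\model\Join\mathcal{R}$, and $\model\Join\mathcal{R}^\prime$ are \IBAS. Applying Lemma~\ref{lem:transitivity} with $\model := \mathcal{K}$, $\refinement := \model\Join\mathcal{R}$, and $\mathcal{O} := \model\Join\mathcal{R}^\prime$ immediately gives the desired conclusion $\mathcal{K} \preceq \model\Join\mathcal{R}^\prime$.

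I do not anticipate a genuine obstacle, since both ingredients are already proved; the only point requiring care is hypothesis-checking, namely that $\mathcal{R}$ and $\mathcal{R}^\prime$ replace the same box $b$ (so that Theorem~\ref{th:pluggingPrinciple} is applicable), and that every object involved is an \IBA\ (so that the transitivity lemma applies without qualification). If anything were to be subtle, it would be verifying that the plugging principle's hypothesis $\mathcal{R}\preceq \mathcal{R}^\prime$ is exactly the replacement-refinement relation of Definition~\ref{def:replacementrefinement} rather than the plain \IBA-refinement of the underlying automata $\mathcal{T}$ and $\mathcal{T}^\prime$; but this is precisely the hypothesis given in the statement, so the reduction is clean.
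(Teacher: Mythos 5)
Your proposal is correct and follows exactly the paper's own argument: first apply the plugging principle (Theorem~\ref{th:pluggingPrinciple}) to obtain $\model\Join\mathcal{R} \preceq \model\Join\mathcal{R}^\prime$, then chain this with the hypothesis $\mathcal{K} \preceq \model\Join\mathcal{R}$ via transitivity (Lemma~\ref{lem:transitivity}). Your added care in checking that both replacements refer to the same box $b$ and that all objects are \IBAS\ is sound but does not change the route.
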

\begin{proof} 
It follows from the fact that the refinement relation is transitive and by the plugging principle for refinement. For the plugging principle (Theorem~\ref{th:pluggingPrinciple}) $\model\Join\mathcal{R}\preceq  \model\Join\mathcal{R}^\prime$. For the transitive relation Lemma~\ref{lem:transitivity}, if $\mathcal{K}  \preceq \model\Join\mathcal{R} $ and $\model\Join\mathcal{R}\preceq  \model\Join\mathcal{R}^\prime$ then $\mathcal{K} \preceq \model\Join\mathcal{R}^\prime $.
\end{proof}

\section{Modeling the claim}
\label{sec:modelingclaim}
When a system is incomplete a different semantic for the formulae of interest, such as a \emph{three-valued} semantic, can be considered. 
Given a formula $\phi$ (expressed in some logic)  and an \IBA\  $\mathcal{M}$ three truth values can be associated to the satisfaction of formula  $\phi$ in model $\mathcal{M}$: \emph{true}, \emph{false} and \emph{unknown} (maybe).  
Whenever a formula $\phi$ is true or false its satisfaction does not depend on the incomplete parts present in the model $\mathcal{M}$. 
We say that $\phi$ is definitely satisfied and not satisfied, respectively.
In the fist case, all the behaviors of the system (including the one that the system may exhibit) satisfy the formula $\phi$. 
In the second case, there exists a behavior of  $\mathcal{M}$, which does not depend on the incomplete parts which violates $\phi$. 
In the third case the satisfaction of $\phi$ depends on the incomplete parts, i.e., $\phi$ is possibly satisfied.
The three-valued semantic of Linear Time Temporal Logic (LTL) formulae specifies when LTL formulae are definitely satisfied, possibly satisfied or not satisfied by the \IBA\ $\mathcal{M}$.

\subsection{Three value Linear Time Temporal Logic semantic}
Given an LTL formula $\phi$ and an \IBA\ $\mathcal{M}$ the semantic function $\| \mathcal{M}^\phi \|$  associates to $\mathcal{M}$ and $\phi$ one of the true values true ($T$), false ($F$) and unknown ($?$). 
Whenever a formula is true, it is true in all the implementations of $\mathcal{M}$, i.e., it does not exists any replacement of the boxes that makes $\phi$ violated. 
If the formula is false, there exists a behavior of  $\mathcal{M}$, which does not depend on how the system is refined which violates the property of interest. 
Thus, all the implementations of $\mathcal{M}$ will make $\phi$ not satisfied. In the third case the satisfaction of $\phi$ depends on the replacements of the boxes of $\mathcal{M}$. 
This type of three value semantic is also known in literature as \emph{inductive semantic}~\cite{wei2009mixed} and is different from the thorough semantic defined in~\cite{generalized}.

\begin{mydef}[Three value LTL semantic over \IBA]
\label{def:threevalueltl}
Given an \IBA\ $\mathcal{M}$ and the LTL formula $\phi$:
\begin{enumerate}
\item \label{semantic:satisfaction}  $\| \mathcal{M}^\phi \|=  T$ if and only if for all $v \in (\mathcal{L}^\omega(\mathcal{M})  \cup \mathcal{L}_p^\omega(\mathcal{M}))$, $v \models \phi  $  
\item \label{semantic:notsatisfaction}   $\| \mathcal{M}^\phi \|=  F$ if and only if exists $v \in \mathcal{L}^\omega(\mathcal{M})$ such that $v \not \models \phi $
\item \label{semantic:possiblysatisfaction}   $ \| \mathcal{M}^\phi \|=\  ? $ if and only for all $v \in \mathcal{L}^\omega(\mathcal{M})$,  $v \models \phi$ and  there exists $ u \in \mathcal{L}_p^\omega(\mathcal{M})$ such that $u \not \models \phi     $
\end{enumerate}
\end{mydef}

A formula $\phi$ is true in the model $\mathcal{M}$ if and only if every word $v$ that is in the language definitely accepted or possibly accepted by the automaton satisfies the claim $\phi$ (Definition~\ref{def:threevalueltl}, condition~\ref{semantic:satisfaction}). 
A formula $\phi$ is false in the model $\mathcal{M}$ if and only if there exists word $v$ that is in the language definitely accepted by the \IBA\ that does not satisfy the claim $\phi$ (Definition~\ref{def:threevalueltl}, condition~\ref{semantic:notsatisfaction}). 
A formula $\phi$ is possibly satisfied in the model $\mathcal{M}$ if and only if there exists word $u$ that is in the language possibly accepted by the \IBA\ that does not satisfy the claim $\phi$, but all the words $v$ in the language definitely accepted by $\mathcal{M}$ satisfy the formula $\phi$ (Definition~\ref{def:threevalueltl}, condition~\ref{semantic:possiblysatisfaction}). 
For example, the property  $\phi=\LTLglobally( send \rightarrow \LTLfinally success)$  is possibly satisfied by the model described in Figure~\ref{Fig:IFSAExample} since there exists a word \{$start$\}.\{$send$\}.\{$fail$\}.\{$fail$\}.\{$abort$\}$^\omega$ in the possible accepted language which does not satisfy the formula and there are no words in the definitely accepted language. 

\begin{theorem} [Refinement preservation of LTL properties] Given an \IBA\ $\mathcal{M}$ and its refinement $\mathcal{N}$, such that $\mathcal{M} \preceq \mathcal{N}$, Then:
\begin{enumerate}
\item   \label{ltlpreservation1}  if $ \| \mathcal{M}^\phi \|=  T$ then $ \| \mathcal{N}^\phi \|=  T  $; 
\item \label{ltlpreservation2}    if $\| \mathcal{M}^\phi \|=  F$ then $ \| \mathcal{N}^\phi \|=  F  $.  
\end{enumerate}
\end{theorem}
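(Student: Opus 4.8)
The plan is to leverage Theorem~\ref{th:languagePreservation} (language preservation), which I can invoke directly since $\model \preceq \refinement$ here plays the role of $\model \preceq \mathcal{N}$. The key observation is that the two statements to be proved correspond almost exactly to the two statements of that theorem, once they are translated through the three-valued semantic of Definition~\ref{def:threevalueltl}. So the whole proof should reduce to unpacking the definitions and applying the language-inclusion facts already established.

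For statement~\ref{ltlpreservation1}, I would start from $\|\mathcal{M}^\phi\| = T$, which by Definition~\ref{def:threevalueltl} condition~\ref{semantic:satisfaction} means every word in $\mathcal{L}^\omega(\mathcal{M}) \cup \mathcal{L}_p^\omega(\mathcal{M})$ satisfies $\phi$. To conclude $\|\mathcal{N}^\phi\| = T$, I must show every word in $\mathcal{L}^\omega(\mathcal{N}) \cup \mathcal{L}_p^\omega(\mathcal{N})$ satisfies $\phi$. The crucial step is to argue that $\mathcal{L}^\omega(\mathcal{N}) \cup \mathcal{L}_p^\omega(\mathcal{N}) \subseteq \mathcal{L}^\omega(\mathcal{M}) \cup \mathcal{L}_p^\omega(\mathcal{M})$; that is, refinement cannot introduce genuinely new behaviors. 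This is precisely the contrapositive of statement~\ref{notsatisfiedPreservation} of Theorem~\ref{th:languagePreservation}: if a word lies outside $\mathcal{L}_p^\omega(\mathcal{M}) \cup \mathcal{L}^\omega(\mathcal{M})$, then it lies outside $\mathcal{L}_p^\omega(\mathcal{N}) \cup \mathcal{L}^\omega(\mathcal{N})$. Hence any word accepted (definitely or possibly) by $\mathcal{N}$ is already accepted (definitely or possibly) by $\mathcal{M}$, so it satisfies $\phi$ by hypothesis, giving $\|\mathcal{N}^\phi\| = T$.

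For statement~\ref{ltlpreservation2}, I would start from $\|\mathcal{M}^\phi\| = F$, which by condition~\ref{semantic:notsatisfaction} gives a word $v \in \mathcal{L}^\omega(\mathcal{M})$ with $v \not\models \phi$. By statement~\ref{satisfiedPreservation} of Theorem~\ref{th:languagePreservation}, every definitely accepted word of $\mathcal{M}$ remains definitely accepted by $\mathcal{N}$, so $v \in \mathcal{L}^\omega(\mathcal{N})$. Since $v \not\models \phi$ and $v \in \mathcal{L}^\omega(\mathcal{N})$, condition~\ref{semantic:notsatisfaction} applied to $\mathcal{N}$ yields $\|\mathcal{N}^\phi\| = F$ immediately.

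The two statements are thus genuinely easy consequences of language preservation, and I do not expect any serious obstacle. The only point requiring a moment of care is the first half: one must not confuse the definitely accepted language with the union that appears in the $T$-semantics. The subtlety is that a definitely accepted word of $\mathcal{N}$ need not be definitely accepted by $\mathcal{M}$ — it may only be possibly accepted by $\mathcal{M}$ (a box behavior that got resolved into a concrete accepting run). This is exactly why I frame the inclusion in terms of the \emph{union} $\mathcal{L}^\omega \cup \mathcal{L}_p^\omega$ and invoke the contrapositive of the second statement of Theorem~\ref{th:languagePreservation}, rather than trying to match the two definitely accepted languages directly. Getting that inclusion right is the whole content of the argument.
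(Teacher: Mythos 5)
Your proof is correct and takes essentially the same route as the paper: both parts reduce to Theorem~\ref{th:languagePreservation} combined with unpacking Definition~\ref{def:threevalueltl}, and your treatment of part~\ref{ltlpreservation2} is identical to the paper's. The only difference is presentational: the paper proves part~\ref{ltlpreservation1} by contradiction, splitting on whether $\| \mathcal{N}^\phi \|$ equals $F$ or $?$, whereas you invoke the language inclusion $\mathcal{L}^\omega(\mathcal{N}) \cup \mathcal{L}_p^\omega(\mathcal{N}) \subseteq \mathcal{L}^\omega(\mathcal{M}) \cup \mathcal{L}_p^\omega(\mathcal{M})$ (the contrapositive of statement~\ref{notsatisfiedPreservation}) directly, which handles both cases at once.
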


\begin{proof}
Let us first consider condition~\ref{ltlpreservation1}. 
The proof is done by contradiction. 
Assume that $\| \mathcal{M}^\phi \|=  T $ and $ \| \mathcal{N}^\phi \| \not=  T$. If $ \| \mathcal{N}^\phi \|=  F$, by Definition~\ref{def:threevalueltl} $\exists v \in \mathcal{L}^\omega(\mathcal{N}),\ v \not \models \phi$. 
By Theorem~\ref{th:languagePreservation} $v \in \mathcal{L}_p^\omega(\mathcal{M})$, i.e., $v$ must be in the possible recognized language of $\mathcal{M}$, or $v \in \mathcal{L}^\omega(\mathcal{M})$. 
This condition makes the condition~\ref{semantic:satisfaction} of the Definition~\ref{def:threevalueltl}  not satisfied, i.e., it exists a word that definitely satisfies $\phi$ and is possibly recognized or recognized by $\mathcal{M}$, and the hypothesis $\| \mathcal{M}^\phi \|=  T $ contradicted. 
If $ \| \mathcal{N}^\phi \|=  ?$, by Definition~\ref{def:threevalueltl} $\exists v \in \mathcal{L}_p^\omega(\mathcal{N}),\ v \not \models \phi$. 
By Theorem~\ref{th:languagePreservation} $v \in \mathcal{L}_p^\omega(\mathcal{M})$, i.e., $v$ must be in the possibly recognized language of $\mathcal{M}$. 
Again, this condition makes the condition~\ref{semantic:satisfaction} of the Definition~\ref{def:threevalueltl}  not satisfied, i.e., it exists a word that satisfies $\phi$ and is possibly recognized by $\mathcal{M}$ and thus the hypothesis $\| \mathcal{M}^\phi \|=  T $ is contradicted. 

Let us now consider condition~\ref{ltlpreservation2}. 
Since $\| \mathcal{M}^\phi \|=  F $ from Definition~\ref{def:threevalueltl} condition~\ref{semantic:notsatisfaction} it must exists a word $v \in \mathcal{L}^\omega(\mathcal{M})$ that does not satisfy $\phi$. 
By Definition~\ref{th:languagePreservation} condition~\ref{satisfiedPreservation} $v \in \mathcal{L}^\omega(\mathcal{N})$. 
By Theorem~\ref{def:threevalueltl} condition~\ref{semantic:notsatisfaction} we can conclude that $\| \mathcal{N}^\phi \|=  F $.
\end{proof}

\subsection{Three value B\"uchi Automata semantic}
Given a BA $\mathcal{A}_\phi$ and an \IBA\ $\mathcal{M}$ which describes the model of the system, the semantic function $\| \mathcal{M}^{\mathcal{A}_\phi} \|$  associates to the model $\mathcal{M}$ and the property $\mathcal{A}_\phi$  one of true values true $T$, false $F$ and unknown $?$ depending on whether the model definitely satisfies, possibly satisfies or does not satisfy the claim specified by the BA $\mathcal{A}_\phi$.

\begin{mydef} [Three value BA semantic]
\label{def:threevalueBAsemantic}
Given and \IBASPartialExtended\ $\mathcal{M}$ and a BA $\mathcal{A}_\phi$ which specifies the definitely accepted behaviors of $\mathcal{M}$,
\begin{enumerate}
\item  \label{def:threevalueBAsemanticcond1} $\| \mathcal{M}^{\mathcal{A}_\phi} \|=  T$ iff $ \mathcal{L}^\omega(\mathcal{M}) \cup \mathcal{L}_p^\omega(\mathcal{M})  \subseteq \mathcal{L}(\mathcal{A}_\phi) $; 
\item \label{def:threevalueBAsemanticcond3} $\| \mathcal{M}^{\mathcal{A}_\phi} \|=  F$ iff $ \mathcal{L}^\omega(\mathcal{M}) \not \subseteq \mathcal{L}^\omega(\mathcal{A}_\phi)$
\item  \label{def:threevalueBAsemanticcond2} $\| \mathcal{M}^{\mathcal{A}_\phi} \|=\ ? $ iff $\mathcal{L}^\omega(\mathcal{M}) \subseteq \mathcal{L}^\omega(\mathcal{A}_\phi)$ and $\mathcal{L}_p^\omega(\mathcal{M}) \not \subseteq \mathcal{L}^\omega(\mathcal{A}_\phi)$  
\end{enumerate}
\end{mydef} 

Informally, a model $\mathcal{M}$  \emph{definitely satisfies} the claim expressed as a BA $\mathcal{A}_\phi$ if and only if Condition~\ref{def:threevalueBAsemanticcond1} is satisfied, i.e.,  all the behaviors of the model of the system, including possible behaviors,  are contained in the set of behaviors allowed by the property. 
Condition~\ref{def:threevalueBAsemanticcond3} specifies that a  model $\mathcal{M}$ \emph{does not satisfy} the claim expressed as a BA $\mathcal{A}_\phi$ if and only if there exists a behavior of the model which is not allowed by the property.
Finally, a  model $\mathcal{M}$ \emph{possibly satisfies} the claim expressed as a BA $\mathcal{A}_\phi$ if and only if the condition~\ref{def:threevalueBAsemanticcond2} is satisfied, i.e., all the definitely accepting behaviors of the model of the system are contained into the set of behaviors allowed by the property, but there exists a possible behavior which is not contained into the set of behaviors allowed by the property.

\begin{lemma}[Relation between BA and LTL Semantic]
\label{lem:BALTLRelation}  Given an LTL formula $\phi$ and the corresponding BA $\mathcal{A}_\phi$,
$\| \mathcal{M}^\phi \|$ if and only if  $\| \mathcal{M}^{\mathcal{A}_\phi} \|$.
\end{lemma}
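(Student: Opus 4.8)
The plan is to reduce the statement to the defining property of the translation from an LTL formula to its Büchi automaton, and then to match the three cases of the two definitions one by one. I read the claim as the assertion that the two semantic functions assign the same truth value, i.e.\ $\| \mathcal{M}^\phi \| = \| \mathcal{M}^{\mathcal{A}_\phi} \|$ for every value in $\{T,F,?\}$. The single external fact I would assume is the classical correctness of the LTL-to-BA construction, namely that $\mathcal{A}_\phi$ recognizes exactly the $\omega$-words satisfying $\phi$:
\[
\mathcal{L}^\omega(\mathcal{A}_\phi) = \{\, v \mid v \models \phi \,\}.
\]
With this in hand, the whole argument is a rewriting of the quantifier-based conditions of Definition~\ref{def:threevalueltl} into the language-containment conditions of Definition~\ref{def:threevalueBAsemantic}.

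The key observation is a pair of set-theoretic equivalences valid for any language $L$ of $\omega$-words. First, ``for all $v \in L$, $v \models \phi$'' holds if and only if $L \subseteq \mathcal{L}^\omega(\mathcal{A}_\phi)$; second, ``there exists $v \in L$ with $v \not\models \phi$'' holds if and only if $L \not\subseteq \mathcal{L}^\omega(\mathcal{A}_\phi)$. Both follow immediately from the language characterization above, since $v \models \phi \Leftrightarrow v \in \mathcal{L}^\omega(\mathcal{A}_\phi)$. I would then apply these to the three cases. For the value $T$, taking $L = \mathcal{L}^\omega(\mathcal{M}) \cup \mathcal{L}_p^\omega(\mathcal{M})$ in the first equivalence turns condition~\ref{semantic:satisfaction} of Definition~\ref{def:threevalueltl} into the containment $\mathcal{L}^\omega(\mathcal{M}) \cup \mathcal{L}_p^\omega(\mathcal{M}) \subseteq \mathcal{L}^\omega(\mathcal{A}_\phi)$ of Definition~\ref{def:threevalueBAsemantic}. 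For the value $F$, taking $L = \mathcal{L}^\omega(\mathcal{M})$ in the second equivalence turns condition~\ref{semantic:notsatisfaction} into $\mathcal{L}^\omega(\mathcal{M}) \not\subseteq \mathcal{L}^\omega(\mathcal{A}_\phi)$. For the value $?$, I would combine the first equivalence applied to $\mathcal{L}^\omega(\mathcal{M})$ with the second applied to $\mathcal{L}_p^\omega(\mathcal{M})$, which reproduces exactly condition~\ref{def:threevalueBAsemanticcond2}. Since the three truth values are exhaustive and mutually exclusive on both sides, establishing the equivalence case by case is enough to conclude that the two functions coincide.

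The argument is almost entirely a bookkeeping exercise, so the only delicate point is making sure the LTL-to-BA correctness fact is invoked over the correct alphabet: runs here are defined over words in $\Sigma'^\omega$ with $AP \subseteq AP'$, so I would note that $\mathcal{A}_\phi$ must be read over the same $\Sigma'$ used to interpret $\mathcal{M}$, and that the satisfaction relation $v \models \phi$ is insensitive to the propositions in $AP' \setminus AP$ that do not occur in $\phi$. Once this alignment is fixed, the characterization $\mathcal{L}^\omega(\mathcal{A}_\phi) = \{\, v \mid v \models \phi \,\}$ applies verbatim and the equivalences go through without further work.
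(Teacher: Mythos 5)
Your proof is correct and follows essentially the same route as the paper's: both rest on the single fact that $\mathcal{A}_\phi$ recognizes exactly the words satisfying $\phi$, so the quantifier conditions of Definition~\ref{def:threevalueltl} rewrite directly into the containment conditions of Definition~\ref{def:threevalueBAsemantic}. Your version merely makes explicit the case-by-case matching and the alphabet alignment that the paper's one-sentence proof leaves implicit.
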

\begin{proof}
The proof follows from the fact that the automaton $\mathcal{A}_\phi$ contains all the words that satisfy the claim $\phi$. Thus, asking for language containment as done in Definition~\ref{def:threevalueBAsemantic} corresponds with checking that all the words definitely accepted and possibly accepted by $\mathcal{M}$ satisfy the claim $\phi$ as done in Definition~\ref{def:threevalueltl}.
\end{proof}
Lemma~\ref{lem:BALTLRelation} allows to relate the satisfaction of LTL formulae with respect to BAs and it is necessary since the models and claims of interest must have compatible semantics~\cite{clarke1999model}.

\section{Checking \IBASExtended}
\label{sec:checkingIBAs}

\thispagestyle{empty}

The core of the envisaged development process is the \emph{development-analysis} cycle. 
During the \emph{development} designers refine an incomplete model $\model$ which describes the system up to some level of abstraction. 
At each development step, they produce a new replacement (increment) which describes the behavior of the system inside one of its black box states, leading to a new refined model $\refinement$, which may in turn contain incompleteness. 
When an increment is ready, developers analyze the properties of the refined model $\refinement$. 
If the model satisfies the designer's expectation, the development-analysis cycle is repeated, i.e., the development of the new increment is started. 

The verification of incomplete models offers three major benefits:
\begin{inparaenum}[\itshape a\upshape)]
\item instead of forcing the verification procedure to be performed at the end of the development process, it allows the system to be checked at the early stages of the design;
\item complex parts of the design can be encapsulated into unspecified (incomplete) parts (\emph{abstraction});
\item the location of design errors can be identified by sequentially narrowing portions of the system into incomplete parts.
\end{inparaenum}

Given an \IBA\ $\model$ and a LTL formula $\propertytext$, the incomplete model checking problem verifies whether the model definitely satisfies, possibly satisfies or does not satisfy property $\propertytext$, i.e., $\| \model^\propertytext \| $ is equal to true ($T$), false ($F$) or maybe ($?$). 
Given a LTL formula $\propertytext$, it is possible to transform the formula into a corresponding BA $\mathcal{A}_\propertytext$ and check  $\| \model^{\mathcal{A}_\propertytext} \| $. 
Since BAs are closed under intersection and complementation, it is possible to transform $\neg \propertytext$ into the corresponding automaton $\mathcal{A}_{\neg \propertytext}$ and to reformulate  the Conditions~\ref{def:threevalueBAsemanticcond1},~\ref{def:threevalueBAsemanticcond3} and \ref{def:threevalueBAsemanticcond2} of Definition~\ref{def:threevalueBAsemantic}  as: $(\mathcal{L}(\model) \cup \mathcal{L}_p(\model)) \cap \mathcal{L}(\mathcal{A}_{\neg \propertytext})=\emptyset$;   $\mathcal{L}(\model)  \cap \mathcal{L}(\mathcal{A}_{\neg \propertytext}) \not =\emptyset$; and $\mathcal{L}(\model)  \cap \mathcal{L}(\mathcal{A}_{\neg \propertytext})=\emptyset$ and $\mathcal{L}_p(\model)  \cap \mathcal{L}(\mathcal{A}_{\neg \propertytext}) \not =\emptyset$, respectively. 
However, to check these conditions,  it is necessary to redefine the behavior of the intersection operator ($\cap$) over an \IBA\ and a BA.

\subsection{The intersection automaton}
\label{sec:BAIBAIntersection}
This section describes how the intersection between an \IBA\ and a BA is computed. 
To exemplify the intersection between an \IBA\ and a BA we will consider the model $\model$ presented in Figure~\ref{fig:modelexample} and the automaton corresponding to the negation of the LTL claim $\propertytext=\LTLglobally (send \rightarrow \LTLfinally ( success))$ represented in Figure~\ref{fig:claimexample}.

\begin{figure*}[ht]
    \centering
    \subfloat[The \IBA\ that corresponds to model $\model$.]
     { \includegraphics[scale=0.7]{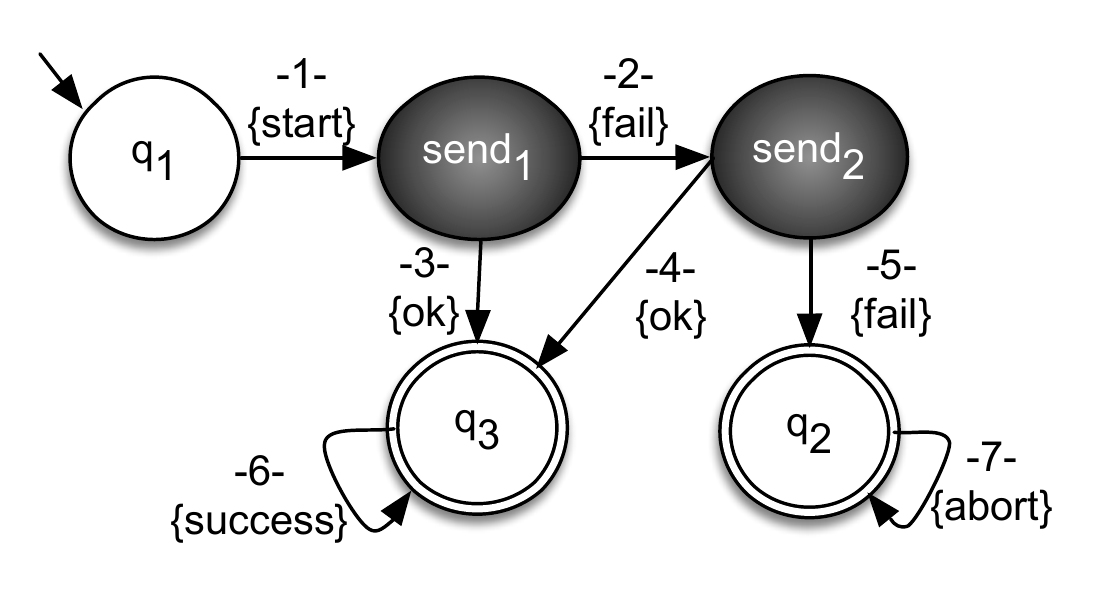}
    \label{fig:modelexample}}
    \subfloat[The BA $\mathcal{A}_{\neg \propertytext}$ of $\neg \propertytext$.]{
          \includegraphics[scale=0.7]{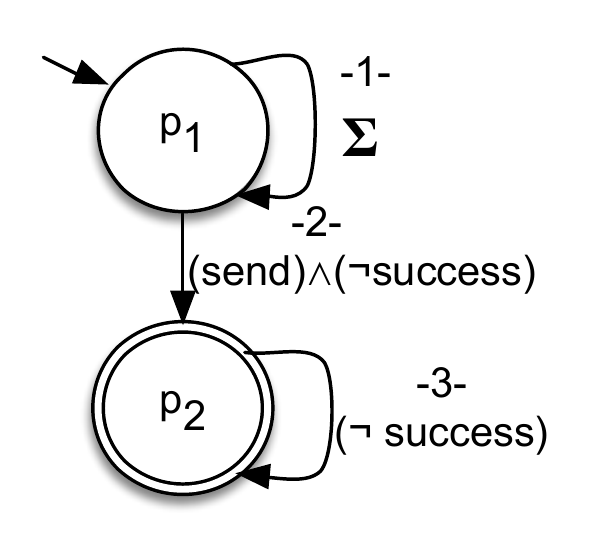}
              \label{fig:claimexample}}
    \caption{The \IBA\ and the BA used as examples in the description of the computation of the intersection automaton $\mathcal{I}$.}
\end{figure*}

\begin{mydef}[Intersection between an \IBA\ and a BA]
\label{def:intersection}
The intersection automaton $\mathcal{I}=\model \cap \mathcal{A}_{\neg \propertytext}$ between an \IBA\ $\model$ and a BA $\mathcal{A}_{\neg \propertytext}$ is the BA
$\mathcal{I}=\langle \Sigma_{\mathcal{I}}, Q_{\mathcal{I}},$ $ \Delta_{\mathcal{I}},  Q_{\mathcal{I}}^0,  F_{\mathcal{I}} \rangle$, such as:
\begin{itemize}
\item $\Sigma_{\mathcal{I}}=\Sigma_{\model} \cup \Sigma_{\mathcal{A}_{\neg \propertytext}}$ is the alphabet of $\mathcal{I}$; 
\item $Q_\mathcal{I}=((R_{\model} \times R_{\mathcal{A}_{\neg \propertytext}}) \cup (B_{\model} \times R_{\mathcal{A}_{\neg \propertytext}})) \times \left \{ 0,\ 1,\ 2\right \} $ is the set of states;
\item $\Delta_{\mathcal{I}}=\Delta_{\mathcal{I}}^c \cup \Delta_{\mathcal{I}}^p$ is the set of transitions of the intersection automaton.
 $\Delta_{\mathcal{I}}^c$ is the set of transitions $(\langle q_i, q^\prime_j, x \rangle, a,$ $ \langle q_m, q^\prime_n, y \rangle)$, such that $(q_i, a, q_m) \in \Delta_{\model}$ and  $(q_j^\prime, a, q_n^\prime) \in \Delta_{\mathcal{A}_{\neg \propertytext}}$.
 $\Delta_{\mathcal{I}}^p$ corresponds to the set of transitions $(\langle q_i, q^\prime_j, x \rangle, a,$ $ \langle q_m, q^\prime_n, y \rangle)$ where $q_i=q_m$ and $q_i \in B_{\model}$ and $(q^\prime_j, a, q^\prime_n) \in \Delta_{\mathcal{A}_{\neg \propertytext}}$.
Moreover,  each transition in $\Delta_\mathcal{I}$ must satisfy the following conditions:
\begin{itemize}
\item if $x=0$ and $q_m \in F_{\model}$, then $y=1$;
\item if $x=1$ and $q^\prime_n \in F_{\mathcal{A}_{\neg \propertytext}}$, then $y=2$;
\item if $x=2$, then $y=0$;
\item otherwise, $y=x$;
\end{itemize}
\item  $Q^0_\mathcal{I}=Q^0_{\model} \times Q^0_{\mathcal{A}_{\neg \propertytext}} \times \left \{ 0 \right \}$ is the set of initial states;
\item $F_\mathcal{I}=F_{\model} \times F_{\mathcal{A}_{\neg \propertytext}} \times \left \{ 2 \right \}$ is the set of accepting states.
\end{itemize}
\end{mydef}

The intersection $\mathcal{I}$ between the model $\model$, depicted in Figure~\ref{fig:modelexample}, and the BA $\mathcal{A}_{\neg \propertytext}$ of Figure~\ref{fig:claimexample}, that corresponds to the negation of the property $\propertytext$, is the BA described in Figure~\ref{fig:intersection}. 
The portions of the state space that contain mixed states associated with the black box states of the model $send_1$ and $send_2$ are surrounded by a dashed-dotted frame.

\begin{figure*}[htpb]
\centering
\includegraphics[scale=0.45]{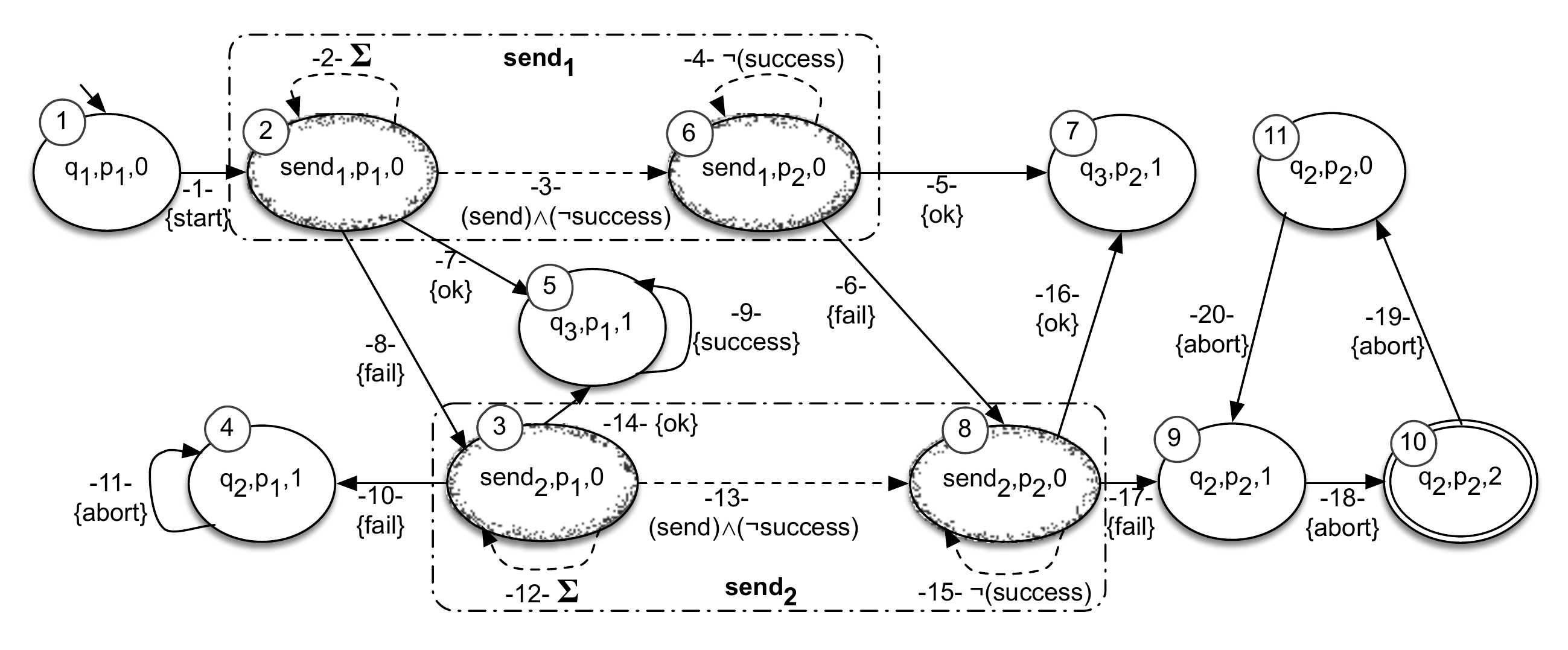}
\centering
\caption{The intersection automaton $\mathcal{I}$ between the \IBAPartialExtended\ $\model$ and the BA automaton $\mathcal{A}_{\neg \propertytext}$ which corresponds to the negation of the property $\propertytext$.}
\label{fig:intersection}
\end{figure*}

The alphabet $\Sigma_{\mathcal{I}}$ includes all the propositions of the alphabets of $\model$ and $\mathcal{A}_{\neg \propertytext}$.
The set $Q_\mathcal{I}$ is composed by the states obtained combining states of the automaton associated with the negation of the property  $\mathcal{A}_{\neg \propertytext}$ with regular states and boxes of the model $\model$. 
As in the classical intersection algorithm for  BAs~\cite{clarke1999model}, the labels $0$, $1$ and $2$ indicate that no accepting state is entered, at least one accepting state of $\model$ is entered, and  at least one accepting state of $\model$ and one accepting state of $\mathcal{A}_{\neg \propertytext}$ are entered, respectively. 
We define  $M_\mathcal{I}=B_{\model} \times R_{\mathcal{A}_{\neg \propertytext}}\times \left \{ 0,\ 1,\ 2\right \}$ as the set of \emph{mixed} states (graphically indicated in Figure~\ref{fig:intersection} with a stipple border), 
and $PR_\mathcal{I}=R_{\model} \times R_{\mathcal{A}_{\neg \propertytext}} \times \left \{ 0,\ 1,\ 2\right \}$ as the set of \emph{purely regular} states.
For example, state \raisebox{.5pt}{\textcircled{\raisebox{-.9pt} {1}}} is obtained by combining the state $q_1$ of $\model$ and $p_1$ of $\mathcal{A}_{\neg \propertytext}$. 
This state is initial and purely regular since both $q_1$ and $p_1$ are initials and regulars. Conversely, state \raisebox{.5pt}{\textcircled{\raisebox{-.9pt} {2}}} is mixed, since it is obtained by combining the box $send_1$ of $\model$ and the regular state $p_1$ of $\mathcal{A}_{\neg \propertytext}$.  

The transitions in $\Delta_{\mathcal{I}}^c$ are obtained by the synchronous execution of the transitions of $\model$ and the transitions of $\mathcal{A}_{\neg \propertytext}$. 
For example, the transition from \raisebox{.5pt}{\textcircled{\raisebox{-.9pt} {2}}} to \raisebox{.5pt}{\textcircled{\raisebox{-.9pt} {3}}} is obtained by combining the transition from $send_1$ to $send_2$ of $\model$ and the transition from $p_1$ to $p_1$ of $\mathcal{A}_{\neg \propertytext}$. 
The transitions in $\Delta_{\mathcal{I}}^p$ are, instead, obtained when a transition of $\mathcal{A}_{\neg \propertytext}$ synchronizes with a transition in the replacement of a box of $\model$. 
For example, the transition from \raisebox{.5pt}{\textcircled{\raisebox{-.9pt} {2}}} to \raisebox{.5pt}{\textcircled{\raisebox{-.9pt} {6}}} is performed when $\mathcal{A}_{\neg \propertytext}$ moves from $p_1$ to $p_2$ and the automaton $\model$ performs a transition in the replacement of the box $send_1$. 

The language recognized by $\mathcal{I}$ is the intersection of the language possibly recognized and recognized by $\model$ and the language recognized by $\mathcal{A}_{\neg \propertytext}$.
For example, the word \{$start$\}.\{$send$\}.\{$fail$\}.\{$fail$\}. \{$abort$\}$^\omega$ is possibly recognized by $\model$ and is recognized by $\mathcal{A}_{\neg \propertytext}$. Indeed, the system may fire the $start$ transition, then perform a transition in the refinement of the state $send_1$ which satisfies the condition $(send)\wedge(\neg success)$, then perform the two $fail$ transitions and finally perform the transition labeled with $abort$ an infinite number of times.

\begin{proposition}[Size of the intersection automaton]
\label{sec:sizeOfTheIntersectionAutomaton}
The intersection automaton $\mathcal{I}$ contains in the worst case $3 \cdot |Q_\model| \cdot  |Q_{\mathcal{A}_{\neg \propertytext}}|$ states and can be computed in time $\mathcal{O}(|\model|\cdot |\mathcal{A}_{\neg \propertytext}|)$.
\end{proposition}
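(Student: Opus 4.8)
The plan is to prove the two assertions separately: a direct cardinality count for the state bound, and a bookkeeping argument over the transition relation (together with a standard cost model for the product construction) for the time bound.

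For the state count, I would read off the state set from Definition~\ref{def:intersection}, namely $Q_\mathcal{I}=((R_{\model} \times R_{\mathcal{A}_{\neg \propertytext}}) \cup (B_{\model} \times R_{\mathcal{A}_{\neg \propertytext}})) \times \{0,\ 1,\ 2\}$. Since $R_\model$ and $B_\model$ partition $Q_\model$, the union $(R_\model \times R_{\mathcal{A}_{\neg \propertytext}}) \cup (B_\model \times R_{\mathcal{A}_{\neg \propertytext}})$ equals $Q_\model \times R_{\mathcal{A}_{\neg \propertytext}}$, of cardinality $|Q_\model| \cdot |R_{\mathcal{A}_{\neg \propertytext}}|$. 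Because $\mathcal{A}_{\neg \propertytext}$ is an ordinary BA it contains no boxes, so $R_{\mathcal{A}_{\neg \propertytext}} = Q_{\mathcal{A}_{\neg \propertytext}}$; multiplying by the three counter values then yields $|Q_\mathcal{I}| \leq 3 \cdot |Q_\model| \cdot |Q_{\mathcal{A}_{\neg \propertytext}}|$. The inequality is phrased as a worst case because not every product triple need be reachable, but no constructed state falls outside this set.

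For the time bound, I would bound the two transition families and argue that each element is produced in constant amortized time. Every transition in $\Delta_{\mathcal{I}}^c$ is fixed by a choice of $(q_i, a, q_m) \in \Delta_\model$, a label-matched $(q^\prime_j, a, q^\prime_n) \in \Delta_{\mathcal{A}_{\neg \propertytext}}$, and a source counter $x \in \{0,\ 1,\ 2\}$, after which the target counter $y$ is uniquely determined by the four deterministic rules; hence $|\Delta_{\mathcal{I}}^c| \leq 3 \cdot |\Delta_\model| \cdot |\Delta_{\mathcal{A}_{\neg \propertytext}}|$. Likewise every transition in $\Delta_{\mathcal{I}}^p$ is fixed by a box $q_i \in B_\model$, a transition of $\mathcal{A}_{\neg \propertytext}$, and a counter value, so $|\Delta_{\mathcal{I}}^p| \leq 3 \cdot |B_\model| \cdot |\Delta_{\mathcal{A}_{\neg \propertytext}}| \leq 3 \cdot |Q_\model| \cdot |\Delta_{\mathcal{A}_{\neg \propertytext}}|$. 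I would then assemble the pieces using $|\model| = |Q_\model| + |\Delta_\model|$ and $|\mathcal{A}_{\neg \propertytext}| = |Q_{\mathcal{A}_{\neg \propertytext}}| + |\Delta_{\mathcal{A}_{\neg \propertytext}}|$, observing that the product $|\model| \cdot |\mathcal{A}_{\neg \propertytext}|$ expands to dominate each of the terms $|Q_\model| \cdot |Q_{\mathcal{A}_{\neg \propertytext}}|$, $|\Delta_\model| \cdot |\Delta_{\mathcal{A}_{\neg \propertytext}}|$, and $|Q_\model| \cdot |\Delta_{\mathcal{A}_{\neg \propertytext}}|$. Under a standard on-the-fly product construction, in which already-created states are retrieved in $O(1)$ amortized time from a table keyed by the triples, each state and each transition is emitted in constant time, so the total running time is $\mathcal{O}(|\model| \cdot |\mathcal{A}_{\neg \propertytext}|)$.

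There is no deep obstacle here; the statement is essentially a counting exercise on the definition. The only points requiring care are to account for \emph{both} the counter component $\{0,\ 1,\ 2\}$ and the two transition families $\Delta_{\mathcal{I}}^c$ and $\Delta_{\mathcal{I}}^p$, and to confirm that the claimed product genuinely absorbs the mixed term $|Q_\model| \cdot |\Delta_{\mathcal{A}_{\neg \propertytext}}|$ contributed by $\Delta_{\mathcal{I}}^p$. A secondary subtlety I would make explicit is the cost model: the per-element constant-time claim depends on assuming $O(1)$ amortized lookup of previously generated product states, which I would state as an assumption so that the linear-in-the-product running time is properly justified rather than silently presumed.
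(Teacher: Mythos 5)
Your proof is correct, and in fact the paper offers no proof of this proposition at all: it is stated bare, immediately before Lemma~\ref{lem:intersection}, and the same counting is later invoked without justification in the proof of Theorem~\ref{th:incompleteModelCheckingComplexity} (``these automata contain in the worst case $3 \cdot |Q_\model| \cdot |Q_{\mathcal{A}_{\neg \propertytext}}|$ states''). So there is nothing to compare against; your argument supplies the missing justification, and it does so in the way the definition invites. The state bound correctly uses the facts that $R_\model$ and $B_\model$ partition $Q_\model$ and that the claim automaton, being an ordinary BA, has $R_{\mathcal{A}_{\neg \propertytext}} = Q_{\mathcal{A}_{\neg \propertytext}}$. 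Your transition count is also the right one, and you correctly isolate the one point where the bound is not a pure product formality: $\Delta_{\mathcal{I}}^p$ contributes the mixed term $3 \cdot |Q_\model| \cdot |\Delta_{\mathcal{A}_{\neg \propertytext}}|$, which is absorbed only because $|\model| \cdot |\mathcal{A}_{\neg \propertytext}|$ expands into all four cross terms of $(|Q_\model| + |\Delta_\model|)(|Q_{\mathcal{A}_{\neg \propertytext}}| + |\Delta_{\mathcal{A}_{\neg \propertytext}}|)$. Making the $O(1)$ amortized state-lookup assumption explicit is also the honest thing to do, since the ``computed in time'' claim is meaningless without some such cost model; the paper silently presumes it here and in its other complexity arguments (e.g., Theorems~\ref{th:ComplexityCleaningProcedure} and~\ref{th:automatonExtractionComplexity}).
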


\begin{lemma}[Intersection language]
\label{lem:intersection} 
 The intersection automaton $\mathcal{I}=\model \cap \mathcal{A}_{\neg \propertytext}$ between an \IBAPartialExtended\ $\model$ and a BA $\mathcal{A}_{\neg \propertytext}$ recognizes the language $(\mathcal{L}^\omega(\model) \cup \mathcal{L}^\omega_p(\model)) \cap \mathcal{L}^\omega(\mathcal{A}_{\neg \propertytext})$, i.e., $v \in \mathcal{L}(\mathcal{I}) \Leftrightarrow v \in (\mathcal{L}^\omega(\model) \cup \mathcal{L}^\omega_p(\model)) \cap \mathcal{L}^\omega(\mathcal{A}_{\neg \propertytext})$.
\end{lemma}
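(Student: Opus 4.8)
The plan is to prove the stated equivalence by exhibiting a run-level correspondence between $\mathcal{I}$ and the pair $(\model,\mathcal{A}_{\neg\propertytext})$, so that the $\{0,1,2\}$-flag construction reduces the claim to the correctness of the classical B\"uchi intersection. Before either inclusion, I would record one simplifying observation: a word $v$ lies in $\mathcal{L}^\omega(\model)\cup\mathcal{L}^\omega_p(\model)$ if and only if $\model$ has \emph{some} run $\rho^\omega_\model$ on $v$ (in the sense of Definition~\ref{incompleterun}) with $inf(\rho^\omega_\model)\cap F_\model\neq\emptyset$. Indeed, a definitely accepting and a possibly accepting run differ only in whether a box state occurs along them, so the union over the two run-types is exactly the set of runs meeting $F_\model$ infinitely often; this collapses the three-way ``definitely/possibly/not accepted'' distinction into a single existential condition and lets me treat box self-loops as ordinary character-absorbing moves.

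For the direction ($\Leftarrow$), assume $v$ belongs to the right-hand side. I would fix a run $\rho^\omega_\model$ of $\model$ with $inf(\rho^\omega_\model)\cap F_\model\neq\emptyset$ and a run $\rho^\omega_{\mathcal{A}_{\neg\propertytext}}$ of $\mathcal{A}_{\neg\propertytext}$ with $inf(\rho^\omega_{\mathcal{A}_{\neg\propertytext}})\cap F_{\mathcal{A}_{\neg\propertytext}}\neq\emptyset$ (the latter exists since $v\in\mathcal{L}^\omega(\mathcal{A}_{\neg\propertytext})$, and $\mathcal{A}_{\neg\propertytext}$ has no boxes, so it always advances by a genuine transition). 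I then interleave them into a product run $\rho^\omega_\mathcal{I}$ whose first two components are the given runs and whose flag is generated by the counter recursion from $0$. At each step I distinguish two cases: if $\model$ fires a real transition, the pair is realised by a transition of $\Delta^c_{\mathcal{I}}$; if instead $\rho^\omega_\model(i)=\rho^\omega_\model(i+1)\in B_\model$, the pair is realised by a transition of $\Delta^p_{\mathcal{I}}$, which is precisely the rule letting the property automaton advance while the model ``reads inside'' the box. This shows $\rho^\omega_\mathcal{I}$ is a legitimate run of $\mathcal{I}$; the remaining work is to verify that it meets $F_\mathcal{I}$ infinitely often, which follows from the flag analysis once both projected runs are known to meet their accepting sets infinitely often.

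For the direction ($\Rightarrow$), I start from an accepting run $\rho^\omega_\mathcal{I}$ of $\mathcal{I}$ on $v$ and project it onto its three coordinates. The first coordinate is a valid run of $\model$: transitions taken from $\Delta^c_{\mathcal{I}}$ project to genuine transitions of $\Delta_\model$, while transitions from $\Delta^p_{\mathcal{I}}$ project to box self-loops $\rho^\omega_\model(i)=\rho^\omega_\model(i+1)\in B_\model$, so every step satisfies one of the two clauses of Definition~\ref{incompleterun}; the initial state lies in $Q^0_\model$ because $Q^0_\mathcal{I}=Q^0_\model\times Q^0_{\mathcal{A}_{\neg\propertytext}}\times\{0\}$. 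The second coordinate is a valid run of $\mathcal{A}_{\neg\propertytext}$ because both $\Delta^c_{\mathcal{I}}$ and $\Delta^p_{\mathcal{I}}$ require an underlying transition $(q'_j,a,q'_n)\in\Delta_{\mathcal{A}_{\neg\propertytext}}$. Since $\rho^\omega_\mathcal{I}$ is accepting, the flag argument yields $inf(\rho^\omega_\model)\cap F_\model\neq\emptyset$ and $inf(\rho^\omega_{\mathcal{A}_{\neg\propertytext}})\cap F_{\mathcal{A}_{\neg\propertytext}}\neq\emptyset$; hence $v\in\mathcal{L}^\omega(\mathcal{A}_{\neg\propertytext})$ and, by the preliminary observation, $v\in\mathcal{L}^\omega(\model)\cup\mathcal{L}^\omega_p(\model)$.

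I expect the main obstacle to be the \emph{flag-counter correspondence}: proving that the acceptance set $F_\mathcal{I}=F_\model\times F_{\mathcal{A}_{\neg\propertytext}}\times\{2\}$, together with the $0\!\to\!1\!\to\!2\!\to\!0$ counter discipline, is visited infinitely often exactly when each projected run visits its own accepting set infinitely often. This is the step that is specific to B\"uchi (rather than finite-word) acceptance and requires the careful ``one direction at a time'' reasoning of the classical construction: from infinitely many flag-$2$ states one extracts infinitely many visits to $F_\model$ and to $F_{\mathcal{A}_{\neg\propertytext}}$, and conversely infinitely many visits to both forces the counter to complete its cycle infinitely often. Everything else — validity of the projected and interleaved runs, the treatment of box moves through $\Delta^p_{\mathcal{I}}$, and the collapse of the definitely/possibly distinction — is bookkeeping that follows directly from Definitions~\ref{incompleterun} and~\ref{def:intersection}.
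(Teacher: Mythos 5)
Your proposal follows, in substance, the same run-level correspondence that the paper's own proof uses: transitions of $\Delta_{\mathcal{I}}^c$ correspond to genuine moves of $\model$ synchronized with $\mathcal{A}_{\neg\propertytext}$, transitions of $\Delta_{\mathcal{I}}^p$ correspond to box self-loops of $\model$ with only $\mathcal{A}_{\neg\propertytext}$ advancing, and your ($\Rightarrow$) direction is the paper's projection argument almost verbatim. Where you differ is the ($\Leftarrow$) direction: the paper argues by contradiction, asserting that $v \notin \mathcal{L}(\mathcal{I})$ forces some transition of any candidate run to be missing from $\Delta_{\mathcal{I}}$, and concludes that $v$ then fails to be accepted by $\model$ or by $\mathcal{A}_{\neg\propertytext}$; you instead build the product run directly from a fair run of $\model$ and a fair run of $\mathcal{A}_{\neg\propertytext}$. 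Your route is the better one: the paper's contradiction accounts only for a missing transition and silently ignores the case in which runs of $\mathcal{I}$ over $v$ exist but none satisfies the B\"uchi acceptance condition, whereas your construction isolates exactly that residual obligation in the flag-counter lemma. Your preliminary observation, that $\mathcal{L}^\omega(\model) \cup \mathcal{L}^\omega_p(\model)$ is precisely the set of words admitting some run meeting $F_\model$ infinitely often, is also correct (the definitely/possibly split concerns only whether a box occurs on the run) and is used implicitly but never stated in the paper.

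There is, however, one step you defer that would fail if carried out against Definition~\ref{def:intersection} as literally written. You claim that $F_\mathcal{I} = F_\model \times F_{\mathcal{A}_{\neg\propertytext}} \times \{2\}$ is visited infinitely often \emph{exactly when} each projected run visits its own accepting set infinitely often; only the ``only if'' half of this biconditional is true. With the acceptance set as written, an accepting state of $\mathcal{I}$ must simultaneously have its model component in $F_\model$, its claim component in $F_{\mathcal{A}_{\neg\propertytext}}$, and flag $2$; but the counter discipline sets the flag to $2$ at the moment the claim component enters $F_{\mathcal{A}_{\neg\propertytext}}$, placing no constraint on the model component at that instant. Concretely, if $\model$ alternates $m_0\, m_1\, m_0 \ldots$ with $F_\model = \{m_0\}$ and $\mathcal{A}_{\neg\propertytext}$ alternates $c_0\, c_1\, c_0 \ldots$ with $F_{\mathcal{A}_{\neg\propertytext}} = \{c_1\}$ in lockstep on a one-letter alphabet, the unique product run cycles through $(m_0,c_0,0)$, $(m_1,c_1,0)$, $(m_0,c_0,1)$, $(m_1,c_1,2)$ and never visits a state of $F_\model \times F_{\mathcal{A}_{\neg\propertytext}} \times \{2\}$, so both projections are fair yet the product run is rejected. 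The classical construction (and evidently the intended one) takes \emph{all} states with third component $2$ as accepting, and under that reading your deferred lemma, and hence your whole proof, goes through by the standard one-direction-at-a-time counter argument. This defect is inherited from the paper's definition rather than introduced by you --- the paper's proof glosses over the acceptance condition in both directions --- but your ($\Leftarrow$) direction should state explicitly that it requires the flag-$2$ acceptance set rather than the literal $F_\mathcal{I}$.
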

\begin{proof}
($\Rightarrow$) If $v \in \mathcal{L}(\mathcal{I})$, it must exists an accepting run $\rho^\omega$ in the intersection automaton which recognizes $v$. 
Since, by Definition~\ref{def:intersection}, $\mathcal{I}$ is a BA, for all $i>0$ $\rho^\omega(i)$ and $\rho^\omega(i+1)$ are states of $\rho^\omega$ if and only if $(\rho^\omega(i), v_i, \rho^\omega(i+1)) \in \Delta_{\mathcal{I}}$. Let us consider the two states of the model $\rho_{\model}^\omega(i)$ and $\rho_{\model}^\omega(i+1)$ associated with $\rho^\omega(i)$ and $\rho^\omega(i+1)$. 
Since there exists a transition $(\rho^\omega(i), v_i, \rho^\omega(i+1)) \in \Delta_{\mathcal{I}}$, by construction it must exists a transition  $(\rho_{\model}^\omega(i), v_i, \rho_{\model}^\omega(i+1)) \in \Delta_{\model}$ or $\rho_{\model}^\omega(i)=\rho_{\model}^\omega(i+1) \in B_{\model}$. 
In the first case, $v_i$ is recognized by a transition of the model, in the second case, it is recognized by a box. 
This condition must hold $\forall i \geq 0$.
Furthermore, since $\rho^\omega$ is accepting, an accepting state of the model must be entered infinitely often. 
It follows that $v \in (\mathcal{L}^\omega(\model) \cup \mathcal{L}^\omega_p(\model))$. 
The same idea can be applied with respect to the automaton $\mathcal{A}_{\neg \propertytext}$, which implies that $v \in \mathcal{L}(\mathcal{A}_{\neg \propertytext})$. Thus, $v \in (\mathcal{L}^\omega(\model) \cup \mathcal{L}^\omega_p(\model)) \cap \mathcal{L}^\omega(\mathcal{A}_{\neg \propertytext})$.

($\Leftarrow$) The proof is by contradiction. 
Imagine that there exists a word $v \not \in \mathcal{L}(\mathcal{I})$ which is in $(\mathcal{L}^\omega(\model) \cup \mathcal{L}^\omega_p(\model)) \cap \mathcal{L}^\omega(\mathcal{A}_{\neg \propertytext})$. 
Since $v \not \in \mathcal{L}(\mathcal{I})$, it is not recognized by $ \mathcal{L}(\mathcal{I})$, i.e., for every possible accepting run $\rho^\omega$ it must exists a character $v_i$ of $v$ such that $(\rho^\omega(i), v_i, \rho^\omega(i+1)) \not \in \Delta_{\mathcal{I}}$. 
Let us consider the corresponding states $\rho_{\model}^\omega(i)$, $\rho_{\model}^\omega(i+1)$ and $\rho_{\mathcal{A}_{\neg \propertytext}}^\omega(i)$, $\rho_{\mathcal{A}_{\neg \propertytext}}^\omega(i+1)$ of the model and of the claim, respectively. 
To make $(\rho^\omega(i), v_i, \rho^\omega(i+1)) \not \in \Delta_{\mathcal{I}}$ two cases are possible: $(\rho_{\mathcal{A}_{\neg \propertytext}}^\omega(i),  v_i,\rho_{\mathcal{A}_{\neg \propertytext}}^\omega(i+1)) \not \in \Delta_{\mathcal{A}_{\neg \propertytext}}$ or $(\rho_{\model}^\omega(i),  v_i,\rho_{\model}^\omega(i+1)) \not \in \Delta_{\model}$ and not $\rho^\omega(i)_{\model}=\rho^\omega(i+1)_{\model} \in B_{\model}$. 
However, in the first case, the condition implies that $v \not \in  \mathcal{L}^\omega(\mathcal{A}_{\neg \propertytext})$, while in the second,  $v \not \in (\mathcal{L}^\omega(\model) \cup \mathcal{L}^\omega_p(\model))$, thus $v \not \in (\mathcal{L}^\omega(\model) \cup \mathcal{L}^\omega_p(\model)) \cap \mathcal{L}^\omega(\mathcal{A}_{\neg \propertytext})$, contradicting the hypothesis.
\end{proof}

Given an infinite run $\rho^\omega$ of $\mathcal{I}$ associated with the infinite word $v \in \mathcal{L}(\mathcal{I})$, we may want to identify  the portions of the word $v$ recognized by each box $b \in B_{\model}$. Note that, these portions may include both finite words (i.e., finite portions of the words that are recognized inside the boxes) or infinite words (i.e., suffixes of words recognized inside \emph{accepting} boxes).

\begin{mydef}[Finite abstractions of a run]
\label{def:finiteRunAbstraction} 
Given an infinite run $\rho^\omega$ of $\mathcal{I}=\model \cap \mathcal{A}_{\neg \propertytext}$ associated with the infinite word $v \in \mathcal{L}^\omega(\mathcal{I})$ and a box $b \in B_{\model}$, $\alpha^\ast_b(v, \rho^\omega)$ is the set of \emph{finite} words $\nu_{init}.\nu.\nu_{out} \in \Sigma^*$ associated with the box $b$ and the run $\rho^\omega$ of the infinite word $v$.  
A word $\nu^\ast=\nu_{init}.\nu.\nu_{out}$ is in  $\alpha^\ast_b(v, \rho^\omega)$ if and only if given two indexes $i, j$ such that $0 \leq i < j < \infty $, for all $0 \leq k <j-i$ the following conditions must be satisfied:
\begin{enumerate}
\item  \label{def:portionOfTheWord} $\nu^\ast_k=v_{i+k}$;
\item \label{def:finiteRunAbstractionStateTransparent} $\rho^\omega (i+k) =  \langle b, p, x \rangle$;
\item  \label{def:finiteRunAbstractionTransparentTransition} $(\rho(k)^\omega, v_{i+k}, \rho^\omega(k+1))\in \Delta_i^p$;
\item \label{def:finiteRunAbstractionEnding} $(\rho(j)^\omega, v_{j}, \rho^\omega(j+1))\in \Delta_i^c$ and $\nu_{out}=v_{j} $;
\item  \label{def:finiteRunAbstractionSource} $(i>0 \Leftrightarrow (\rho^\omega(i-1), v_{i-1}, \rho^\omega(i))\in \Delta_i^c$ and $\nu_{init}=v_{i-1})$ or $(\nu_{init}=\epsilon$ and $\rho^\omega(0) \in Q^0_{\mathcal{I}})$.
\end{enumerate}
\end{mydef}
 Condition~\ref{def:portionOfTheWord} specifies that $\nu^\ast$ contains only the portion of the word of interest, i.e., the portion of the word recognized by the box. 
 Condition~\ref{def:finiteRunAbstractionStateTransparent} specifies that the state $\rho^\omega (i+k)$ of the run must corresponds to the tuple $\langle b, p, x \rangle$ where $b$ is the box of interest.
 Condition~\ref{def:finiteRunAbstractionTransparentTransition} specifies that the transition that recognizes the character $v_{i+k}$ must be in $\Delta_i^p$, i.e., it is obtained by firing a transition of the claim when the system is inside the box $b$. 
 Condition~\ref{def:finiteRunAbstractionEnding} forces the word to be of maximal length, i.e., the  transition $(\rho^\omega(j), v_{j}, \rho^\omega(j+1))$ of the run $\rho^\omega$ must be a transition that forces the model to leave the box $b$. The corresponding character $v_{j}$ is added as a suffix $\nu_{out}$ of the word $\nu$.
Similarly, condition~\ref{def:finiteRunAbstractionSource} forces the transition that precedes the set of transitions which recognize $\nu$ to be a transition that enters the box $b$ (excluding the case in which $i=0$, i.e., the initial state of the run $\rho^\omega(0)$ is mixed, in which $\nu_{init}=\epsilon$\footnote{The $\epsilon$ character denotes an empty string.}). 
The character $v_{i-1}$ which labels the transition must be used as a prefix $\nu_{init}$ of $\nu$.

Let us consider the intersection automaton $\mathcal{I}$ presented in Figure~\ref{fig:intersection}, the word $v=$\{$start$\}. \{$send$\}.\{$fai$\}$^\omega$, and the corresponding run $\rho^\omega=\raisebox{.5pt}{\textcircled{\raisebox{-.9pt} {1}}} \raisebox{.5pt}{\textcircled{\raisebox{-.9pt} {2}}} \raisebox{.5pt}{\textcircled{\raisebox{-.9pt} {6}}} \raisebox{.5pt}{\textcircled{\raisebox{-.9pt} {8}}}  (\raisebox{.5pt}{\textcircled{\raisebox{-.9pt} {9}}}  \raisebox{.5pt}{\textcircled{\raisebox{-.9pt} {10}}}  \raisebox{.5pt}{\textcircled{\raisebox{-.9pt} {11}}})^\omega$. 
The finite abstraction of the run associated with the state $send_1$ is the function $\alpha^\ast_{send_1}(v, \rho^\omega)=$\{$start$\}.\{$send$\}.\{$fail$\}.

\begin{mydef}[Finite abstraction of the intersection]
\label{def:finiteIntersectionAbstraction}  
Given the intersection automaton  $\mathcal{I}=\model \cap \mathcal{A}_{\neg \propertytext}$, and a box $b \in B_\model$, the set  $\alpha^\ast_b(\mathcal{I})$ of the finite abstraction of the intersection automaton is defined as $\alpha^\ast_b(\mathcal{I})=$ $\{$ $ \bigcup_{\forall v  \in \mathcal{L}^\omega(\mathcal{I})} \alpha^\ast_b(v, \rho^\omega),$ such that  $\rho^\omega$ is an accepting run of $v$ $\}$.
\end{mydef}  
The finite abstraction of the intersection automaton contains the finite abstraction of the runs associated to every possible word in $\mathcal{L}^\omega(\mathcal{I})$.
  
 \begin{mydef}[Infinite abstractions of a run]
 \label{def:ifiniteRunAbstraction}
Given an infinite run $\rho^\omega$ of $\mathcal{I}=\model \cap \mathcal{A}_{\neg \propertytext}$ associated with the infinite word $v \in \mathcal{L}^\omega(\mathcal{I})$ and a box $b \in B_{\model}$, $\alpha^\omega_b(v, \rho^\omega)$ is the set of \emph{infinite} words $\nu_{init}.\nu^\omega \in \Sigma^\omega$ associated with box $b$ and the run $\rho^\omega$ of the infinite word $v$. 
A word $\nu^\omega=\nu_{init}.\nu$ is in  $\alpha^\omega_b(v, \rho^\omega)$, if and only if given the index $i \geq 0$, for all $k \geq 0 $ the following conditions are be satisfied:
\begin{enumerate}
\item \label{def:ifiniteRunAbstractionPortion} $\nu^\omega_k=v_{i+k}$;
\item \label{def:ifiniteRunAbstractionStateTransparent} $\rho^\omega (i+k) =  \langle b, p, x \rangle$;
\item \label{def:ifiniteRunAbstractionTransparentTransition} $(\rho^\omega(k), v_{i+k}, \rho^\omega(k+1))\in \Delta_i^p$;
\item  \label{def:ifiniteRunAbstractionSource} $(i>0 \Leftrightarrow (\rho^\omega(i-1), v_{i-1}, \rho^\omega(i))\in \Delta_i^c$ and $\nu_{init}=v_{i-1})$ or $(\nu_{init}=\epsilon$ and $\rho^\omega(0) \in Q^0_{\mathcal{I}})$.
\end{enumerate}
\end{mydef}
Condition~\ref{def:ifiniteRunAbstractionPortion} specifies that $\nu^\omega$ contains only the portion of the word of interest, i.e., the portion of the word recognized by the box $b$. Condition~\ref{def:ifiniteRunAbstractionStateTransparent} specifies that the state $\rho^\omega (i+k)$ of the run must corresponds to the tuple $\langle b, p, x \rangle$ where $b$ is the box of interest.
Condition~\ref{def:ifiniteRunAbstractionTransparentTransition} specifies that the transition that recognizes the character $v_{i+k}$ must be in $\Delta_i^p$, i.e., it is obtained by firing a transition of the claim
 when the system is inside the box $b$. Condition~\ref{def:ifiniteRunAbstractionSource} forces the transition that precedes the set of transitions which recognize $\nu$ to be a transition that enters the box $b$ (excluding the case in which $i=0$, i.e., the initial state of the run $\rho^\omega(0)$ is mixed, in which $\nu_{init}=\epsilon$). The character $v_{i-1}$ which labels the transition must be used as a prefix $\nu_{init}$ of $\nu$.
 
\begin{mydef}[Infinite abstraction of the intersection]
\label{def:infiniteIntersectionAbstraction} Given the intersection automaton  $\mathcal{I}=\model \cap \mathcal{A}_{\neg \propertytext}$ and a box $b \in B_\model$,
the set  $\alpha^\omega_b(\mathcal{I})$ of the infinite abstractions of the intersection automaton is defined as $\alpha^\omega_b(\mathcal{I})=$ $\{$ $ \bigcup_{\forall v  \in \mathcal{L}^\omega(\mathcal{I})} \alpha^\omega_b(v, \rho^\omega) $ such that  $\rho^\omega$ is an accepting run of $v$ $\}$.
\end{mydef}  
Informally, the infinite abstraction of the intersection automaton contains the infinite abstractions of the runs associated to every possible word recognized by the automaton $\mathcal{I}$.

\subsection{The model checking procedure}
\label{sec:incompleteModelChecking}
The model checking procedure between an \IBA\ $\model$ and a LTL property $\propertytext$ is based on the intersection between an \IBA\ and a BA (Definition~\ref{def:intersection}) and the completion of an \IBA\ (Definition~\ref{def:IBACompletion}).

\begin{mydef}[Incomplete Model Checking]
\label{th:incompleteModelChecking}  Given an \IBA\ $\model$ and a LTL formula $\propertytext$ associated with the B\"uchi automaton $\mathcal{A}_\propertytext$,
\begin{enumerate}
\item  \label{th:modelchecking1} $\| \model^\propertytext \|=  F       \Leftrightarrow    \model_c \cap \mathcal{A}_{\neg \propertytext} \neq  \emptyset $;
\item \label{th:modelchecking2} $\| \model^\propertytext \|=\ ?        \Leftrightarrow     \| \model^\propertytext \| \neq  F \text{ and }  \model \cap \mathcal{A}_{\neg \propertytext}  \neq \emptyset$;
\item \label{th:modelchecking3} $\| \model^\propertytext \|=  T        \Leftrightarrow    \| \model^\propertytext \| \neq  F \text{ and }  \| \model^\propertytext \| \neq\ ?$.
\end{enumerate}
\end{mydef}

\begin{theorem}[Incomplete Model Checking correctness] 
The incomplete model checking technique is correct.
\end{theorem}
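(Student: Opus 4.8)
The plan is to show that the three emptiness tests of Definition~\ref{th:incompleteModelChecking} reproduce exactly the three-valued semantics of Definition~\ref{def:threevalueBAsemantic}. By Lemma~\ref{lem:BALTLRelation} it suffices to argue at the level of the B\"uchi semantics and to use the reformulation via $\mathcal{A}_{\neg \propertytext}$ already recorded in the text: $\|\model^\propertytext\|=F$ iff $\mathcal{L}^\omega(\model)\cap\mathcal{L}^\omega(\mathcal{A}_{\neg\propertytext})\neq\emptyset$; $\|\model^\propertytext\|=T$ iff $(\mathcal{L}^\omega(\model)\cup\mathcal{L}^\omega_p(\model))\cap\mathcal{L}^\omega(\mathcal{A}_{\neg\propertytext})=\emptyset$; and $\|\model^\propertytext\|=?$ iff $\mathcal{L}^\omega(\model)\cap\mathcal{L}^\omega(\mathcal{A}_{\neg\propertytext})=\emptyset$ together with $\mathcal{L}^\omega_p(\model)\cap\mathcal{L}^\omega(\mathcal{A}_{\neg\propertytext})\neq\emptyset$. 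I would first observe that these three cases are mutually exclusive and exhaustive (they are separated by whether $\mathcal{L}^\omega(\model)$ and then $\mathcal{L}^\omega_p(\model)$ are contained in $\mathcal{L}^\omega(\mathcal{A}_\propertytext)$), which lets me verify the tests case by case and recover the last case for free.

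First I would handle the $F$ case. By Lemma~\ref{th:languageMc} the completion satisfies $\mathcal{L}(\model_c)=\mathcal{L}^\omega(\model)$, and since B\"uchi automata are closed under intersection the ordinary product $\model_c\cap\mathcal{A}_{\neg\propertytext}$ recognizes $\mathcal{L}^\omega(\model)\cap\mathcal{L}^\omega(\mathcal{A}_{\neg\propertytext})$. Hence $\model_c\cap\mathcal{A}_{\neg\propertytext}\neq\emptyset$ holds iff $\mathcal{L}^\omega(\model)\cap\mathcal{L}^\omega(\mathcal{A}_{\neg\propertytext})\neq\emptyset$, which is precisely the reformulated $F$ condition; this establishes condition~\ref{th:modelchecking1}.

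Next I would treat the $?$ case, which is where the real work lies. By Lemma~\ref{lem:intersection}, $\model\cap\mathcal{A}_{\neg\propertytext}$ recognizes $(\mathcal{L}^\omega(\model)\cup\mathcal{L}^\omega_p(\model))\cap\mathcal{L}^\omega(\mathcal{A}_{\neg\propertytext})$. Condition~\ref{th:modelchecking2} is taken under the hypothesis $\|\model^\propertytext\|\neq F$, i.e.\ (by the previous step) $\mathcal{L}^\omega(\model)\cap\mathcal{L}^\omega(\mathcal{A}_{\neg\propertytext})=\emptyset$. Distributing intersection over the union and discarding the empty definite part, the language of $\model\cap\mathcal{A}_{\neg\propertytext}$ collapses to $\mathcal{L}^\omega_p(\model)\cap\mathcal{L}^\omega(\mathcal{A}_{\neg\propertytext})$ under this hypothesis. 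Therefore, given $\|\model^\propertytext\|\neq F$, the test $\model\cap\mathcal{A}_{\neg\propertytext}\neq\emptyset$ is equivalent to $\mathcal{L}^\omega_p(\model)\cap\mathcal{L}^\omega(\mathcal{A}_{\neg\propertytext})\neq\emptyset$, so the conjunction in condition~\ref{th:modelchecking2} is exactly the reformulated $?$ condition. Finally, the $T$ case of condition~\ref{th:modelchecking3} follows by exhaustion: since $T$, $F$, $?$ partition the possibilities and the tests for $F$ and $?$ have been matched, the remaining situation $\|\model^\propertytext\|\neq F$ and $\|\model^\propertytext\|\neq ?$ must coincide with $\|\model^\propertytext\|=T$ (and directly amounts to $(\mathcal{L}^\omega(\model)\cup\mathcal{L}^\omega_p(\model))\cap\mathcal{L}^\omega(\mathcal{A}_{\neg\propertytext})=\emptyset$).

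I expect the main obstacle to be the set-algebra collapse in the $?$ case: one must invoke the already-proved characterization of $\neq F$ to eliminate the definitely-accepted component of the union, so that a nonempty $\model\cap\mathcal{A}_{\neg\propertytext}$ genuinely witnesses a \emph{possibly}-accepted counterexample rather than a definite one. Everything else is routine bookkeeping layered on Lemmas~\ref{lem:BALTLRelation},~\ref{th:languageMc}, and~\ref{lem:intersection}, together with the mutual exclusivity and exhaustiveness of the three semantic values.
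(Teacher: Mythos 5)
Your proof is correct and follows essentially the same route as the paper's: both reduce the three emptiness tests to the language-level reformulation via $\mathcal{A}_{\neg \propertytext}$, dispatch the $F$ case with Lemma~\ref{th:languageMc} and the product automaton, dispatch the $?$ case with Lemma~\ref{lem:intersection} combined with the $\neq F$ hypothesis to isolate the possibly-accepted component, and obtain the $T$ case by exhaustion. The only difference is presentational: the paper argues element-wise on individual words (proving the forward direction of condition~\ref{th:modelchecking2} by contradiction), whereas you phrase the same steps as set-algebraic identities together with the mutual exclusivity and exhaustiveness of the three semantic values.
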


\begin{proof}
Let us consider condition~\ref{th:modelchecking1}. 
($\Leftarrow$) By Lemma~\ref{th:languageMc}, $\model_c$ recognizes the language $\mathcal{L}^\omega(\model)$. 
Since $ \mathcal{A}_{\neg \propertytext}$ describes all the possible infinite words that violate $\propertytext$, if $\model_c \cap \mathcal{A}_{\neg \propertytext}  \neq  \emptyset $ it must exists a word $v \in \mathcal{L}^\omega(\model)$ which violates $\propertytext$. 
Therefore, by Definition~\ref{def:threevalueltl} it follows that $ \| \model^\propertytext \|=  F$.
($\Rightarrow$) As specified in Lemma~\ref{def:threevalueltl} $\| \model^\propertytext \|=  F $ implies that there exist $v \in \mathcal{L}^\omega(\model),  v \not \models \propertytext $. 
Since $v$ does not model $\propertytext$, it is a violating behavior and is included in $\mathcal{L}^\omega(\mathcal{A}_{\neg \propertytext})$. 
Since $v \in \mathcal{L}^\omega(\mathcal{A}_{\neg \propertytext})$ and $v \in \mathcal{L}^\omega(\model)$ and  from Lemma~\ref{th:languageMc} $\mathcal{L}^\omega(\model)$ is  recognized by $\model_c$, $\model_c \cap \mathcal{A}_{\neg \propertytext}$ contains at least the word $v$.

Let us then consider condition~\ref{th:modelchecking2}.
($\Leftarrow$) Given that $ \| \model^\propertytext \| \neq  F$ by  Definition~\ref{def:threevalueltl} it follows that for all $v \in \mathcal{L}^\omega(\model)$,  the word $v$ satisfies $\propertytext$. 
Since $\model \cap \mathcal{A}_{\neg \propertytext} \neq \emptyset$, by Lemma~\ref{lem:intersection} it must exists a word $v$ such that $v \in (\mathcal{L}^\omega(\model) \cup \mathcal{L}^\omega_p(\model)) \cap \mathcal{L}^\omega(\mathcal{A}_{\neg \propertytext})$. 
Since for all $v \in \mathcal{L}^\omega(\model),  v \models \propertytext$, i.e.,  $\mathcal{L}^\omega(\model)\cap \mathcal{L}^\omega(\mathcal{A}_{\neg \propertytext}) = \emptyset$, it must be that $\mathcal{L}^\omega_p(\model) \cap \mathcal{L}^\omega(\mathcal{A}_{\neg \propertytext}) \neq \emptyset$. 
Thus, by Definition~\ref{def:threevalueBAsemantic} condition~\ref{def:threevalueBAsemanticcond2} it must exist $u \in \mathcal{L}_p^\omega(\model)$ such that $u \not \models \propertytext $. 
Thus, by Definition~\ref{def:threevalueltl}, condition~\ref{semantic:possiblysatisfaction} it must be that $\| \model^\propertytext \|=\ ?  $.
($\Rightarrow$)  The proof is by contradiction. Imagine that $ \| \model^\propertytext \| =  \ ? $ and $ \| \model^\propertytext \| =  F \text{ or }  \model \cap \mathcal{A}_{\neg \propertytext}  = \emptyset  $. 
Let us consider the case in which  $\| \model^\propertytext \| =  F$. 
By Definition~\ref{def:threevalueltl} condition~\ref{semantic:notsatisfaction} it must exists a word $v \in \mathcal{L}^\omega (\model)$ that does not satisfy $\propertytext$. 
This implies that condition~\ref{semantic:possiblysatisfaction} of Definition~\ref{def:threevalueltl}  is not satisfied and  $ \| \model^\propertytext \| \neq   ?$ making the hypothesis contradicted. 
Consider then the case in which $\model \cap \mathcal{A}_{\neg \propertytext}  = \emptyset$. 
Since the intersection automaton $\mathcal{I}=\model \cap \mathcal{A}_{\neg \propertytext}$ is empty, from Lemma~\ref{lem:intersection} it must not exists a word $v \in (\mathcal{L}^\omega(\model) \cup \mathcal{L}^\omega_p(\model)) \cap \mathcal{L}^\omega(\mathcal{A}_{\neg \propertytext})$. 
This implies that condition~\ref{semantic:possiblysatisfaction} of Definition~\ref{def:threevalueltl} is not satisfied and $ \| \model^\propertytext \| \neq   ?$ making the hypothesis contradicted.

Condition~\ref{th:modelchecking3} is a consequence of the previously described conditions and Definition~\ref{def:threevalueltl}.
\end{proof}

Theorem~\ref{th:incompleteModelChecking} suggests the model checking procedure presented in Algorithm~\ref{algorithm:modelchecking}.  The algorithm works in five different steps:
\begin{itemize}
\item \noindent \emph{Create the automaton $\mathcal{A}_{\neg \propertytext}$ (Line~\ref{alg:extractLTL}).} As in the classical model checking framework, the first step is to construct the BA that contains the set of behaviors forbidden by the property $\propertytext$. 
This step can be performed in time $\mathcal{O}(2^{(|\neg \propertytext|)})$;
\end{itemize}

\begin{algorithm}[t]
\caption{Checks if an  IBA satisfies a LTL property $\propertytext$}
\label{algorithm:modelchecking}
\begin{algorithmic}[1]
\Procedure{ModelChecking}{$\model$, $\propertytext$}
\State $\mathcal{A}_{\neg \propertytext}$ $\leftarrow$ LTL2BA($\neg \propertytext$); \label{alg:extractLTL} 
\State $\model_c$ $\leftarrow$ ExtractMc($\model$); \label{alg:extractMC}
\State $\mathcal{I}_c$ $\leftarrow$ $\model_c \cap \mathcal{A}_{\neg \propertytext}$; \label{alg:computeMCIntersection} 
\State empty $\leftarrow$ CheckEmptiness($\mathcal{I}_c$);\label{alg:checkMCEmptiness} 
\If{!empty}
 \State   \Return F;
 \Else \label{alg:endcheckMCEmptiness}
\State $\mathcal{I} \leftarrow \model \cap \mathcal{A}_{\neg \propertytext}$; \label{alg:computeIntersection} 
 \State empty $\leftarrow$ CheckEmptiness($\mathcal{I}$); \label{alg:checkEmptiness} 
\If{empty}
  \State \Return $T$;
\Else
 \State \Return $?$;
\EndIf
\label{alg:endcheckEmptiness}\EndIf
\EndProcedure
\end{algorithmic}
\end{algorithm}

For example, the automaton $\mathcal{A}_{\neg \propertytext}$, corresponding to the LTL property $\propertytext=G(send \rightarrow F (success))$, is represented in Figure~\ref{fig:claimexample}.
\begin{itemize}
\item \noindent \emph{Extract the automaton $\model_c$ and build the intersection automaton $\mathcal{I}_c = \model_c \cap \mathcal{A}_{\neg \propertytext}$ (Lines~\ref{alg:extractMC}-\ref{alg:computeMCIntersection}).} The automaton $\model_c$  contains all the accepting behaviors of the system. 
In this sense, $\model_c$  is a lower bound on the set of behaviors of the system, i.e., it contains all the behaviors the system is going to exhibit at run-time. 
Thus, the intersection automaton $\mathcal{I}_c$ contains the behaviors of $\model_c$ that violate the property. 
The computation of  $\model_c$ can be performed in time $\mathcal{O}(|Q_\model| + |\Delta_\model|)$, since it is sufficient to remove from the automaton the black box states and their incoming and outgoing transitions.
The intersection automaton $\mathcal{I}_c$ contains in the worst case $3 \cdot |R_\model| \cdot  |Q_{\mathcal{A}_{\neg \propertytext}}|  $ states.
\end{itemize}
For example, the automaton $\model_c$ associated with the the model $\model$ described in Figure~\ref{fig:modelexample}  contains only states $q_1$, $q_2$ and $q_3$ and the transitions marked with the labels $success$ and $abort$. 
The intersection between this automaton and the property $\mathcal{A}_{\neg \propertytext}$ described in Figure~\ref{fig:claimexample} contains all the behaviors of the sending message system that violate the property. 
Since the automaton is empty, there are no behaviors of $\model$ that violate $\propertytext$.
\begin{itemize}
\item \noindent \emph{Check the emptiness of the intersection automaton $\mathcal{I}_c$ (Lines~\ref{alg:checkMCEmptiness}-\ref{alg:endcheckMCEmptiness}).} 
If $\mathcal{I}_c$ is not empty, the condition $\mathcal{L}(\model)  \cap \mathcal{L}(\mathcal{A}_{\neg \propertytext}) \not =\emptyset$ is matched, i.e., the property is not satisfied and every infinite word in the intersection automaton is a counterexample. 
If, instead, $\mathcal{I}_c$ is empty, $\model$ possibly satisfies or definitely satisfies $\propertytext$ depending on the result of the next steps of the algorithm.
\end{itemize}
The intersection automaton $\mathcal{I}_c$ of the sending message example is empty since the automaton $\mathcal{I}_c$  does not contain any accepting state that can be entered infinitely often and is reachable from one of its initial states. 
Indeed, both $q_2$ and $q_3$  (the accepting states of  $\model_c$) are never reachable from $q_1$ in the intersection automaton. 
Thus, $\model$ definitely satisfies or possibly satisfies the property $\propertytext$ depending on the next steps of the algorithm.
\begin{itemize}
\item \noindent \emph{Compute the intersection  $\mathcal{I}= \model \cap \mathcal{A}_{\neg \propertytext}$ of the incomplete model $\model$ and the automaton $ \mathcal{A}_{\neg \propertytext}$ associated with the property $\propertytext$ (Line~\ref{alg:computeIntersection}).} 
To  check whether $\model$ definitely satisfies or possibly satisfies $\propertytext$ it is necessary to verify if $(\mathcal{L}(\model) \cup \mathcal{L}_p(\model)) \cap \mathcal{L}(\mathcal{A}_{\neg \propertytext})=\emptyset$.
Indeed$\model$ intrinsically specifies an upper bound on the behaviors of the system, i.e., it contains all the behaviors the system must and may exhibit.  
The intersection algorithm presented in Section~\ref{sec:BAIBAIntersection} is used to compute the intersection automaton $\mathcal{I} = \model \cap \mathcal{A}_{\neg \propertytext}$.
\end{itemize}
The intersection automaton $\mathcal{I}$ of the sending message example is depicted in Figure~\ref{fig:intersection}.
\begin{itemize}
\item \noindent \emph{Check the emptiness of the intersection automaton  $\mathcal{I}=\model \cap \mathcal{A}_{\neg \propertytext}$ (Lines \ref{alg:checkEmptiness}-\ref{alg:endcheckEmptiness}).} 
By checking the emptiness of the automaton $\mathcal{I}$ we verify whether the property $\propertytext$ is definitely satisfied or possibly satisfied by the model $\model$. 
Since we have already checked that $\mathcal{L}(\model) \subseteq \mathcal{L}(\mathcal{A}_{\neg \propertytext})$, two cases are possible: if $\mathcal{I}$ is empty,  $\mathcal{L}_p(\model) \subseteq \mathcal{L}(\mathcal{A}_{\neg \propertytext})$ and the property is definitely satisfied whatever refinement is proposed for the boxes of $\model$, otherwise, $\mathcal{L}_p(\model) \not \subseteq \mathcal{L}(\mathcal{A}_{\neg \propertytext})$, meaning that there exists some refinement of $\model$ that violates the property. 
\end{itemize}
For example, the word \{$start$\}.\{$send$\}.\{$fail$\}.\{$fail$\}.\{$abort$\}$^\omega$, which is a possibly accepted by $\model$, violates $\propertytext$ since it corresponds to a run where a $send$ is not followed by a $success$. 
This behavior can be generated by replacing to box $send_1$ a component containing runs where a message is $sent$ and no $success$ is obtained, and to box $send_2$ an empty component that neither tries to $send$ a message nor waits for a $success$.

\begin{theorem}[Incomplete Model Checking complexity]
\label{th:incompleteModelCheckingComplexity}  
Checking an \IBA\ $\model$ against the BA $\mathcal{A}_{\neg \propertytext}$ associated to the LTL formula $\neg \propertytext$ can be performed in time $\mathcal{O}(|\model|\cdot|\mathcal{A}_{\neg \propertytext}|)$, where $|\model|$ and $|\mathcal{A}_{\neg \propertytext}|$ are the sizes of the model and the automaton associated with the negation of the claim, respectively. 
\end{theorem}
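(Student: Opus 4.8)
The plan is to bound the running time of Algorithm~\ref{algorithm:modelchecking} by summing the cost of each of its steps and showing that every step runs in time $\mathcal{O}(|\model|\cdot|\mathcal{A}_{\neg\propertytext}|)$. Since the theorem concerns the cost of \emph{checking} $\model$ against the already-constructed automaton $\mathcal{A}_{\neg\propertytext}$, the LTL-to-BA translation of Line~\ref{alg:extractLTL} is not charged to this complexity; I take $\mathcal{A}_{\neg\propertytext}$ as an input of size $|\mathcal{A}_{\neg\propertytext}|=|Q_{\mathcal{A}_{\neg\propertytext}}|+|\Delta_{\mathcal{A}_{\neg\propertytext}}|$.

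First I would handle the construction of the completion $\model_c$ on Line~\ref{alg:extractMC}. By Definition~\ref{def:IBACompletion}, $\model_c$ is obtained from $\model$ by deleting the box states together with their incoming and outgoing transitions, so a single pass over $Q_\model$ and $\Delta_\model$ suffices and this step costs $\mathcal{O}(|Q_\model|+|\Delta_\model|)=\mathcal{O}(|\model|)$; in particular $|\model_c|\le|\model|$. Next I would account for the two intersection computations, on Lines~\ref{alg:computeMCIntersection} and~\ref{alg:computeIntersection}. By Proposition~\ref{sec:sizeOfTheIntersectionAutomaton} the intersection of an \IBA\ with a BA is built in time $\mathcal{O}(|\model|\cdot|\mathcal{A}_{\neg\propertytext}|)$; applying this to $\mathcal{I}_c=\model_c\cap\mathcal{A}_{\neg\propertytext}$ (using $|\model_c|\le|\model|$) and to $\mathcal{I}=\model\cap\mathcal{A}_{\neg\propertytext}$ shows that both lines run within the target bound. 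I would also record that, by Definition~\ref{def:intersection}, the resulting automata have $\mathcal{O}(|Q_\model|\cdot|Q_{\mathcal{A}_{\neg\propertytext}}|)$ states and, since each intersection transition pairs a transition of $\mathcal{A}_{\neg\propertytext}$ with either a transition of $\model$ or a box self-loop, $\mathcal{O}(|\Delta_\model|\cdot|\Delta_{\mathcal{A}_{\neg\propertytext}}|)$ transitions; hence $|\mathcal{I}_c|,|\mathcal{I}|=\mathcal{O}(|\model|\cdot|\mathcal{A}_{\neg\propertytext}|)$.

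The two emptiness tests on Lines~\ref{alg:checkMCEmptiness} and~\ref{alg:checkEmptiness} are the final contributions. Emptiness of a BA reduces to deciding whether some accepting state lies on a cycle reachable from an initial state, which is solved in time linear in the number of states plus transitions by a single SCC decomposition (or a nested depth-first search). Applied to $\mathcal{I}_c$ and $\mathcal{I}$, whose sizes were just bounded by $\mathcal{O}(|\model|\cdot|\mathcal{A}_{\neg\propertytext}|)$, each test therefore costs $\mathcal{O}(|\model|\cdot|\mathcal{A}_{\neg\propertytext}|)$. Summing the five contributions---one linear term $\mathcal{O}(|\model|)$ and four terms of order $|\model|\cdot|\mathcal{A}_{\neg\propertytext}|$---yields the claimed bound.

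The only delicate point is ensuring that the emptiness check is charged against the \emph{size} (states \emph{and} transitions) of each intersection rather than against its state count alone: Proposition~\ref{sec:sizeOfTheIntersectionAutomaton} as stated counts only states, so I must bound the transition count of $\mathcal{I}_c$ and $\mathcal{I}$ directly from the structure of $\Delta_{\mathcal{I}}=\Delta_{\mathcal{I}}^c\cup\Delta_{\mathcal{I}}^p$ in Definition~\ref{def:intersection} before invoking the linear-time emptiness algorithm. Once that transition bound is in place, the remaining steps are routine and the product bound follows immediately.
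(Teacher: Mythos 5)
Your proof is correct and takes essentially the same route as the paper's: decompose the cost of the checking procedure into building the two intersection automata $\mathcal{I}_c$ and $\mathcal{I}$, bound their sizes by $\mathcal{O}(|\model|\cdot|\mathcal{A}_{\neg \propertytext}|)$, and run a linear-time emptiness check on each. You are in fact slightly more careful than the paper, which only states the worst-case state count $3\cdot|Q_\model|\cdot|Q_{\mathcal{A}_{\neg \propertytext}}|$ and charges emptiness at $\mathcal{O}(|Q_{\mathcal{I}}|+|\Delta_{\mathcal{I}}|)$ without explicitly bounding $|\Delta_{\mathcal{I}}|$; your explicit bound on the transition count of the intersections closes that small omission.
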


\begin{proof}
Checking an IBA $\model$ against a property expressed as a BA $\mathcal{A}_{\neg \propertytext}$ requires to check the emptiness of two intersection automata $\mathcal{I}_c$ and $\mathcal{I}$ representing a lower bound and an upper bound on the behaviors of the system, respectively. 
These automata contain in the words case $3 \cdot |Q_\model| \cdot  |Q_{\mathcal{A}_{\neg \propertytext}}| $ states. 
The emptiness checking procedure can be performed in time $\mathcal{O}(|Q_{\mathcal{I}_c}|+|\Delta_{\mathcal{I}_c}|)$ and $\mathcal{O}(|Q_{\mathcal{I}}|+|\Delta_{\mathcal{I}}|)$, for the two intersection automata, respectively.
\end{proof}

%\begin{theorem}
%\label{th:incompleteModelCheckingComplexityLTL} (LTL Model Checking Complexity) Checking an IBA $\model$ against the LTL formula $\neg \propertytext$ has an exponential temporal complexity since the automaton corresponding to the formula $\propertytext$ $\mathcal{O}(|\model|\cdot 2^{\mathcal{O}(|\propertytext|)})$, where $|Q_{\mathcal{I}}|$ and $|\Delta_{\mathcal{I}}|$ is the cardinality of the set of the states and transitions of the intersection automaton, respectively. 
%\end{theorem}

\section{Constraint computation}
\label{sec:ComputingConstraints}

When a property $\phi$ is possibly satisfied, each word $v$, recognized by the intersection automaton $\mathcal{I}=\mathcal{M} \cap \mathcal{A}_{\neg \phi}$, corresponds to a behavior $\mathcal{B}$ the system \emph{may} exhibit that violates $\phi$. 
To make $\phi$  satisfied, the developer must design the replacements $\mathcal{R}_{b_1}, \mathcal{R}_{b_2} \ldots  \mathcal{R}_{b_n}$ of the black box states $b_1, b_2, \ldots b_n$ to forbid $\mathcal{B}$ from occurring. A replacement of a box is an automaton to be substituted to a box $b$. 
The goal of the constraint computation is to find  the set of sub-properties, one for each box, to be satisfied by the replacements of the boxes of the model $\mathcal{M}$ such that the refinement $\mathcal{N}$ does not violate $\phi$. 
Sub-properties are guidelines that help the developer in the replacement design and can be considered as a contract in a contract based design setting~\cite{meyer1992applying}. 
The constraint computation procedure is based on three subsequent steps:
\begin{inparaenum}[\itshape a\upshape)]
\item intersection cleaning;
\item sub-properties generation;
\item constraint identification.
\end{inparaenum}

\subsection{Intersection cleaning}
\label{sec:IntersectionCleaning}
The intersection cleaning phase removes from the intersection automaton $\mathcal{I}=\mathcal{M} \cap \mathcal{A}_{\neg \phi}$ the states that are not involved in any behavior $\mathcal{B}$ that possibly violates the property.
Indeed, these behaviors must not be included in any sub-property. 
Given the intersection automaton $\mathcal{I}=\mathcal{M} \cap \mathcal{A}_{\neg \phi}$, the cleaned intersection automaton $\mathcal{I}_{cl}$ is a version of the intersection automaton where the states from which it is not possible to reach an accepting state that can be entered infinitely many often are removed. 

\begin{mydef}[Intersection cleaning]
\label{th:cleanedIntersection}  
Given the intersection automaton $\mathcal{I}=\langle \Sigma_{\mathcal{I}}, Q_{\mathcal{I}},$ $ \Delta_{\mathcal{I}},  Q_{\mathcal{I}}^0,  F_{\mathcal{I}} \rangle$, the cleaned intersection automaton $\mathcal{I}_{cl}=\langle \Sigma_{\mathcal{I}_{cl}}, Q_{\mathcal{I}_{cl}},$ $ \Delta_{\mathcal{I}_{cl}},  Q_{\mathcal{I}_{cl}}^0,  F_{\mathcal{I}_{cl}} \rangle$ satisfies the following conditions:
\begin{itemize}
\item $\Sigma_{\mathcal{I}_{cl}}=\Sigma_{\mathcal{I}}$;
\item $Q_{\mathcal{I}_{cl}} = \left \{ q \in Q_{\mathcal{I}}  \right.$ such that there exits a possibly accepting run $\rho^\omega$ and an index $i \geq 0$ and $\left. \rho^\omega(i)=q  \right \}$;
\item $\Delta_{\mathcal{I}_{cl}}  = \left \{(q, a, q^\prime) \in \Delta_{\mathcal{I}} \text{ such that } q \in Q_{\mathcal{I}_{cl}} \text{ and } q^\prime \in Q_{\mathcal{I}_{cl}} \right \}$;
\item $Q_{\mathcal{I}_{cl}}^0=Q_{\mathcal{I}}^0 \cap Q_{\mathcal{I}_{cl}}$;
\item $F_{\mathcal{I}_{cl}}= F_{\mathcal{I}} \cap Q_{\mathcal{I}_{cl}}$.
\end{itemize}
\end{mydef}

The cleaned version of the intersection automaton can be obtained using Algorithm~\ref{algorithm:cleaning}. 
The algorithm first creates a copy of the intersection automaton $\mathcal{I}$ which will contain $\mathcal{I}_{cl}$ (Line~\ref{alg:cleaningcopy}). 
Then, the non trivial strongly connected components are computed and stored in the  $SCC$ set (Line~\ref{alg:cleaningNonTrivialSCC}).  
Then, the set $Nxt$ is defined (Line~\ref{algorithm:cleaningNextDefinition}). 
The set next is used to store the set of states from which it is possible to reach an accepting state (which can be entered infinitely often) whose predecessors still have to be analyzed.

\begin{algorithm}[t]
\caption{Removes the states that are not involved in a possibly accepting run.}
\label{algorithm:cleaning}
\begin{algorithmic}[1]
\Procedure{IntersectionCleaner}{$\mathcal{I}$}
\State $\mathcal{I}_{cl} \leftarrow$\Call{Clone}{$\mathcal{I}$}; \label{alg:cleaningcopy}
\State $SCC \leftarrow$\Call{getNonTrivialSCC}{$\mathcal{I}_{cl}$};  \label{alg:cleaningNonTrivialSCC}
\State $Nxt \leftarrow \left \{ \right \}$; \label{algorithm:cleaningNextDefinition}
\For{$scc \in SCC$} \label{algorithm:cleaningForEachscc}
\If{$scc \cap F_{\mathcal{I}} \neq \emptyset$}  \label{algorithm:cleaningAtLeastATransparent}
\State  $Nxt \leftarrow Nxt \cup scc$; \label{algorithm:cleaningAddingTheStates}
\EndIf
\EndFor \label{algorithm:cleaningForEachsccEnd}
\State $Vis \leftarrow \left \{ \right \}$;  \label{algorithm:cleaningVis}
\While{$Nxt \neq \emptyset$} \label{algorithm:cleaningEmpty}
\State $s \leftarrow$\Call{choose}{$Nxt$}; \label{algorithm:stateSelection}
\State $Vis \leftarrow Vis \cup \left \{ s\right \}$; \label{algorithm:cleaningAddingVisited}
\State $Nxt \leftarrow Nxt\setminus \left \{ s \right \}$; \label{algorithm:IntersectionCleanerRemovingNext}
\State $Nxt \leftarrow Nxt\cup \left \{ s^\prime \mid (s^\prime, a, s) \in \Delta_{\mathcal{I}}  \wedge s \not \in Vis \right \}$; \label{algorithm:cleaningAddingNext}
\EndWhile \label{algorithm:cleaningEmptyEnd}
\For{$s \in (Q_{\mathcal{I}} \setminus Vis) $} \label{algorithm:cleaningForVisited}
\State \Call{removeState}{$\mathcal{I}_{cl}$, $s$}; \label{algorithm:cleaningRemoveState}
\EndFor \label{algorithm:cleaningForVisitedEnd}
\EndProcedure
\end{algorithmic}
\end{algorithm}

For each set of states that form a strongly connected component $scc$ which is in the $SCC$ set (Line~\ref{algorithm:cleaningForEachscc}), if at least an accepting state is present (Line~\ref{algorithm:cleaningAtLeastATransparent}), the states are added to the set $Nxt$ of the states to be visited next (Line~\ref{algorithm:cleaningAddingTheStates}). 
Then, the set $Vis$ is defined (Line~\ref{algorithm:cleaningVis}). 
The set $Vis$ is used to store the set of the already visited states. 
The goal of this set is to guarantee that a state is not visited twice during the state space exploration. 
By exploring the state space of $\mathcal{I}$ the states from which it is possible to reach a state in the set  $Nxt$ are identified. 

The state space exploration is an iterative process that ends when the set $Nxt$ is empty (Line~\ref{algorithm:cleaningEmpty}). 
At each exploration step, a state $s \in Nxt$ is selected (Line~\ref{algorithm:stateSelection}), added to the set of visited states (Line~\ref{algorithm:cleaningAddingVisited}), and removed to the set $Nxt$ (Line~\ref{algorithm:IntersectionCleanerRemovingNext}). 
Then, all the states $s^\prime$, which have not  already been visited and are predecessors of the state $s$, are added to the set  $Nxt$ of the states to be analyzed next (Line~\ref{algorithm:cleaningAddingNext}). 
Finally, each state that has not been visited (Line~\ref{algorithm:cleaningForVisited})  and its incoming and outgoing transitions are removed from the automaton (Line~\ref{algorithm:cleaningRemoveState}). 

The intersection cleaning algorithm applied to the intersection automaton described in Figure~\ref{fig:intersection} removes the states $\raisebox{.5pt}{\textcircled{\raisebox{-.9pt} {4}}}$, $\raisebox{.5pt}{\textcircled{\raisebox{-.9pt} {5}}}$ and $\raisebox{.5pt}{\textcircled{\raisebox{-.9pt} {7}}}$ since they are not involved in any possibly accepting run.

\begin{theorem}[Correctness]
\label{th:cleaningCorrect}  The cleaning procedure is correct. 
\end{theorem}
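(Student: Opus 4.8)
The plan is to show that the automaton returned by Algorithm~\ref{algorithm:cleaning} coincides with the cleaned intersection automaton $\mathcal{I}_{cl}$ prescribed by Definition~\ref{th:cleanedIntersection}. Since the algorithm only deletes states (together with their incident transitions) and otherwise leaves the alphabet, the initial states and the accepting states unchanged, it suffices to prove that the set $Vis$ of states surviving the final deletion loop equals $Q_{\mathcal{I}_{cl}}$. Once the state sets agree, the equalities prescribed for $\Delta_{\mathcal{I}_{cl}}$, $Q^0_{\mathcal{I}_{cl}}$ and $F_{\mathcal{I}_{cl}}$ follow immediately from the semantics of \textsc{removeState}, which restricts each component to the retained states exactly as the definition does.

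The first step is a purely graph-theoretic characterisation of $Q_{\mathcal{I}_{cl}}$. I would argue that a state $q$ lies on a possibly accepting run of $\mathcal{I}$ if and only if $q$ is reachable from an initial state and from $q$ one can reach an accepting state lying on a cycle. The ``if'' direction builds a lasso: concatenate a finite path $Q^0_{\mathcal{I}} \rightsquigarrow q \rightsquigarrow f$, where $f \in F_{\mathcal{I}}$ sits on a cycle $C$, with the infinite repetition of $C$, obtaining a run through $q$ that visits $F_{\mathcal{I}}$ infinitely often. The ``only if'' direction is the standard pumping argument: since $Q_{\mathcal{I}}$ is finite, an infinite accepting run through $q$ must revisit some accepting state, exhibiting a cycle through it that is reachable from $q$. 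Because the product of Definition~\ref{def:intersection} is generated by forward exploration from $Q^0_{\mathcal{I}}$, every state of $\mathcal{I}$ is reachable from an initial state, so the reachability-from-$Q^0$ clause is automatic and membership in $Q_{\mathcal{I}_{cl}}$ reduces to ``$q$ can reach an accepting state on a cycle''. I would also note that, in the constraint-computation setting, the property is possibly satisfied, so $\mathcal{L}^\omega(\model) \cap \mathcal{L}^\omega(\mathcal{A}_{\neg \propertytext}) = \emptyset$; by Lemma~\ref{lem:intersection} there is then no definitely accepting run of $\mathcal{I}$, hence every accepting run traverses a mixed state and is genuinely \emph{possibly} accepting. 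This is why the algorithm need not test box traversal explicitly.

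The second step verifies that Algorithm~\ref{algorithm:cleaning} computes exactly the states that can reach an accepting state on a cycle. An accepting state lies on a cycle precisely when it belongs to a non-trivial strongly connected component, so after the loop of Lines~\ref{algorithm:cleaningForEachscc}--\ref{algorithm:cleaningForEachsccEnd} the seed $Nxt$ is the union of all non-trivial SCCs meeting $F_{\mathcal{I}}$. Using the mutual-reachability property of an SCC, I would show that a state can reach an accepting state on a cycle if and only if it can reach this seed. The \textbf{while} loop then performs backward reachability from the seed, adding at each iteration the not-yet-visited predecessors of the chosen state; the invariant to maintain is that $Vis \cup Nxt$ always lies within, and eventually exhausts, the set of states from which the seed is forward-reachable, while each state enters $Vis$ at most once, which gives termination since $Q_{\mathcal{I}}$ is finite. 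At termination $Nxt=\emptyset$ and $Vis$ is precisely the predecessor-closure of the seed, i.e. the set of states that can reach an accepting state on a cycle, which by the first step is $Q_{\mathcal{I}_{cl}}$.

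The main obstacle I anticipate is not the backward search, which is routine worklist reasoning, but cleanly bridging the \emph{run-based} definition of $Q_{\mathcal{I}_{cl}}$ and the \emph{graph-based} set the algorithm manipulates. Two points demand care: justifying that the initial-state prefix required by a genuine run is free (which rests on $\mathcal{I}$ containing only states reachable from $Q^0_{\mathcal{I}}$), and justifying the passage from ``accepting state entered infinitely often'' to ``accepting state in a non-trivial SCC'', where the finiteness of $Q_{\mathcal{I}}$ and the definition of \textsc{getNonTrivialSCC} enter. Making the loop invariant precise enough to conclude both soundness (every retained state lies on a possibly accepting run) and completeness (no such state is deleted) is where the argument must be most careful.
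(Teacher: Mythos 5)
Your proposal is correct and in fact establishes more than the paper's own proof, by a genuinely different route. The paper argues only by contradiction in one direction: it assumes some retained state $q \in Q_{\mathcal{I}_{cl}}$ lies on no possibly accepting run, traces how $q$ entered $Vis$ (either via an accepting non-trivial SCC in the seed or via the backward worklist), and derives a contradiction. That establishes soundness only; it never addresses completeness (that no state lying on a possibly accepting run is deleted), even though the negation of Definition~\ref{th:cleanedIntersection} admits that failure mode too, and it never bridges the run-based definition with the graph-based computation. Your decomposition --- first the lasso/pumping characterization of membership in $Q_{\mathcal{I}_{cl}}$, then the worklist-invariant argument that $Vis$ equals the backward closure of the accepting non-trivial SCCs --- yields the full equivalence $Vis = Q_{\mathcal{I}_{cl}}$, and it makes explicit the two hypotheses the paper uses silently: that the property is possibly satisfied (so every accepting run of $\mathcal{I}$ is possibly accepting, matching the paper's opening remark in its proof) and that every state of $\mathcal{I}$ is reachable from $Q^0_{\mathcal{I}}$. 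One caveat on the latter: Definition~\ref{def:intersection} formally defines $Q_{\mathcal{I}}$ as the \emph{full} product $\bigl((R_{\model} \times R_{\mathcal{A}_{\neg \propertytext}}) \cup (B_{\model} \times R_{\mathcal{A}_{\neg \propertytext}})\bigr) \times \left\{0,1,2\right\}$, not its reachable part, so your appeal to ``forward exploration'' is an added assumption rather than a consequence of the definition; without it, the backward-only search of Algorithm~\ref{algorithm:cleaning} would retain states that can reach an accepting cycle yet occur on no run, and the theorem as stated would fail. Since the paper's proof tacitly needs the same assumption (it too never checks reachability from $Q^0_{\mathcal{I}}$), this is a defect you inherit from, and correctly flag about, the paper rather than a gap in your argument.
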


\begin{proof}
To prove the correctness of the procedure, it is necessary to prove that the automaton $\mathcal{I}_{cl}$ obtained using Algorithm~\ref{algorithm:cleaning} satisfies the conditions specified in Definition~\ref{th:cleanedIntersection}. 
Note that, the intersection cleaning algorithm is executed on the intersection automaton only when the property is possibly satisfied, i.e., it does not exists any definitely accepting run, but the intersection only contains  possibly accepting ones. 
The proof is by contradiction. 
Assume that there exists an automaton $\mathcal{I}_{cl}$ obtained using Algorithm~\ref{algorithm:cleaning} which does not satisfy Definition~\ref{th:cleanedIntersection}. 
Then, it must exist a state $q \in Q_{\mathcal{I}_{cl}}$ which is not involved in any possibly accepting run $\rho^\omega$. 
Imagine that such a state exists. 
To not be removed at the end of the cleaning algorithm it must be contained into the set $Vis$ of the visited states. 
To be inserted in $Vis$ it is necessary that $q$ was inserted in the set $Nxt$ before. Two cases are possible:
\begin{inparaenum}[\itshape a\upshape)]
\item the state was included in the set $Nxt$ before the \texttt{while} cycle. In this case, the state was a part of a strongly connected component that contains at least an accepting state.
Thus, the hypothesis is contradicted, since from that state there was at least a possible way to reach an accepting state that is entered infinitely often;
\item the state is included in the set $Vis$ inside the \texttt{while} cycle. 
Then, the state is a predecessor of a state $q^\prime$ from which it is possible to reach an accepting state that can be entered infinitely often. 
Thus, the hypothesis is again contradicted.
\end{inparaenum}
\end{proof}

\begin{theorem}[Complexity]
\label{th:ComplexityCleaningProcedure}
The intersection cleaning procedure can be performed in time $\mathcal{O}(|Q_{\mathcal{I}}|+|\Delta_{\mathcal{I}}|)$.
\end{theorem}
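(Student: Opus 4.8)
The plan is to bound the running time of Algorithm~\ref{algorithm:cleaning} by analyzing each of its phases separately, showing that each is linear in $|Q_{\mathcal{I}}|+|\Delta_{\mathcal{I}}|$, and then summing. Throughout I would assume that the intersection automaton is represented with adjacency lists storing, for every state, both its outgoing and its incoming transitions, and that the sets $Nxt$ and $Vis$ support constant-time insertion, deletion, and membership testing (for instance via a boolean array indexed by state). These are the standard assumptions under which reachability-style graph algorithms run in linear time.

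First I would account for the preprocessing. Cloning $\mathcal{I}$ into $\mathcal{I}_{cl}$ (Line~\ref{alg:cleaningcopy}) copies every state and transition once, costing $\mathcal{O}(|Q_{\mathcal{I}}|+|\Delta_{\mathcal{I}}|)$. Computing the non-trivial strongly connected components (Line~\ref{alg:cleaningNonTrivialSCC}) is performed with a standard linear-time SCC algorithm (Tarjan or Kosaraju), which also runs in $\mathcal{O}(|Q_{\mathcal{I}}|+|\Delta_{\mathcal{I}}|)$. The loop of Lines~\ref{algorithm:cleaningForEachscc}--\ref{algorithm:cleaningForEachsccEnd} inspects each SCC once and, when the acceptance condition holds, inserts its states into $Nxt$; since the SCCs partition $Q_{\mathcal{I}}$, the total work and the total number of insertions in this phase are $\mathcal{O}(|Q_{\mathcal{I}}|)$.

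The heart of the analysis is the backward exploration in the \texttt{while} loop (Lines~\ref{algorithm:cleaningEmpty}--\ref{algorithm:cleaningEmptyEnd}), and this is the step I expect to require the most care. The key invariant is that every state is inserted into $Vis$ at most once, and a state is processed (chosen from $Nxt$ and expanded) exactly when it is moved into $Vis$; the guard $s \notin Vis$ in Line~\ref{algorithm:cleaningAddingNext} prevents a state from being re-enqueued after it has been handled. Hence each state is expanded at most once, so the loop iterates at most $|Q_{\mathcal{I}}|$ times. When a state $s$ is expanded, its predecessors are enumerated by scanning the incoming transitions $(s^\prime, a, s) \in \Delta_{\mathcal{I}}$, and because each state is expanded only once, every transition of $\mathcal{I}$ is scanned a constant number of times across the whole loop. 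An amortized charging argument — charging each transition scan to the transition and each set operation to the state being processed — therefore bounds the total cost of the \texttt{while} loop by $\mathcal{O}(|Q_{\mathcal{I}}|+|\Delta_{\mathcal{I}}|)$.

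Finally, the cleanup loop of Lines~\ref{algorithm:cleaningForVisited}--\ref{algorithm:cleaningForVisitedEnd} iterates over the states of $\mathcal{I}$ and deletes those not in $Vis$; removing a state together with its incident transitions costs time proportional to its degree, so the aggregate cost is again $\mathcal{O}(|Q_{\mathcal{I}}|+|\Delta_{\mathcal{I}}|)$. Summing the five phases, each of which is $\mathcal{O}(|Q_{\mathcal{I}}|+|\Delta_{\mathcal{I}}|)$, yields the claimed bound. The only subtlety worth flagging is the dependence on the data-structure assumptions stated at the outset: if membership in $Vis$ or $Nxt$ were tested in more than constant time, the \texttt{while} loop would no longer be linear, so I would make those assumptions explicit before carrying out the argument.
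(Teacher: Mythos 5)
Your proof is correct and follows essentially the same route as the paper's: a phase-by-phase accounting of Algorithm~\ref{algorithm:cleaning} (cloning, Tarjan's SCC computation, the accepting-SCC scan, the backward exploration in which each state and transition is processed at most once, and the final removal pass), each bounded by $\mathcal{O}(|Q_{\mathcal{I}}|+|\Delta_{\mathcal{I}}|)$. Your added care about adjacency-list representation and constant-time set operations makes explicit assumptions the paper leaves implicit, but does not change the argument.
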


\begin{proof}
Cloning the intersection automaton $\mathcal{I}$ (Line~\ref{alg:cleaningcopy}) can be be done in time $\mathcal{O}(|Q_{\mathcal{I}}|+|\Delta_{\mathcal{I}}|)$. 
Indeed, it is sufficient to traverse $\mathcal{I}$ and clone its states and transitions. 
The same time complexity is required for finding the no trivial strongly connected components (Line~\ref{alg:cleaningNonTrivialSCC}). 
For example, it is possible to use the well known Tarjan's Algorithm~\cite{tarjan1972depth}. 
Checking whether a strongly connected component contains an accepting state (Lines~\ref{algorithm:cleaningNextDefinition}-\ref{algorithm:cleaningForEachsccEnd}) can be done in time $\mathcal{O}(|Q_{\mathcal{I}}|)$. 
Finally, the state space exploration (Lines~\ref{algorithm:cleaningEmpty}-\ref{algorithm:cleaningEmptyEnd}) has $\mathcal{O}(|Q_{\mathcal{I}}|+|\Delta_{\mathcal{I}}|)$ time complexity, since each state and transition is visited at most once. 
The same time complexity is required to remove the not reachable states (Lines~\ref{algorithm:cleaningForVisited}-\ref{algorithm:cleaningForVisitedEnd}). 
Thus, the final complexity of the algorithm is $\mathcal{O}(|Q_{\mathcal{I}}|+|\Delta_{\mathcal{I}}|)$.
\end{proof}

\subsection{Generation of constraints}
\label{Sec:ModelingSubProperty}
The constraint computation algorithm identifies for each box the corresponding constraint. 
Each constraint $C=\langle S, S_p \rangle$ encodes the set of behaviors the replacement of the box $b$ cannot or should not exhibit through two sub-properties $S$ and $S_p$. The behaviors encoded in $S$ and $S_p$ if present in the replacement would lead to a  \emph{violation} or a \emph{possible violation} of the property of interest.  
For example, assume that the box $send_1$  presented in Figure~\ref{fig:modelexample} is accepting. If it is replaced by the automaton presented in Figure~\ref{fig:exampleReplacementStateSend1}, the claim $\LTLglobally (send \rightarrow \LTLfinally (success))$ is violated, since the sending activity is not followed by a success. Instead, if  $send_1$ is replaced by the automaton presented in Figure~\ref{fig:exampleReplacementStateSend2}, it possibly violates the claim since it needs the ``cooperation" of the replacement of the box $send_2$ to generate a violating behavior. Indeed, checking the refinement model after the first replacement is proposed yields to a claim violation, while in the second case the claim is possibly satisfied.
\begin{figure*}[ht]
    \centering
    \subfloat[A replacement for the box $send_1$.]
     { \includegraphics[scale=0.44]{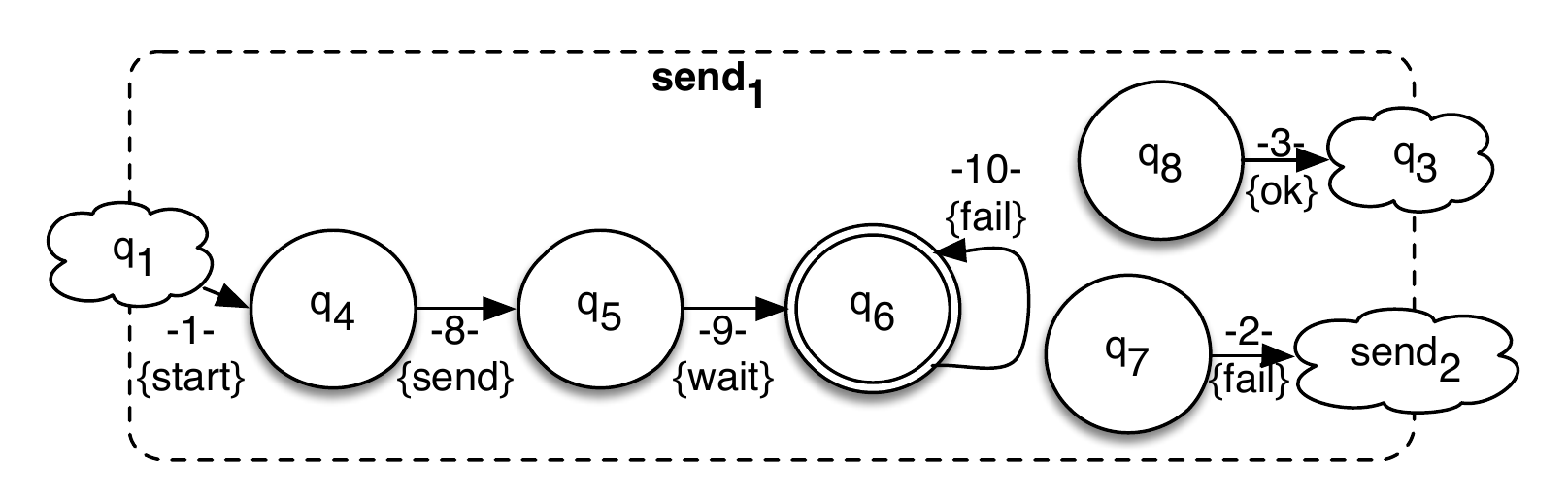}
    \label{fig:exampleReplacementStateSend1}}
    \subfloat[A replacement for the box $send_1$.]{
          \includegraphics[scale=0.44]{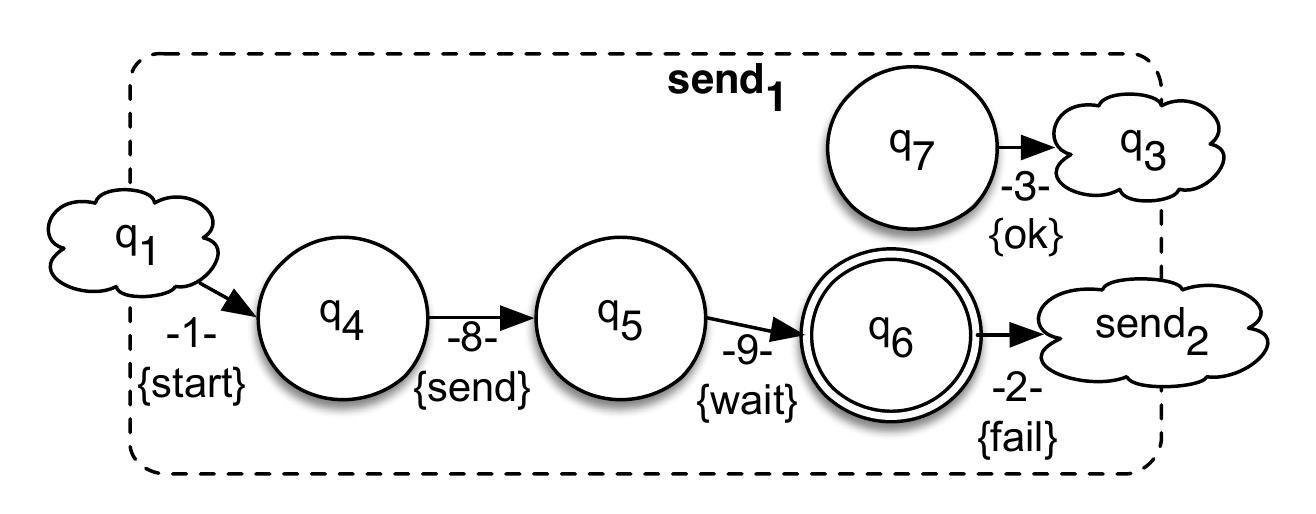}
              \label{fig:exampleReplacementStateSend2}}
    \caption{Two examples of replacements for the box $send_1$.}
\end{figure*}

To encode the behaviors of the replacement of the box $b$ that would lead to a violation of the claim $\phi$, it is necessary to describe the internal violating behaviors and how these behaviors garnish the model of the system. 
For this reason, each of the two sub-properties associated with the box $b$ provides 
\begin{inparaenum}[\itshape a\upshape)]
\item a BA $\mathcal{P}$, which describes the internal violating or possibly violating behaviors of the replacement, 
\item the incoming $\Delta^{in\mathcal{P}}$ and outgoing $\Delta^{out\mathcal{P}}$ transitions,  which describe how the replacement must (must not) be connected to the original model,
\item a subset $G$ of the source states of $\Delta^{in\mathcal{P}}$. These states are reachable by the system without making assumption on the behavior of the box $b$.
\item  a subset $R$ of the source states of $\Delta^{in\mathcal{P}}$. These states are states from which an accepting state that can be entered infinitely often can be reached, i.e., a violating behavior is exhibited by the system.
\item  a relation $K$ between the target states of $\Delta^{out\mathcal{P}}$ and the source states of $\Delta^{in\mathcal{P}}$. This relation specifies whether allowing the system to reach a target state of a transition in  $\Delta^{out\mathcal{P}}$ allows reaching a source state of $\Delta^{in\mathcal{P}}$.
\end{inparaenum} Formally:

\begin{mydef}[Sub-property]
\label{def:subproperty} 
Given a model $\mathcal{M}$ defined over the set of states $Q_{\mathcal{M}}$, a sub-property $\mathcal{S}$ associated with the box $b \in B_{\mathcal{M}}$ is a tuple \subproperty, where:
\begin{itemize}
\item $\mathcal{P}=\langle \Sigma_{\mathcal{P}}, Q_{\mathcal{P}}, \Delta_{\mathcal{P}},$ $ Q_{\mathcal{P}}^0, F_{\mathcal{P}} \rangle$ is a BA. 
$\mathcal{P}$ must satisfy the following conditions: if $b \not \in Q^0_{\mathcal{M}}$ then $Q^0_{\mathcal{P}}=\emptyset$,  if $b \not \in F_{\mathcal{M}}$ then $F_{\mathcal{P}}=\emptyset$; 
\item the sets $\Delta^{in\mathcal{S}} \subseteq \left \{ (q^\prime, a, q) \mid (q^\prime, a, b) \in  \Delta_{\mathcal{M}} \text{ and } q \in  Q_{\mathcal{P}} \text{ and } q^\prime \in \right.$ $\left. Q_{\mathcal{M}} \right \}$ and 
$ \Delta^{out\mathcal{S}} \subseteq \left \{ (q, a, \right.$ $\left. q^\prime) \mid  (b, a, q^\prime) \in  \Delta_{\mathcal{M}}  \text{ and } q \in Q_{\mathcal{P}}\text{ and }  q^\prime \right.$ $\left. \in  Q_{\mathcal{M}}  \right \}$ are the incoming and outgoing transitions  of the sub-property $\mathcal{S}$;
\item the set $G \subseteq \Delta^{in\mathcal{S}}$ of the incoming transitions of the sub-property reachable without making any assumption of the behavior of the system in the box $b$;
\item the set $R \subseteq \Delta^{out\mathcal{S}} $ of the outgoing transition from which an accepting state that can be entered infinitely often is reachable without crossing state $b$;
\item $K \subseteq \Delta^{in\mathcal{S}} \times  \Delta^{out\mathcal{S}}$ specifies if from an outgoing transition of the sub-property it is possible to reach one of its incoming transitions.
\item $\Gamma_{\mathcal{M}}, \Gamma_{\mathcal{A}_{\neg \phi}}: K \rightarrow \{ T, F\}$: the function $\Gamma_{\mathcal{M}}$ ($\Gamma_{\mathcal{A}_{\neg \phi}}$) specifies if to reach an incoming transition from an outgoing one, an accepting state of the model (claim) is traversed.
\end{itemize}
\end{mydef}

$\mathcal{P}$ is the BA that encodes the condition the developer must satisfy in the design of the replacement $\mathcal{R}$ to be substituted to the black box state $b$.
If the box $b$ is not initial/accepting, the automaton $\mathcal{P}$ can not contain initial/accepting states. 
The incoming $\Delta^{in\mathcal{S}}$ and outgoing $\Delta^{out\mathcal{S}}$ transitions specify how the behaviors encoded in the automaton $\mathcal{P}$ are related to the original model $\mathcal{M}$. 
The sets $G$ and $R$ contain a subset of the incoming and outgoing transitions, respectively.
An incoming transition $\delta$ is in $G$ if it is not necessary to traverse the replacement of the box $b$ for $\delta$ to be reached.
Note that depending on whether sub-property $S_p$ or $S$ is considered, the replacement of other black box states can be crossed for $\delta$ to be reached.
An outgoing transition $\delta$ is in $R$ if it is not necessary to traverse the replacement of the box $b$ for reaching an accepting state that can be entered infinitely many often.
Note that depending on whether sub-property $S_p$ or $S$ is considered, the replacement of other black box states can be crossed or not for the accepting state to be reached.
The reachability relations $K$  describe whether allowing the system to reach an outgoing transition of $S$ permits the system also to reach one of its incoming transition (it can be that the transition was not directly reachable before, i.e., it was necessary to cross $b$).
 A tuple $(\delta^o, \delta^i)$ is in $K$ if and only if from the outgoing transition $\delta^o$ it is possible to reach, in the intersection automaton, the incoming transition $\delta^i$  without traversing a state of the intersection automaton generated from $b$.
 Depending on whether  sub-property $S_p$ or $S$ is considered, the replacement of other boxes can be crossed or not in the path that connects the outgoing transition to the incoming one.
 $\Gamma_{\mathcal{M}}$ and $\Gamma_{\mathcal{A}_{\neg \phi}}$  associate to each reachability entry $(\delta_o, \delta_i)$ in $K$   a $T$ or a $F$ value. 
$G$, $R$,  $K$, $\Gamma_{\mathcal{M}}$ and $\Gamma_{\mathcal{A}_{\neg \phi}}$are described more in detail in the following of this section.

 \begin{figure*}[t]
    \centering
    \subfloat[The sub-property $\mathcal{S}_p$ that correspond to the box $send_1$.]
     { \includegraphics[scale=0.5]{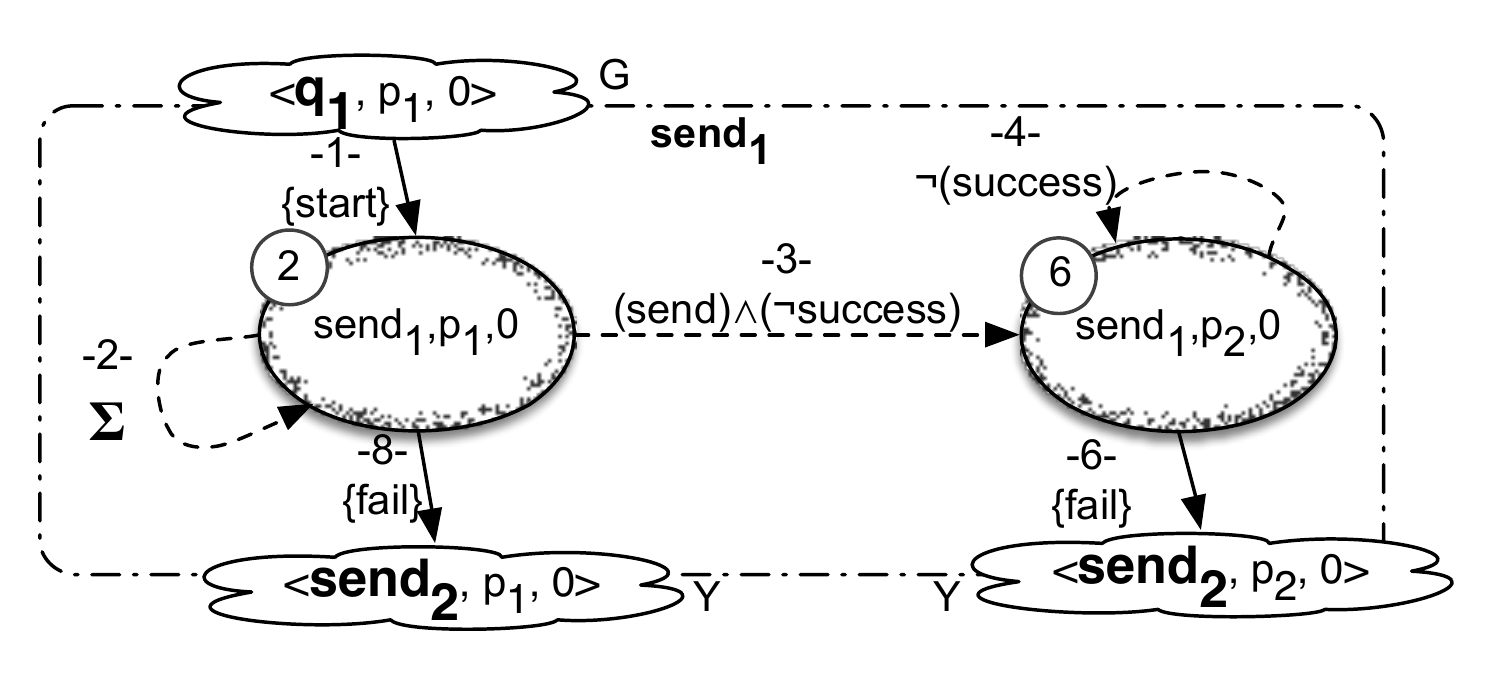}
   \label{fig:subpropertysend1}}\\
    \subfloat[The sub-property $\mathcal{S}_p$ that correspond to the box $send_2$.]{
          \includegraphics[scale=0.5]{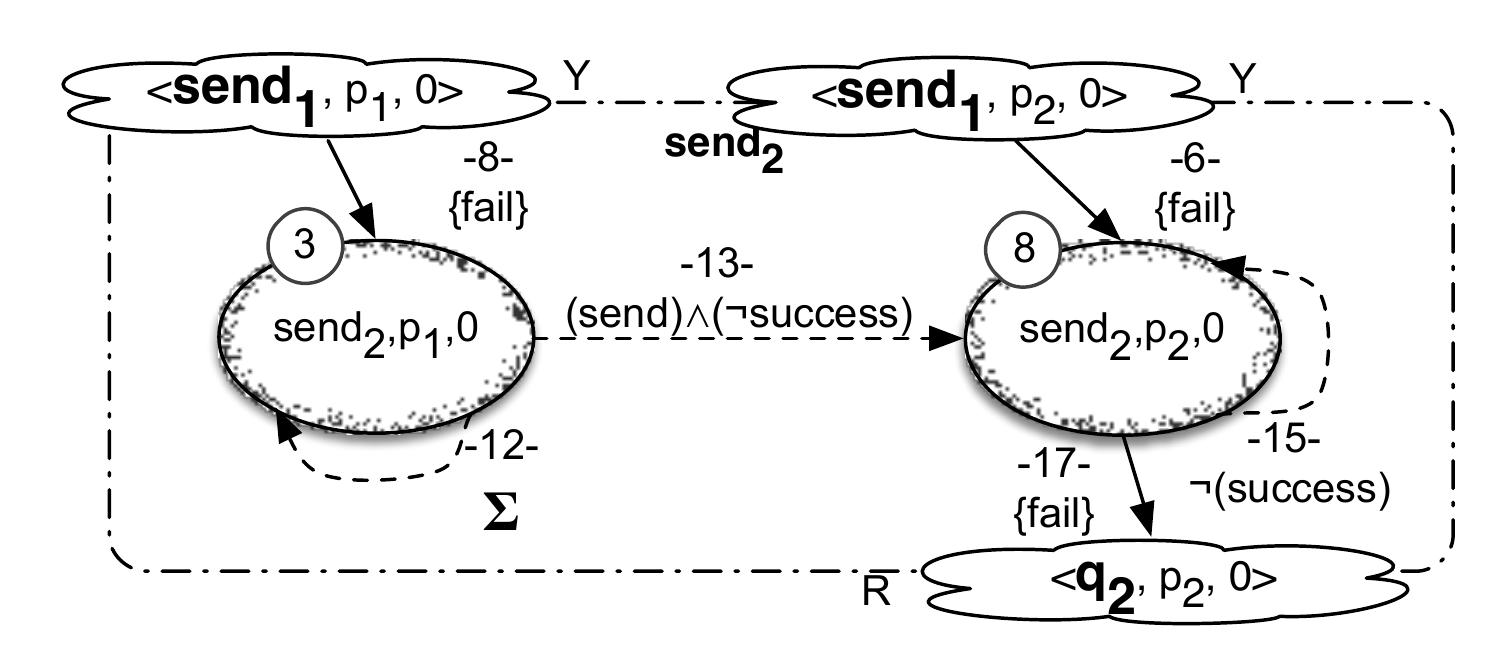}
              \label{fig:subpropertysend2}}
    \caption{The sub-properties associated with the boxes $send_1$ and $send_2$.}
\end{figure*}

The sub-properties $\mathcal{S}_p$ and $\mathcal{S}_p$ associated with the boxes $send_1$ and $send_2$ of model $\mathcal{M}$ described in Figure~\ref{fig:modelexample} and the claim $\phi$, associated with the automaton $\mathcal{A}_\phi$ described in Figure~\ref{fig:claimexample}, are presented in Figures~\ref{fig:subpropertysend1} and~\ref{fig:subpropertysend2}. 
The sub-properties are surrounded by a dotted border, which contains the name of the box the sub-property refers to. 
 The automaton $\mathcal{P}$, which corresponds to the sub-property $\mathcal{S}_p$ associated with the box $send_1$ contains the two states $\raisebox{.5pt}{\textcircled{\raisebox{-.9pt} {2}}}$ and $\raisebox{.5pt}{\textcircled{\raisebox{-.9pt} {6}}}$ and the internal transitions $2, 3$ and $4$. 
 The sub-property $\mathcal{S}_p$  associated with the box $send_1$ has the incoming 
transition $1$ and the outgoing transitions $6$ and $8$.
The automaton $\mathcal{P}$, which corresponds to the sub-property $\mathcal{S}_p$ associated with the box $send_2$, contains the two states $\raisebox{.5pt}{\textcircled{\raisebox{-.9pt} {3}}}$ and $\raisebox{.5pt}{\textcircled{\raisebox{-.9pt} {8}}}$ and the internal transitions $12, 13$ and $15$. 
  The sub-property $\mathcal{S}_p$ associated with the property $send_2$ has incoming transitions $6$, $8$ and the outgoing transition $17$. 
  The incoming transition $1$ of the sub-property $\mathcal{S}_p$ of associated with box $send_1$ is in the set $G$ since this transition is reachable from an initial state of the system through a path which does not involve the replacement associated with the box $b$ (graphically, the source/destination of an incoming/outgoing transition is depicted with a cloud which is decorated with a $G$/$R$ label to indicate that the transition is in the set $G$/$R$). 
The outgoing transitions of the sub-property $\mathcal{S}_p$ associated with the box $send_1$ are in the set $R$ since from these transitions it is possible to reach an accepting state through a run which involves the refinement of the box $send_2$. 
Note that since we are considering sub-property $\mathcal{S}_p$ other boxes can be crossed by the run, i.e., of the box $send_2$.  
Differently, when sub-property $\mathcal{S}$ only state of the intersection obtained from regular states of the model can be crossed.
The outgoing transition $17$ of the sub-property $\mathcal{S}_p$ of $send_2$  is in the set $R$ since by executing this transition it is possible to reach an accepting state that can be entered infinitely often. 
The incoming transitions $6$ and $8$ of the sub-property $\mathcal{S}_p$ of the box $send_2$ are in the set $G$ since these transitions are reachable in the intersection automaton without crossing box $send_2$.
Note that since we are considering sub-property $\mathcal{S}_p$ other boxes can be crossed by the run, i.e., of the box $send_1$. 
Differently, when sub-property $\mathcal{S}$ only state of the intersection obtained from regular states of the model can be crossed.
Finally, the reachability relation $K$ of $\mathcal{S}_p$ of $send_1$ ($send_2$) is empty since it is not possible to reach from its outgoing transitions its incoming transitions.

As a replacement, the sub-property $\mathcal{S}_p$  is associated with the four types of accepting runs identified in Section~\ref{Sec:ModelingReplacements}: finite internal, infinite internal, finite external and infinite external accepting. 

Given the cleaned intersection automaton $\mathcal{I}_{cl}$ (which contains the behaviors that possibly violate the claim), obtained from the intersection automaton $\mathcal{I}=\mathcal{M} \cap \mathcal{A}_{\neg \phi}$ between the model $\mathcal{M}$ and the automaton $\mathcal{A}_{\neg \phi}$, the sub-property identification problem concerns the identification of the sub-properties the developer must satisfy in the refinement activity. 
The sub-property identification procedure works through a set of subsequent steps: automata extraction, computation of the sets $G$ and $R$ and of the reachability relation $K$.

First, the \emph{automata extraction} procedure computes the automata associated with the sub-properties and their incoming and outgoing transitions.

\begin{mydef}[Automata extraction]
\label{def:subidentification}  
Given a property $\phi$ which is possibly satisfied by the model $\mathcal{M}$,  the cleaned intersection automaton $\mathcal{I}_{cl}$ obtained from the intersection automaton $\mathcal{I}=\mathcal{M} \cap \mathcal{A}_{\neg \phi}$, and the constraint $C=\langle S, S_p \rangle$ associated with the box $b$, the automaton $\mathcal{P}$ associated with the sub-properties $S$ and $S_p$ is a BA $\langle \Sigma_{\mathcal{P}}, Q_{\mathcal{P}}, \Delta_{\mathcal{P}},$ $ Q_{\mathcal{P}}^0, F_{\mathcal{P}} \rangle$, such that:
\begin{itemize}
\item $\Sigma_{\mathcal{P}}=\Sigma_{\mathcal{I}_{cl}}$;
\item $Q_{\mathcal{P}}= \{ \langle q_{\mathcal{M}}, p_{\mathcal{A}_{\neg \phi}}, x \rangle \in Q_{\mathcal{I}_{cl}} $ such that $ q_{\mathcal{M}} =b  \}$;
\item $\Delta_{\mathcal{P}}=  \{ (q, a, q^\prime) $ such that $ q, q^\prime \in Q_{\mathcal{P}} \text{ and } (q, a, q^\prime) \in \Delta^p_{\mathcal{I}_{cl}}  \}$;
\item $Q^0_{\mathcal{P}}= Q^0_{\mathcal{I}_{cl}} \cap Q_{\mathcal{P}}$;
\item $F_{\mathcal{P}}=  F_{\mathcal{I}_{cl}} \cap Q_{\mathcal{P}}$.
\end{itemize}
The incoming $\Delta^{in\mathcal{S}}$ and outgoing $\Delta^{out\mathcal{S}}$ transitions associated with the sub-properties $\mathcal{S}$ and  $\mathcal{S}_p$ are defined as:
\begin{itemize}
\item $\Delta^{in\mathcal{S}}= \{ (q_{\mathcal{M}}, a, \langle b, p_{\mathcal{A}_{\neg \phi}}^\prime, y \rangle) $ such that $  
(
\langle q_{\mathcal{M}}, p_{\mathcal{A}_{\neg \phi}}, x \rangle, a, 
\langle b, p_{\mathcal{A}_{\neg \phi}}^\prime, y \rangle)$ $\in$ $(\Delta_{\mathcal{I}}^c$ $\cap \Delta_{\mathcal{I}_{cl}})
 \}$;
\item $\Delta^{out\mathcal{S}} = \{ (\langle b, p_{ \mathcal{A}_{\neg\phi}}, x \rangle, a, q_{\mathcal{M}}^\prime)  $ such that $ 
(\langle b_i, p_{\mathcal{A}_{\neg \phi}}, x \rangle, 
a, 
\langle q_{\mathcal{M}}^\prime, p_{\mathcal{A}_{\neg \phi}}^\prime, y \rangle) $  $\in$ $(\Delta_{\mathcal{I}}^c$ $\cap \Delta_{\mathcal{I}_{cl}}) \}$.
\end{itemize} 
\end{mydef}

Algorithm~\ref{algorithm:subpropertyIdentification} is used to compute the the incoming and outgoing transitions of the  sub-properties associated to the cleaned intersection automaton $\mathcal{I}_{cl}$. 
Note that since multiple boxes may be present in the model, we indicate as $\mathcal{P}_b$ the automaton associated with the box $b$ and with $\Delta^{in\mathcal{S}_b}$ and $\Delta^{out\mathcal{S}_b}$ its incoming and outgoing transitions.
First, the algorithm considers the states $s_{\mathcal{I}_{cl}}$ of the intersection automaton (Line~\ref{algorithm:subpropertyIdentificationIntersectionState}).
 If the corresponding state of the model $q$ is a box (Line~\ref{algorithm:subpropertyTransparent}), then $s_{\mathcal{I}_{cl}}$  is added to the set $Q_{\mathcal{P}_{q}}$ of the states  of the sub-property $\mathcal{S}_{q}$ (Line~\ref{algorithm:subpropertyaddingTransparent}). 
 If $s_{\mathcal{I}_{cl}}$  is  initial (Line~\ref{algorithm:subpropertyaddingInitial}) or accepting (Line~\ref{algorithm:subpropertyaddingAccepting}), it is also added to the initial (Line~\ref{algorithm:subpropertyaddingInitialState}) and accepting (Line~\ref{algorithm:subpropertyaddingAcceptingState})  states of $\mathcal{P}_{b}$.
 
\begin{algorithm}[t]
\caption{Identifies the automata associated with the sub-properties and their incoming  and outgoing transitions.}
\label{algorithm:subpropertyIdentification}
\begin{algorithmic}[1]
\Procedure{SubPropertyIdentification}{$\mathcal{I}_{cl}$, $\mathcal{M}$}
\For{$s_{\mathcal{I}_{cl}} \in Q_{\mathcal{I}_{cl}}$} \label{algorithm:subpropertyIdentificationIntersectionState}
\If{$s_{\mathcal{I}_{cl}}= \langle q, p, x\rangle \wedge q \in B_\mathcal{M} $}  \label{algorithm:subpropertyTransparent}
\State $Q_{\mathcal{P}_{q}} \leftarrow Q_{\mathcal{P}_{q}} \cup \left \{ s_{\mathcal{I}_{cl}} \right \}$; \label{algorithm:subpropertyaddingTransparent}
\If{$s_{\mathcal{I}_{cl}} \in  Q^0_{\mathcal{I}_{cl}}$} \label{algorithm:subpropertyaddingInitial}
\State $Q^0_{\mathcal{P}_{q}} \leftarrow Q^0_{\mathcal{P}_{q}}  \cup \left \{ s_{\mathcal{I}_{cl}} \right \}$; \label{algorithm:subpropertyaddingInitialState}
\EndIf
\If{$s_{\mathcal{I}_{cl}} \in  F_{\mathcal{I}_{cl}}$} \label{algorithm:subpropertyaddingAccepting}
\State $F_{\mathcal{P}_{q}} \leftarrow F_{\mathcal{P}_{q}}  \cup \left \{ s_{\mathcal{I}_{cl}} \right \}$; \label{algorithm:subpropertyaddingAcceptingState}
\EndIf
\EndIf
\EndFor \label{algorithm:subpropertyIdentificationIntersectionStateEnd}
\For{$(\langle b, p , x\rangle, a, \langle b, p^\prime , y\rangle) \in \Delta^p_{\mathcal{I}_{cl}}$ and $b \in B_\mathcal{M}$} \label{algorithm:subpropertyInternalTransitions}
\State $\Delta_{\mathcal{P}_{b}} = \Delta_{\mathcal{P}_{b}} \cup ( \langle b, p , x\rangle, a, \langle b, p^\prime, y\rangle ) $; \label{algorithm:subpropertyAddingInternalTransitions}
\EndFor
\For{$(\langle q, p , x\rangle, a, \langle q^\prime, p^\prime , y\rangle) \in \Delta^c_{\mathcal{I}_{cl}}$} \label{algorithm:subpropertyAddingExternalTransitions}
\If{$q \in B_\mathcal{M}$} \label{algorithm:subpropertyIdentificationSourceTransparent}
\State $\Delta^{in\mathcal{S}_{q}} \leftarrow \Delta^{in\mathcal{S}_{q}} \cup (\langle q, p , x\rangle, a, \langle q^\prime, p^\prime, y\rangle) $; \label{algorithm:subpropertyIdentificationOutgoing}
\EndIf
\If{$q^\prime \in B_\mathcal{M}$} \label{algorithm:subpropertyIdentificationDestinationTransparent}
\State $\Delta^{out\mathcal{S}_{q}} \leftarrow \Delta^{out\mathcal{S}_{q}} \cup  (\langle q, p , x\rangle, a, \langle q^\prime, p^\prime, y\rangle)$; \label{algorithm:subpropertyIdentificationIncoming}
\EndIf
\EndFor \label{algorithm:subpropertyInternalTransitionsEnd}
\EndProcedure
\end{algorithmic}
\end{algorithm}

Then, each transition $(\langle b, p, x\rangle, a, \langle b, p^\prime , y\rangle) $ of the intersection automaton which is possibly executed inside the box $b$ is considered (Line~\ref{algorithm:subpropertyInternalTransitions}) and added to the corresponding automaton (Line~\ref{algorithm:subpropertyAddingInternalTransitions}). Finally, each transition $(\langle q, p , x\rangle, a, \langle q^\prime, p^\prime, y\rangle)$ which is obtained by combining transitions of the model and of the claim is analyzed (Line~\ref{algorithm:subpropertyAddingExternalTransitions}). If the state of the model associated with the source state (Line~\ref{algorithm:subpropertyIdentificationSourceTransparent}) or the destination state (Line~\ref{algorithm:subpropertyIdentificationDestinationTransparent})  of the transition is a box, the transition is added to the set of the outgoing (Line~\ref{algorithm:subpropertyIdentificationOutgoing}) or incoming (Line~\ref{algorithm:subpropertyIdentificationIncoming}) transitions of the sub-property.

\begin{theorem}[Automaton extraction complexity]
\label{th:automatonExtractionComplexity}  The automaton extraction process can be performed in time $\mathcal{O}(|Q_{\mathcal{I}_{cl}}|+|\Delta_{\mathcal{I}_{cl}}|)$.
\end{theorem}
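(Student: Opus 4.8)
The plan is to analyze Algorithm~\ref{algorithm:subpropertyIdentification} loop by loop, bound each loop linearly in the size of the cleaned intersection automaton, and sum the contributions. The algorithm is a single pass over the states and transitions of $\mathcal{I}_{cl}$, so I expect the proof to reduce to a counting argument once the per-iteration cost is pinned down.

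First I would examine the initial \texttt{for} loop (Lines~\ref{algorithm:subpropertyIdentificationIntersectionState}--\ref{algorithm:subpropertyIdentificationIntersectionStateEnd}), which ranges over the states $Q_{\mathcal{I}_{cl}}$. Each iteration performs only a constant number of operations: testing whether the model-component $q$ of the state is a box, and conditionally inserting the state into the state set, the initial-state set, or the accepting-state set of the sub-property automaton associated with $q$. Under the standard assumption that membership tests and set insertions run in constant (amortized) time, this loop costs $\mathcal{O}(|Q_{\mathcal{I}_{cl}}|)$.

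Next I would handle the two transition loops. The loop at Line~\ref{algorithm:subpropertyInternalTransitions} ranges over $\Delta^p_{\mathcal{I}_{cl}}$ and inserts each internal transition into the transition relation $\Delta_{\mathcal{P}_b}$ of the appropriate sub-property automaton in constant time. The loop at Line~\ref{algorithm:subpropertyAddingExternalTransitions} ranges over $\Delta^c_{\mathcal{I}_{cl}}$ and, after two constant-time box tests on the source and destination, inserts each transition into an incoming- or outgoing-transition set. By Definition~\ref{def:intersection} the sets $\Delta^p_{\mathcal{I}_{cl}}$ and $\Delta^c_{\mathcal{I}_{cl}}$ are subsets of $\Delta_{\mathcal{I}_{cl}}$ (they partition it), so each of these loops costs $\mathcal{O}(|\Delta_{\mathcal{I}_{cl}}|)$. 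Summing the three contributions yields $\mathcal{O}(|Q_{\mathcal{I}_{cl}}| + |\Delta_{\mathcal{I}_{cl}}|)$, the claimed bound.

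The only genuine subtlety—which I would state explicitly rather than gloss over—is the constant-time cost of each set insertion and each membership query (for example $q \in B_{\mathcal{M}}$ and $s_{\mathcal{I}_{cl}} \in F_{\mathcal{I}_{cl}}$). This is the single place where a naive implementation (e.g.\ scanning a list to avoid duplicate insertions) would degrade the bound, so I would make the data-structure assumption (hash-based sets) precise. The key structural observation that makes the argument clean is that each state and each transition of $\mathcal{I}_{cl}$ is inspected exactly once across the algorithm, so no element is processed more than a constant number of times; the total work is therefore proportional to $|Q_{\mathcal{I}_{cl}}| + |\Delta_{\mathcal{I}_{cl}}|$.
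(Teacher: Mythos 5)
Your proof is correct and follows essentially the same route as the paper's: both bound the state loop by $\mathcal{O}(|Q_{\mathcal{I}_{cl}}|)$ and the transition loops by $\mathcal{O}(|\Delta_{\mathcal{I}_{cl}}|)$ via constant work per visited element, then sum. Your added remarks---that $\Delta^p_{\mathcal{I}_{cl}}$ and $\Delta^c_{\mathcal{I}_{cl}}$ partition $\Delta_{\mathcal{I}_{cl}}$, and that the constant-time set operations require a hash-based implementation---only make explicit what the paper's proof leaves implicit.
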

\begin{proof} Lines~\ref{algorithm:subpropertyIdentificationIntersectionState}-\ref{algorithm:subpropertyIdentificationIntersectionStateEnd} of Algorithm~\ref{algorithm:subpropertyIdentification} visit each state of the intersection automaton at most once, while Lines~\ref{algorithm:subpropertyInternalTransitions}-\ref{algorithm:subpropertyInternalTransitionsEnd} visit each transition of the intersection automaton exactly once. In both the cases, at each step, a constant number of operations is executed inducing a $\mathcal{O}(|Q_{\mathcal{I}_{cl}}|+|\Delta_{\mathcal{I}_{cl}}|)$ time complexity .
\end{proof}

It is important to notice that every word $v$ that is in the finite and infinite abstraction of the intersection automaton is a word associated with a sub-property and vice versa. 
\begin{theorem}[The language of the  sub-property corresponds to the abstraction of the intersection automaton]
\label{th:subpropertyCorrecteness}
Given the model $\mathcal{M}$ which possibly satisfies the claim $\phi$, the intersection automaton $\mathcal{I}=\mathcal{M} \cap \mathcal{A}_{\neg \phi}$ and the set of sub-properties $\zeta$ obtained as specified in Definition~\ref{def:subidentification}, for every box $b$:
\begin{enumerate}
\item  \label{th:subpropertyfiniteCorrecteness} $v \in \alpha^\ast_b(\mathcal{I}) \Leftrightarrow v \in (\mathcal{L}^{e\ast}(\mathcal{S}_b) \cup \mathcal{L}^{i\ast}(\mathcal{S}_b))$;
\item  \label{th:subpropertyinfiniteCorrecteness} $v \in \alpha^\omega_b(\mathcal{I}) \Leftrightarrow v \in (\mathcal{L}^{e\omega}(\mathcal{S}_b) \cup \mathcal{L}^{i\omega}(\mathcal{S}_b))$.
\end{enumerate}
\end{theorem}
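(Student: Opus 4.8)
The plan is to treat the sub-property $\mathcal{S}_b$ exactly as a replacement whose internal automaton is $\mathcal{P}$ and whose boundary transitions are $\Delta^{in\mathcal{S}}$ and $\Delta^{out\mathcal{S}}$, and to exhibit a correspondence between the maximal ``box-$b$ segments'' of the accepting runs of $\mathcal{I}$ and the internal/external runs of $\mathcal{S}_b$. Before doing so I would record a reduction from $\mathcal{I}$ to $\mathcal{I}_{cl}$: both $\alpha^\ast_b(\mathcal{I})$ (Definition~\ref{def:finiteIntersectionAbstraction}) and $\alpha^\omega_b(\mathcal{I})$ (Definition~\ref{def:infiniteIntersectionAbstraction}) range only over \emph{accepting} runs $\rho^\omega$, and by the correctness of the cleaning procedure (Theorem~\ref{th:cleaningCorrect}) every state lying on such a run is retained in $\mathcal{I}_{cl}$. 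Hence the accepting runs of $\mathcal{I}$ and $\mathcal{I}_{cl}$ coincide and both abstractions may be computed on $\mathcal{I}_{cl}$, which is precisely the automaton from which Definition~\ref{def:subidentification} extracts $\mathcal{P}$, $\Delta^{in\mathcal{S}}$ and $\Delta^{out\mathcal{S}}$.

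With this reduction, the key structural fact is that $Q_{\mathcal{P}}$, the internal transitions $\Delta_{\mathcal{P}}$, the initial set $Q^0_{\mathcal{P}}$ and the accepting set $F_{\mathcal{P}}$ are, by construction, exactly the box-$b$ states of $\mathcal{I}_{cl}$, the $\Delta^p_{\mathcal{I}_{cl}}$-transitions between them, and their intersections with $Q^0_{\mathcal{I}_{cl}}$ and $F_{\mathcal{I}_{cl}}$, while $\Delta^{in\mathcal{S}}$ and $\Delta^{out\mathcal{S}}$ are the $\Delta^c_{\mathcal{I}_{cl}}$-transitions entering and leaving the box. Consequently each condition of Definition~\ref{def:finiteRunAbstraction} and of Definition~\ref{def:ifiniteRunAbstraction} translates verbatim into a legal move of the corresponding replacement-style automaton. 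For the forward direction ($\Rightarrow$) of statement~\ref{th:subpropertyfiniteCorrecteness}, I would take $v=\nu_{init}.\nu.\nu_{out} \in \alpha^\ast_b(\mathcal{I})$, realised on an accepting run $\rho^\omega$ over indices $i<j$, and read off the segment $\rho^\omega(i),\ldots,\rho^\omega(j)$: it is a path of box-$b$ states moving internally through $\Delta_{\mathcal{P}}$ and exiting through an outgoing transition. The source condition then splits into two cases: if $\nu_{init}=\epsilon$ with $\rho^\omega(0)\in Q^0_{\mathcal{I}}$, the segment starts in $Q^0_{\mathcal{P}}$ and is a \emph{finite internal} run, giving $v\in\mathcal{L}^{i\ast}(\mathcal{S}_b)$; if $\nu_{init}=v_{i-1}$ is supplied by an incoming transition, it is a \emph{finite external} run, giving $v\in\mathcal{L}^{e\ast}(\mathcal{S}_b)$. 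Statement~\ref{th:subpropertyinfiniteCorrecteness} is analogous, reading off an infinite box-$b$ tail from index $i$; here acceptance of $\rho^\omega$ forces an accepting state to recur \emph{inside} the box, i.e.\ in $F_{\mathcal{P}}$, so the tail is an infinite internal or external run of $\mathcal{S}_b$.

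The main obstacle is the converse direction ($\Leftarrow$), where a finite or infinite, internal or external run of $\mathcal{S}_b$ must be \emph{embedded} into a genuine accepting run of $\mathcal{I}$ whose abstraction is exactly $v$. Here I would again invoke Theorem~\ref{th:cleaningCorrect}: every state of $\mathcal{I}_{cl}$ lies on some possibly accepting run, hence is reachable from an initial state of $\mathcal{I}$ and co-reachable to an accepting state entered infinitely often. For an external run I prepend a path from $Q^0_{\mathcal{I}}$ to the source of the incoming transition (for an internal run the starting state already lies in $Q^0_{\mathcal{I}}$); I then follow the sub-property run inside the box; and in the finite cases I append, after the outgoing transition, a path to an accepting cycle, while in the infinite cases the box-$b$ tail already visits $F_{\mathcal{P}}\subseteq F_{\mathcal{I}}$ infinitely often and no extension is required.

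Finally I would verify that the constructed run is legal in $\mathcal{I}$ and that its finite (resp.\ infinite) abstraction returns precisely $v$. The delicate point is matching the boundary conditions of Definitions~\ref{def:finiteRunAbstraction} and~\ref{def:ifiniteRunAbstraction} exactly: the segment realising $v$ must begin either at an initial box state (yielding $\nu_{init}=\epsilon$) or immediately after the chosen incoming transition (yielding $\nu_{init}=v_{i-1}$), and, in the finite case, terminate exactly at the chosen outgoing transition (yielding the suffix $\nu_{out}$). Since the prepended and appended paths are chosen through states of $\mathcal{I}_{cl}$ and any spurious earlier visit to box $b$ merely contributes a \emph{different} element to the abstraction set without removing $v$, these boundary conditions can be met, which establishes both equivalences.
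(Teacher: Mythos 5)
Your proposal is correct and its skeleton is the same as the paper's: both proofs rest on the observation that $\mathcal{P}$, $\Delta^{in\mathcal{S}}$, $\Delta^{out\mathcal{S}}$, $Q^0_{\mathcal{P}}$ and $F_{\mathcal{P}}$ are literally the box-$b$ slice of the (cleaned) intersection automaton, so that maximal box-$b$ segments of accepting runs of $\mathcal{I}$ and internal/external runs of $\mathcal{S}_b$ translate into one another condition-by-condition, with the internal/external split governed by whether $\nu_{init}=\epsilon$ (initial box state) or $\nu_{init}$ is read by an incoming transition. The genuine difference is in the direction from $\mathcal{S}_b$ to $\mathcal{I}$: the paper's proof only constructs the run segment inside and around the box and then asserts that the conditions of Definitions~\ref{def:finiteRunAbstraction} and~\ref{def:ifiniteRunAbstraction} hold, leaving implicit the fact that $\alpha^\ast_b(\mathcal{I})$ and $\alpha^\omega_b(\mathcal{I})$ quantify only over \emph{accepting} runs of $\mathcal{I}$ on full infinite words, so the segment must actually sit inside such a run. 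You make exactly this step explicit, invoking Theorem~\ref{th:cleaningCorrect} (never used in the paper's proof) to get reachability of the incoming transition's source from $Q^0_{\mathcal{I}}$ and co-reachability of the outgoing transition's target to an accepting cycle, and you also note that spurious box-$b$ visits on the prepended/appended paths only enlarge the abstraction set without destroying membership of $v$; the same theorem also justifies your preliminary identification of the accepting runs of $\mathcal{I}$ with those of $\mathcal{I}_{cl}$, reconciling the theorem's use of $\mathcal{I}$ with Definition~\ref{def:subidentification}'s use of $\mathcal{I}_{cl}$. What this buys is a proof that actually discharges the definition as stated, at the cost of a longer argument; the paper's version is shorter but hand-waves precisely the embedding and the $\mathcal{I}$ versus $\mathcal{I}_{cl}$ mismatch that you treat carefully.
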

\noindent Theorem~\ref{th:subpropertyCorrecteness} specifies that the words of the finite abstractions of the intersection automaton associated with the box $b$ correspond to the union of the external and internal finite words associated with the sub-property $\mathcal{S}_b$ (condition~\ref{th:subpropertyfiniteCorrecteness}). Furthermore,  the words of the infinite abstractions of the intersection automaton associated with the box $b$ correspond to the union of the external and internal infinite words associated with the sub-property $\mathcal{S}_b$ (condition~\ref{th:subpropertyinfiniteCorrecteness}). 
\begin{proof} Let us first consider the statement~\ref{th:subpropertyfiniteCorrecteness}.

($\Leftarrow$) Let us first consider the case in which $v$ is in the set of finite externally accepted words associated with the sub-property $\mathcal{S}_b$, i.e., $v \in \mathcal{L}^{e\ast}(\mathcal{S}_b)$.  We want to construct a run $\rho_{\mathcal{I}}(i)$ whose abstraction corresponds to the word $v$ in the intersection automaton. If such run exists we can conclude that $v \in \alpha^\ast_b(\mathcal{I})$. 
Thus, we write the word $v$ as $\nu_{init} \nu^\ast \nu_{out}$. By definition, $\nu_{init}$ must be associated to an incoming transition of the sub-property. 
Thus, it is possible to associate to $\rho_{\mathcal{I}}(0)$ and $\rho_{\mathcal{I}}(1)$ the  states of the  intersection associated to the source and the destination of the incoming transition of the sub-property. 
Then, each state $\rho_{\mathcal{S}_b}(i)$ of the run that makes the word $\nu^\ast$ externally accepted is associated with the corresponding state $\rho_{\mathcal{I}}(i)$ of the intersection automaton (which must exist from construction). This implies that the run $\rho_{\mathcal{I}}(1)$ satisfies the conditions~\ref{def:portionOfTheWord} and~\ref{def:finiteRunAbstractionStateTransparent} of Definition~\ref{def:finiteRunAbstraction}.
Since $(\rho_{\mathcal{S}_b}(i), v_i, \rho_{\mathcal{S}_b}(i+1)) \in \Delta_{\mathcal{S}_b}$ and $(\rho_{\mathcal{I}}(i), v_i, \rho_{\mathcal{I}}(i+1)) \in \Delta^p_{\mathcal{I}}$ by construction, it implies that the run $\rho_{\mathcal{S}_b}$ is also executable on the automaton $\mathcal{I}$, i.e., condition~\ref{def:finiteRunAbstractionTransparentTransition} is satisfied. 
Furthermore, since by construction and Definition~\ref{def:finiteExternalRun}, $\rho_{\mathcal{S}_b}(0)\in Q^{0in_b}$ (i.e., the source state of the run must be a source of an incoming transition) and $\rho_{\mathcal{S}_b}(|v|)\in F^{out_b}$ (i.e., the last state of the run must be the destination of an outgoing transition), conditions~\ref{def:finiteRunAbstractionEnding} and \ref{def:finiteRunAbstractionSource} are satisfied. Since we have found a finite abstracted run $\rho_{\mathcal{I}}(i)$ associated to $v$ and the box $b$, we conclude that $v \in \alpha^\ast_b(\mathcal{I}) $.\\
The same approach applies to the case in which $v$ is in the set of finite internally accepted words associated with the sub-property $\mathcal{S}_b$, i.e., $v \in \mathcal{L}^{i\ast}(\mathcal{S}_b)$. The only difference concerns the initial state $\rho_{\mathcal{S}_b}(0)$, which by Definition~\ref{def:finiteExternalRun} must be an initial state of the sub-property, i.e., $\rho_{\mathcal{S}_b}(0)$ is in $ Q^0_{\mathcal{P}_{b}}$ that by construction implies that $\rho_{\mathcal{I}}(0) \in Q^0_{\mathcal{I}}$. This implies that $\rho_{\mathcal{S}_b}(0)$ can be considered as the initial state of the run corresponding to the word, making $i=0$ in Definition~\ref{def:finiteRunAbstraction} condition~\ref{def:finiteRunAbstractionSource}.

($\Rightarrow$) The proof is by contradiction. 
Assume that $v \in (\mathcal{L}^{e\ast}(\mathcal{S}_b) \cup \mathcal{L}^{i\ast}(\mathcal{S}_b))$ is false and $v^\ast \in \alpha^\ast_b(\mathcal{I})$ is true. 
Since $v \in \alpha^\ast_b(\mathcal{I})$, there exists an infinite possibly accepting run in $\mathcal{I}$ and $v$ is an abstraction of the corresponding word which is associated to the box $b$. 
Since $\mathcal{S}_b$ is obtained from the intersection automaton by aggregating the portions of the state space that refer to the box  $b$, it follows that $v$ must be associated with a run that traverses $\mathcal{S}_b$ (either starting from one of the initial states of  $\mathcal{S}_b$ and reaching one of its outgoing transitions, or starting from an incoming transition of  $\mathcal{S}_b$ and reaching one of its outgoing transitions). 
This implies that $v^\ast \in (\mathcal{L}^{e\ast}(\mathcal{S}_b) \cup \mathcal{L}^{i\ast}(\mathcal{S}_b))$ which makes the hypothesis contradicted.

The proof of the statement~\ref{th:subpropertyinfiniteCorrecteness}  corresponds to the proof~\ref{th:subpropertyfiniteCorrecteness} with the exception that it considers infinite words.
\end{proof}

The second step of the constraint computation identifies for the sub-properties of each box $b$ the subsets $G$ and $R$ of the incoming and outgoing transitions of its sub-properties.
The sets $G$ and $R$ contains different incoming and outgoing transitions depending on whether property $S$ or $S_p$ is considered.
In the case of $S$, the set of incoming  transitions contained in $G$ contains the incoming transitions reachable in the intersection automaton without crossing  states of the intersection obtained from a black box state of the model.
In the case of $S_p$, the set of the incoming transitions contained in $G$ contains the incoming transitions reachable in the intersection automaton without crossing states of the intersection obtained from the ``only" the box $b$.
Similarly, in the case of $S$  the set of outgoing  transitions contained in $R$ contains the outgoing transitions from which it is possible to reach in the intersection automaton an accepting state that can be entered infinitely often without crossing a  state of the intersection obtained from a black box state of the model.
When $S_p$ is considered,   the set of outgoing  transitions contained in $R$ contains the outgoing transitions from which  it is possible to reach in the intersection automaton an accepting state that can be entered infinitely often without crossing a  state of the intersection obtained from the box $b$.

\begin{mydef}[The sets $G$ and $R$]
\label{def:coloring}  
Given a sub-property $\mathcal{S}$,  
the sets $\Delta^{in\mathcal{S}}$ and $\Delta^{out\mathcal{S}}$ of its incoming and outgoing transitions and the intersection automaton $\mathcal{I}$.
A transition $\delta \in \Delta^{in\mathcal{S}} \cup \Delta^{out\mathcal{S}}$ obtained from the transition $\delta^\prime=(s,a,s^\prime)$ of the intersection automaton:
\begin{itemize}
 \item $\delta \in G       \Leftrightarrow   $ it exists $\rho_{\mathcal{I}}^\omega$ and  $i>0$, such that $\rho^\omega(i)=s$ and  $(\rho^\omega(i), a,$ $\rho^\omega(i+1)) =\delta^\prime$ and for all $ 0 \leq j \leq i, \rho^\omega(j) \in PR_{\mathcal{I}}$;
 \item 
 $ \delta \in R \Leftrightarrow $  it exists $\rho_{\mathcal{I}}^\omega$ and  an $i> 0$, such that $\rho^\omega(i)=s^\prime$ and $(\rho^\omega(i-1), a, \rho^\omega(i)) =\delta^\prime $ and for all $ j \geq i$, $\rho^\omega(j) \in PR_{\mathcal{I}}$.
\end{itemize}
Given a sub-property $\mathcal{S}_p$,  
the sets $\Delta^{in\mathcal{S}_p}$ and $\Delta^{out\mathcal{S}_p}$ of its incoming and outgoing transitions and the intersection automaton $\mathcal{I}$.
A transition $\delta \in \Delta^{in\mathcal{S}_p} \cup \Delta^{out\mathcal{S}_p}$ obtained from the transition $\delta^\prime=(s,a,s^\prime)$ of the intersection automaton:
\begin{itemize}
 \item $\delta \in G       \Leftrightarrow   $ it exists $\rho_{\mathcal{I}}^\omega$ and  $i>0$, such that $\rho^\omega(i)=s$ and  $(\rho^\omega(i), a,$ $\rho^\omega(i+1)) =\delta^\prime$ and for all $ 0 \leq j \leq i, \rho^\omega(j) \not \in Q_{\mathcal{P}}$;
 \item 
 $ \delta \in R \Leftrightarrow $  it exists $\rho_{\mathcal{I}}^\omega$ and  an $i> 0$, such that $\rho^\omega(i)=s^\prime$ and $(\rho^\omega(i-1), a, \rho^\omega(i)) =\delta^\prime $ and for all $ j \geq i$, $\rho^\omega(j) \not \in Q_{\mathcal{P}}$.
\end{itemize}
\end{mydef}

Let us first consider the sub-property $\mathcal{S}$. 
The incoming transitions in the set $G$ are the ones which are reachable from the initial state of the intersection automaton $\mathcal{I}$ without passing through mixed states, i.e., the incoming transitions whose reachability does not depend on the replacements of other boxes. 
The outgoing transitions marked as $R$ are the transitions from which an accepting state that can be entered infinitely often of the intersection automaton $\mathcal{I}$ is reachable without passing through mixed states, i.e., the outgoing transitions from which it is possible to reach a state that makes the property violated independently on the replacements of the other boxes. 
Let us now consider  sub-property $\mathcal{S}_p$.
The incoming transitions in the set $G$ also include transitions reachable from the initial  state of the intersection automaton $\mathcal{I}$ crossing mixed states which are not obtained from the box $b$. 
Indeed, these transitions are reachable by assuming a particular behavior of other boxes present in the model of the system. 
Similarly, the outgoing transitions in the set $R$ also include transitions from which an accepting state that can be entered infinitely often of the intersection automaton $\mathcal{I}$ by crossing states obtained from other black box states of the model.
Given the sub-property $\mathcal{S}$, the sets $G$ and $R$  specify that the runs violate the property $\phi$ of interest. 
When the developer designs the replacement of a box $b$ associated to a sub-property $\mathcal{S}$ he/she \emph{must not} design a component that allows  $\mathcal{S}$ to reach a outgoing port marked as $R$ from an incoming port marked as $G$. 
Indeed, in this case, it is providing the system a way to reach from an initial state an accepting state of the intersection automaton which can be entered infinitely often. 
Similarly, given the sub-property $\mathcal{S}$, the sets $G$ and $R$  specify that the runs that possibly violate the property $\phi$ of interest. 
When the developer designs the replacement of a box $b$ associated to a sub-property $\mathcal{S}$ he/she \emph{should not} design a component that allows  $\mathcal{S}$ to reach a outgoing port marked as $R$ from an incoming port marked as $G$. 
Indeed, in this case, he/she is not directly providing the  system a way to reach from an initial state an accepting state of the intersection automaton which can be entered infinitely often.
The presence of this run depends also on the replacements associated with  other boxes.

 \begin{sloppypar}
Given a box $b$, the sets $G$ and $R$ for the sub-properties $S$ and $S_p$ can be computed using the procedure described in Algorithm~\ref{algorithm:coloring}. The algorithm works in two steps which identify the incoming (Lines~\ref{alg:coloringFor}-\ref{alg:coloringEndFor}) and outgoing (Lines~\ref{alg:redOutgoingSearch}-\ref{alg:yellowOutgoingSearch}) transitions to be inserted into the sets $G$ and $R$ for each of the sub-properties $S$ and $S_p$. 
The incoming transitions marked as $G$ (Lines~\ref{alg:greenIncomingSearch}-\ref{alg:yellowIncomingSearch}) are computed through the function \textsc{Forward$\Pi$Identifier} invoked over two different sets of states.  
When the incoming transitions of the sub-property $S$ to be marked as $G$ are searched (Line~\ref{alg:greenIncomingSearch}), only purely regular states $PR_{\mathcal{I}}$ of the automaton $\mathcal{I}_{cl}$ are traversed.
When the incoming transitions of the sub-property $S_p$ are considered (Line~\ref{alg:yellowIncomingSearch}), all the states of the intersection automaton $\mathcal{I}_{cl}$ which are not states $Q_{\mathcal{P}}$ of the  automaton $\mathcal{P}$ associated with the sub-property $\mathcal{S}_p$ can be explored.
The outgoing transitions in the sets $R$ are computed through the function  \textsc{Backward$\Pi$Identifier}.
When the outgoing transitions of sub-property $S$ are considered (Line~\ref{alg:redOutgoingSearch}), only the purely regular states $PR_{\mathcal{I}_{cl}}$ of the automaton $\mathcal{I}_{cl}$ can be traversed.
When the outgoing transitions of sub-property $S_p$ are analyzed (Line~\ref{alg:yellowOutgoingSearch}), all the states of the intersection automaton $\mathcal{I}_{cl}$ which are not states of the set $Q_{\mathcal{P}}$ of the  automaton $\mathcal{P}$ associated with the sub-property $\mathcal{S}_p$ are considered.
\end{sloppypar}

\begin{algorithm}[t]
\caption{Computation of the function $\Pi$.}
\label{algorithm:coloring}
\begin{algorithmic}[1]
\Procedure{$\Pi$Identifier}{$\mathcal{I}_{cl}$}
\For{$q_0 \in Q^0_{\mathcal{I}} \cap PR_{\mathcal{I}_{cl}}$} \label{alg:coloringFor}
\State \Call{Forward$\Pi$Identifier}{$q_0$, $\mathcal{I}_{cl}$, $PR_{\mathcal{I}_{cl}}$, $S$}; \label{alg:greenIncomingSearch}
\State \Call{Forward$\Pi$Identifier}{$q_0$, $\mathcal{I}_{cl}$, $Q_{\mathcal{I}_{cl}} \setminus Q_{\mathcal{P}}$, $S_p$}; \label{alg:yellowIncomingSearch}
\EndFor\label{alg:coloringEndFor}
\State \Call{Backward$\Pi$Identifier}{$\mathcal{I}_{cl}$, $PR_{\mathcal{I}_{cl}}$, $S$}; \label{alg:redOutgoingSearch}
\State \Call{Backward$\Pi$Identifier}{$\mathcal{I}_{cl}$, $Q_{\mathcal{I}_{cl}} \setminus Q_{\mathcal{P}}$, $S_p$}; \label{alg:yellowOutgoingSearch}
\EndProcedure
\end{algorithmic}
\end{algorithm}

\begin{algorithm} 
\caption{The procedure to find the incoming transitions to be marked as $G$ and $Y$.}
\label{algorithm:greenIncomingSearch}
\begin{algorithmic}[1]
\Procedure{Forward$\Pi$Identifier}{$s$, $\mathcal{I}_{cl}$, $Q$, $S$}
\State hash(s); \label{alg:GreenIncomingSearch}
\For{$(s, a, s^\prime) \in \Delta_{\mathcal{I}_{cl}}$} ; \label{alg:GreenIncomingOutgoing}
\If{$s^\prime \in Q$} \label{alg:GreenIncomingPurelyRegular}
\If{$s^\prime$ not hashed} \label{alg:notHashed}
\State \Call{Forward$\Pi$Identifier}{$s^\prime$, $\mathcal{I}_{cl}$, $Q$, $C$}; \label{alg:continuingTheGreenSearch}
\EndIf
\Else
\State $R=R \cup \{(s,a,s^\prime)\}$;  \label{alg:continuingTheGreenMarking}
\EndIf
\EndFor
\EndProcedure
\end{algorithmic}
\end{algorithm}

The \textsc{Forward$\Pi$Identifier} procedure described in Algorithm~\ref{algorithm:greenIncomingSearch} starts from a state $s$ and searches for runs that involve only states in the set $Q$ (passed as parameter) until an incoming transition of a sub-property $S$ ($S_p$) is reached. 
Whenever a state $s$ is visited by the algorithm, it is hashed (Line~\ref{alg:GreenIncomingSearch}), then, each outgoing transition $(s, a, s^\prime) $ of the state $s$ (Line~\ref{alg:GreenIncomingOutgoing}) is analyzed. If the destination state $s^\prime$ is in $Q$ (Line~\ref{alg:GreenIncomingPurelyRegular}) and it has not already been hashed (Line~\ref{alg:notHashed}), then the \textsc{Forward$\Pi$Identifier} is continued (Line~\ref{alg:continuingTheGreenSearch}). If this is not the case, it means that the state $s^\prime$ is the destination of the incoming port $(s, a, s^\prime)$, thus, $(s, a, s^\prime)$ is added into the set $R$  (Line~\ref{alg:continuingTheGreenMarking}) of the sub-property $S$ or $S_p$ (depending on the sub-property that is currently analyzed).

\begin{theorem} [\textsc{Forward$\Pi$Identifier} complexity] The procedure described in  Algorithm~\ref{algorithm:greenIncomingSearch} can be performed in time $\mathcal{O}(|Q_{\mathcal{I}_{cl}}|+|\Delta_{\mathcal{I}_{cl}}|)$.
\end{theorem}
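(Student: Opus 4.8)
The plan is to recognize the \textsc{Forward$\Pi$Identifier} procedure as a guarded depth-first traversal of the cleaned intersection automaton $\mathcal{I}_{cl}$, and then to apply the classical linear-time accounting for depth-first search. The crucial structural invariant is that the body of the recursion is executed at most once per state, and this is enforced by the hashing mechanism. First I would establish this invariant: when the procedure is entered on a state $s$, line~\ref{alg:GreenIncomingSearch} hashes $s$ immediately, and a recursive invocation on a state $s^\prime$ is performed only at line~\ref{alg:continuingTheGreenSearch}, under the guard at line~\ref{alg:notHashed} that $s^\prime$ has not yet been hashed. Hence, once a state has been hashed its body is never re-entered, so the number of invocations whose body actually runs is bounded by $|Q_{\mathcal{I}_{cl}}|$. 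If the hash set is shared across the repeated calls issued by the enclosing loop, the same bound holds for the whole traversal.

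Next I would charge the work. Each invocation on a state $s$ performs one hashing operation, which I assume costs $\mathcal{O}(1)$, and then iterates once over the outgoing transitions $(s,a,s^\prime)\in\Delta_{\mathcal{I}_{cl}}$ at line~\ref{alg:GreenIncomingOutgoing}. For each such transition the body does only a constant amount of work: the membership test $s^\prime\in Q$ (line~\ref{alg:GreenIncomingPurelyRegular}), a hash lookup (line~\ref{alg:notHashed}), and then either a recursive call accounted for separately, or the insertion of the transition into $R$ (line~\ref{alg:continuingTheGreenMarking}). Since each state is processed at most once, each transition is examined at most once, namely during the processing of its source. Summing the constant per-transition cost over all transitions gives $\mathcal{O}(|\Delta_{\mathcal{I}_{cl}}|)$, and summing the constant hashing cost over all processed states gives $\mathcal{O}(|Q_{\mathcal{I}_{cl}}|)$; together these yield the claimed bound $\mathcal{O}(|Q_{\mathcal{I}_{cl}}|+|\Delta_{\mathcal{I}_{cl}}|)$.

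The only point that genuinely requires care, and which I expect to be the main obstacle, is distinguishing between a state being \emph{reached} (examined as a destination $s^\prime$) and a state being \emph{processed} (having its recursion body run). A single state may be reached many times, once along each of its incoming edges, so a naive count would overestimate the cost. The guard at line~\ref{alg:notHashed} is precisely what collapses every reaching of an already-hashed state into a constant-time lookup, and each such lookup is already paid for by the edge along which the state was reached. I would make this charging explicit and observe that transitions whose destination lies outside $Q$, or is already hashed, contribute only their constant examination cost plus at most one insertion into $R$. With that, the per-edge and per-vertex charges are disjoint and constant, and the linear bound follows.
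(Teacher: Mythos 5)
Your proof is correct and follows essentially the same argument as the paper: the hashing guard ensures each state's body runs at most once and each transition is examined at most once, with constant work per step, yielding the linear bound. Your version is simply a more careful elaboration (the explicit reached-versus-processed charging and the shared-hash caveat across the enclosing loop's calls) of what the paper asserts in two sentences.
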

\begin{proof}
It is easy to prove that each state and transition of the cleaned intersection automaton $\mathcal{I}_{cl}$ is visited at most once, since it is visited if and only if it has not been hashed before. At each step, a finite and constant number of operations is performed  leading the $\mathcal{O}(|Q_{\mathcal{I}_{cl}}|+|\Delta_{\mathcal{I}_{cl}}|)$ time complexity.
\end{proof}

\begin{algorithm} 
\caption{The procedure to find the outgoing transitions to be marked as $R$ and $Y$.}
\label{algorithm:redOutgoingSearch}
\begin{algorithmic}[1]
\Procedure{Backward$\Pi$Identifier}{$\mathcal{I}_{cl}$, $Q$, $S$}
\State $\mathcal{I}_{Q} \leftarrow$\Call{abstract}{$\mathcal{I}_{cl}$, $Q$}; \label{alg:redOutgoingSearchPR}
\State $SCC \leftarrow$\Call{getNonTrivialSCC}{$\mathcal{I}_{Q}$}; \label{alg:redOutgoingSearchTarjan}
\State $next \leftarrow \left \{ \right \}$;  \label{alg:redOutgoingSearchNextDefinition}
\For{$scc \in SCC$} \label{alg:redOutgoingSearchNextDefinitionIsolating}
\If{$scc \cap F_{\mathcal{I}} \neq \emptyset$}
\State $next \leftarrow next \cup scc$; 
\EndIf \label{alg:redOutgoingSearchNextDefinitionIsolatingEnd}
\EndFor  \label{alg:redOutgoingSearchNextDefinitionEndFor}
\State $visited \leftarrow \left \{ \right \}$; \label{alg:redOutgoingSearchVisitedDefinition}
\While{$next \neq \emptyset$} \label{alg:redOutgoingSearchIteratively}
\State $s \leftarrow $ \Call{choose}{$next$};  \label{alg:redOutgoingSearchIterativelyChoose}
\State $visited \leftarrow visited \cup  \left \{ s \right \} $; \label{alg:redOutgoingSearchIterativelyAddingVisited}
\State $next \leftarrow next \setminus \left \{ s \right \}$;  \label{alg:redOutgoingSearchIterativelyRemoveNext}
\For{$(s^\prime, a, s) \in \Delta_{\mathcal{I}}$} \label{alg:redOutgoingSearchIterativelyIncoming}
\If{$s^\prime \in Q_{\mathcal{P}}$}  \label{alg:redOutgoingSearchMixed}
\State $R=R\cup \{(s^\prime, a, s) \} $; \label{alg:redOutgoingSearchMarkingRED}
\Else
\If{$s^\prime \not \in visited$} \label{alg:redOutgoingNotVisited}
\State $next \leftarrow next \cup \left \{ s^\prime \right \}$; \label{alg:visitedNext}
\EndIf
\EndIf
\EndFor
\EndWhile \label{alg:redOutgoingSearchIterativelyEnd}
\EndProcedure
\end{algorithmic}
\end{algorithm}

The \textsc{Backward$\Pi$Identifier} procedure is described in algorithm~\ref{algorithm:redOutgoingSearch}.
The  algorithm first looks for the non trivial strongly connected components that involve only states which are in the set $Q$ (passed as parameter). 
To this purpose the algorithm constructs a version $\mathcal{I}_{Q}$ of $\mathcal{I}_{cl}$  that contains only the states of $\mathcal{I}_{cl}$ that belongs to $Q$ (Line~\ref{alg:redOutgoingSearchPR}). 
Note that, depending on whether the sub-property $S$ or $S_p$ is considered, the states mixed states of the intersection automaton or the states of the automaton not in $Q_\mathcal{P}$ are contained in $Q$.
Then, the non trivial strongly connected components of  $\mathcal{I}_{Q}$ are identified (Line~\ref{alg:redOutgoingSearchTarjan}). The set $next$ is initialized to contain all the strongly connected components that contains at least a state which is accepting (Lines~\ref{alg:redOutgoingSearchNextDefinition}-\ref{alg:redOutgoingSearchNextDefinitionEndFor}).

Then, the state space of $\mathcal{I}$ is explored to compute the outgoing transitions from which it is possible to reach one of the states in the set $next$. The set $visited$ (Line~\ref{alg:redOutgoingSearchVisitedDefinition}) is used to keep track of the already visited states of $\mathcal{I}$. The algorithm iteratively chooses a state $s$ in the set $next$ (Lines~\ref{alg:redOutgoingSearchIteratively},\ref{alg:redOutgoingSearchIterativelyChoose}), which is removed from $next$ (Line~\ref{alg:redOutgoingSearchIterativelyRemoveNext}) and added to the set of visited states (Line~\ref{alg:redOutgoingSearchIterativelyAddingVisited}). 
For each incoming transition $(s^\prime, a, s) \in \Delta_{\mathcal{I}}$ of $s$ (Line~\ref{alg:redOutgoingSearchIterativelyIncoming}), if the state $s^\prime$ is a state of the sub-property (Line~\ref{alg:redOutgoingSearchMixed}) the corresponding transition is added into the set $R$ of the sub-property passed as parameter (Line~\ref{alg:redOutgoingSearchMarkingRED}).

 Otherwise, if the purely regular state $s^\prime$ has not already been visited (Line~\ref{alg:redOutgoingNotVisited}) it is added to the set $next$ of states to be considered next (Line~\ref{alg:visitedNext}).

\begin{theorem} [\textsc{Backward$\Pi$Identifier}  complexity] The procedure described in  Algorithm~\ref{algorithm:redOutgoingSearch} can be performed in time $\mathcal{O}(|Q_{\mathcal{I}_{cl}}|+|\Delta_{\mathcal{I}_{cl}}|)$.
\end{theorem}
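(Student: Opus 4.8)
The plan is to bound each of the four phases of Algorithm~\ref{algorithm:redOutgoingSearch} separately and observe that their sum is linear, mirroring the argument already used for the cleaning procedure in Theorem~\ref{th:ComplexityCleaningProcedure}. First, the construction of the restricted automaton $\mathcal{I}_{Q}$ on Line~\ref{alg:redOutgoingSearchPR} is a single pass that copies only the states of $\mathcal{I}_{cl}$ lying in $Q$ together with the transitions between them, so it costs $\mathcal{O}(|Q_{\mathcal{I}_{cl}}|+|\Delta_{\mathcal{I}_{cl}}|)$. The computation of the non-trivial strongly connected components on Line~\ref{alg:redOutgoingSearchTarjan} can be carried out with Tarjan's algorithm~\cite{tarjan1972depth} in time linear in the size of $\mathcal{I}_{Q}$, hence in $\mathcal{O}(|Q_{\mathcal{I}_{cl}}|+|\Delta_{\mathcal{I}_{cl}}|)$. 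The scan over the SCCs that initializes $next$ (Lines~\ref{alg:redOutgoingSearchNextDefinitionIsolating}--\ref{alg:redOutgoingSearchNextDefinitionEndFor}) inspects each state of each component once; since the SCCs are disjoint, the total number of states visited here is at most $|Q_{\mathcal{I}_{cl}}|$, so this phase is also linear.

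The crux is the backward exploration in the \texttt{while} loop (Lines~\ref{alg:redOutgoingSearchIteratively}--\ref{alg:redOutgoingSearchIterativelyEnd}), which I would analyze by amortization. The key invariant is that every state is chosen from $next$ at most once: a state $s'$ is only inserted into $next$ when it is not already in $visited$ (Line~\ref{alg:redOutgoingNotVisited}), and it is moved into $visited$ the instant it is chosen (Line~\ref{alg:redOutgoingSearchIterativelyAddingVisited}); because $next$ carries set semantics, repeated insertions of the same state before it is processed collapse to a single occurrence. Consequently the inner loop over the incoming transitions of $s$ (Line~\ref{alg:redOutgoingSearchIterativelyIncoming}) executes at most once per state, so each transition $(s',a,s)\in\Delta_{\mathcal{I}}$ is inspected at most once, namely when its destination $s$ is popped. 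Each such inspection performs a constant amount of work (the test $s'\in Q_{\mathcal{P}}$ followed by one insertion into either $R$ or $next$). Summing the per-state cost $\mathcal{O}(|Q_{\mathcal{I}_{cl}}|)$ and the per-transition cost $\mathcal{O}(|\Delta_{\mathcal{I}_{cl}}|)$ gives linear time for the loop, and adding the linear costs of the earlier phases yields the claimed $\mathcal{O}(|Q_{\mathcal{I}_{cl}}|+|\Delta_{\mathcal{I}_{cl}}|)$ bound.

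The main obstacle I anticipate is establishing the ``each state popped once, each edge inspected once'' invariant rigorously, since a state may be enqueued into $next$ several times before it is ever selected; the argument must lean precisely on the $visited$ guard together with the set semantics of $next$ to rule out reprocessing. A secondary point worth stating explicitly is that the membership tests (such as $s'\in Q_{\mathcal{P}}$ and $s'\not\in visited$) and the set insertions are assumed to run in $\mathcal{O}(1)$ under a suitable hashed representation; once these bookkeeping costs are fixed as constant, the amortized accounting above closes the proof.
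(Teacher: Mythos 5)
Your proof is correct and follows essentially the same route as the paper's: a phase-by-phase decomposition (abstraction, Tarjan's SCC computation, initialization of $next$, backward exploration), with the backward search bounded by the observation that each state and transition is processed at most once. Your treatment is somewhat more rigorous than the paper's — in particular, you make explicit the $visited$-guard invariant and the $\mathcal{O}(1)$ bookkeeping assumptions that the paper leaves implicit — but the underlying argument is the same.
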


\begin{proof}
The algorithm first extracts a version $\mathcal{I}_Q$ of the intersection automaton which contains only the  states in $Q$ (Line~\ref{alg:redOutgoingSearchPR}). This can be done by exploring the state space of the intersection automaton $\mathcal{I}_{cl}$ and isolating the portions of the state space of interest. Then (Line~\ref{alg:redOutgoingSearchTarjan}), the list of non trivial strongly connected components $SCC$ is isolated. This can be performed in time $\mathcal{O}(|Q_{\mathcal{I}_{cl}}|+|\Delta_{\mathcal{I}_{cl}}|)$, for example by using the well known Tarjan's Algorithm~\cite{tarjan1972depth}. The only states added to the set $next$ are the states that belong to a strongly connected component which contains at least one accepting state (Lines~\ref{alg:redOutgoingSearchNextDefinitionIsolating}-\ref{alg:redOutgoingSearchNextDefinitionIsolatingEnd}). Starting from these states, the backward search is performed (Lines~\ref{alg:redOutgoingSearchIteratively}-\ref{alg:redOutgoingSearchIterativelyEnd}). This search visits each state and transition at most once. Thus, the \textsc{Backward$\Pi$Identifier} algorithm cab be performed in time $\mathcal{O}(|Q_{\mathcal{I}_{cl}}|+|\Delta_{\mathcal{I}_{cl}}|)$. 
\end{proof}

%------------------------------------------------------------------------------------------------------
% Reachability Relation
%------------------------------------------------------------------------------------------------------
The last step of the sub-property identification procedure concerns the \emph{computation of the reachability relation} $K$. 
The reachability relation specifies how the presence of a run that traverses a sub-property influences the reachability of another run that traverses the sub-property itself. Imagine for example that the high level model of the system is the one presented in Figure~\ref{fig:modelingExample2}. 
Differently from the model presented in Figure~\ref{Fig:IFSAExample}, in this case, the box $send_2$ is not contained in the IBA and $send_1$ can be left also through a transition that moves the system into the state $q_4$. Whenever the system reaches the state $q_4$, a timer is started. The transition $7$ is fired whenever the $timer\_ack$ proposition is true, i.e., the system is notified that the time has been elapsed. The developer can now choose to propose a replacement for the box $send_1$ that behaves as follows. Whenever the replacement is entered through the transition $1$, a sending activity is performed. If the sending activity succeeds, the transition $3$ is fired, otherwise, the transition $6$ which activates the timer is performed. If, instead, the replacement is entered through the transition $7$   and the $send$ activity fails, the transition $2$ is fired. 
Imagine that one of the properties of the system specifies that only one sending message activity must be performed by the system. 
The developer must know that if a $send$ activity is performed on a run that connects $1$ to $6$, this activity cannot be replicated in the run that connects  $7$ to $2$ and vice versa. This is exactly the purpose of the reachability relation which specifies how the internal runs of a sub-property influence each others.

\begin{figure*}[htpb]
\centering
\includegraphics[scale=0.5]{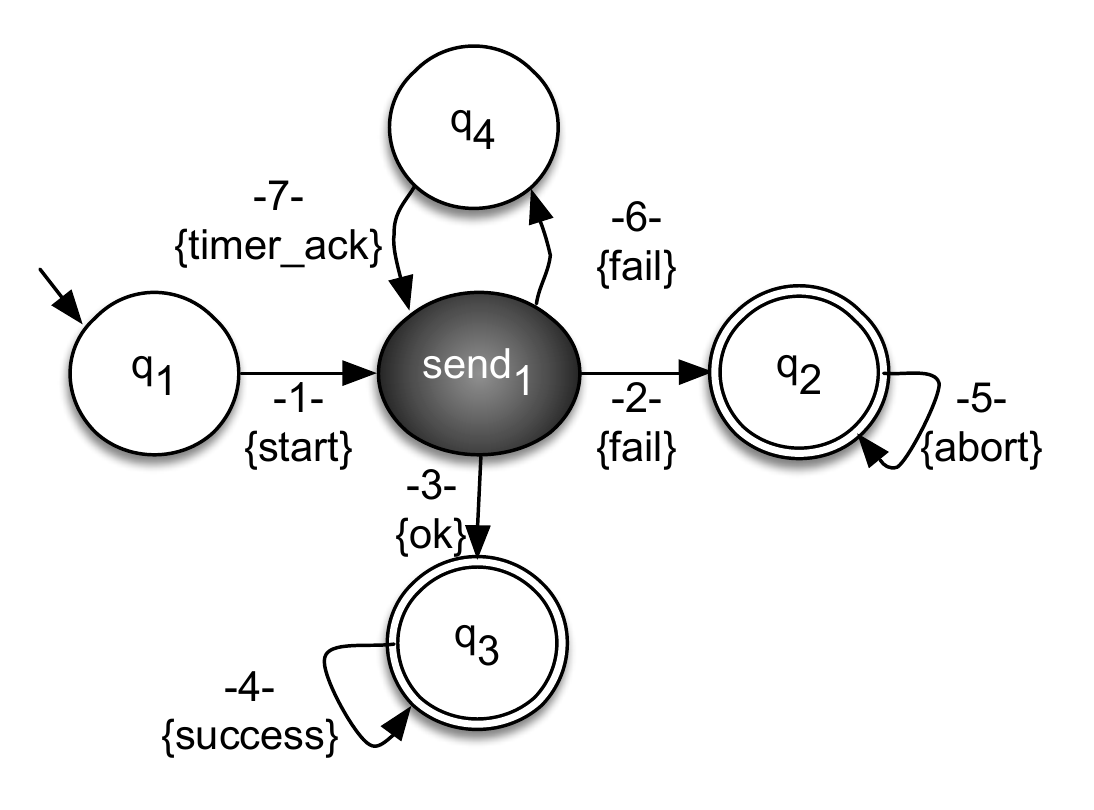}
\centering
\caption{A modeling alternative for the sending message protocol example.}
\label{fig:modelingExample2}
\end{figure*}

The reachability relation is computed by \emph{abstracting} the portion of the state space that connects the two runs of the sub-property $S$ ($S_p$) in the intersection automaton. 
It specifies if from an outgoing transition $\delta_{out} \in \Delta^{out\mathcal{S}}$ of the sub-property it is possible to reach its incoming transition $\delta_{in} \in \Delta^{in\mathcal{S}}$. 
Two versions of the reachability relation are computed depending on whether sub-property $S$ or $S_p$ is considered. 
In the first case, states of the intersection automaton obtained from other boxes can not be crossed, i.e., the replacements of other boxes do not allow reaching their outgoing transitions.
In the second case, all the states of the intersection automaton which are not obtained from the box $b$ can be traversed. 
The reachability relation will be graphically specified through directed dotted edges which connects the outgoing transitions of the sub-property with the corresponding incoming transitions.

\begin{mydef}[Reachability relation for the sub-property $\mathcal{S}$]
\label{def:portsReachabilityGraph}  Given a sub-property $\mathcal{S}$  associated with the intersection automaton $\mathcal{I}$ and the set of its incoming $\Delta^{in\mathcal{S}}$ and outgoing transitions  $\Delta^{out\mathcal{S}}$, the lower reachability relation $K=\Delta^{out\mathcal{S}} \times \Delta^{in\mathcal{S}}$ is a relation, such that given an outgoing transition $\delta_o$ obtained from the transition $\delta_o^{\prime}=(s,a,s^\prime)$ of  $\mathcal{I}$  and an incoming transition $\delta_i$ obtained from the incoming transition $\delta_i^{\prime}=(s^{\prime\prime},a,s^{\prime\prime\prime})$, $(\delta_o, \delta_i) \in \aleph_c$ if and only if one of the following conditions is satisfied:
\begin{enumerate}
\item \label{def:portsReachabilityGraphStatement1}   $\delta_o^\prime= \delta_i^\prime$; 
\item  \label{def:portsReachabilityGraphStatement2} there exist an accepting run $\rho^\omega$, and two indexes $i, j \geq 0$ such that  $s^\prime =\rho^\omega(i)$ and $s^{\prime\prime}=\rho^\omega(j)$  and for all $k$, such that $  j \geq k \geq i, \rho^\omega(k) \in PR_{\mathcal{I}_{cl}}$.
\end{enumerate}
\end{mydef}

In other words, the reachability relation specifies how outgoing and incoming transitions of the sub-property are connected in the intersection automaton. 
\begin{proposition}
\label{prop:lowerReachabilityDimensition}
In the worst case, the reachability relation $K$ for the sub-property $S$ contains $|\Delta^{out\mathcal{S}}| \cdot |\Delta^{in\mathcal{S}}|$ elements.
\end{proposition}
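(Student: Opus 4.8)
The plan is to treat this as an elementary cardinality bound that follows immediately from the type of the relation $K$, together with a short witness construction to show the bound is attained.

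First I would recall from Definition~\ref{def:portsReachabilityGraph} that $K$ is introduced as a relation over $\Delta^{out\mathcal{S}} \times \Delta^{in\mathcal{S}}$, so that every element of $K$ is a pair $(\delta_o, \delta_i)$ whose first component $\delta_o$ is an outgoing transition of the sub-property $\mathcal{S}$ and whose second component $\delta_i$ is an incoming transition of $\mathcal{S}$. Since $K \subseteq \Delta^{out\mathcal{S}} \times \Delta^{in\mathcal{S}}$ and the cardinality of a Cartesian product is the product of the cardinalities of its factors, we have $|K| \leq |\Delta^{out\mathcal{S}}| \cdot |\Delta^{in\mathcal{S}}|$. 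This already yields the claimed worst-case bound as an upper bound, independently of which of the two conditions of Definition~\ref{def:portsReachabilityGraph} actually holds for each candidate pair.

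Second, to argue that the bound is \emph{tight} (i.e., that it is genuinely a worst case and not merely an over-estimate), I would exhibit a configuration of the intersection automaton $\mathcal{I}$ in which \emph{every} pair $(\delta_o, \delta_i) \in \Delta^{out\mathcal{S}} \times \Delta^{in\mathcal{S}}$ satisfies condition~\ref{def:portsReachabilityGraphStatement2} of Definition~\ref{def:portsReachabilityGraph}. It suffices to take an $\mathcal{I}$ whose purely regular states $PR_{\mathcal{I}_{cl}}$ contain an accepting state reachable along a single path that, for each outgoing transition $\delta_o^\prime=(s,a,s^\prime)$, passes through $s^\prime$ and, for each incoming transition $\delta_i^\prime=(s^{\prime\prime},a,s^{\prime\prime\prime})$, later passes through $s^{\prime\prime}$, with all intermediate states lying in $PR_{\mathcal{I}_{cl}}$. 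For such an automaton an accepting run $\rho^\omega$ witnesses $(\delta_o, \delta_i) \in K$ for all choices, so $K$ equals the full product and $|K| = |\Delta^{out\mathcal{S}}| \cdot |\Delta^{in\mathcal{S}}|$.

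There is no real obstacle here: the statement is a counting fact about a subset of a Cartesian product, and the only point requiring a sentence of care is the tightness witness in the second paragraph, which is routine to assemble from the purely-regular reachability criterion of Definition~\ref{def:portsReachabilityGraph}. The argument does not depend on the finer structure of $\mathcal{I}_{cl}$ and applies verbatim, with condition~\ref{def:portsReachabilityGraphStatement1} ($\delta_o^\prime = \delta_i^\prime$) handling the degenerate pairs where an outgoing and incoming transition coincide.
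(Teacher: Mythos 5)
Your proof is correct. The paper actually states Proposition~\ref{prop:lowerReachabilityDimensition} without any proof, treating it as immediate from Definition~\ref{def:portsReachabilityGraph}, in which $K$ is by construction a relation contained in $\Delta^{out\mathcal{S}} \times \Delta^{in\mathcal{S}}$; your first paragraph is exactly that implicit argument, and your tightness witness (all outgoing destinations and incoming sources lying on a common accepting, purely regular cycle of $\mathcal{I}_{cl}$) is a sound, if optional, strengthening that the paper omits.
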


%------------------------------------------------------------------------------------------------------
% UPPER Reachability Relation
%------------------------------------------------------------------------------------------------------
When the sub-property $S_p$ which refer to a black box state $b$ is considered, an incoming and outgoing transition are in $K$ if and only if there exists a run between them that potentially involves also mixed states of the intersection automaton which are not obtained from $b$, i.e., any state in $Q_\mathcal{I} \setminus Q_{\mathcal{P}}$.
\begin{mydef}[Reachability relation for the sub-property $\mathcal{S}_p$]
\label{def:portsReachabilityGraphUpperBound}  Given a sub-property $\mathcal{S}_p$  associated to the intersection automaton $\mathcal{I}$ and the set of its incoming $\Delta^{in\mathcal{S}}$ and outgoing transitions  $\Delta^{out\mathcal{S}}$, the  reachability relation $K=\Delta^{out\mathcal{S}} \times \Delta^{in\mathcal{S}}$ is a relation such that, given an outgoing transition $\delta_o$ obtained from the transition $\delta_o^{\prime}=(s,a,s^\prime)$ of  $\mathcal{I}$  and an incoming transition $\delta_i$ obtained from the incoming transition $\delta_i^{\prime}=(s^{\prime\prime},a,s^{\prime\prime\prime})$, $(\delta_o, \delta_i) \in \aleph$ if and only if one of the following is satisfied:
\begin{enumerate}
\item $\delta_o^\prime= \delta_i^\prime$;
\item there exist an accepting run $\rho^\omega$ and two indexes $i, j \geq 0$, such that $s^\prime =\rho^\omega(i)$ and $s^{\prime\prime}=\rho^\omega(j)$ and for all $k$ such that $j \geq k \geq i, \rho^\omega(k) \not \in Q_{\mathcal{P}}$.
\end{enumerate}
\end{mydef}
\begin{proposition}
\label{prop:upperReachabilityDimensition}
In the worst case, the reachability relation $K$ for the sub-property $S_p$ contains $|\Delta^{out\mathcal{S}}| \cdot |\Delta^{in\mathcal{S}}|$ elements.
\end{proposition}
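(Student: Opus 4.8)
The plan is to read the bound directly off the type of the object being counted. By Definition~\ref{def:portsReachabilityGraphUpperBound}, the reachability relation $K$ is a relation over $\Delta^{out\mathcal{S}} \times \Delta^{in\mathcal{S}}$, i.e. $K \subseteq \Delta^{out\mathcal{S}} \times \Delta^{in\mathcal{S}}$. Every element of $K$ is therefore a pair $(\delta_o, \delta_i)$ with $\delta_o \in \Delta^{out\mathcal{S}}$ and $\delta_i \in \Delta^{in\mathcal{S}}$, and distinct elements of $K$ correspond to distinct such pairs. Hence $|K|$ is bounded above by the number of pairs in the Cartesian product, which is exactly $|\Delta^{out\mathcal{S}}| \cdot |\Delta^{in\mathcal{S}}|$. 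This already establishes that $K$ can contain at most $|\Delta^{out\mathcal{S}}| \cdot |\Delta^{in\mathcal{S}}|$ elements, which is the upper bound asserted by the proposition.

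To justify the phrase \emph{in the worst case} — that is, to show the bound is tight rather than merely an over-estimate — I would exhibit a configuration of the intersection automaton $\mathcal{I}$ in which every pair $(\delta_o, \delta_i) \in \Delta^{out\mathcal{S}} \times \Delta^{in\mathcal{S}}$ satisfies the second condition of Definition~\ref{def:portsReachabilityGraphUpperBound}. Concretely, it suffices that from the destination $s^\prime$ of each outgoing transition there be an accepting run reaching the source $s^{\prime\prime}$ of each incoming transition while traversing only states outside $Q_{\mathcal{P}}$. A witness is, for instance, an intersection automaton in which all destinations of outgoing transitions and all sources of incoming transitions lie in a single non-trivial, accepting strongly connected component built entirely from states of $Q_{\mathcal{I}} \setminus Q_{\mathcal{P}}$; in such a component every relevant $s^\prime$ reaches every relevant $s^{\prime\prime}$ by the permissive traversal condition, so all $|\Delta^{out\mathcal{S}}| \cdot |\Delta^{in\mathcal{S}}|$ pairs belong to $K$.

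The argument is essentially a cardinality-of-a-subset observation, so there is no genuine obstacle. The only step requiring a sentence of care is the attainability of the bound: one must verify that the traversal condition of Definition~\ref{def:portsReachabilityGraphUpperBound}, which permits crossing any state not in $Q_{\mathcal{P}}$, can be met \emph{simultaneously} for all pairs, and the single-SCC witness above accomplishes exactly this. I would also remark that the identical reasoning proves the companion Proposition~\ref{prop:lowerReachabilityDimensition} for $\mathcal{S}$, the only difference being that the traversing states must be confined to $PR_{\mathcal{I}_{cl}}$ rather than to $Q_{\mathcal{I}} \setminus Q_{\mathcal{P}}$.
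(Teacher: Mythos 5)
Your proposal is correct. The paper states Proposition~\ref{prop:upperReachabilityDimensition} without any proof, treating it as immediate from Definition~\ref{def:portsReachabilityGraphUpperBound}, in which $K$ is by construction a subset of $\Delta^{out\mathcal{S}} \times \Delta^{in\mathcal{S}}$; your cardinality-of-a-subset observation is exactly that implicit argument, and your single-SCC witness showing the bound is attainable is extra material beyond what the paper offers, but it is sound and consistent with the traversal condition of the definition.
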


The procedure used to compute the reachability relation for $S$ and $S_p$ is described in Algorithm~\ref{algorithm:rearchabilityRelation}, where $\Delta^{\mathcal{S}}$ and $\Delta^{in\mathcal{S}}$
 are the outgoing and incoming transitions of the sub-property $\mathcal{S}$ ($\mathcal{S}_p$) to be considered, $\mathcal{I}_{cl}$ is the intersection automaton, $Q$ is the set of the states to be considered in the computation of the reachability relation and $K$ contains the reachability relation. 
 When sub-property $\mathcal{S}$ is considered,  the set $Q$ contains all the purely regular  states of the intersection automaton.
Instead, when sub-property $\mathcal{S}_p$ is analyzed, the set $Q$ contains all the states of the intersection automaton  with the exception of the states in $Q_{\mathcal{P}}$.

The procedure described in Algorithm~\ref{algorithm:rearchabilityRelation} first computes an abstraction of the state space which only contains the states in the set $Q$ (Line~\ref{alg:ReachabilityRelationIdentifierAbstract}). Then (Line~\ref{alg:ReachabilityRelationIdentifierFloydWs}), for every pair of states $(s, s^\prime)$, the Floyd-Warshall algorithm~\cite{gross2004handbook} is used to compute if $s^\prime$ is reachable from $s$. 
Then, each transition $\delta_o^\prime=(s, a, s^\prime)$ (Line~\ref{alg:ReachabilityRelationIdentifierIncoming}) and $\delta_i^\prime=(s^{\prime\prime}, a, s^{\prime\prime\prime})$  (Line~\ref{alg:ReachabilityRelationIdentifierOutgoing}) of the intersection automaton from which an outgoing $\delta_o$  and an incoming transition $\delta_i$ are analyzed. 
If it is possible to reach $s^{\prime\prime}$ from $s^{\prime}$ (Line~\ref{alg:ReachabilityRelationIdentifierReacCheck}), it means that there exists a run which contains only states in $Q$ which allows to reach $(s^{\prime\prime}, a, s^{\prime\prime\prime})$ from $(s, a, s^\prime)$. Thus, the pair $\langle \delta_o, \delta_i \rangle$ is added to the reachability relation $K$ (Line~\ref{alg:ReachabilityRelationIdentifierReach}).

\begin{algorithm} 
\caption{The procedure to compute the reachability relation.}
\label{algorithm:rearchabilityRelation}
\begin{algorithmic}[1]
\Procedure{ReachabilityRelationIdentifier}{$\Delta^{out\mathcal{S}}$, $\Delta^{in\mathcal{S}}$, $\mathcal{I}_{cl}$, $Q$, $K$}
\State $\mathcal{I}_{Q} \leftarrow$\Call{abstract}{$\mathcal{I}_{cl}$, $Q$}; \label{alg:ReachabilityRelationIdentifierAbstract}
\State $Rec \leftarrow$ \Call{FloydWarshall}{$\mathcal{I}_{Q}$}; \label{alg:ReachabilityRelationIdentifierFloydWs}
\For{$\delta_o^\prime=(s, a, s^\prime) \wedge \delta_o \in \Delta^{out\mathcal{S}}$} \label{alg:ReachabilityRelationIdentifierIncoming}
\For{$\delta_i^\prime=(s^{\prime\prime}, a, s^{\prime\prime\prime}) \wedge \delta_i \in \Delta^{in\mathcal{S}}$}  \label{alg:ReachabilityRelationIdentifierOutgoing}
\If{$((s^\prime, s^{\prime\prime}) \in Rec)$ or $((s, a, s^\prime)=(s^{\prime\prime}, a, s^{\prime\prime\prime}))$ } \label{alg:ReachabilityRelationIdentifierReacCheck}
\State $K=K \cup \langle \delta_o, \delta_i \rangle$; \label{alg:ReachabilityRelationIdentifierReach}
\EndIf
\EndFor
\EndFor  \label{alg:ReachabilityRelationIdentifierIncomingEnd}
\EndProcedure
\end{algorithmic}
\end{algorithm}

\begin{theorem} [\textsc{ReachabilityRelationIdentifier} correctness] The procedure described in Algorithm~\ref{algorithm:rearchabilityRelation} is correct.
\end{theorem}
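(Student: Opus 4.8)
The plan is to show that the relation $K$ returned by Algorithm~\ref{algorithm:rearchabilityRelation} coincides exactly with the relation prescribed by Definition~\ref{def:portsReachabilityGraph} when the sub-property $\mathcal{S}$ is processed (the procedure being invoked with $Q=PR_{\mathcal{I}_{cl}}$) and by Definition~\ref{def:portsReachabilityGraphUpperBound} when the sub-property $\mathcal{S}_p$ is processed (invoked with $Q=Q_{\mathcal{I}_{cl}} \setminus Q_{\mathcal{P}}$). Since the two cases differ only in the set $Q$ of admissible intermediate states passed as a parameter, I would carry out a single argument that is parametric in $Q$ and instantiate it twice at the end. Concretely, for every outgoing transition $\delta_o$ with $\delta_o^\prime=(s,a,s^\prime)$ and every incoming transition $\delta_i$ with $\delta_i^\prime=(s^{\prime\prime},a,s^{\prime\prime\prime})$, I would prove that $\langle \delta_o, \delta_i \rangle \in K$ if and only if the corresponding condition of the relevant definition holds, splitting into soundness and completeness.

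First I would isolate the graph-theoretic core as an auxiliary observation: the automaton $\mathcal{I}_Q$ produced by \textsc{abstract} retains exactly the states of $\mathcal{I}_{cl}$ lying in $Q$ together with the transitions between them, so a directed path from $u$ to $w$ exists in $\mathcal{I}_Q$ if and only if there is a path $u=r_0, r_1, \ldots, r_m=w$ in $\mathcal{I}_{cl}$ with every $r_\ell \in Q$. Combined with the correctness of \textsc{FloydWarshall}, this gives $(s^\prime, s^{\prime\prime}) \in Rec$ iff such a $Q$-internal path from $s^\prime$ to $s^{\prime\prime}$ exists in $\mathcal{I}_{cl}$. This reduces the theorem to relating $Q$-internal reachability between the endpoints of the two ports to the existence of an accepting run of the shape demanded by the definitions.

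For soundness I would inspect the two disjuncts of the test on Line~\ref{alg:ReachabilityRelationIdentifierReacCheck}. If $\delta_o^\prime=\delta_i^\prime$, condition~\ref{def:portsReachabilityGraphStatement1} of the definition is met verbatim. If instead $(s^\prime, s^{\prime\prime}) \in Rec$, the auxiliary observation yields a $Q$-internal path $s^\prime \leadsto s^{\prime\prime}$; I would then splice this path into a full accepting run by prepending a run from some $q_0 \in Q^0_{\mathcal{I}}$ to the source $s$ of $\delta_o$ followed by the transition $\delta_o^\prime$, and appending $\delta_i^\prime$ followed by a run from the target $s^{\prime\prime\prime}$ of $\delta_i$ that visits $F_{\mathcal{I}}$ infinitely often. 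Both the prefix and the suffix are guaranteed to exist precisely because $\mathcal{I}_{cl}$ has been cleaned, so by Theorem~\ref{th:cleaningCorrect} every one of its states already lies on a possibly accepting run. The resulting infinite run is accepting, meets $s^\prime$ at some index $i$ and $s^{\prime\prime}$ at a later index $j$, and by construction all states on $[i,j]$ belong to $Q$; this is exactly condition~\ref{def:portsReachabilityGraphStatement2} (with $PR_{\mathcal{I}_{cl}}$, resp. $Q_{\mathcal{I}_{cl}}\setminus Q_{\mathcal{P}}$, as the admissible set). Completeness is the converse: an accepting run witnessing the definition restricts, on the index interval $[i,j]$, to a $Q$-internal path from $s^\prime$ to $s^{\prime\prime}$, whence $(s^\prime, s^{\prime\prime}) \in Rec$ and the algorithm adds $\langle \delta_o, \delta_i \rangle$, while the degenerate case $\delta_o^\prime=\delta_i^\prime$ is caught by the second disjunct.

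The main obstacle I anticipate is exactly this splicing step in the soundness direction: the definitions speak about a \emph{global} accepting run of $\mathcal{I}$, whereas the algorithm only certifies \emph{local} $Q$-internal reachability between two ports. The crux is therefore to argue that the cleaning invariant lets one complete any $Q$-internal segment to an accepting run without disturbing the intermediate states, and to verify that the index bookkeeping ($i \le j$, the inclusiveness of the range $[i,j]$, and the reflexive case $s^\prime=s^{\prime\prime}$) is consistent with the way \textsc{FloydWarshall} reports reachability. Once the parametric argument is in place, instantiating $Q=PR_{\mathcal{I}_{cl}}$ and $Q=Q_{\mathcal{I}_{cl}} \setminus Q_{\mathcal{P}}$ discharges Definitions~\ref{def:portsReachabilityGraph} and~\ref{def:portsReachabilityGraphUpperBound} simultaneously, establishing correctness.
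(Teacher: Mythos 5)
Your proposal is correct and follows the same decomposition as the paper's own proof: both split on the two disjuncts of the test at Line~\ref{alg:ReachabilityRelationIdentifierReacCheck}, reduce the second disjunct to $Q$-internal reachability certified by \textsc{FloydWarshall} over the abstraction $\mathcal{I}_{Q}$, and treat the two sub-properties parametrically in the set $Q$, instantiating $Q=PR_{\mathcal{I}_{cl}}$ for $\mathcal{S}$ and $Q=Q_{\mathcal{I}_{cl}} \setminus Q_{\mathcal{P}}$ for $\mathcal{S}_p$. Where you go beyond the paper is precisely the step you flag as the main obstacle: Definitions~\ref{def:portsReachabilityGraph} and~\ref{def:portsReachabilityGraphUpperBound} demand a \emph{global accepting run} passing through $s^\prime$ and $s^{\prime\prime}$, while the algorithm only certifies a local $Q$-internal path between the two ports; the paper's proof silently identifies the two notions, whereas you close the gap by splicing the local path into an accepting run via the cleaning invariant (every state of $\mathcal{I}_{cl}$ lies on a possibly accepting run, so a prefix from an initial state to the source of $\delta_o$ and an accepting infinite suffix from the target of $\delta_i$ both exist). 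That splicing is exactly what soundness needs, so your argument is, if anything, more complete than the one printed in the paper. One further point in your favor: in its ($\Rightarrow$) direction the paper cites condition~\ref{def:portsReachabilityGraphStatement1} for the $Rec$ case and condition~\ref{def:portsReachabilityGraphStatement2} for the equality case, i.e., with the labels swapped relative to Definition~\ref{def:portsReachabilityGraph}; your mapping of the algorithm's clauses to the definition's conditions is the correct one.
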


\begin{proof} Let us first consider the case in which the sub-property $S$ is considered. 
It is necessary to prove that $(\delta_o, \delta_i) \in K$ if and only if one of the conditions~\ref{def:portsReachabilityGraphStatement1} or \ref{def:portsReachabilityGraphStatement2} of Definition~\ref{def:portsReachabilityGraph} is satisfied. 

($\Leftarrow$) If $\delta_o^\prime$ is equal to $\delta_i^\prime$ (condition~\ref{def:portsReachabilityGraphStatement1}), $(\delta_o, \delta_i)$ is added in the reachability relation $K$ in Line~\ref{alg:ReachabilityRelationIdentifierReach} since the condition in Line~\ref{alg:ReachabilityRelationIdentifierReacCheck} is satisfied. If instead there exists a run,  which contains only purely regular states, that connects the state $s^\prime$ to the state $s^{\prime\prime}$ (condition~\ref{def:portsReachabilityGraphStatement2}), the tuple $(s^\prime, s^{\prime\prime})$ is added to the relation $Rec$ returned by the Floyd-Warshall algorithm, which makes  the condition in Line~\ref{alg:ReachabilityRelationIdentifierReacCheck} satisfied and implies that $(\delta_o, \delta_i)$ is added to  the reachability relation $K$.

($\Rightarrow$) If a tuple $(\delta_o, \delta_i)$ belongs to  $K$, the procedure described in Algorithm~\ref{algorithm:rearchabilityRelation} has added it to the lower reachability relation. To be added to this relation two cases are possible:
\begin{inparaenum}[\itshape a\upshape)]
\item the first clause of  condition specified in Line~\ref{alg:ReachabilityRelationIdentifierReacCheck} is triggered. Since $(s^\prime, s^{\prime\prime})$ is in $Rec$ it must exists a run made by states contained in the set $Q$ which allow to reach $s^{\prime\prime}$ from $s^\prime$ which implies that the condition~\ref{def:portsReachabilityGraphStatement1} is satisfied;
\item the second clause of the condition specified in Line~\ref{alg:ReachabilityRelationIdentifierReacCheck} is satisfied. In this case $\delta_o^\prime$ is equal to $\delta_i^\prime$ which makes  the condition~\ref{def:portsReachabilityGraphStatement2} satisfied.
\end{inparaenum}

The same approach can be used to demonstrate that the procedure is correct when the sub-property $S_p$ is considered.
\end{proof}

\begin{theorem}[Constraint computation complexity]
\label{th:reachabilityRelationComputationComplexity}  Given the sub-property $\mathcal{S}$ ($\mathcal{S}_p$), associated with the box $b$, the procedure described in Algorithm~\ref{algorithm:rearchabilityRelation} can be executed in time $\mathcal{O}(|Q^3|+|\Delta^{out\mathcal{S}}| \cdot  |\Delta^{\mathcal{S}}|)$.
\end{theorem}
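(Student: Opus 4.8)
The plan is to charge the running time to the three successive phases of Algorithm~\ref{algorithm:rearchabilityRelation} --- the abstraction on Line~\ref{alg:ReachabilityRelationIdentifierAbstract}, the all-pairs reachability computation on Line~\ref{alg:ReachabilityRelationIdentifierFloydWs}, and the nested scan over outgoing and incoming transitions on Lines~\ref{alg:ReachabilityRelationIdentifierIncoming}--\ref{alg:ReachabilityRelationIdentifierIncomingEnd} --- bound each one separately, and then observe that their sum collapses to the stated expression. Throughout, $Q$ denotes the parameter set, which is $PR_{\mathcal{I}_{cl}}$ when the sub-property $\mathcal{S}$ is considered and $Q_{\mathcal{I}_{cl}} \setminus Q_{\mathcal{P}}$ when $\mathcal{S}_p$ is considered; since only this parameter changes between the two cases, a single analysis covers both.

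First I would bound the abstraction step. Constructing $\mathcal{I}_Q$ amounts to a single linear traversal of $\mathcal{I}_{cl}$ in which a state and its incident transitions are retained precisely when its endpoints lie in $Q$; this costs $\mathcal{O}(|Q_{\mathcal{I}_{cl}}| + |\Delta_{\mathcal{I}_{cl}}|)$, and the resulting graph has at most $|Q|$ vertices. Next, $Rec$ is computed by running the Floyd--Warshall algorithm~\cite{gross2004handbook} on $\mathcal{I}_Q$; on a graph with $|Q|$ vertices this takes $\mathcal{O}(|Q|^3)$ time and, crucially, yields a table that answers each reachability query $(s^\prime, s^{\prime\prime}) \in Rec$ in constant time. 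Finally, the nested loop ranges over all pairs $(\delta_o, \delta_i)$ with $\delta_o \in \Delta^{out\mathcal{S}}$ and $\delta_i \in \Delta^{in\mathcal{S}}$; there are exactly $|\Delta^{out\mathcal{S}}| \cdot |\Delta^{in\mathcal{S}}|$ such pairs, and for each one the guard on Line~\ref{alg:ReachabilityRelationIdentifierReacCheck} is a constant-time $Rec$ lookup together with a constant-time equality test of the underlying transitions, followed by at most one constant-time insertion into $K$ on Line~\ref{alg:ReachabilityRelationIdentifierReach}. Hence this phase runs in $\mathcal{O}(|\Delta^{out\mathcal{S}}| \cdot |\Delta^{in\mathcal{S}}|)$.

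Summing the three contributions gives a total of $\mathcal{O}(|Q_{\mathcal{I}_{cl}}| + |\Delta_{\mathcal{I}_{cl}}| + |Q|^3 + |\Delta^{out\mathcal{S}}| \cdot |\Delta^{in\mathcal{S}}|)$, and the remaining step is to argue that the linear abstraction term is absorbed. The hard (or at least the only non-routine) part is precisely this domination claim: one must check that $|Q_{\mathcal{I}_{cl}}| + |\Delta_{\mathcal{I}_{cl}}|$ is $\mathcal{O}(|Q|^3)$. This holds because in both the $\mathcal{S}$ and $\mathcal{S}_p$ cases the set $Q$ retains all but the comparatively few states of the sub-property automaton $\mathcal{P}$, so $|Q|$ is within a constant factor of $|Q_{\mathcal{I}_{cl}}|$; since $|\Delta_{\mathcal{I}_{cl}}| = \mathcal{O}(|Q_{\mathcal{I}_{cl}}|^2)$ and $|Q_{\mathcal{I}_{cl}}| = \mathcal{O}(|Q_{\mathcal{I}_{cl}}|^3)$, both linear-scan quantities are $\mathcal{O}(|Q|^3)$. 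Discarding them leaves the claimed bound $\mathcal{O}(|Q|^3 + |\Delta^{out\mathcal{S}}| \cdot |\Delta^{in\mathcal{S}}|)$, which completes the argument.
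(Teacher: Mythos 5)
Your three-phase decomposition and the bounds you assign to each phase --- $\mathcal{O}(|Q_{\mathcal{I}_{cl}}|+|\Delta_{\mathcal{I}_{cl}}|)$ for the abstraction, $\mathcal{O}(|Q|^3)$ for Floyd--Warshall, and $\mathcal{O}(|\Delta^{out\mathcal{S}}| \cdot |\Delta^{in\mathcal{S}}|)$ for the nested scan --- coincide exactly with the paper's proof, and that part is sound. Where you go beyond the paper is the final ``absorption'' step, and that step contains a genuine error. You claim $|Q|$ is within a constant factor of $|Q_{\mathcal{I}_{cl}}|$ because $Q$ ``retains all but the comparatively few states of the sub-property automaton $\mathcal{P}$''. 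This is doubly wrong. First, for the sub-property $\mathcal{S}$ the parameter is $Q = PR_{\mathcal{I}_{cl}}$, which excludes \emph{all} mixed states --- every state built from \emph{any} box of the model --- not merely the states $Q_{\mathcal{P}}$ coming from the single box $b$. Second, even in the $\mathcal{S}_p$ case ($Q = Q_{\mathcal{I}_{cl}} \setminus Q_{\mathcal{P}}$) nothing guarantees that the excluded states are few: the cleaned intersection can be dominated by mixed states (e.g.\ a model consisting almost entirely of boxes, each paired with many claim states), in which case $|Q_{\mathcal{I}_{cl}}|$ is not $\mathcal{O}(|Q|)$. Your chain ``$|\Delta_{\mathcal{I}_{cl}}| = \mathcal{O}(|Q_{\mathcal{I}_{cl}}|^2)$ and $|Q_{\mathcal{I}_{cl}}| = \mathcal{O}(|Q_{\mathcal{I}_{cl}}|^3)$, hence both are $\mathcal{O}(|Q|^3)$'' silently substitutes $|Q|$ for $|Q_{\mathcal{I}_{cl}}|$ in the last step, which is exactly the equivalence that needed proof and which fails in the worst case.

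For what it is worth, the paper's own proof does not attempt this justification: it states the three phase bounds and then simply drops the additive $\mathcal{O}(|Q_{\mathcal{I}_{cl}}|+|\Delta_{\mathcal{I}_{cl}}|)$ term from the conclusion. So the defensible options are either to state the bound with that additive linear term included, i.e.\ $\mathcal{O}(|Q_{\mathcal{I}_{cl}}|+|\Delta_{\mathcal{I}_{cl}}|+|Q|^3+|\Delta^{out\mathcal{S}}| \cdot |\Delta^{in\mathcal{S}}|)$, or to read the theorem as bounding only the work performed after the input automaton has been read and abstracted. What you cannot do is recover the stated bound via a worst-case domination argument, because in the worst case the domination is false.
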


\begin{proof}  As previously mentioned the abstraction procedure (Line~\ref{alg:ReachabilityRelationIdentifierAbstract}) can be executed in time $\mathcal{O}(|Q_{\mathcal{I}_{cl}}|+|\Delta_{\mathcal{I}_{cl}}|)$. 
The Floyd Warshall algorithm (Line~\ref{alg:ReachabilityRelationIdentifierFloydWs}) can be performed in time  $\mathcal{O}(Q^3)$, while the steps described in Lines~\ref{alg:ReachabilityRelationIdentifierIncoming}-\ref{alg:ReachabilityRelationIdentifierIncomingEnd} can be performed in time $|\Delta^{out\mathcal{S}}| \cdot  |\Delta^{in\mathcal{S}}|$ complexity. 
Thus,  Algorithm~\ref{algorithm:rearchabilityRelation} can be executed in time  $\mathcal{O}(|Q^3|+|\Delta^{out\mathcal{S}}| \cdot  |\Delta^{in\mathcal{S}}|)$.
\end{proof}

Together with the reachability relation the functions $\Gamma_{\mathcal{M}}$ and $\Gamma_{\mathcal{A}_{\neg \phi}}$ are computed both in the case of sub-property $S$ and $S_p$. For each tuple $(\delta_i, \delta_o) \in K$  these functions specify whether there exists a run that connect the outgoing and the incoming transitions that contains an intersection state made by an accepting state of the model ($\Gamma_\mathcal{M}$) and an intersection state made by an accepting state of the property ($\Gamma_{\mathcal{A}_{\neg \phi}}$), respectively. 
These functions specify the developer whether the presence of an accepting state in the replacement may lead to a violating run in the cases in which fairness conditions are considered.

A \emph{constraint} $C$ for a black box state $b$ is a tuple $\langle S, S_p \rangle$ which contains the sub-properties obtained as previously described.
Furthermore, a variable $\flag$ is associated with the value $T$ if there exists in the intersection automaton $\mathcal{I}_{\Upsilon}$  a  possible violating run which does not involve any state of the intersection automaton generated by the box $b \in B_\mathcal{M}$, $F$ otherwise.

\begin{mydef}[Constraint]
\label{def:constraint} Given the cleaned intersection automaton $\mathcal{I}_{\Upsilon}$ of the intersection automaton $\mathcal{I}=\mathcal{M} \cap \mathcal{A}_{\neg \phi}$ obtained from the IBA $\mathcal{M}$ and the BA $\mathcal{A}_{\neg \phi}$, the constraint $\mathcal{C}$ is made by a tuple  $\langle S, S_p \rangle$ and a value $\mathcal{U}$, such that
\begin{itemize}
\item $\flag=T \Leftrightarrow$  there exists an accepting run $\rho^\omega$ in $\mathcal{I}_{cl}$ such that for all $i \geq 0, \rho^\omega(i) \not \in Q_{\mathcal{P}}$
\end{itemize}

\end{mydef}

 The value of the function $\flag$  can be computed by running the emptiness checking procedure on the automaton $\mathcal{I}_{cl}$ by removing the portion of the state space containing states obtained from the box $b$ of the model.

\section{Replacement checking}
\label{sec:replacementChecking}
At each refinement round $i \in \mathcal{R}\mathcal{R}$, the developer produces a replacement $\mathcal{R}_b$ for a box  $b$ and wants to check if $\phi$ is satisfied by the new design. 
%The verification aims at checking whether the automaton $\mathcal{N}$, obtained by plugging the replacement $\mathcal{R}_b$ of the box $b$ into $$\model$$, satisfies, does not satisfy or possibly satisfies $\phi$.
%Whenever a replacement $\mathcal{R}_b$ for a box $b$ is proposed,
Two procedures can be employed. 
The \emph{refinement checking} procedure generates the refinement $\mathcal{N}$ as specified in Definition~\ref{def:plugginarefinement} and  checks  $\mathcal{N}$ against $\phi$. Roughly speaking, this means that the system would be verified from scratch at each refinement round.
The \emph{replacement checking} considers $\mathcal{R}_b$ against the previously generated constraint $\mathcal{C}$. 
In this case, the replacement is verified autonomously.

The refinement checking problem can  be  formulated as follows:
\begin{mydef} [Refinement Checking] Given a refinement round  $i \in \mathcal{R}\mathcal{R}$, where the developer refines the box $b$ of $\mathcal{M}$ through the replacement $\mathcal{R}_b$, the refinement checking problem is to compute whether the refined automaton $\mathcal{N}$, obtained by composing the replacement $\mathcal{R}_b$ of the box $b$ and the model $\model$, definitely satisfies, does not satisfy or possibly satisfies property $\phi$.
\end{mydef}

For example, assume that the box $send_1$ of the model $\model$, presented in Figure~\ref{Fig:IFSAExample}, is refined using the replacement $\mathcal{R}$ described in Figure~\ref{Fig:send1Refinement}. The replacement  $\mathcal{R}$, after it is entered through the incoming transition labeled with $start$, reaches the state $q_{14}$. Then,  the message is sent and the system moves from $q_{14}$ to $q_{15}$. After the sending activity, the system waits for a notification by moving to the state $q_{16}$. The state $q_{16}$ is a box, meaning that it still has to be refined. 
\begin{figure*}[ht]
    \centering
    \subfloat[The replacement $\mathcal{R}$ of the box $send_1$.]
     { \includegraphics[scale=0.5]{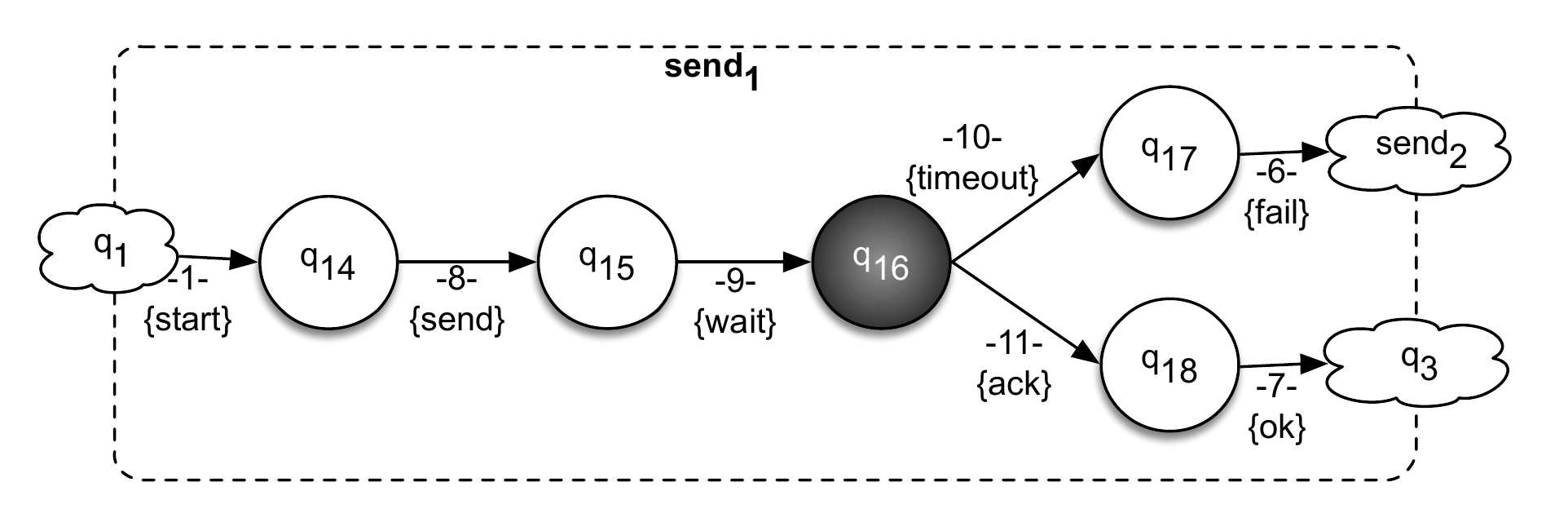}
   \label{Fig:send1Refinement}}\\
   \vspace{0.5pt}
    \subfloat[The refinement $\mathcal{N}$ obtained by plugging the replacement $\mathcal{R}$ into the model $\model$.]{
          \includegraphics[scale=0.5]{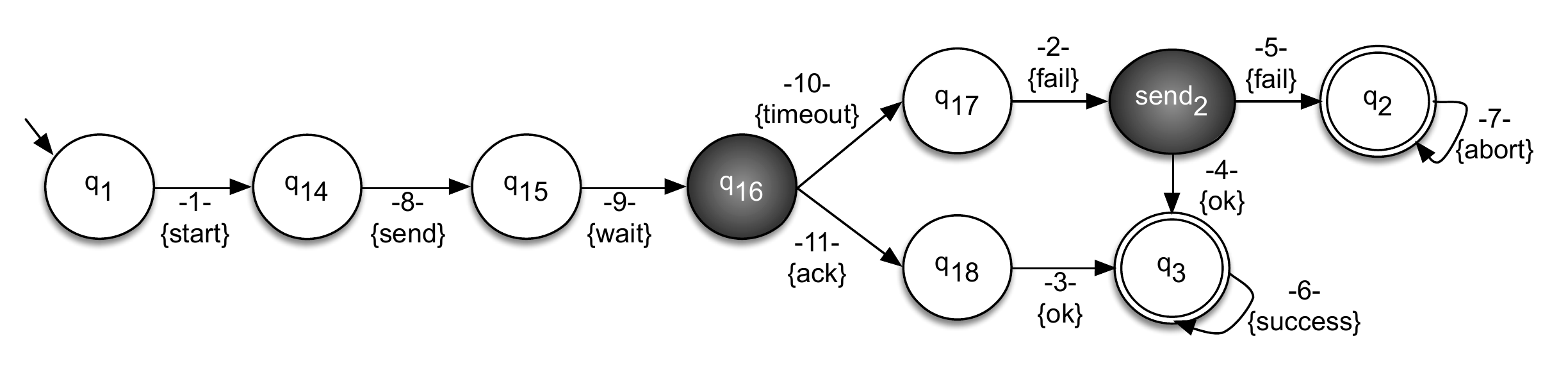}
              \label{fig:refinementModelSendMessage}}
    \caption{The replacement  $\mathcal{R}$ of the box $send_1$ of the model $\model$ presented in Figure~\ref{Fig:IFSAExample} and the refinement  $\mathcal{N}$ obtained by plugging the replacement $\mathcal{R}$ into the model $\model$.}
\end{figure*}
When the replacement of $q_{16}$ is left, two cases are possible: a $timeout$ event occurs and the system moves to the state $q_{17}$ or an $ack$ message is received which makes the system moving to $q_{18}$. In the first case, the replacement is left through the transition labeled with $fail$ which moves the system to the state $send_2$. In the second, the replacement is left through the transition labeled with $ok$ which moves the system to the state $q_3$. The replacement checking problem is the problem of verifying whether the refined model $\mathcal{N}$ (the initial model $\model$ plus the replacement of the box) satisfies the original property $\phi$. The refined model $\mathcal{N}$ is described in Figure~\ref{fig:refinementModelSendMessage}.

Whenever a replacement for a black box state $b$ is proposed, the idea is to not consider all the model from scratch, by generating its refinement $\mathcal{N}$, but to only consider the replacement $\mathcal{R}_b$ against the previously generated constraint $\mathcal{C}$. 
For example, the replacement of the box $send_1$ can be considered in relation to the sub-property $\mathcal{S}$ specified in Figure~\ref{fig:subpropertysend1}.
The sub-property specifies that any \emph{finite} path that crosses the box $send_1$, entering the box by means of the incoming transition $1$ (which arrives from the state $q_1$) and leaving the replacement through a transition marked with $fail$  (which reaches the state $send_2$) is a  possibly violating run. 
The run is possibly violating since if we do not satisfy the constraint we cannot claim that the property is not satisfied, since the violation depends on the replacement proposed for the other boxes.
Similarly, the possibly violating runs also include any \emph{finite} run entering from the transition which arrives from the state $q_1$ in which a $send$ is not followed by a $success$ before leaving the replacement through a transition marked with $fail$ and with destination the state $send_2$.

Checking whether a replacement satisfies a sub-property can be reduced to two emptiness checking problems. The first emptiness checking procedure considers an automaton which encodes the set of behaviors the system is going to exhibit at run-time (an \emph{under} approximation), and checks whether the property $\phi$ is violated. The second analyzes an automaton which also contains the behaviors the system may exhibit (an \emph{over} approximation). The under and the over approximation automaton are generated starting from a common intersection automaton. Section~\ref{sec:SubpropertyReplacementChecking} describes the intersection between a replacement and the corresponding sub-property and how the under and the over approximation are obtained from this intersection. Section~\ref{sec:replacementCheckingProcedure} presents the replacement checking procedure.

\subsection{Intersection between a sub-property and replacement}
\label{sec:SubpropertyReplacementChecking}
The basic version of the intersection automaton between a replacement $\mathcal{R}$  and the sub-property $\mathcal{S}$ of  the box $b$, from which the under and over approximation are computed, is described in Definition~\ref{def:subpropReplIntersection}.

\begin{mydef}[Intersection between a sub-property and a replacement]
\label{def:subpropReplIntersection} 
Given the sub-property $\mathcal{S}=\langle  \mathcal{P}, \Delta^{in\mathcal{S}},$ $ \Delta^{out\mathcal{S}}, G, R, K \rangle$, associated with the box $b$, and the replacement $\mathcal{R}=\langle \mathcal{T}, \Delta^{inR},$ $ \Delta^{outR} \rangle$, the intersection $\mathcal{I}=\mathcal{S} \cap R$ is a tuple  $\langle \mathcal{E}, \Delta^{in\mathcal{E}},$ $ \Delta^{out\mathcal{E}}, R, G \rangle$  such as:
\begin{itemize}
\item $\mathcal{E}$ is the intersection automaton. $\mathcal{E}$ is obtained as specified in Definition~\ref{def:intersection}, considering $ \mathcal{T}$ and $\mathcal{P}$ as model and claim, respectively;
\item $\Delta^{{in\mathcal{E}}} =\{ (q, a, \langle q^\prime, p, x \rangle ) \mid$  $(q, a, q^\prime) \in \Delta^{inR}$, $ (q, a, p) \in \Delta^{in\mathcal{S}}$ and $x \in \left \{ 0,1,2 \right \}    \}   $;
\item $\Delta^{out\mathcal{E}} = \{ ( \langle q^\prime, p, x \rangle, a, q) \mid$  $(q^\prime, a, q) \in \Delta^{outR}$,  $(p , a, q) \in \Delta^{out\mathcal{S}} $ and $ x \in \left \{ 0,1,2 \right \}    \}   $;
\item $G \subseteq \Delta^{{in\mathcal{E}}}$ contains the transitions of $\Delta^{{in\mathcal{E}}}$ obtained from a $G$ transition of $\mathcal{S}$;
\item $R \subseteq \Delta^{{out\mathcal{E}}}$ contains the transitions of $\Delta^{{out\mathcal{E}}}$ obtained from a $R$ transition of $\mathcal{S}$.
\end{itemize}
\end{mydef}

Informally, the intersection between a replacement $\mathcal{R}$ and the  sub-property $\mathcal{S}$ is an automaton which is obtained by the intersection of the automata associated with $\mathcal{R}$ and $\mathcal{S}$ and a set of incoming and outgoing transitions that corresponds to the synchronous execution of the incoming/outgoing transitions of $\mathcal{R}$ and $\mathcal{S}$. 
For example, the intersection between the replacement $\mathcal{R}$ described in Figure~\ref{Fig:send1Refinement} and the corresponding sub-property presented in Figure~\ref{fig:subpropertysend1} is presented in Figure~\ref{Fig:IntersectionSubPropertyReplacement}\footnote{Note that Figure~\ref{Fig:IntersectionSubPropertyReplacement} only contains the portion of the state space where $x=0$.}.
The set $G$ ($R$) contains the incoming (outgoing) transitions of the intersection obtained from a $G$ ($R$) incoming (outgoing) transition of the sub-property $\mathcal{S}$.

\begin{figure}[t]
\centering
\includegraphics[scale=0.49]{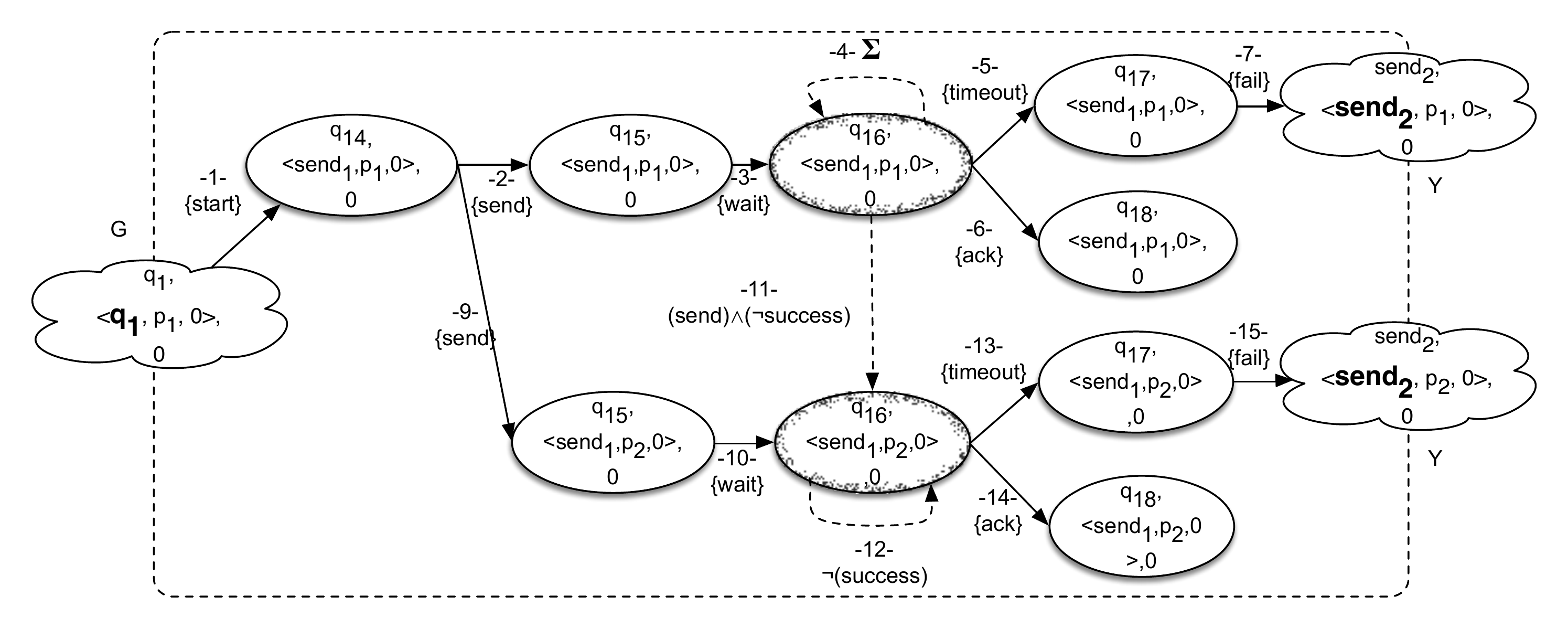}
\caption{Intersection between the replacement $\mathcal{R}$ described in Figure~\ref{Fig:send1Refinement} and the sub-property $\mathcal{S}_p$ presented in Figure~\ref{fig:subpropertysend1}.}
\label{Fig:IntersectionSubPropertyReplacement}
\end{figure}

We define as $\mathcal{E}_c$ the intersection obtained considering the completeness $\mathcal{T}_{c}$ of automaton associated with the replacement $\mathcal{R}$.
The intersection between  the sub-property  $\mathcal{S}=\langle  \mathcal{P}, \Delta^{in\mathcal{S}},$ $ \Delta^{out\mathcal{S}}, G, R, K \rangle$ associated to the box $b$ and the replacement $\mathcal{R}=\langle \mathcal{T}, \Delta^{inR},$ $ \Delta^{outR} \rangle$ has the same structure of a replacement (Defined in~\ref{replacement}), i.e., it contains an automata and a set of incoming and outgoing transitions, and it can be associated with finite internal, infinite internal, finite external and infinite external accepting runs as defined in Section~\ref{Sec:ModelingReplacements}. 

\begin{lemma}[Finite internal intersection language]
\label{replacement:finiteIntersectionLanguage}
 The intersection automaton $\mathcal{I}=\mathcal{R} \cap \mathcal{S}$ between the replacement $\mathcal{R}$ and the sub-property $\mathcal{S}$  recognizes the finite internal language $\mathcal{L}^{i\ast}(\mathcal{I})=(\mathcal{L}^{i\ast}(\mathcal{R}) \cup \mathcal{L}_p^{i\ast}(\mathcal{R})) \cap \mathcal{L}^\ast( \mathcal{S})$, i.e., $v \in \mathcal{L}^{i\ast}(\mathcal{I}) \Leftrightarrow v \in ((\mathcal{L}^{i\ast}(\mathcal{R}) \cup \mathcal{L}_p^{i\ast}(\mathcal{R})) \cap \mathcal{L}^{i\ast}( \mathcal{S}))$. 
\end{lemma}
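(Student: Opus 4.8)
The plan is to treat this as the finite-word, \emph{internal} analogue of Lemma~\ref{lem:intersection}, exploiting the fact that, by Definition~\ref{def:subpropReplIntersection}, the intersection $\mathcal{I}=\mathcal{R}\cap\mathcal{S}$ itself has the shape of a replacement: it consists of the automaton $\mathcal{E}$ (built from $\mathcal{T}$ and $\mathcal{P}$ exactly as in Definition~\ref{def:intersection}, with $\mathcal{T}$ playing the role of the model and $\mathcal{P}$ of the claim) together with incoming and outgoing transitions $\Delta^{in\mathcal{E}}$ and $\Delta^{out\mathcal{E}}$. Hence $\mathcal{L}^{i\ast}(\mathcal{I})$ is, by the definition of finite internal run, the finite language of the finite-state automaton obtained from $\mathcal{E}$ by taking its internal initial states as initial and the destinations of $\Delta^{out\mathcal{E}}$ as final. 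I would prove the two inclusions separately; each is a projection/synchronisation argument on runs of this finite-state automaton.

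For the ($\Rightarrow$) direction, I would take a finite internal accepting run $\rho$ of $\mathcal{I}$ on $v$, i.e.\ a run that starts in an internal initial state of $\mathcal{E}$ and ends in the destination of an outgoing transition in $\Delta^{out\mathcal{E}}$. Since every state of $\mathcal{E}$ is a triple $\langle q_{\mathcal{T}}, p_{\mathcal{P}}, x\rangle$ and, by Definition~\ref{def:intersection}, every transition of $\mathcal{E}$ projects either onto a synchronous pair of transitions of $\mathcal{T}$ and $\mathcal{P}$ (the $\Delta^c$ case) or onto a transition of $\mathcal{P}$ while $\mathcal{T}$ rests inside a box (the $\Delta^p$ case), I would read off from $\rho$ a run $\rho_{\mathcal{R}}$ over $v$ in $\mathcal{T}$ augmented with $\Delta^{outR}$, and a run $\rho_{\mathcal{S}}$ over $v$ in $\mathcal{P}$ augmented with $\Delta^{out\mathcal{S}}$. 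The $\Delta^c$ steps give genuine transitions of $\mathcal{T}$ and the $\Delta^p$ steps correspond to a box of $\mathcal{T}$ absorbing the character, so $\rho_{\mathcal{R}}$ is a finite internal run of $\mathcal{R}$ that is definitely accepting if it never passes through a box and possibly accepting otherwise; this is precisely why the left-hand side is the \emph{union} $\mathcal{L}^{i\ast}(\mathcal{R})\cup\mathcal{L}_p^{i\ast}(\mathcal{R})$. Because $\Delta^{out\mathcal{E}}$ is the synchronous execution of $\Delta^{outR}$ and $\Delta^{out\mathcal{S}}$, both projected runs end at the destination of an outgoing transition while reading the same final character, so $v$ lies in both finite internal languages, giving $v\in(\mathcal{L}^{i\ast}(\mathcal{R})\cup\mathcal{L}_p^{i\ast}(\mathcal{R}))\cap\mathcal{L}^{i\ast}(\mathcal{S})$.

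For the ($\Leftarrow$) direction, I would start from a finite internal run $\rho_{\mathcal{R}}$ of $\mathcal{R}$ (definitely or possibly accepting) and a finite internal run $\rho_{\mathcal{S}}$ of $\mathcal{S}$, both over the same word $v$ and therefore of the same length, and build the paired run $\rho(k)=\langle\rho_{\mathcal{R}}(k),\rho_{\mathcal{S}}(k),x_k\rangle$ of $\mathcal{E}$. At each step I would check that the move is a legal transition of $\mathcal{E}$: when $\rho_{\mathcal{R}}$ takes a real transition of $\mathcal{T}$ this is a $\Delta^c$ transition, and when $\rho_{\mathcal{R}}$ stays inside a box of $\mathcal{T}$ while $\rho_{\mathcal{S}}$ moves this is a $\Delta^p$ transition, which is well defined because $\mathcal{P}$ is an ordinary BA with a genuine transition for that character. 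Matching initial states and the common final outgoing transition then exhibits $\rho$ as a finite internal accepting run of $\mathcal{I}$, so $v\in\mathcal{L}^{i\ast}(\mathcal{I})$.

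The main obstacle I anticipate is the careful bookkeeping around the internal boxes $B_{\mathcal{T}}$ of the replacement: these are what make the $\Delta^p$ transitions appear, force the distinction between definitely and possibly accepting internal runs, and hence account for the union on the right-hand side. I would need to verify precisely that a box of $\mathcal{T}$ being used to absorb a character in $\mathcal{I}$ corresponds bijectively to a possibly accepting step of $\rho_{\mathcal{R}}$ synchronised with a genuine step of $\rho_{\mathcal{S}}$, and that the \emph{internal} acceptance condition --- beginning at an internal initial state rather than from an incoming transition --- is preserved in both directions; the remaining synchronisation is routine and identical in spirit to the argument already used for Lemma~\ref{lem:intersection}.
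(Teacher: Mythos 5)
Your proposal is correct and follows essentially the same route as the paper's proof: for ($\Rightarrow$) you project the finite internal run of $\mathcal{I}$ onto runs of $\mathcal{R}$ and $\mathcal{S}$ using the $\Delta^c$/$\Delta^p$ case split (boxes of $\mathcal{T}$ absorbing characters, explaining the union with $\mathcal{L}_p^{i\ast}(\mathcal{R})$), and for ($\Leftarrow$) you build the synchronous product run state by state, which is exactly the construction the paper carries out (merely wrapped there in a superficial proof-by-contradiction framing). The bookkeeping concerns you flag (internal initial states, synchronized final outgoing transitions) are glossed over at the same level of detail in the paper's own argument.
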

\begin{proof} ($\Rightarrow$) If a finite word $v \in \mathcal{L}^{i\ast}(\mathcal{I})$, it must exists a finite internal run $\rho$ in the intersection automaton $\mathcal{I}$ where the initial state is an initial state of the intersection automaton and the final state is the destination of the outgoing transitions of $\mathcal{I}$. Since $\rho(0)$ must be an initial state of the intersection automaton, it must be obtained from an initial state of the automaton $\mathcal{R}$ associated with the replacement of the black box $b$. Let us identify with $\rho_{\mathcal{R}}(0)$ the state of the replacement $\mathcal{R}$ from which $\rho(0)$ is obtained. 
For each $i$, such that $0 \leq i < |\rho|$, for each transition ($\rho(i), a, \rho(i+1)$) that moves the system from the state $\rho(i)$ to the state $\rho(i+1)$, it must exist a transition ($\rho_{\mathcal{R}}(i), a, \rho_{\mathcal{R}}(i+1)$) in the automaton $\mathcal{R}$, which corresponds to the replacement of the box $b$, or $\rho_{\mathcal{R}}(i)=\rho_{\mathcal{R}}(i+1)$ and $\rho_{\mathcal{R}}(i)$ is a box of the replacement of $b$\footnote{This follows from the definition of the intersection (see Definition~\ref{def:intersection}).}. 
Since the last state of the run  $\rho(|v|)$ is the destination of an outgoing transition of $\mathcal{I}$, and the outgoing transitions of the intersection are obtained by synchronously executing transitions of the replacement and the sub-property, the  transition must also be outgoing for the replacement.
This implies that $\rho_{\mathcal{R}}(i)$ is a finite internal run (accepting or possibly accepting depending on the presence of boxes) for the replacement $\mathcal{R}$.
The same reasoning can be applied to demonstrate that $v$ is contained in the language $\mathcal{L}^\ast( \mathcal{S})$.\\
($\Leftarrow$) The proof is by contradiction. Assume that there exists a word $v \not \in \mathcal{L}^{i\ast}(\mathcal{I})$ which is in $((\mathcal{L}^{i\ast}(\mathcal{R}) \cup \mathcal{L}_p^{i\ast}(\mathcal{R})) \cap \mathcal{L}^\ast( \mathcal{S}))$. 
Since $v  \in ((\mathcal{L}^{i\ast}(\mathcal{R}) \cup \mathcal{L}_p^{i\ast}(\mathcal{R})) \cap \mathcal{L}^\ast( \mathcal{S}))$, it must exist a finite run $\rho_{\mathcal{R}}$ in the replacement and in the sub-property $\rho_{\mathcal{S}}$ associated with $v$. 
Given the  initial states $\rho_{\mathcal{R}}(0)$ and $\rho_{\mathcal{S}}(0)$ of the model and the claim, respectively, from which the initial state of the run is obtained, it must exist by construction a state $s$ in the intersection automaton $\mathcal{I}$ which is obtained by combining these two states. 
This state by construction is also initial for the intersection automaton. 
Let us identify with $\rho_{\mathcal{I}}$ the run that starts from this state. For each $0<i<|v|-1$, $\rho_{\mathcal{I}}(i+1)$ is associated to the state of the intersection automaton obtained by combining $\rho_{\mathcal{R}}(i+1)$ and $\rho_{\mathcal{S}}(i+1)$. 
Note that if $\rho_{\mathcal{R}}(i)$ and $\rho_{\mathcal{S}}(i)$ are connected to $\rho_{\mathcal{R}}(i+1)$ and $\rho_{\mathcal{S}}(i+1)$ with a transition labeled with $v_i$, or $\rho_{\mathcal{S}}(i)$ is connected to $\rho_{\mathcal{S}}(i+1)$ and $\rho_{\mathcal{R}}(i)$ is a box, then  $\rho_{\mathcal{I}}(i)$ and  $\rho_{\mathcal{I}}(i+1)$  are connected by a transition labeled with  $v_i$ by construction. 
Finally, the states $\rho_{\mathcal{R}}(|v|)$ and $\rho_{\overline{\mathcal{S}}}(|v|)$ are the destinations  of the outgoing transitions of $\mathcal{R}$ and $\mathcal{S}$ by construction. 
Indeed, it must exist an outgoing transition of the intersection automaton that corresponds to the synchronous execution of the outgoing transitions of  $\mathcal{R}$ and $\mathcal{S}$. 
Thus, the run $\rho_{\mathcal{I}}$ is a finite accepting run for the intersection automaton and $v \in \mathcal{L}^{i\ast}(\mathcal{I})$ that contradicts the hypothesis.
\end{proof}

\begin{lemma}[Finite external intersection language]
\label{lem:RepIntersectionFELanguage}
 The intersection automaton $\mathcal{I}=\mathcal{R} \cap \mathcal{S}$ between the replacement $\mathcal{R}$ and the sub-property $\mathcal{S}$  recognizes the finite external language $\mathcal{L}^{e\ast}(\mathcal{I})=(\mathcal{L}^{e\ast}(\mathcal{R}) \cup \mathcal{L}_p^{e\ast}(\mathcal{R})) \cap \mathcal{L}^{e\ast}( \mathcal{S})$, i.e., $v \in \mathcal{L}^{e\ast}(\mathcal{I}) \Leftrightarrow v \in ((\mathcal{L}^{e\ast}(\mathcal{R}) \cup \mathcal{L}_p^{e\ast}(\mathcal{R})) \cap \mathcal{L}^{e\ast}( \mathcal{S}))$. 
\end{lemma}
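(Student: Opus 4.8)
The plan is to mirror the proof of Lemma~\ref{replacement:finiteIntersectionLanguage} almost verbatim, the only structural change being that a finite \emph{external} run enters the replacement through an incoming transition rather than starting from an internal initial state. Consequently both the first and the last step of the run must now be matched against transitions of $\Delta^{in\mathcal{E}}$ and $\Delta^{out\mathcal{E}}$, whereas the internal case only had to align an initial state. As before I would prove the two inclusions separately.

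For ($\Rightarrow$), suppose $v \in \mathcal{L}^{e\ast}(\mathcal{I})$, so there is a finite external run $\rho$ of the intersection. By Definition~\ref{def:finiteExternalRun} applied to $\mathcal{E}$, $\rho$ begins at the source of an incoming transition $(q, v_0, \langle q^\prime, p, x \rangle) \in \Delta^{in\mathcal{E}}$ and ends at the target of an outgoing transition $(\langle q^{\prime\prime}, p^{\prime\prime}, x^{\prime\prime} \rangle, v_{|v|-1}, q^{\prime\prime\prime}) \in \Delta^{out\mathcal{E}}$. I would project $\rho$ onto its two coordinates, writing $\rho_{\mathcal{R}}$ for the sequence of first components (states of $\mathcal{T}$) and $\rho_{\mathcal{S}}$ for the sequence of second components (states of $\mathcal{P}$). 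By Definition~\ref{def:subpropReplIntersection} the boundary transitions decompose, the incoming one into $(q, v_0, q^\prime) \in \Delta^{inR}$ and $(q, v_0, p) \in \Delta^{in\mathcal{S}}$ and symmetrically for the outgoing one; every internal step of $\rho$ lies in $\Delta_{\mathcal{E}}^c$ or $\Delta_{\mathcal{E}}^p$ and hence, by Definition~\ref{def:intersection}, projects either to genuine transitions in both $\mathcal{T}$ and $\mathcal{P}$, or to a step in which $\mathcal{T}$ remains in a box while $\mathcal{P}$ advances. Thus $\rho_{\mathcal{R}}$ is a finite external run of $\mathcal{R}$ (definitely accepting if no box of $\mathcal{T}$ is crossed, possibly accepting otherwise) and $\rho_{\mathcal{S}}$ is a finite external run of $\mathcal{S}$ (always definitely accepting, since $\mathcal{P}$ is an ordinary BA with no boxes). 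This yields $v \in (\mathcal{L}^{e\ast}(\mathcal{R}) \cup \mathcal{L}_p^{e\ast}(\mathcal{R})) \cap \mathcal{L}^{e\ast}(\mathcal{S})$.

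For ($\Leftarrow$) I would argue by contradiction exactly as in Lemma~\ref{replacement:finiteIntersectionLanguage}: assume $v \not\in \mathcal{L}^{e\ast}(\mathcal{I})$ yet $v$ lies in the right-hand side, so finite external runs $\rho_{\mathcal{R}}$ and $\rho_{\mathcal{S}}$ exist over $v$. I would build the product run $\rho_{\mathcal{I}}$ step by step, taking $\rho_{\mathcal{I}}(i)$ to be the intersection state combining $\rho_{\mathcal{R}}(i)$ and $\rho_{\mathcal{S}}(i)$ with the appropriate counter component from Definition~\ref{def:intersection}, and checking that each step is a legal transition of $\mathcal{E}$: synchronous model/claim steps give transitions in $\Delta_{\mathcal{E}}^c$, while a step in which $\rho_{\mathcal{R}}$ stays in a box of $\mathcal{T}$ and $\rho_{\mathcal{S}}$ moves gives a transition in $\Delta_{\mathcal{E}}^p$. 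The initial and final steps require that the incoming transitions used by $\rho_{\mathcal{R}}$ and $\rho_{\mathcal{S}}$ combine into a single element of $\Delta^{in\mathcal{E}}$, and likewise for the outgoing transitions. The resulting $\rho_{\mathcal{I}}$ is then a finite external accepting run of $\mathcal{I}$, giving $v \in \mathcal{L}^{e\ast}(\mathcal{I})$ and the desired contradiction.

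The main obstacle is precisely the synchronisation of the boundary transitions in the ($\Leftarrow$) direction. In the internal case the run simply started from a shared initial state, but here I must verify that the incoming transition used by the replacement run and the incoming transition used by the sub-property run originate from the \emph{same} source state of $\model$ under the \emph{same} label, so that Definition~\ref{def:subpropReplIntersection} genuinely supplies a product incoming transition; identical care is needed at the outgoing end. I expect this to reduce to routine bookkeeping once one observes that both $\Delta^{inR}$ and $\Delta^{in\mathcal{S}}$ (respectively $\Delta^{outR}$ and $\Delta^{out\mathcal{S}}$) are by construction restrictions of the incoming (outgoing) transitions of the original model at the box $b$, so a common source state and label are always available. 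Nevertheless it is the one point that must be argued explicitly rather than inherited from the proof of the finite internal case.
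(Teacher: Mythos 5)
Your overall strategy is the same as the paper's: the paper proves Lemma~\ref{replacement:finiteIntersectionLanguage} in detail and then disposes of Lemma~\ref{lem:RepIntersectionFELanguage} with a single sentence stating that it ``can be easily derived'' from that proof, which is exactly the mirroring you propose. Your ($\Rightarrow$) direction is sound and more explicit than anything in the paper: projecting the external run of $\mathcal{E}$ onto its two coordinates works, and the boundary transitions do decompose by Definition~\ref{def:subpropReplIntersection} into a transition of $\Delta^{inR}$ and one of $\Delta^{in\mathcal{S}}$ with a shared source and label, so both projections are finite external runs in the sense of Definition~\ref{def:finiteExternalRun}.

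The gap is in the resolution you offer for the difficulty you yourself isolate in the ($\Leftarrow$) direction. There you need the incoming transition $(q_1, v_0, t_1) \in \Delta^{inR}$ used by the replacement run and the incoming transition $(q_2, v_0, p_1) \in \Delta^{in\mathcal{S}}$ used by the sub-property run to have the \emph{same} source state, because $\Delta^{in\mathcal{E}}$ only contains triples built from a pair with a common source and label; the same holds at the outgoing end. Your justification --- both sets are restrictions of the incoming transitions of $\model$ at the box $b$, ``so a common source state and label are always available'' --- establishes only that \emph{some} matching-source transitions exist (indeed, Definition~\ref{replacement} forces the replacement to have an incoming transition from $q_2$ labelled $v_0$); it does not establish that the \emph{accepting runs over $v$} can be re-based at such a transition. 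If the replacement state reached from $q_2$ under $v_0$ does not recognize the remainder of $v$, no product run over $v$ exists: one can set up a crossing situation in which $\mathcal{R}$ accepts $v$ only entering from $q_1$ and $\mathcal{S}$ accepts $v$ only entering from $q_2$, so that $v$ lies in the right-hand side of the lemma but $\mathcal{L}^{e\ast}(\mathcal{I})$ contains no run over $v$. So this step requires a genuine argument (or a reformulation in which externally accepted words are indexed by their entry and exit states of $\model$), not routine bookkeeping; as written, the contradiction in your ($\Leftarrow$) direction is never reached. In fairness, the paper's one-sentence proof glosses over precisely the same point, so your attempt is no weaker than the original --- but the hole you flagged is real and your proposed patch does not close it.
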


\begin{lemma}[Infinite internal intersection language]
\label{lem:ReIntersectionIILanguage}
 The intersection automaton $\mathcal{I}=\mathcal{R} \cap \mathcal{S}$ between the replacement $\mathcal{R}$ and the sub-property $\mathcal{S}$  recognizes the infinite internal language $\mathcal{L}^{i\omega}(\mathcal{I})=(\mathcal{L}^{i\omega}(\mathcal{R}) \cup \mathcal{L}_p^{i\omega}(\mathcal{R})) \cap \mathcal{L}^{i\omega}( \mathcal{S})$, i.e., $v^\omega \in \mathcal{L}^{i\omega}(\mathcal{I}) \Leftrightarrow v \in ((\mathcal{L}^{i\omega}(\mathcal{R}) \cup \mathcal{L}_p^{i\omega}(\mathcal{R})) \cap \mathcal{L}^{i\omega}( \mathcal{S}))$. 
\end{lemma}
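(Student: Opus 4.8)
The plan is to mirror the proof of Lemma~\ref{replacement:finiteIntersectionLanguage}, replacing the reasoning about reaching the destination of an outgoing transition with a B\"uchi-style argument about accepting states being entered infinitely often. As there, I would prove the two directions separately, handling ($\Leftarrow$) by contradiction. The construction of $\mathcal{E}$ is, by Definition~\ref{def:subpropReplIntersection}, exactly the intersection of Definition~\ref{def:intersection} with the replacement automaton $\mathcal{T}$ playing the role of the model and the sub-property automaton $\mathcal{P}$ the role of the claim, so each state of $\mathcal{I}$ has the form $\langle q, p, x\rangle$ with $x \in \{0,1,2\}$, and this decomposition drives the whole argument.

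For the ($\Rightarrow$) direction I would start from an infinite \emph{internal} accepting run $\rho_{\mathcal{I}}$ of $\mathcal{I}$ over $v^\omega$, i.e.\ a run of the IBA whose initial and accepting states are the internal ones of $\mathcal{E}$, and project it onto its replacement component $\rho_{\mathcal{R}}$ (over $\mathcal{T}$) and its sub-property component $\rho_{\mathcal{S}}$ (over $\mathcal{P}$). Each step of $\rho_{\mathcal{I}}$ lies in $\Delta_{\mathcal{E}}^c$, projecting to genuine transitions of both $\mathcal{T}$ and $\mathcal{P}$, or in $\Delta_{\mathcal{E}}^p$, in which case $\rho_{\mathcal{R}}$ idles inside a box of $\mathcal{T}$ while $\rho_{\mathcal{S}}$ fires a transition of $\mathcal{P}$; in either case $\rho_{\mathcal{S}}$ is a legal run of $\mathcal{P}$ and $\rho_{\mathcal{R}}$ a legal run of the IBA associated with the replacement. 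Since $\rho_{\mathcal{I}}(0)$ is obtained from internal initial states and $\rho_{\mathcal{I}}$ uses no outgoing transition, both projections are internal runs that never leave the replacement. This yields $v \in \mathcal{L}^{i\omega}(\mathcal{S})$ and $v \in \mathcal{L}^{i\omega}(\mathcal{R}) \cup \mathcal{L}_p^{i\omega}(\mathcal{R})$, where the union separates the case in which $\rho_{\mathcal{R}}$ stays purely regular (definitely accepted) from the case in which a $\Delta_{\mathcal{E}}^p$-transition forces a box to be traversed (possibly accepted).

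For the ($\Leftarrow$) direction, assuming $v \in (\mathcal{L}^{i\omega}(\mathcal{R}) \cup \mathcal{L}_p^{i\omega}(\mathcal{R})) \cap \mathcal{L}^{i\omega}(\mathcal{S})$ and, for contradiction, $v^\omega \not\in \mathcal{L}^{i\omega}(\mathcal{I})$, I would take witnessing infinite internal runs $\rho_{\mathcal{R}}$ and $\rho_{\mathcal{S}}$ and assemble the synchronous run whose $i$-th state is $\langle \rho_{\mathcal{R}}(i), \rho_{\mathcal{S}}(i), x_i\rangle$, with the counter $x_i$ updated exactly as prescribed in Definition~\ref{def:intersection}. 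Its initial state is internal and initial by construction; each step is a $\Delta_{\mathcal{E}}^c$-transition when both components move and a $\Delta_{\mathcal{E}}^p$-transition when $\rho_{\mathcal{R}}$ stays in a box, so this is a well-defined internal run over $v^\omega$. Showing it is accepting then contradicts the assumption and closes the argument.

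The only real departure from Lemma~\ref{replacement:finiteIntersectionLanguage} is the acceptance bookkeeping, which I expect to be the main obstacle. The accepting states of $\mathcal{E}$ are $F_{\mathcal{T}} \times F_{\mathcal{P}} \times \{2\}$, so internal acceptance demands such a state infinitely often. I would invoke the standard three-valued counter argument: $x$ cycles $0 \to 1 \to 2 \to 0$, advancing to $1$ on an accepting state of $\mathcal{T}$, to $2$ on a subsequent accepting state of $\mathcal{P}$, and resetting thereafter, so $x$ reaches $2$ infinitely often if and only if $\rho_{\mathcal{R}}$ visits $F_{\mathcal{T}}$ and $\rho_{\mathcal{S}}$ visits $F_{\mathcal{P}}$ infinitely often. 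In ($\Rightarrow$) this lifts the acceptance of $\rho_{\mathcal{I}}$ to both projections; in ($\Leftarrow$) it lifts the acceptance of the two witnesses back to the assembled run, yielding the contradiction. Since this is precisely the fairness mechanism already established in Lemma~\ref{lem:intersection}, I would cite it rather than re-derive it, and finally remark that the proof of the infinite external case differs only in that the initial states are the sources of incoming transitions rather than internal initial states.
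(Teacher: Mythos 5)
Your proposal is correct and follows essentially the route the paper takes: the paper's own proof simply states that the result "can be easily derived from the proof of Lemma~\ref{replacement:finiteIntersectionLanguage}," and your plan is exactly that derivation, with the finite-run endpoint argument replaced by the $0\to 1\to 2\to 0$ counter mechanism already established in Lemma~\ref{lem:intersection}. Your write-up is in fact more detailed than the paper's, and the projection/assembly argument together with the citation of the counter argument is sound.
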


\begin{lemma}[Infinite external intersection language]
\label{lem:RepIntersectionIELanguage}
 The intersection automaton $\mathcal{I}=\mathcal{R} \cap \mathcal{S}$ between the replacement $\mathcal{R}$ and the sub-property $\mathcal{S}$  recognizes the infinite external language $\mathcal{L}^{e\omega}(\mathcal{I})=(\mathcal{L}^{e\omega}(\mathcal{R}) \cup \mathcal{L}_p^{e\omega}(\mathcal{R})) $ $\cap \mathcal{L}^{e\omega}( \mathcal{S})$, i.e., $v^\omega \in \mathcal{L}^{e\omega}(\mathcal{I}) \Leftrightarrow v \in ((\mathcal{L}^{e\omega}(\mathcal{R}) \cup \mathcal{L}_p^{e\omega}(\mathcal{R})) \cap \mathcal{L}^{e\omega}( \mathcal{S}))$. 
\end{lemma}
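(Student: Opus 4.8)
The plan is to establish the biconditional by proving the two inclusions separately, following the template already used for Lemma~\ref{replacement:finiteIntersectionLanguage}, but adapting the argument from finite internal runs to infinite external runs. Two substantive changes are needed: (i) the runs now begin from an incoming transition rather than from an internal initial state, so the relevant starting states are the sources collected in $Q^{0inR}$ and the first step of every run is drawn from $\Delta^{in\mathcal{E}}$; and (ii) the acceptance condition is now the B\"uchi condition (an accepting state internal to the replacement visited infinitely often, without ever firing an outgoing transition) rather than simply reaching an outgoing transition. Since $\mathcal{E}$ is built exactly as in Definition~\ref{def:intersection} with $\mathcal{T}$ in the role of the model and $\mathcal{P}$ in the role of the claim, the $\{0,1,2\}$ counter is available to synchronise the two independent B\"uchi conditions into the single acceptance condition $F_{\mathcal{E}}$.

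($\Rightarrow$) Given $v^\omega \in \mathcal{L}^{e\omega}(\mathcal{I})$, I would take an infinite external accepting run $\rho$ of $\mathcal{E}$: its first step uses a transition of $\Delta^{in\mathcal{E}}$, it never fires a transition of $\Delta^{out\mathcal{E}}$, and it visits $F_{\mathcal{E}}$ infinitely often. Projecting each state $\langle q, p, x \rangle$ onto its first and second components yields sequences $\rho_{\mathcal{R}}$ and $\rho_{\mathcal{S}}$. Because every transition of $\mathcal{E}$ is either synchronous (in $\Delta_{\mathcal{I}}^c$) or a box transition (in $\Delta_{\mathcal{I}}^p$, where the $\mathcal{T}$-component is a box that stays fixed), each projection is a legal infinite external run, starting from a source in $Q^{0inR}$ and reading the same characters. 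Whether $\rho_{\mathcal{R}}$ passes through a box of the replacement determines whether $v^\omega \in \mathcal{L}^{e\omega}(\mathcal{R})$ or $v^\omega \in \mathcal{L}_p^{e\omega}(\mathcal{R})$, which is exactly why the union appears on the right-hand side. The acceptance transfer is the crucial point: infinitely many visits to $F_{\mathcal{E}}$ force the counter to complete the cycle $0 \to 1 \to 2 \to 0$ infinitely often, which in turn forces both an accepting state of $\mathcal{T}$ and an accepting state of $\mathcal{P}$ to recur infinitely; hence both projections are B\"uchi-accepting and $v^\omega \in (\mathcal{L}^{e\omega}(\mathcal{R}) \cup \mathcal{L}_p^{e\omega}(\mathcal{R})) \cap \mathcal{L}^{e\omega}(\mathcal{S})$.

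($\Leftarrow$) I would argue by contradiction exactly as in Lemma~\ref{replacement:finiteIntersectionLanguage}: assuming $v^\omega \not\in \mathcal{L}^{e\omega}(\mathcal{I})$ yet $v^\omega$ lies in the right-hand set, take witnessing infinite external runs $\rho_{\mathcal{R}}$ and $\rho_{\mathcal{S}}$. Both enter through incoming transitions carrying $v_0$ from a common source state (the predecessor of $b$ in $\mathcal{M}$ from which $\Delta^{inR}$ and $\Delta^{in\mathcal{S}}$ are derived), so by the definition of $\Delta^{in\mathcal{E}}$ their combination is a legitimate initial step of $\mathcal{E}$. Synchronising the two runs position by position, using a $\Delta_{\mathcal{I}}^p$-transition whenever the replacement sits inside a box, produces a run $\rho_{\mathcal{I}}$ of $\mathcal{E}$; the counter construction recombines the two B\"uchi conditions, so $\rho_{\mathcal{I}}$ is an accepting infinite external run and $v^\omega \in \mathcal{L}^{e\omega}(\mathcal{I})$, contradicting the assumption.

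The main obstacle will be the acceptance bookkeeping rather than the structural simulation of transitions, which is essentially identical to the finite case. In Lemma~\ref{replacement:finiteIntersectionLanguage} acceptance was a one-shot event (arriving at an outgoing transition), whereas here I must verify that the $\{0,1,2\}$ counter faithfully interleaves the two conditions so that $F_{\mathcal{E}}$ recurs infinitely \emph{exactly} when both $F_{\mathcal{T}}$ and $F_{\mathcal{P}}$ recur. I also need to confirm that the ``without leaving the replacement'' clause of an infinite external run is preserved under both projection and synchronisation: an accepting infinite external run of $\mathcal{E}$ never fires a transition of $\Delta^{out\mathcal{E}}$, and since the outgoing transitions of $\mathcal{E}$ are precisely the synchronous executions of the outgoing transitions of $\mathcal{R}$ and $\mathcal{S}$, the two projected runs likewise remain internal, so the external-infinite character is maintained on both sides.
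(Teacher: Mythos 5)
Your proposal is correct and takes essentially the same route as the paper: the paper's entire proof of this lemma is the remark that it ``can be easily derived from the proof of Lemma~\ref{replacement:finiteIntersectionLanguage}'', and your argument is precisely that derivation, adapted to external runs (starting from a source of an incoming transition in $Q^{0inR}$ via $\Delta^{in\mathcal{E}}$) and to B\"uchi acceptance handled through the $\{0,1,2\}$ counter of Definition~\ref{def:intersection}. The projection/synchronisation of runs and the acceptance bookkeeping you spell out are exactly the details the paper leaves implicit.
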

\begin{proof}
The proofs of Lemmas~\ref{lem:RepIntersectionFELanguage},~\ref{lem:ReIntersectionIILanguage} and~\ref{lem:RepIntersectionIELanguage} can be easily derived from the proof of Lemma~\ref{replacement:finiteIntersectionLanguage}.
\end{proof}

%----------------------------------------------------------------------------------------------------------------------------------------------------------------
% UNDER APPROXIMATION
%----------------------------------------------------------------------------------------------------------------------------------------------------------------
The \emph{under approximation} automaton is used by an emptiness checking procedure to verify whether the claim \emph{is not} satisfied, i.e., it encodes the behaviors that violate the property of interest. The automaton is computed exploiting the information contained in the sub-property $\mathcal{S}$.

\begin{mydef}[Under approximation automaton]
\label{def:extendedsubpropReplIntersection} 
Given the sub-property \subproperty associated to the box $b$, the replacement \replacement, two additional automata states $g$ and $r$, the under approximation automaton $\mathcal{U}$ is the automaton obtained from $\mathcal{E}_c$ as follows:
\begin{itemize}
\item $\Sigma_{\mathcal{U}}=\Sigma_{\mathcal{E}_c}$;
\item $Q_{\mathcal{U}}=Q_{\mathcal{E}_c} \cup \left \{ g, r\right \} $;
\item $\Delta_{\mathcal{U}}=\Delta_{\underapproximation} \cup \Delta^{in}_{\underapproximation} \cup \Delta^{out}_{\underapproximation} \cup \Delta^{K}_{\underapproximation} \cup \Delta^{stut}_{\underapproximation}$, where 
\begin{itemize}
\item $\Delta^{stut}_{\underapproximation} = \{ (r, stut, r)   \} $;
\item $ \Delta^{in}_{\underapproximation}= \{ (g, a, s^\prime)  \mid (s, a, s^\prime) \in G  \} $;
\item $ \Delta^{out}_{\underapproximation}= \{ (s, a, r)  \mid (s, a, s^\prime) \in R  \} $;
\item $\Delta^{K}_{\underapproximation}= \{ (\langle q, p, x \rangle, \epsilon, \langle q^{\prime}, p^{\prime}, y \rangle)  \mid   ((q, a, q^{\prime\prime}), (q^{\prime\prime\prime}, b,  q^{\prime})) \in K   \}$. Moreover, the values $x$ and $y$ associated with $\delta_o=(q, a, q^{\prime\prime}),$ and $\delta_i=(q^{\prime\prime\prime}, b, q^{\prime})$  must satisfy the following conditions:
\begin{itemize}
\item if $K_{\mathcal{M}}(\delta_o, \delta_i)=T$ and $K_{\mathcal{A}_{\neg \phi}}(\delta_o, \delta_i)=T$ then $y=2$;
\item else if $x=1$ and  $K_{\mathcal{A}_{\neg \phi}}(\delta_o, \delta_i)=T$ or $p^{\prime} \in F_{\mathcal{P}} $  then $y=2$;
\item else if $x=0$ and  $K_\mathcal{M}(\delta_o, \delta_i)=T$ and $p^{\prime} \in F_{\mathcal{P}} $  then $y=2$;
\item else if $x=0$ and  $K_\mathcal{M}(\delta_o, \delta_i)=T$ or $q^{\prime} \in  F_{\mathcal{M}} $  then $y=1$;
\item else if $x=2$ then $y=0$;
\item else $y=x$.
\end{itemize}
\end{itemize} 
\item $Q^0_{\mathcal{U}}=Q^0_{\underapproximation} \cup \left \{ g \right \}$;
\item $F_{\mathcal{U}}=F_{\underapproximation} \cup \left \{ r \right \}$.
\end{itemize}
\end{mydef}
The completion of the extended intersection automaton contains all the behaviors of the intersection automaton that violate the claim $\phi$ plus additional transitions which specify how these behaviors are related to each others. 
The state  $g$ is used as a placeholder to represent the initial states of the system and the transitions in $\Delta^{in}_{\underapproximation} $ specify how the states of the intersection are reachable from the initial states. 
Similarly, the  state $r$ and the transition in $\Delta^{stut}_{\underapproximation}$ are used as  placeholders for a suffix of a run that does not involve the replacement of boxes and violates the claim. 
The transitions in $\Delta^{out}_{\underapproximation}$ specify how it is possible to reach these violating runs from the intersection between the sub-property and the refinement.
Finally, the transitions in $\Delta^{K}_{\underapproximation}$ specify how the violating behaviors of the intersection automaton between the replacement and the sub-property (which are portions of the intersection automaton between the refinement and the property) influence each other. 
Note that, as done in the computation of the classical intersection automaton, it is necessary to compute the  value of $y$ in the intersection state  $\langle q^{\prime}, p^{\prime}, y \rangle$. 
The value of $y$ depends on the presence of accepting states of the refinement and the property over the runs made by  purely regular states that connect the outgoing to the incoming transitions of the replacement, i.e., on the functions $\Gamma_{\mathcal{M}}$, $\Gamma_{\mathcal{A}_{\neg\phi}}$. 
More precisely, $y$ is identified as follows: 
whenever there exists both an accepting state of the model and of the claim in the original intersection automaton in a run made by purely regular states that connects the outgoing transition $\delta_o$ and the incoming transition $\delta_i$, the value of $y$ is $2$ to force the presence of an accepting state in the run.
 Similarly, if the value of $x$ is equal to $1$ and there exists a run  in the intersection automaton that connects $\delta_o$ to $\delta_i$ which traverses an accepting state of \claimautomata, the value of $y$ is set to $2$ to force the presence of an accepting state in the run. Finally, $y=2$ also if the value of $x$ is equal to $0$ there exists a run in the intersection automaton that connects $\delta_o$ to $\delta_i$ which traverses an accepting state of the model $\model$ and the destination state is accepting for the sub-property. Otherwise, if the value of $x$ is equal to $0$ and there exists a run in the intersection automaton that connects $\delta_o$ to $\delta_i$ which traverses an accepting state of the model $\model$ or the destination state is an accepting state of the replacement \replacementtext\ then $y=1$. If $x=2$, then $y=0$. In the other cases $y=x$.

\begin{figure}[ht]
\centering
\includegraphics[scale=0.5]{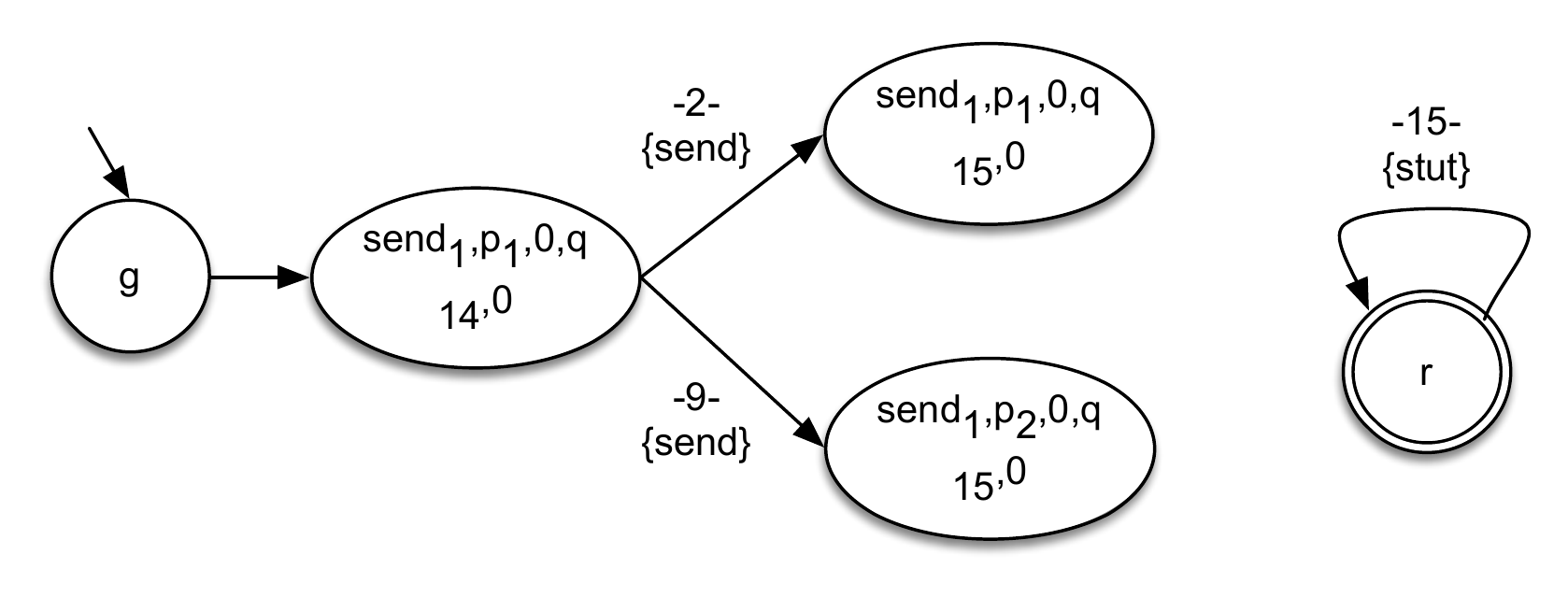}
\caption{The under approximation of the intersection described in Figure~\ref{Fig:IntersectionSubPropertyReplacement}.}
\label{Fig:IntersectionSubPropertyReplacementUnderApproximation}
\end{figure}

For example, the under approximation of the intersection automaton described in Figure~\ref{Fig:IntersectionSubPropertyReplacement} is presented in Figure~\ref{Fig:IntersectionSubPropertyReplacementUnderApproximation}. 
Note that, the reachability relation does not cause the injection of any transition in the intersection automaton. 
Furthermore, the accepting state marked with $r$ is not reachable. 

The \emph{over} approximation of the intersection automaton is similar to the under approximation, but the sub-property \subpropertypossiblytext\ is considered in its computation. This because the over approximation is the  automaton to be used by the emptiness checking procedure to verify the existence of possibly accepting behaviors.

\begin{mydef}[Over approximation automaton]
\label{def:extendedsubpropReplIntersectionNotCompletion} 
The over approximation automaton \overapproximation\ is obtained as the under approximation considering  sub-property \subpropertypossiblytext\ instead of \subpropertytext\
\end{mydef}

\begin{figure}[ht]
\centering
\includegraphics[scale=0.48]{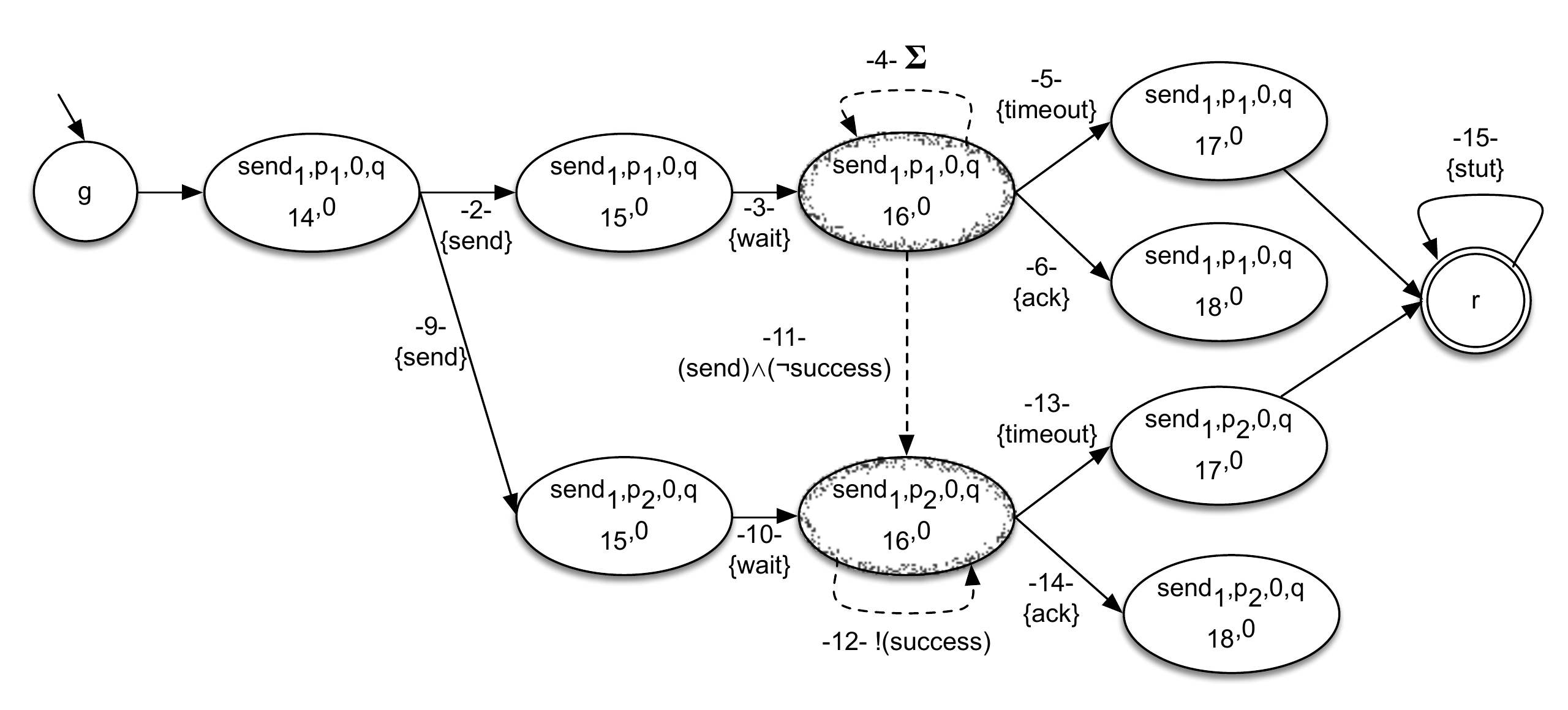}
\caption{The over approximation of the intersection automaton described in Figure~\ref{Fig:IntersectionSubPropertyReplacement}.}
\label{Fig:IntersectionSubPropertyReplacementOverApproximation}
\end{figure}
The over approximation automaton contains all the behaviors of the intersection automaton that violate and \emph{possibly} violate the claim $\phi$, and
 includes all the runs that connect  an incoming transition marked as $G$ with an outgoing marked as $R$. 
These runs may include transitions of the intersection automaton (which can be also generated by other boxes of the model), or transitions of the reachability  graph, which are used to abstract runs of the intersection between $\model$ and \claimautomata\ made by only purely regular and mixed states.

A portion of the over approximation of the intersection automaton described in Figure~\ref{Fig:IntersectionSubPropertyReplacement} is presented in Figure~\ref{Fig:IntersectionSubPropertyReplacementOverApproximation}. Note that, the state $r$ describes the presence of a suffix of a possibly violating behavior reachable in the intersection automaton.

\subsection{The model checking procedure}
\label{sec:replacementCheckingProcedure}
The replacement checking procedure, given the constraint \constrainttext\ and a replacement \replacementtext, checks whether \replacementtext\ definitely satisfies, possibly satisfies or does not satisfy \constrainttext. The replacement \replacementtext\ \emph{does not satisfy} the constraint \constrainttext\ if and only if the under approximation of the intersection automaton obtained considering the sub-property \subpropertytext\  associated to the box $b$ and the replacement \replacementtext\ is not empty. 
The replacement \replacementtext\ \emph{possibly satisfies} the constraint \constrainttext\ if and only if the over approximation of the intersection automaton is not empty or $\flag=T$, otherwise, the constraint is satisfied. Formally,

\begin{mydef} [Replacement checking]
\label{def:constraintsatisfaction}
Given the constraint \constraintf\ associated to the box $b$, and the replacement \replacement,
\begin{enumerate}
\item  \label{def:constraintsatisfactionFalse} $\| \mathcal{R}^{\mathcal{C}} \|=  F     \Leftrightarrow      \mathcal{L}(\underapproximation) \neq \emptyset$;   
\item  \label{def:subpropertysatisfactionTrue}   $\| \mathcal{R}^{\mathcal{C}}  \|=  T     \Leftrightarrow   \mathcal{L}($\overapproximation$)=\emptyset $ and $\flag=F$;
\item  $\| \mathcal{R}^{\mathcal{C}} \|=  \bot     \Leftrightarrow       \|  \mathcal{R}^{\mathcal{C}}  \| \not =  F \wedge  \|  \mathcal{R}^{\mathcal{C}} \| \not =  T$.
\end{enumerate}
where $\underapproximation$ and \overapproximation\ are the automata obtained as specified in Definitions~\ref{def:extendedsubpropReplIntersection} and \ref{def:extendedsubpropReplIntersectionNotCompletion}.
\end{mydef}

The idea behind the model checking procedure proposed in this section is to reduce the model checking problem to a cycle detection problem. A similar idea has been used, for example,  in the model checking of Hierarchical Kripke Structures~\cite{Alur:2001:MCH:503502.503503}.
To demonstrate the correctness of our definition, we prove that checking a replacement \replacementtext\ versus its constraint \constrainttext\ corresponds to checking the refined automaton $\refinement$ against the property $\propertytext$.

\begin{theorem}[Replacement checking correctness]
\label{th:constraintSatisfactionIsCorrect}
Given a model $\model$, a property $\propertytext$, a replacement \replacementtext\ for a box $b$ and the constraint \constrainttext\ obtained as previously described:
\begin{enumerate}
\item \label{def:correctenesssubpropertysatisfactionFalse} $\| \mathcal{R}^{\mathcal{C}} \|=  F      \Leftrightarrow      \| \mathcal{N}^{\mathcal{A}_{\neg \phi}} \|=  F   $;
\item  \label{def:correctenesssubpropertysatisfactionTrue} $\| \mathcal{R}^{\mathcal{C}}  \|=  T    \Leftrightarrow        \| \mathcal{N}^{\mathcal{A}_{\neg \phi}} \|=  T  $; 
\item \label{def:correctenesssubpropertysatisfactionMaybe} $ \| \mathcal{R}^{\mathcal{C}} \|=  \bot       \Leftrightarrow    \| \mathcal{N}^{\mathcal{A}_{\neg \phi}} \|=  \bot$.
\end{enumerate}
\end{theorem}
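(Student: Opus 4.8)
The plan is to reduce each of the three equivalences to the two characterizations already available: the incomplete model checking characterization of $\| \mathcal{N}^{\mathcal{A}_{\neg\phi}} \|$ (Definition~\ref{th:incompleteModelChecking}) and the replacement checking characterization of $\| \mathcal{R}^{\mathcal{C}} \|$ (Definition~\ref{def:constraintsatisfaction}). Concretely, I would prove two emptiness correspondences and let everything else follow formally:
(A) $\mathcal{L}(\mathcal{U}) \neq \emptyset \Leftrightarrow \mathcal{N}_c \cap \mathcal{A}_{\neg\phi} \neq \emptyset$, and
(B) $(\mathcal{L}(\mathcal{O}) = \emptyset \text{ and } \mathcal{Y}=F) \Leftrightarrow \mathcal{N} \cap \mathcal{A}_{\neg\phi} = \emptyset$.
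Granting (A), condition~\ref{def:correctenesssubpropertysatisfactionFalse} is immediate from the two $F$-characterizations. For condition~\ref{def:correctenesssubpropertysatisfactionTrue}, note that $\mathcal{L}^\omega(\mathcal{N}_c)$ is contained in $\mathcal{L}^\omega(\mathcal{N}) \cup \mathcal{L}^\omega_p(\mathcal{N})$ (Lemmas~\ref{th:languageMc} and~\ref{lem:intersection}), so $\mathcal{N} \cap \mathcal{A}_{\neg\phi}=\emptyset$ already forces $\mathcal{N}_c \cap \mathcal{A}_{\neg\phi}=\emptyset$; hence $\| \mathcal{N}^{\mathcal{A}_{\neg\phi}} \|=T$ collapses to $\mathcal{N} \cap \mathcal{A}_{\neg\phi}=\emptyset$, which by (B) is exactly $\| \mathcal{R}^{\mathcal{C}} \|=T$. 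Condition~\ref{def:correctenesssubpropertysatisfactionMaybe} then follows by elimination, since the three values are mutually exclusive and exhaustive on both sides.

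To prove (A), I would set up an \emph{expand/collapse} correspondence between accepting runs of $\mathcal{U}$ and accepting runs of $\mathcal{N}_c \cap \mathcal{A}_{\neg\phi}$. The decisive structural fact is that, since $\phi$ was only possibly satisfied by $\mathcal{M}$, the completion $\mathcal{M}_c$ contained no violating run, so every definitely violating run of $\mathcal{N}_c \cap \mathcal{A}_{\neg\phi}$ must traverse the refined region contributed by $\mathcal{R}$. Such a run decomposes into the internal/external, finite/infinite segments inside $\mathcal{R}$ classified in Section~\ref{Sec:ModelingReplacements}, glued by portions that live entirely in the purely regular part of $\mathcal{I}=\mathcal{M}\cap\mathcal{A}_{\neg\phi}$. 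Lemmas~\ref{replacement:finiteIntersectionLanguage}--\ref{lem:RepIntersectionIELanguage} identify each internal/external segment with a run of $\mathcal{E}_c$; the placeholder $g$ with $\Delta^{in}_{\mathcal{U}}$ over $G$-marked ports reconstructs a prefix that reaches the box through purely regular states; the placeholder $r$ with $\Delta^{stut}_{\mathcal{U}}$ over $R$-marked ports reconstructs an accepting regular suffix; and the $\epsilon$-edges $\Delta^{K}_{\mathcal{U}}$ abstract the connecting purely regular runs guaranteed by the reachability relation $K$ (Definition~\ref{def:portsReachabilityGraph}). Theorem~\ref{th:subpropertyCorrecteness} and the coloring of Definition~\ref{def:coloring} guarantee that these abstracted pieces correspond to genuine concrete runs, so the two directions of (A) are exactly the expansion, respectively the collapse, of the abstracted segments.

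Proving (B) is analogous but uses the over approximation $\mathcal{O}$ built from $\mathcal{S}_p$, whose $G$, $R$ and $K$ are computed allowing traversal of mixed states not generated by $b$ (Definitions~\ref{def:coloring} and~\ref{def:portsReachabilityGraphUpperBound}). A possible violation of $\mathcal{N}$ either crosses box $b$, which the same correspondence maps to $\mathcal{L}(\mathcal{O}) \neq \emptyset$, or avoids $b$ entirely, in which case it is a possibly accepting run of $\mathcal{I}_{cl}$ disjoint from $Q_{\mathcal{P}}$ and is recorded precisely by $\mathcal{Y}=T$ (Definition~\ref{def:constraint}). Thus $\mathcal{N}\cap\mathcal{A}_{\neg\phi}\neq\emptyset \Leftrightarrow \mathcal{L}(\mathcal{O})\neq\emptyset \text{ or } \mathcal{Y}=T$, which contraposes to (B). I would argue both inclusions by the expand/collapse scheme used for (A), taking care that the $\{0,1,2\}$ counter and the case analysis on $y$ in $\Delta^{K}_{\mathcal{U}}$ use $\Gamma_{\mathcal{M}}$ and $\Gamma_{\mathcal{A}_{\neg\phi}}$ to record whether accepting states of the model and of the claim are met along the abstracted connecting runs.

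The main obstacle I anticipate is the acceptance bookkeeping for infinite runs that alternate indefinitely between the replacement region and the abstracted exterior. I must show that the counter values propagated through $\Delta^{K}_{\mathcal{U}}$ — which collapse an entire exterior excursion into a single $\epsilon$-edge — reach the accepting placeholder $r$, respectively an accepting intersection state, infinitely often if and only if the fully expanded run in $\mathcal{N}\cap\mathcal{A}_{\neg\phi}$ visits $F_{\mathcal{N}} \times F_{\mathcal{A}_{\neg\phi}}$ infinitely often. This amounts to checking that every branch of the $y$-assignment in Definition~\ref{def:extendedsubpropReplIntersection} mirrors the progression of the standard intersection counter of Definition~\ref{def:intersection} along the concrete exterior run encoded by $\Gamma_{\mathcal{M}}$ and $\Gamma_{\mathcal{A}_{\neg\phi}}$. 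This case-by-case verification, rather than the overall run decomposition, is where the real work lies, and it is also the step most sensitive to subtle off-by-one errors in when an accepting state is deemed ``traversed'' along an abstracted segment.
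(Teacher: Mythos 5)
Your proposal is correct and follows essentially the same route as the paper's proof: both establish a run correspondence between accepting runs of $\mathcal{U}$ (resp.\ $\mathcal{O}$, together with the flag $\mathcal{Y}$ covering possible violations that avoid the box) and definitely violating (resp.\ possibly violating) runs of the intersection of $\mathcal{N}$ with $\mathcal{A}_{\neg\phi}$, expanding the placeholders $g$, $r$ and the $\epsilon$-edges of $\Delta^{K}_{\mathcal{U}}$ into concrete purely regular runs and, conversely, collapsing a violating run of the refinement---which must cross the replacement region, since $\mathcal{M}$ only possibly satisfied $\phi$---onto the approximation automaton. Your packaging is tidier (explicit emptiness lemmas (A)/(B), the reduction of the $T$-case to $\mathcal{N}\cap\mathcal{A}_{\neg\phi}=\emptyset$, and the third condition by elimination), and the acceptance-bookkeeping issue across the $\epsilon$-edges that you flag as the remaining real work is likewise asserted ``by construction'' rather than verified in the paper's own proof.
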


\begin{proof} Let us starts by proving condition~\ref{def:correctenesssubpropertysatisfactionFalse}. 

($\Rightarrow$) 
If $ \| \mathcal{R}^{\mathcal{C}} \|=  F$ by Definition~\ref{def:constraintsatisfaction}, Condition~\ref{def:constraintsatisfactionFalse}, it must exist a word $v$ accepted by the automaton $\underapproximation$.  
Let us consider the run $\rho_{\underapproximation}^\omega$ associated with $v$. 
We want to generate a run $\rho^\omega_{\mathcal{I}}$ in the intersection automaton  $\mathcal{I}$ between the refinement $\refinement$ and the claim \claimautomata\ which corresponds to $\rho_{\underapproximation}^\omega$. 
Let us consider the initial state $\rho_{\underapproximation}^\omega(0)$ of the run $\rho^\omega_{\underapproximation}$. 
Two cases are possible: 
\begin{inparaenum}[\itshape a\upshape)]
\item $\rho_{\underapproximation}^\omega(0)$ is obtained by combining an initial state $p$ of the automaton $\mathcal{P}$ of the sub-property $\mathcal{S}$ and an initial state $q$ of the replacement $\mathcal{R}$.  
Note that the initial state $p$ of the sub-property was obtained by combining an initial state $p^\prime$ of the property \claimautomata\ with a state $q^\prime$ of $\model$ which must be both initials. 
Furthemore, $q^\prime$ must be a box from construction (see Definitions~\ref{def:intersection} and \ref{def:subidentification}).  
Since the refinement  $\mathcal{N}$ contains all the states of the replacement $\mathcal{R}$, and an initial state of $\mathcal{R}$ is also initial for $\mathcal{N}$, it is possible to associate to  $\rho^\omega_{\mathcal{I}}(0)$ the state $\langle  q^\prime, p, 0\rangle$ of $\mathcal{I}$.
\item $\rho_{\underapproximation}^\omega(0)$ corresponds to the state $g$.  
Consider a transition $\delta \in  \Delta^{in}_{\underapproximation}$ that starts from the $g$ state. 
The transition $\delta$ is obtained by combining a transition $\delta^{inR} \in \Delta^{{inR}}$ with a transition $ \delta^{in\mathcal{S}} \in \Delta^{in\mathcal{S}}$, which is in turn obtained by combining a transition $\delta_{\mathcal{M}} \in \Delta_{\mathcal{M}}$ and a transition $\delta_{\mathcal{A}_{\neg \phi}} \in \Delta_{\mathcal{A}_{\neg \phi}}$. 
Let us consider the source states $q_{\mathcal{M}}$ and $p_{\mathcal{M}}$ of the transitions $\delta_{\mathcal{M}}$ and $\delta_{\mathcal{A}_{\neg \phi}}$ it is possible to replicate the run that reaches these states in the intersection automaton $\mathcal{I}$ since by definition plugging a replacement (Definition~\ref{def:plugginarefinement}) does not modify behaviors in which  only regular states are involved. 
Furthermore, the transition obtained from $\delta^{inR}$ and $\delta^{in\mathcal{S}}$ can be associated with the transition of the intersection automaton obtained combining $\delta^{inR}$ and $\delta_{\mathcal{A}_{\neg \phi}}$.
\end{inparaenum}
Let us now consider the other states of the run. Each state of the under approximation automaton can be rewritten as $\langle q_{\mathcal{R}}, \langle b, p, x \rangle, y\rangle$ since it is obtained by combining a state of the sub-property, which has the form $\langle b, p, x\rangle$, with a state $q_{\mathcal{R}}$ of the replacement. 
Each of these states can be associated with the state $\langle q_{\mathcal{R}}, p, y\rangle$ of the intersection between the replacement and the sub-property. 
Similarly, each transition $\delta_{\mathcal{I}} \in \Delta_{\mathcal{I}}$ of the intersection between the replacement and the sub-property is obtained by firing a transition of the replacement and an internal transition of the sub-property which corresponds to a transition of the original property, i.e., in $\Delta_{\mathcal{A}_{\neg \phi}}$, and a transition of the replacement, i.e.,   in $\Delta_{\mathcal{T}}$. 
Thus, the same transition can be identified in the intersection automaton obtained considering $\mathcal{N}$ and $\mathcal{A}_{\neg \phi}$. 
Let us finally consider a transition $(\rho_{\underapproximation}(i), a, \rho_{\underapproximation}(i+1)) \in (\Delta^{K}_{\underapproximation} \cup \Delta^{stut}_{\underapproximation})$ from construction (see Definitions~\ref{def:portsReachabilityGraph}) it must exists a sequence of transitions in the automaton $\mathcal{I}$ obtained from $\mathcal{N}$ and $\mathcal{A}_{\neg \phi}$ that connects only purely regular states and reach an accepting state that can be entered infinitely often and corresponds to this transition. 

($\Leftarrow$) 
The proof is by contradiction. 
Let us assume that  $ \| \mathcal{R}^{\mathcal{C}} \| \not =  F$ and $\| \mathcal{N}^{\mathcal{A}_{\neg \phi}} \|=  F $. 
Since $\| \mathcal{N}^{\mathcal{A}_{\neg \phi}} \|=  F $, it must exist a word $v$ accepted by the automaton $\mathcal{I}$ obtained from $\mathcal{N}$ and $\mathcal{A}_{\neg \phi}$. 
Since this run must be accepting, it must involve only purely regular states of $\mathcal{I}$. 
However, since $v$ was not accepted by the intersection obtained from $\model$ and $\mathcal{A}_{\neg \phi}$, some of these states must obviously be obtained by combining states of $\mathcal{R}$ and of $\mathcal{A}_{\neg \phi}$. 
This implies the presence of an accepting run in the automaton $\underapproximation$, which may connect the state ``$g$" with the state ``$r$" or another accepting state of $\underapproximation$ that can be entered infinitely often. 
Thus, $ \| \mathcal{R}^{\mathcal{C}} \| \not =  F$ is contradicted.

Let us now consider Condition~\ref{def:correctenesssubpropertysatisfactionTrue} of Theorem~\ref{th:constraintSatisfactionIsCorrect}. 
The proof corresponds to the one proposed for~\ref{def:correctenesssubpropertysatisfactionFalse}, but,  in this case,  sub-property $S_p$ is considered. Furthermore, if the flag  $\flag=T$, the sub-property is possible satisfied. Indeed, in this case, there exists a possibly accepting run in the intersection between the model $\model$ and the property $\mathcal{A}_{\neg \phi}$ that does not depend on the refinement of $b$. Thus, $\phi$ is possibly satisfied since the same run will be present in the intersection between  $\mathcal{N}$ and the property $\mathcal{A}_{\neg \phi}$.

The proof of condition~\ref{def:correctenesssubpropertysatisfactionMaybe}  of Theorem~\ref{th:constraintSatisfactionIsCorrect} follows from the proofs of conditions~\ref{def:correctenesssubpropertysatisfactionFalse} and~\ref{def:correctenesssubpropertysatisfactionTrue}.
\end{proof}

\begin{theorem}[Checking a replacement complexity]
\label{th:modelcheckingreplacement}
The complexity of the model checking procedure depends on the size of the automata $\mathcal{U}$  and $\mathcal{O}$, which in the worst case is $\mathcal{O}(|Q_{\mathcal{R}}| \cdot |Q_{\mathcal{P}}|+|\Delta_{\mathcal{R}}| \cdot |\Delta_{\mathcal{P}}|+|\Delta^{{inR}}| \cdot |\Delta^{in\mathcal{S}}|+|\Delta^{outR}| \cdot |\Delta^{out\mathcal{S}}|+(|\Delta^{out\mathcal{S}}| \cdot  |\Delta^{in\mathcal{S}}|)\cdot(|\Delta^{outR}| \cdot  |\Delta^{inR}|))$.
\end{theorem}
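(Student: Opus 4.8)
The plan is to reduce the statement to two ingredients. First, by Definition~\ref{def:constraintsatisfaction} the replacement checking procedure decides $\| \mathcal{R}^{\mathcal{C}} \|$ solely by testing the emptiness of the under approximation $\underapproximation$ and of the over approximation \overapproximation\ (together with reading the precomputed flag $\flag$). Second, emptiness of a B\"uchi automaton is decidable in time linear in its size, $\mathcal{O}(|Q|+|\Delta|)$, e.g.\ by locating a reachable accepting state that lies on a non-trivial strongly connected component; the $\epsilon$-labelled transitions in $\Delta^{K}_{\underapproximation}$ cause no difficulty, as they act as ordinary edges for reachability and for the strongly connected component analysis. Since Definition~\ref{def:extendedsubpropReplIntersectionNotCompletion} builds \overapproximation\ exactly like $\underapproximation$ but from $\mathcal{S}_p$ instead of $\mathcal{S}$, the two automata are asymptotically equal in size, so it suffices to bound $|Q_{\underapproximation}|+|\Delta_{\underapproximation}|$ and match it term by term to the claimed expression.

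First I would bound the number of states. By Definition~\ref{def:extendedsubpropReplIntersection}, $Q_{\underapproximation}=Q_{\mathcal{E}_c}\cup\{g,r\}$, where $\mathcal{E}_c$ is the intersection of $\mathcal{T}_c$ and $\mathcal{P}$ built as in Definition~\ref{def:intersection}. By Proposition~\ref{sec:sizeOfTheIntersectionAutomaton} this intersection has $O(|Q_{\mathcal{R}}|\cdot|Q_{\mathcal{P}}|)$ states, and the two placeholders $g$ and $r$ are absorbed into the asymptotic bound; this yields the summand $|Q_{\mathcal{R}}|\cdot|Q_{\mathcal{P}}|$.

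Next I would count the transition sets that compose $\Delta_{\underapproximation}$ and attach each one to a summand. The product transitions inherited from $\mathcal{E}_c$ number $O(|\Delta_{\mathcal{R}}|\cdot|\Delta_{\mathcal{P}}|)$, the second term. The transitions of $\Delta^{in}_{\underapproximation}$ leaving $g$ are obtained from $G\subseteq\Delta^{in\mathcal{E}}$, and by Definition~\ref{def:subpropReplIntersection} each element of $\Delta^{in\mathcal{E}}$ pairs one transition of $\Delta^{inR}$ with one of $\Delta^{in\mathcal{S}}$, so there are $O(|\Delta^{inR}|\cdot|\Delta^{in\mathcal{S}}|)$ of them; symmetrically the transitions of $\Delta^{out}_{\underapproximation}$ entering $r$ come from $R\subseteq\Delta^{out\mathcal{E}}$ and number $O(|\Delta^{outR}|\cdot|\Delta^{out\mathcal{S}}|)$. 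These provide the third and fourth summands, while the single stuttering transition $(r, stut, r)$ of $\Delta^{stut}_{\underapproximation}$ is a constant that is absorbed.

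The main obstacle, and the summand I would treat most carefully, is $\Delta^{K}_{\underapproximation}$. By Propositions~\ref{prop:lowerReachabilityDimensition} and~\ref{prop:upperReachabilityDimensition} the reachability relation $K$ has at most $|\Delta^{out\mathcal{S}}|\cdot|\Delta^{in\mathcal{S}}|$ entries, yet each abstract entry relates an outgoing and an incoming transition of the sub-property and must be realised between the concrete states of $\mathcal{E}_c$. Because in the intersection every outgoing transition of $\mathcal{S}$ is paired with each of the $|\Delta^{outR}|$ outgoing transitions of the replacement, and every incoming transition of $\mathcal{S}$ with each of the $|\Delta^{inR}|$ incoming transitions, a single pair of $K$ can induce up to $|\Delta^{outR}|\cdot|\Delta^{inR}|$ transitions in $\underapproximation$. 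Hence $|\Delta^{K}_{\underapproximation}|=O((|\Delta^{out\mathcal{S}}|\cdot|\Delta^{in\mathcal{S}}|)\cdot(|\Delta^{outR}|\cdot|\Delta^{inR}|))$, the last and dominant summand. Summing all contributions to $|Q_{\underapproximation}|+|\Delta_{\underapproximation}|$, repeating the count for \overapproximation\ and applying the linear-time emptiness test then yields the stated worst-case complexity.
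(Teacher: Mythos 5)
Your proposal is correct and follows essentially the same argument as the paper: the same term-by-term counting of the states of $\mathcal{E}_c$, the product transitions, the synchronized incoming and outgoing transitions, and the $K$-induced transitions (each of the at most $|\Delta^{out\mathcal{S}}|\cdot|\Delta^{in\mathcal{S}}|$ entries of $K$ paired with up to $|\Delta^{outR}|\cdot|\Delta^{inR}|$ replacement transition pairs). The only difference is that you make explicit what the paper leaves implicit — the reduction to two linear-time B\"uchi emptiness checks and the fact that \overapproximation\ has the same asymptotic size as $\underapproximation$ — which strengthens rather than changes the argument.
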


\begin{proof} 
The size of the automata described in Theorem~\ref{th:modelcheckingreplacement} is justified by the following statements. 
The size of the automaton obtained by considering the automaton $\mathcal{M}$ associated with the replacement $\mathcal{R}$ of the box $b$ and the automaton $\mathcal{P}$ associated with the sub-property $\mathcal{S}$ contains in the worst case $|Q_{\mathcal{R}}| \cdot |Q_{\mathcal{P}}|$ states and $|\Delta_{\mathcal{R}}| \cdot |\Delta_{\mathcal{P}}|$ transitions. 
This automaton can be reached through a set of transitions which are obtained by the synchronous execution of an incoming transition of the replacement and the sub-property, leading in the worst case to $|\Delta^{inR}| \cdot |\Delta^{in\mathcal{S}}|$ transitions. 
Similarly, the automaton can be left through a set of transitions obtained by the synchronous execution of an outgoing transition of the replacement and of the sub-property generating in the worst case $|\Delta^{{outR}}| \cdot |\Delta^{out\mathcal{S}}|$ transitions. 
Finally, each pair outgoing/incoming transition contained in the reachability relation of the sub-property can be synchronized with every  pair outgoing/incoming transition of the replacement, leading to $(|\Delta^{out\mathcal{S}}| \cdot  |\Delta^{in\mathcal{S}}|)\cdot(|\Delta^{outR}| \cdot  |\Delta^{inR}|))$ transitions. 
\end{proof}

%%% stuttering fictitious event

\bibliographystyle{plain}

\bibliography{bibliography}

\clearpage
%----------------------------------------------------------------------------------------
 \phantomsection\addcontentsline{toc}{chapter}{\listtheoremname}
\listoftheorems[numwidth=2.7em,ignoreall,show={mydef,constraint,lemma,theorem}]

\end{document}